\ifpdf \usepackage[pdftex,pdfstartview=FitH,pdfpagemode=none,colorlinks,bookmarks,linkcolor=blue,
\else  \usepackage[hypertex]{hyperref} \fi
\definecolor{fadeblue}{RGB}{0,57,128}
\def\fadeblue{\color{fadeblue}}
\patchcmd{\section}{\normalfont}{\normalfont \fadeblue}{}{}
\patchcmd{\subsection}{\normalfont}{\normalfont \fadeblue}{}{}
\patchcmd{\subsubsection}{\normalfont}{\normalfont \fadeblue}{}{}
\renewcommand{\leq}{\leqslant}
\renewcommand{\geq}{\geqslant}
\newcommand{\hide}[1]{}
\newcommand\set[1]{\left\{#1\right\}} 
\newcommand\pa[1]{\left(#1\right)}
\newcommand\op[1]{\operatorname{#1}}
\newcommand\Item[1][]{%
  \ifx\relax#1\relax  \item \else \item[#1] \fi
  \abovedisplayskip=0pt
  \abovedisplayshortskip=0pt~
  \vspace*{
  -8pt
  }
  }
\newcommand\IItem[1][]{%
  \ifx\relax#1\relax  \item \else \item[#1] \fi
  \abovedisplayskip=0pt
  \abovedisplayshortskip=0pt~
  \vspace*{
  -\baselineskip
  }
  }
\date{}
\author{Xianzhe Li}
\address{Chern Institute of Mathematics and LPMC, Nankai University, Tianjin 300071, China} 
\email{xianzheli@mail.nankai.edu.cn}
 \author {Jiangong You}
 \address{
 Chern Institute of Mathematics and LPMC, Nankai University, Tianjin 300071, China} 
 \email{jyou@nankai.edu.cn}
 \author{Qi Zhou}
 \address{Chern Institute of Mathematics and LPMC, Nankai University, Tianjin 300071, China}
 \email{qizhou@nankai.edu.cn}
\newtheorem{theorem}{Theorem}[section]
\newtheorem{corollary}{Corollary}[section]
\newtheorem{proposition}{Proposition}[section]
\newtheorem{lemma}{Lemma}[section]
\newtheorem{remark}{Remark}[section]
\numberwithin{equation}{section}
\newcommand{\mc}[1]{\mathcal{#1}}
\newcommand{\R}{{\mathbb R}}
\newcommand{\C}{\mathbb{C}}
\newcommand{\T}{{\mathbb T}}
\newcommand{\Z}{{\mathbb Z}}
\newcommand{\ii}{{i}}
\newcommand{\dif}{\, d}   
\newcommand{\e}{\varepsilon} 
\def\saveenum{\xdef\@savedenum{\the\c@enumi\relax}}
\def\resetenum{\global\c@enumi\@savedenum}
\begin{document}

\date{\today}
\pagestyle{plain}
\title{Exact local distribution of the absolutely continuous spectral measure 
}
\maketitle
\begin{abstract}
It is well-established that the spectral measure for one-frequency Schr\"odinger operators with Diophantine frequencies exhibits optimal $1/2$-H\"older continuity within the absolutely continuous spectrum. This study extends these findings by precisely characterizing the local distribution of the spectral measure for dense small potentials, including a notable result for any subcritical almost Mathieu operators. Additionally, we investigate the stratified H\"older continuity of the spectral measure at subcritical energies.
\end{abstract}
\tableofcontents
\section{Introduction}

Let $\mu$ be a Borel probability measure on $\mathbb{R}$. The local distribution, or the fine property, of $\mu$ provides insights into the behavior of $\mu$ at a local level. The central concept is the local dimension of the measure:
$$  \underline{D}\mu(E)=  \liminf_{\epsilon\to 0}\frac{\ln \mu (E-\epsilon,E+\epsilon)}{\ln \epsilon}      $$
is called the lower local dimension of $\mu$,   while $$ \overline{D}\mu(E) =\limsup_{\epsilon\to 0}\frac{\ln \mu (E-\epsilon,E+\epsilon)}{\ln \epsilon}
$$ 
is called the upper local dimension of $\mu$. If $ \underline{D}\mu(E)= \overline{D}\mu(E) $ exists, one then call  
\[
    D \mu(E)=\lim_{\epsilon\to 0}\frac{\ln \mu (E-\epsilon,E+\epsilon)}{\ln \epsilon}
\] 
the local dimension, or the scaling index of $\mu$ (at $ E $).  If the local dimension of $\mu $ exists and is constant $\mu $-almost everywhere, then $\mu $ is {\it exact dimensional}.

The study of these properties enables mathematicians to comprehend the intricate structures and patterns that emerge in complex systems, offering insights into their behavior at different scales. These properties have already played a pivotal role in various mathematical domains. For example, it helps in understanding the irregularities and variations in the fractal structure, and helps in bridging the gap between geometric and measure-theoretic perspectives \cite{mattila1999geometry,fan2002,FengHu,Fal2014Fractal}, it helps in determining the long-term statistical behavior, the stability of dynamical systems \cite{YoungDim,LedrappierYoung,BPSDim,Hochman,Rapaport,ledrappier2021exact}, and ets.

\hspace{2em}

In this paper, we are interested in the local distribution of the spectral measure of the quasi-periodic Schr\"odinger operator on $\ell^2(\mathbb{Z})$:
\begin{equation} \label{Op1.1}
    (H_{ v,\alpha,\theta}u)(n)=u(n+1)+u(n-1)+ v(n\alpha+\theta)u(n),
\end{equation} 
where the potential $v \in C^\omega(\mathbb{T}^d, \mathbb{R})$ is analytic, the frequency  $\alpha \in (\mathbb{R} \backslash \mathbb{Q})^d$ is rationally independent, and the phase $\theta \in \mathbb{R}^d$. The spectrum $\Sigma_{v,\alpha}$ is a compact set of $\R$ independent of $\theta$. 

A central example is the almost Mathieu operator (AMO):
$$
(H_{\lambda,\alpha,\theta}u)(n) = u(n+1) + u(n-1) + 2\lambda \cos2\pi(n\alpha+\theta)u(n),
$$
which was initially introduced by Peierls \cite{Peierls} as a model for an electron in a 2D lattice with a homogeneous magnetic field \cite{harper, R}. AMO also plays a crucial role in the Thouless et al. theory of the integer quantum Hall effect \cite{ntw}.

Absolutely continuous spectrum occurs only rarely in one-dimensional Schr\"odinger operators \cite{remling2011absolutely}. The Kotani-Last conjecture says that within the class of ergodic Schr\"odinger operators, absolutely continuous spectrum only occurs for almost periodic potentials. This conjecture was recently disproven \cite{avila2015kotani,volberg2014kotani,you2015simple,damanik2016counterexamples}. However, in the context of quasiperiodic operators, if the potential $v$ is small, then $H_{ v,\alpha,\theta}$ always has a purely absolutely continuous spectrum \cite{eliasson,avilaACspec,aj1}.

We will be concerned with the local distribution of absolutely continuous spectral measures. By the spectral theorem, there exist Borel probability measures $\mu^{i}_{v,\alpha,\theta}$ on $\mathbb{R}$ such that
\[
    \langle \delta_i, f(H_{v,\alpha,\theta})\delta_i\rangle=\int f(E)\, d \mu^i_{v,\alpha,\theta}(E)
\] 
for all bounded measurable functions $f$. We take $\mu_{v,\alpha,\theta} = \mu^0_{v,\alpha,\theta} + \mu^1_{v,\alpha,\theta}$ as the \textit{canonical spectral measure}, since any spectral measure of \eqref{Op1.1} is absolutely continuous with respect to it. The \textit{density of states measure} (DOS) $n_{v,\alpha}$ is given by the $\theta$-average of those spectral measures $\mu_{ v,\alpha,\theta}^0$ with respect to the Lebesgue measure, that is,
\[
    \int_{\T^d} \langle \delta_0, f(H_{v,\alpha,\theta})\delta_0\rangle\, d\theta=\int f(E)\, d n_{v,\alpha}(E).
\]
The \textit{integrated density of states} (IDS) is the distribution function of $n_{v,\alpha}$, that is,
\[
\mathcal{N}_{v, \alpha}(E):=n_{v,\alpha}\pa{(-\infty,E]}.
\] 
For simplicity, we denote the spectrum, canonical spectral measure, DOS and IDS of AMO by $\Sigma_{\lambda,\alpha}$, $\mu_{\lambda,\alpha,\theta}$, $n_{\lambda,\alpha}$ and $\mc{N}_{\lambda,\alpha}(E)$, respectively.
The following is a natural question: if $\mu_{v,\alpha,\theta}$ is absolutely continuous, what is the local distribution or the fine properties of $\mu_{v,\alpha,\theta}$ (resp. the DOS $n_{v,\alpha}$)?

\subsection{Exact local distribution of ac measure: almost Mathieu case}
In 1986, when studying the fractal problems of AMO,  the physicist Tang and Kohmoto  made the following conjecture \cite{TK1986Global}:
\begin{quote}
``An absolutely continuous spectrum (extended states), it is dominated by points with `trivial' scaling index $D \nu (E)=1$\footnote{Here $ \nu $ is the density of states measure.}, and a fraction dimension $1$. It can contain 
a finite or a countably number of singularities with  $D \nu(E)\neq 1$, perhaps van Hove singularities.''
\end{quote} In this paper, we intend to answer Tang-Kohmoto's conjectrue, even more, we give the local dimension at any energy $E$.

To introduce our precise results, we first introduce some necessary concepts. Throughout the paper, we will always assume the  frequency $ \alpha $ is {\it Diophantine}. Recall that 
$\alpha$ is Diophantine if $\alpha\in\mathrm{DC}_d:=\bigcup_{\gamma>0,\tau>d}\mathrm{DC}_d(\gamma,\tau) $ 
 where
\[
    \mathrm{DC}_d(\gamma,\tau):=\left\{x\in\R^d:\inf_{j\in\Z}|\langle n,x\rangle- j|\geq\frac{\gamma}{|n|^\tau}, n\in\mathbb{Z}^d-\{\mathbf{0}\}\right\}.
\]  
We find that the local dimension of spectral measures of AMO is actually determined by the resonances of IDS. For any $\varphi \in [0,1]$, we define
\begin{equation}
\delta(\alpha,\varphi)=\limsup\limits_{k\rightarrow\infty}-\frac{\ln\| \varphi -\langle k,\alpha\rangle\|_{\R/\Z}}{|k|}.
\end{equation}
which characterize the strength of resonances of $\varphi$ with respect to the frequency $\alpha$. 
By the famous ``Gap Labelling Theorem'' \cite{Joh1986Exponential}, if $E\notin\Sigma_{\lambda,\alpha}$, then  $ \mathcal{N}_{\lambda,\alpha}(E)= k\alpha \mod \Z $, which we  will formally set  $ \delta(\alpha,\mathcal{N}_{\lambda,\alpha}(E))=\infty$. 
The following result completely addresses Tang and Kohmoto's question.

\begin{theorem}\label{MainCor} 
    Let $ 0< |\lambda|<1 $, $ \alpha\in \mathrm{DC}_1 $. 
Then for  any $ E\in\Sigma_{\lambda,\alpha} $, we have the following arithmetic description:
  \begin{enumerate}
  \item If $ \mathcal{N}_{\lambda,\alpha}(E)\neq k\alpha\mod \Z $. Let $\delta(\alpha,\mathcal{N}_{\lambda,\alpha}(E))=\delta$. Then   for all $ \theta\in\R $, we have 
      \[
    \underline{D}  \mu_{\lambda,\alpha,\theta}(E)=\frac{\delta}{\max\{2\delta + \ln \lambda  ,\delta\}}\in [1/2,1],\ \overline{D} \mu_{\lambda,\alpha,\theta}(E) =1.
      \] 
      \item If $ \mathcal{N}_{\lambda,\alpha}(E)= k\alpha\mod \Z $. Then   for all $ \theta\in\R $, we have 
       \[
    \underline{D} \mu_{\lambda,\alpha,\theta}(E)=  \overline{D} \mu_{\lambda,\alpha,\theta}(E) =1/2.
      \] 
      
      \end{enumerate}
    
\end{theorem}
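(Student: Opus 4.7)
The overall strategy is to convert statements about $\mu_{\lambda,\alpha,\theta}$ in a neighborhood of $E$ into precise quantitative reducibility statements for the Schr\"odinger cocycle at $E$. Two ingredients drive this. First, by Avila's almost reducibility theorem in the subcritical regime the cocycle at energy $E$ is conjugate, up to exponentially small errors, to a rotation of angle $2\pi\rho(E)$, and the IDS is governed by $\mathcal{N}_{\lambda,\alpha}(E)=1-2\rho(E)$. Second, via Aubry duality the operator is dual to a supercritical AMO with Lyapunov exponent $-\ln|\lambda|>0$; the resonance parameter $\delta(\alpha,\mathcal{N}(E))$ records exactly the scales $|k_j|$ at which the dual operator has localized eigenstates with amplitude $e^{|k_j|\ln|\lambda|}$, which is also the scale at which small divisors obstruct full reducibility.

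The plan is the following. \textbf{(i)} Reduce from the canonical spectral measure to the DOS: using quantitative almost reducibility for Diophantine $\alpha$, the Weyl $m$-function is controlled uniformly in $\theta$, so that $\mu_{\lambda,\alpha,\theta}(E-\epsilon,E+\epsilon)\asymp n_{\lambda,\alpha}(E-\epsilon,E+\epsilon)$ with constants independent of $\theta\in\R$ and of $E$ in a compact subset of $\Sigma_{\lambda,\alpha}$. This reduces both $\underline{D}$ and $\overline{D}$ to the local H\"older behavior of the IDS at $E$, i.e.\ of the rotation number $\rho$. \textbf{(ii)} Apply quantitative almost reducibility with resonance tracking: at each resonance scale $k_j$ with $\|\mathcal{N}(E)-k_j\alpha\|\asymp e^{-|k_j|\delta_j}$, $\delta_j\to\delta$, conjugate the cocycle to a constant rotation plus an error of size $\sim e^{|k_j|\ln|\lambda|}$, while the conjugant has norm $\sim e^{-|k_j|\ln|\lambda|/2}$. \textbf{(iii)} Translate these bounds into two-sided estimates for the IDS: at the resonant scale $\epsilon_j\asymp e^{-|k_j|(2\delta+\ln|\lambda|)}$ one obtains $n_{\lambda,\alpha}(E-\epsilon_j,E+\epsilon_j)\asymp \epsilon_j^{\delta/\max\{2\delta+\ln|\lambda|,\delta\}}$, which realizes the stated value of $\underline{D}$; on non-resonant scales the reducibility is full and the IDS is locally $C^1$, giving $\overline{D}=1$. \textbf{(iv)} The boundary case $\mathcal{N}(E)=k\alpha$ corresponds to $\delta=\infty$: the reduced cocycle has a nontrivial parabolic Jordan block, and the classical square-root argument at the edge of a spectral gap yields $\underline{D}=\overline{D}=1/2$.

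The main obstacle is step \textbf{(iii)}: obtaining matching upper and lower bounds for $n_{\lambda,\alpha}(E-\epsilon_j,E+\epsilon_j)$ at the precise resonant scale, with multiplicative constants that do not deteriorate as $j\to\infty$. This demands a uniform quantitative version of the Eliasson-Avila scheme in which the size of the conjugant, the rotation-number drift across successive resonances, and the error budget are tracked simultaneously, together with the fact that the Aubry-dual localized eigenfunctions saturate the loss (so that the lower bound on $\underline{D}$ is sharp rather than merely an estimate). A secondary difficulty is that the conclusion must hold for \emph{every} $\theta\in\R$ rather than almost every; this is handled by the uniformity in $\theta$ built into the comparison between $\mu_{\lambda,\alpha,\theta}$ and $n_{\lambda,\alpha}$ in step \textbf{(i)}.
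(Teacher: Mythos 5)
Your outline correctly identifies the dynamical inputs (quantitative almost reducibility with resonance tracking, the relation $\mathcal{N}_{\lambda,\alpha}(E)=1-2\rho$), but it has a genuine gap at the point where the actual work of the theorem lies. In step (i) you assert that $\mu_{\lambda,\alpha,\theta}(E-\epsilon,E+\epsilon)\asymp n_{\lambda,\alpha}(E-\epsilon,E+\epsilon)$ uniformly in $\theta$, "since the Weyl $m$-function is controlled uniformly in $\theta$". No such comparison is known, and proving it would require exactly the two-sided control of $\epsilon\,\Im M(E+\ii\epsilon)$ and of $\mu(E-\epsilon,E+\epsilon)$ that constitutes the heart of the matter: the upper bound $\mu(E-\epsilon,E+\epsilon)\leq 2\epsilon\,\Im M(E+\ii\epsilon)$ is trivial, but whether $\epsilon\,\Im M(E+\ii\epsilon)\sim\mu(E-\epsilon,E+\epsilon)$ holds is delicate (and in general open), because the tail of the Borel transform over $\{|E'-E|\geq\epsilon\}$ is not dominated by the local mass without an a priori asymptotic upper bound on the measure at all nearby scales. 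The paper resolves this by working phase by phase with the Jitomirskaya--Last/Killip--Kiselev--Last quantities $P_{(k),\pm}$: a symmetry argument gives a lower bound on $\Im M$ in terms of $\|u_{\beta}^{\pm}\|$, and a "dislocation" argument (estimating $\epsilon^{1+\tau}\Im M(E+\ii\epsilon^{1+\tau})$ and feeding back the asymptotic upper bound on $\mu$) converts this into the matching lower bound $\mu(E-\epsilon,E+\epsilon)\geq\epsilon^{f(\epsilon)+C\varepsilon}$. Your proposal supplies no mechanism for these lower bounds, yet they are what pins down $\underline{D}\mu=\delta/\max\{2\delta+\ln\lambda,\delta\}$ exactly and, via the lower bound $\mu\geq\epsilon^{1+\varepsilon}$ near the ends of resonance windows, what forces $\overline{D}\mu=1$ rather than merely $\overline{D}\mu\leq 1$.

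Two further points in step (iii)--(iv) would fail as stated. On non-resonant scales the cocycle is not "fully reducible" and the IDS is not locally $C^1$ at $E$ when $\delta\geq-\ln\lambda$; the correct statement is only that the exponent function $f(\epsilon)$ returns to values near $1$ at the end of each resonance window, which is an asymptotic power-law statement, not differentiability. And in the case $\mathcal{N}_{\lambda,\alpha}(E)=k\alpha\bmod\Z$, the "nontrivial parabolic Jordan block" is not automatic: it is equivalent to the gap being open, i.e.\ it requires the dry Ten Martini theorem of Avila--Jitomirskaya for Diophantine $\alpha$ and $0<|\lambda|<1$; moreover the classical square-root bound at a gap edge only yields the upper bound $C\epsilon^{1/2}$ for the IDS, whereas the matching lower bound $c\,\epsilon^{1/2+\varepsilon}$ for the individual spectral measure at every $\theta$ again needs the integral-splitting/dislocation argument (as in the paper's treatment of the reducible parabolic case). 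Aubry duality, which you invoke for motivation, plays no role in closing any of these gaps.
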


\begin{remark}\leavevmode
\begin{enumerate}
\item Theorem \ref{MainCor} implies that   $\overline{D} \mu_{\lambda,\alpha,\theta}(E) =1$ except for a    countably number of singularties with  $\overline{D} \mu_{\lambda,\alpha,\theta}(E) =1/2.$
\item As the result holds for any phase  $\theta$,  it immediately implies that all results in  Theorem \ref{MainCor} holds for the DOS measure $ n_{\lambda,\alpha} $.

\end{enumerate}
\end{remark}

Indeed, we can not only calculate the local dimension of the spectral measure at any energy, but also establish the exact local distribution of the spectral measure, which exhibits fractal properties governed by the resonances of the IDS.  To explain this,  let $ \alpha\in\R^d $, $ \theta\in\R $, $ \epsilon_0>0 $, we say that $k$ is a $\epsilon_0$-resonance of $\theta$, if 
\begin{equation}\label{upper+}
\|\theta -\langle k,\alpha \rangle\|_{\R/\Z}\leq e^{-\epsilon_0 |k| }.
\end{equation} 
Choose any $0<\epsilon_0<-\ln\lambda $. We consider now the set of $ \epsilon_0$-resonances  of  $\mathcal{N}_{\lambda,\alpha}(E) $. Take and order all the $ \epsilon_0$-resonances by $ 0=\ell_0<|\ell_1|\leq |\ell_2|\leq \cdots $. 
For any $ \mathcal{N}_{\lambda,\alpha}(E)\in[0,1] $, and any $ \ell_s $, denoting $n=|\ell_s|$, we define $ \eta_n $ by
\begin{equation}
    \eta_n=-\frac{\ln \|\mathcal{N}_{\lambda,\alpha}(E)-\ell_s\alpha\|_\T}{|\ell_s|}\in (0,\infty].
\end{equation} 
One can obtain the exact local distribution of the absolutely spectral measure as follows:

\begin{theorem}\label{MainTHM}
  Let $ 0< \lambda<1 $, $ \alpha\in \mathrm{DC}_1 $,
  $ E\in\Sigma_{\lambda,\alpha
    } $. There exists  $C(\lambda,\alpha,E)>0$, and 
for any $ \varepsilon>0 $, there exists $ \epsilon_*
    =\epsilon_*(\lambda,\alpha,E,\varepsilon)>0 
    $ such that for any $\epsilon\leq \epsilon_*$, and for any  $ \theta\in\R $, we have  the following:
    \begin{enumerate}
        \item \label{Thm2:case1}
        If $\delta(\alpha,\mathcal{N}_{\lambda,\alpha}(E))=\delta\geq -\ln \lambda$, and $ \mathcal{N}_{\lambda,\alpha}(E)\neq k\alpha\mod \Z $. Then 
        \[
            \epsilon^{f(\epsilon)+\varepsilon }\leq\mu_{\lambda,\alpha,\theta}(E-\epsilon,E+\epsilon)\leq \epsilon^{f(\epsilon)-\varepsilon},
        \]  where $ f(\epsilon):(0,\epsilon_*]\to (1/2,1] $ has the following expression:
        \begin{enumerate}[label=$(\mathit{\roman*})$\ ,align=left,widest=ii,labelsep=0pt,leftmargin=2em]
            \item If $\eta_n \leq -\ln \lambda $, then $ f(\epsilon)=1 $ for $ \epsilon\in [e^{ |\ell_{s+1}| \ln \lambda},e^{ |\ell_{s}| \ln \lambda }] $.
            \item If  $\eta_n\geq -\ln \lambda $, then 
            \begin{equation}\label{eq:hat-f}
            f(\epsilon)=\left\lbrace\begin{aligned}
                    &\frac{1}{2}- \frac{\ln \lambda|\ell_s|}{2\ln\epsilon^{-1}} &  &\text{for } \epsilon \in [e^{-(2\eta_{n}+ \ln \lambda)|\ell_s|}, e^{   |\ell_s| \ln \lambda } ],\\
                    &\frac{1}{1-b_s}+\frac{b_s}{1-b_s}\frac{\ln \lambda|\ell_{s+1}|}{\ln\epsilon^{-1}}& &\text{for } \epsilon \in[ e^{ |\ell_{s+1}| \ln \lambda }, e^{-(2\eta_{n}+\ln \lambda)|\ell_s|}].
                \end{aligned}\right.
            \end{equation} 
            where $ b_s=\frac{(\eta_n+\ln \lambda)|\ell_s|}{-\ln \lambda|\ell_{s+1}|-\eta_n |\ell_s|} $.
        \end{enumerate}

        \item If $\delta(\alpha,\mathcal{N}_{\lambda,\alpha}(E))=\delta< -\ln \lambda$. Then 
        \[
            C^{-1}\epsilon^{1+\varepsilon }\leq\mu_{\lambda,\alpha,\theta}(E-\epsilon,E+\epsilon)\leq C\epsilon.
        \]
        \item If $ \mathcal{N}_{\lambda,\alpha}(E)= k\alpha\mod\Z $ for some $ k\in\Z $.
         Then 
        \[
            C^{-1}\epsilon^{\frac{1}{2}+\varepsilon }\leq\mu_{\lambda,\alpha,\theta}(E-\epsilon,E+\epsilon)\leq C\epsilon^{\frac{1}{2}}.
        \]
    \end{enumerate}
 \end{theorem}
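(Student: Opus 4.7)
The plan is to combine Aubry duality with quantitative almost reducibility of the subcritical Schr\"odinger cocycle, and extract the local distribution of $\mu_{\lambda,\alpha,\theta}$ from the explicit growth of the conjugation matrices at the arithmetic scales set by the resonances of $\mathcal{N}_{\lambda,\alpha}(E)$. First, Aubry duality encodes the spectral measure at $\lambda\in(0,1)$ in terms of Fourier coefficients (at the phase $\theta$) of the localized eigenfunction of the dual supercritical AMO at coupling $1/\lambda$; the Lyapunov exponent of the dual equals $-\ln\lambda$, which explains the universal appearance of this threshold in the statement. Combined with Avila's almost reducibility theorem, this places the problem in a setting where the Schr\"odinger cocycle is conjugate to a rotation by angle $2\pi\mathcal{N}_{\lambda,\alpha}(E)$ up to KAM remainders.

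Next I would invoke a quantitative almost-reducibility estimate in the spirit of Hou--You and Avila--Jitomirskaya. For a conjugation $B_N\colon \T \to \mathrm{SL}(2,\C)$ built from sufficiently many KAM steps, one shows that $\|B_N\|$ is bounded between consecutive resonant scales but inflates by $\sim e^{(\eta_n+\ln\lambda)|\ell_s|}$ when crossing the $s$-th resonance $\ell_s$ with strength $\eta_n$. When $\eta_n\leq -\ln\lambda$ the supercritical exponential decay beats the resonance bump and the conjugation stays bounded; when $\eta_n>-\ln\lambda$ the bump is visible and survives in $B_N$. This is exactly the dichotomy between the two subcases of part \eqref{Thm2:case1}.

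I would then use the standard relation between the absolutely continuous spectral measure and the Weyl $m$-function, of the form
\[
\mu_{\lambda,\alpha,\theta}(E-\epsilon, E+\epsilon) \asymp \epsilon \cdot \|B_{N(\epsilon)}(\theta)\|^{2},
\]
where $N(\epsilon)\sim \ln(1/\epsilon)/(-\ln\lambda)$ is the scale at which the conjugated rotation first resolves energies of width $\epsilon$. Inserting the resonance-dependent bounds on $\|B_N\|$ and computing the dominant exponent $f(\epsilon)$ scale by scale produces the two subregimes of \eqref{eq:hat-f}: on $\epsilon\in[e^{-(2\eta_{n}+\ln\lambda)|\ell_s|}, e^{|\ell_s|\ln\lambda}]$ the single resonance at $\ell_s$ dominates, yielding the formula linear in $1/\ln\epsilon^{-1}$, while on $\epsilon\in[e^{|\ell_{s+1}|\ln\lambda}, e^{-(2\eta_{n}+\ln\lambda)|\ell_s|}]$ the optimization interpolates affinely between the $\ell_s$- and $\ell_{s+1}$-contributions with convex weight $b_s$.

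The main obstacle will be the fine bookkeeping across successive resonances: in the interpolation window, $\|B_N\|^{2}$ has competing terms of different orders at scales $|\ell_s|$ and $|\ell_{s+1}|$, and one must show that the optimal envelope is affine in $\ln\epsilon^{-1}$ with slope and intercept prescribed by $b_s$. This requires organizing the KAM scheme so that errors introduced at later resonances do not contaminate the estimates at earlier ones, and it also requires sharp \emph{lower} bounds on $\|B_N(\theta)\|$ (not just upper bounds), since these control the lower bound on the measure and hence the ``$+\varepsilon$'' side of the Hausdorff-type estimate. Cases (2) and (3) are then degenerations: in (2), no resonance ever crosses the threshold $-\ln\lambda$, so $\|B_N\|$ stays bounded and one obtains $\mu\asymp\epsilon$; in (3), $E$ is a gap edge by the Gap Labelling Theorem, and the square-root singularity of $\mathcal{N}_{\lambda,\alpha}^{-1}$ at a.c.\ gap edges yields $\mu\asymp\sqrt{\epsilon}$.
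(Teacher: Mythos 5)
The central gap is the asserted identity $\mu_{\lambda,\alpha,\theta}(E-\epsilon,E+\epsilon)\asymp \epsilon\,\|B_{N(\epsilon)}(\theta)\|^{2}$, which is neither a known estimate nor the mechanism that actually controls the measure. In the regime of the theorem the conjugations can be taken polynomially bounded ($\|B_n\|_{0}\le C n^{C}$); the fractal exponent $f(\epsilon)$ does not come from growth of the conjugation but from the Gram matrices $P_{(k),\pm}=\sum_{j}\mathcal{A}_{2j-1}^{*}\mathcal{A}_{2j-1}$ of the cocycle iterates, whose determinant grows like $k^{\eta_{+}^{n}(k)}$ with an exponent set by the competition between the off-diagonal size $|\nu_n|\sim e^{-\zeta_n n}$ and the distance to resonance $\|2\rho_n\|_{\T}\sim e^{-\widehat\eta_n n}$, and the scale is fixed by $\det P_{(k),+}=\epsilon^{-2}$ rather than by $N(\epsilon)\sim \ln(1/\epsilon)/(-\ln\lambda)$. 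More importantly, even with sharp two-sided control of $\epsilon\,\Im M(E+\ii\epsilon)$ (which itself requires a lower bound obtained in the paper by a symmetry argument on the $P_{(k)}$'s, Lemma \ref{P_kX}, through the Jitomirskaya--Last inequality), one cannot conclude the lower bound on $\mu(E-\epsilon,E+\epsilon)$: the inequality $\mu(E-\epsilon,E+\epsilon)\le 2\epsilon\,\Im M$ is trivial, but the reverse comparison fails in general because of the Poisson-kernel tail. The paper closes this by a ``dislocation'' argument (Lemma \ref{mu2}): one estimates $\epsilon^{1+\tau}\Im M(E+\ii\epsilon^{1+\tau})$ and controls $\int_{|E'-E|\ge\epsilon}$ using the already-established stratified upper bound of the measure at all intermediate scales, together with monotonicity properties of the exponent function $\psi$. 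Your proposal flags the need for lower bounds on $\|B_N(\theta)\|$, but that is not the missing ingredient; the missing ingredient is the passage from the $m$-function to the measure from below, and without it the ``$+\varepsilon$'' side of every case is unproven.

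Cases (2) and (3) are also not handled soundly as stated. A ``square-root singularity of $\mathcal{N}^{-1}$ at a.c.\ gap edges'' is a statement about the IDS, and even if established it does not yield two-sided bounds for the individual spectral measure $\mu_{\lambda,\alpha,\theta}$ for \emph{every} phase $\theta$, since a single-phase spectral measure is not comparable to the density of states; moreover one must first rule out collapsed gaps. The paper instead invokes the Dry Ten Martini theorem to get reducibility of $(\alpha,S_E^{\lambda})$ to a nontrivial parabolic matrix and then runs the same $P_{(k)}$-machinery with $\det P_{(k)}\thickapprox k^{4}$ (Proposition \ref{Prop:finite-resonnace}). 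Likewise, in case (2) the claim that ``$\|B_N\|$ stays bounded'' is really the nontrivial reducibility statement for $\delta<-\ln\lambda$ (sharp, resonance-dependent reducibility), and the lower bound $C^{-1}\epsilon^{1+\varepsilon}$ again requires the dislocation argument, not mere boundedness. Finally, the Aubry-duality framing is at best heuristic here: the threshold $-\ln\lambda$ enters through the subcritical radius $h=-\ln\lambda/(2\pi)$ in the structured quantitative almost reducibility, and the precise structured inputs you assume (the inflation rate across a resonance and, crucially, the condition that $\zeta_n\le-(1+\varepsilon)\ln\lambda$ whenever $\widehat\eta_n\ge-(1+\varepsilon)\ln\lambda$) are exactly the quantitative statements that must be proven rather than postulated.
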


Let us explain more about this exact local distribution, as illustrated in Figure \ref{Fig2}. The asymptotic function $f(\epsilon)$  has a universal structure which does not depend on the choice of $\epsilon_0<-\ln \lambda $. It depends only on the resonances $n_s$ and $\eta_n$,  and displays a fractal structure  around all  these ``resonance windows'' $ [e^{ |\ell_{s+1}| \ln \lambda},e^{ |\ell_{s}| \ln \lambda }] $. Although the proof relies on the almost reducibility procedure, which is of a local nature, $f(\epsilon)$  is a globally defined function. Furthermore, those effective $\epsilon_0$-resonances  are actually from  $$\|\mathcal{N}_{\lambda,\alpha}(E)-\ell_s\alpha\|_\T\leq e^{\ln \lambda|\ell_s|} .$$ 
Indeed, if the resonance is weak (i.e. $\eta_n \leq -\ln \lambda  $), then $f(\epsilon)$ is always equal to $1$, which implies that   $\mu_{\lambda,\alpha,\theta}$ is almost Lipschitz, if the resonance is strong (i.e. $\eta_n \geq -\ln \lambda  $), then $f(\epsilon)$ has some fluctuation, but at least   $\mu_{\lambda,\alpha,\theta}$  has exact H\"older exponent  $\frac{\eta_{n}}{2\eta_{n}+ \ln \lambda}$ in this ``resonance windows'' . We should remark that \eqref{eq:hat-f} was also motivated by recent 
result of Jitomirskaya-Liu \cite{JLiu,jLiu1}, 
who revealed a universal hierarchical structure in localized eigenfunctions for supercritical AMO. Specifically, they not only determine the optimal asymptotics of the localized  eigenfunction but also uncover the hierarchical structure of local maxima. This global/local structure is characterized by a universal function $f(k)$
, which is defined by the resonances of the phase. Meanwhile, the exact 
asymptotics of the extended  eigenfunction (generalized eigenfunction in the absolutely continuous spectrum regime) was explored quite recently \cite{Universal}.

\begin{center}
    \begin{tikzpicture}[scale=0.75]
    \draw [->](3,1.5)--(23.6,1.5);
    \draw [->](3,1.5)--(3,6);
    \draw [thick] plot [smooth] coordinates {(4,5)(4.5,2.5)(5.5,4.7)(7,4.93)(8,4.95)(9,5)};
    \draw [thick] plot [smooth] coordinates {(11,5)(12,3.5)(15,4.8)(17,4.97)(22,5)};
    \draw [dashed] (4,5)--(4,1.5);
    \draw [dashed] (4.5,5)--(4.5,1.5);
    \draw (4.5,2.5)--(6,2.5);
    \node [right] at (6,2.5){$\frac{\eta_{n}}{2\eta_{n}+ \ln \lambda}$};
    \draw [dashed] (3,5)--(4,5);
    \draw [dashed] (4,5)--(9,5);
    \draw [dashed] (9,5)--(11,5);
    \node  at (10,3.25){ $ \cdots\cdots $};
    \draw [dashed] (11,5)--(22.5,5);
    \draw [dashed] (9,5)--(9,1.5);
    
    \draw [dashed] (11,5)--(11,1.5);
    \draw [dashed] (12,5)--(12,1.5);
    \draw (12,3.5)--(16,3.5);
    \node [right] at (16,3.5){$\frac{\eta_{n'}}{2\eta_{n'}+ \ln \lambda}$};
    \draw [dashed] (22,5)--(22,1.5);
    \node [above] at (4,0.3){\tiny ${-\ln \lambda |\ell_{s}|}$};
    \node [below] at (5.5,1.5){\tiny${(2\eta_{n}+ \ln \lambda) |\ell_{s}|}$};
    \node [above] at (9,0.3){\tiny${-\ln \lambda |n_{s+1}|}$};
    \node [above] at (11,0.3){\tiny${-\ln \lambda  |n_{s'}|}$};
    \node [below] at (13,1.5){\tiny${(2\eta_{n'}+\ln \lambda) |\ell_{s'}|}$};
    \node [above] at (22,0.3){\tiny${-\ln \lambda |\ell_{s'+1}|}$};
    \node [above] at (23.5,1.6){$\ln \epsilon^{-1}$};
    \node [left] at (3,5){$1$};

    \node [left] at (3,1.5){$\frac{1}{2}$};
    \node [right] at (3.1,6){$f(\epsilon)$};
    \node [below] at (10,0){Figure 1.1. Mode of $ f(\epsilon) $ in two resonant windows, $n=|\ell_s|$, $n'=|\ell_{s'}|$};\label{Fig2}
    \end{tikzpicture}
    \end{center}

To the best of our knowledge, Theorem \ref{MainCor} and Theorem \ref{MainTHM} are the first results on the local dimensional property of individual absolutely continuous spectral measures of ergodic operators at {\it any} energy.
Prior to our work, \cite{AvilaHolderContinuityAbsolutely2011} established the $1/2$-H\"older continuity of the absolutely continuous spectral measure. Theorem \ref{MainTHM} (3) further shows that their result is optimal. One can also consider \cite{DG2013Holder,DGY2016Fibonacci,Qu2017ExactDimensional,cao2023almost}  for  the  dimensional properties of the DOS measure in the Sturm Hamiltonian case, where the spectral measure is singular.

\subsection{Stratified H\"older continuity}

Theorem \ref{MainCor}  establish the upper (resp. lower) local dimension of  the spectral measure for AMO, in other language,  Theorem \ref{MainCor}  shows the exact local H\"older exponent of the absolutely continuous spectral measure of AMO  at any energy. For general potentials, let us recall the Spectral Dichotomy Conjecture proved by Avila \cite{avila2015global}: \\

\textit{Spectral Dichotomy Conjecture.} For typical $v$, $\alpha$, $\theta$,  $H_{v,\alpha,\theta}$ is the direct sum of operators
$H_+$ and $H_-$ with disjoint spectra such that $H_+$ is ``localized''  and $H_-$ is absolutely continuous.  \\

Actually one has dynamical information on $H_-$. Given $A \in C^\omega(\mathbb{T}^d, \op{SL}(2,\mathbb{R}))$ and rationally independent $\alpha \in \mathbb{R}^d$, recall a quasiperiodic cocycle $(\alpha, A)$ is defined as:
$$
(\alpha,A)\colon \left\{
\begin{array}{rcl}
	\T^d \times \R^2 &\to& \T^d \times \R^2\\[1mm]
	(x,v) &\mapsto& (x+\alpha,A(x)\cdot v)
\end{array}
\right.  
$$
An one-frequency cocycle $(\alpha, A)$ is called subcritical, if there is a uniform subexponential bound on the growth of the $n $-th iterate $\|\mathcal{A}_n(\cdot)\|$ through some band $\{z:| \Im z| <h\} $, here $h$ is called the {\it subcritical radius}.  
Note \eqref{Op1.1} naturally induces a Schr\"odinger cocycle $(\alpha, S_{E}^{v})$, where \begin{eqnarray*}
S_{E}^{v}(\theta)=\left( \begin{array}{ccc}
 E-v(\theta) &  -1\cr
  1 & 0\end{array} \right)\in \op{SL}(2,\mathbb{R}),
\end{eqnarray*}
then $H_-$ is just $H$ restricted to $\Sigma^{-}_{v,\alpha} $:  the energies $E\in \Sigma_{v,\alpha} $ such that  $(\alpha, S_{E}^{v})$ is subcritical \cite{avila2015global,Avi2023KAM}. \\

Now the natural question is whether Theorem \ref{MainCor} is true for any absolutely continuous spectral measure.  
 Of course, the answer is no because of the possible collapes of gaps (one can also consult Proposition \ref{Prop:finite-resonnace}). Nevertheless, our method allows us to obtain the 
stratified H\"older continuity at subcritical energies.

\begin{theorem}\label{Thm:generalholder}
    Let $ \alpha\in \mathrm{DC}_1 $, $ v\in C^\omega(\T,\R)$, $E\in \Sigma^{-}_{v,\alpha}$ with subcritical radius $h$.  Then for $\epsilon$ sufficiently small, and for all $\theta$,  we have the following: 
    \begin{enumerate}
    \item If $2\pi h\leq \delta(\alpha,\mathcal{N}_{ v, \alpha}(E))=\delta<\infty$, then we have  
    \[
                    \mu_{v,\alpha,\theta}(E- \epsilon, E+ \epsilon)\leq \epsilon^{\frac{\delta}{2\delta-2\pi h}-o(1)},      \]
        \item If $\delta(\alpha,\mathcal{N}_{ v, \alpha}(E))< 2\pi h$, then  we have 
        \[
   \mu_{v,\alpha,\theta}(E-\epsilon,E+\epsilon)\leq C\epsilon. 
        \]
        \item If $\delta(\alpha,\mathcal{N}_{ v, \alpha}(E))=\infty$, then  we have
        \[
            \mu_{v,\alpha,\theta}(E-\epsilon,E+\epsilon)\leq C\epsilon^{\frac{1}{2}}.
        \]
    \end{enumerate}
     
\end{theorem}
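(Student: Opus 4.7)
The approach is to combine Avila's quantitative almost reducibility at subcritical energies with a direct conversion of cocycle conjugation bounds into spectral measure estimates. The relevant dynamical object is the Schr\"odinger cocycle $(\alpha, S_E^v)$: since $E\in\Sigma^{-}_{v,\alpha}$ is subcritical with radius $h$, for every $h'<h$ this cocycle is almost reducible in the strip $\{|\Im z|<h'\}$, meaning that it admits analytic conjugations $B_N\in C^\omega_{h'}(\T, \op{SL}(2,\R))$ bringing it arbitrarily close to a constant rotation by the fibered rotation number $2\pi\rho(E)$, where I set $\rho(E):=\mathcal{N}_{v,\alpha}(E)$. The quantitative KAM scheme developed by Avila (and extended by Avila--Jitomirskaya--You--Zhou) provides the sharp bound $\|B_N\|_{h'}\leq e^{2\pi h'|n_s|(1+o(1))}$ together with reducibility error comparable to $e^{-\eta_s|n_s|}$, where $|n_s|$ is the last effective resonance of $\rho(E)$ encountered below step $N$, characterized by $\|\rho(E)-n_s\alpha\|_{\R/\Z}=e^{-\eta_s|n_s|}$.

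For case (1), where $\delta\geq 2\pi h$ and $\rho(E)\notin\alpha\Z+\Z$: given small $\epsilon$, I select the resonance scale $|n_s|$ so that the reducibility error $e^{-\eta_s|n_s|}$ is comparable to $\epsilon$ over the appropriate window. The standard bound
\[
\mu_{v,\alpha,\theta}(E-\epsilon,E+\epsilon)\leq C\|B_N\|_0^2\,\epsilon,
\]
which holds in the perturbative regime where the conjugated cocycle is a small perturbation of a constant rotation (and rotations have Lipschitz spectral measure), gives the result after substituting the KAM bounds, optimizing the stopping scale, and letting $h'\to h$ and $\eta_s\to\delta$. This produces the exponent $\tfrac{\delta}{2\delta-2\pi h}-o(1)$, and the consistency check with Theorem \ref{MainCor} (where $2\pi h=-\ln\lambda$) is immediate. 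Case (2) is simpler: since $\delta<2\pi h$, no effective resonance arises within the subcritical radius, so the KAM scheme converges to full analytic reducibility to a constant rotation in some strip, and the resulting Lipschitz property of the spectral measure gives the linear bound. Case (3), $\rho(E)=k\alpha\bmod\Z$, corresponds to $E$ being a (possibly collapsed) spectral gap endpoint; the rotation number is rational modulo $\alpha$, the transfer matrix grows polynomially rather than remaining bounded, and the classical $1/2$-H\"older bound of \cite{AvilaHolderContinuityAbsolutely2011} applies directly.

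The main obstacle is to implement the conjugation with the sharp constant $2\pi h$, that is, in strips of width arbitrarily close to the full subcritical radius, so that the loss $\|B_N\|_0^2$ does not exceed $e^{4\pi h'|n_s|(1+o(1))}$ with $h'\uparrow h$. Being even a fixed positive amount below the optimal radius would prevent the limiting case $\delta=2\pi h$ from reproducing the Lipschitz bound of case (2), and would spoil the formula $\tfrac{\delta}{2\delta-2\pi h}$. Fortunately this is exactly what Avila's almost reducibility theorem at subcritical energies delivers, so the sharp exponent is attainable. A secondary technicality is ensuring uniformity in the phase $\theta$; this follows because $B_N$ and $\rho(E)$ are $\theta$-independent, and the Jitomirskaya--Last inequality converts phase-uniform transfer matrix bounds into phase-uniform spectral measure bounds.
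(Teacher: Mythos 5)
Your cases (2) and (3) track the paper's own treatment (reducibility below the resonance threshold plus a Lipschitz bound for cocycles conjugate to a rotation, and the $1/2$-H\"older bound of \cite{AvilaHolderContinuityAbsolutely2011}); note only that case (3) is $\delta=\infty$, which also covers super-exponential resonances of $\mathcal{N}_{v,\alpha}(E)$, not just $\mathcal{N}_{v,\alpha}(E)=k\alpha\bmod\Z$, though the Avila--Jitomirskaya bound covers both. The genuine gap is in case (1), which is the heart of the theorem: the central inequality $\mu_{v,\alpha,\theta}(E-\epsilon,E+\epsilon)\le C\|B_N\|_0^2\,\epsilon$ ``in the perturbative regime'' cannot produce the exponent $\tfrac{\delta}{2\delta-2\pi h}$. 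If $\|B_N\|_0^2$ means the strip-norm bound $e^{4\pi h'|n_s|(1+o(1))}$ you quote, then with your stopping rule $e^{-\eta_s|n_s|}\sim\epsilon$ the bound reads $\epsilon^{1-4\pi h/\delta+o(1)}$, which is not $\epsilon^{\delta/(2\delta-2\pi h)}$ and is vacuous whenever $\delta\le 4\pi h$; at the critical scale $\ln\epsilon^{-1}\approx(2\delta-2\pi h)|n_s|$ one would need $\|B_N\|_0^2\le e^{(\delta-2\pi h)|n_s|}$, which your bound exceeds for all $\delta<6\pi h$. If instead $\|B_N\|_0$ means the $C^0$ norm of the conjugation, that norm is only polynomial in $|n_s|$ (cf.\ \eqref{es4} and the definition of goodness), and a Lipschitz bound with polynomial constant is false in the strongly resonant windows: it would contradict the lower bound $\mu\ge\epsilon^{f(\epsilon)+\varepsilon}$ with $f$ dipping to $\tfrac{\eta_n}{2\eta_n+\ln\lambda}<1$ in Theorem \ref{MainTHM}(1)(ii). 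The root of the problem is the phrase ``rotations have Lipschitz spectral measure'': after the resonance $n_s$ is removed, the constant part is not a rotation; it carries an off-diagonal entry $\nu_n$ with $|\nu_n|\asymp e^{-2\pi h|n_s|}$ that dominates the elliptic part $\|2\rho_n\|_\T\approx e^{-\eta_n|n_s|}$ exactly when $\eta_n>2\pi h$, so its powers grow linearly up to scale $\|2\rho_n\|_\T^{-1}$ --- and that growth is precisely what generates the exponent $\tfrac{\delta}{2\delta-2\pi h}$. (Also, $e^{-\eta_s|n_s|}$ is the size of this elliptic part, not the reducibility error, which after step $s$ is of order $e^{-\gamma_0|n_{s+1}|}$ and far smaller, so the stopping rule is misaligned.) Either your inequality is not applicable in the resonant windows, or it is false there; in both readings the scales where the claimed exponent must be established are left uncovered.

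What the paper does instead, and what your outline omits, is to feed the \emph{structured} almost reducibility of Proposition \ref{GYZ-general-good-THM} --- $C^0$-polynomially bounded conjugations of degree $n_s$ together with the lower bound $\zeta_n\ge(1-\varepsilon)2\pi h$ on the decay of $\nu_n$ (condition $(\mathbf{H2})$) --- into the $P_{(k),\pm}$ asymptotics of Proposition \ref{Prop:AsyFormula} and the $m$-function link of Corollary \ref{Corspecmeas}. Crucially, the length scale $k$ is matched to $\epsilon$ through $\det P_{(k),+}=\epsilon^{-2}$, and in a resonant window $\det P_{(k),+}\approx k^2\bigl(1+|\nu_n|^2\min\{k^2,\|2\rho_n\|_\T^{-2}\}\bigr)$ makes $\epsilon$ much smaller than $1/k$, while $\epsilon\Im M(E+\ii\epsilon)\lesssim\|P_{(k),+}^{-1}\|\approx 1/k$; this mismatch is exactly the stratified exponent of Proposition \ref{THM5.4}. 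Your closing appeal to the Jitomirskaya--Last inequality points at this machinery, but the quantitative core of case (1) --- tracking $\nu_n$ and matching scales through $\det P_{(k),+}$ rather than a Lipschitz-times-$\|B_N\|_0^2$ heuristic --- is missing.
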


  The stratified H\"older continuity of the absolutely continuous spectral measure holds for any phase $\theta$, thus the corresponding result also holds for the integrated density of states (IDS):

\begin{corollary}\label{cor:generalholder}
    Let $ \alpha\in \mathrm{DC}_1 $, $ v\in C^\omega(\T,\R)$, $E\in \Sigma^{-}_{v,\alpha}$ with subcritical radius $h$.  Then for $\epsilon$ sufficiently small,  we have the following:     \begin{enumerate}
    \item If $2\pi h\leq \delta(\alpha,\mathcal{N}_{ v, \alpha}(E))=\delta<\infty$, we have  
    \[
                   \mathcal{N}_{ v, \alpha}(E+ \epsilon)- \mathcal{N}_{ v, \alpha}(E- \epsilon)\leq \epsilon^{\frac{\delta}{2\delta-2\pi h}-o(1)},      \]
        \item If $\delta(\alpha,\mathcal{N}_{ v, \alpha}(E))< 2\pi h$, we have 
        \[
 \mathcal{N}_{ v, \alpha}(E+ \epsilon)- \mathcal{N}_{ v, \alpha}(E- \epsilon)\leq C\epsilon. 
        \]
        \item If $\delta(\alpha,\mathcal{N}_{ v, \alpha}(E))=\infty$,  we have
        \[
        \mathcal{N}_{ v, \alpha}(E+ \epsilon)- \mathcal{N}_{ v, \alpha}(E- \epsilon)\leq  C\epsilon^{\frac{1}{2}}.
        \]
    \end{enumerate}
     Moreover, for the almost Mathieu operator $v(\cdot)=2 \lambda \cos (\cdot)$, the result holds for any $\lambda\neq 1$. 
\end{corollary}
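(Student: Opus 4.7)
The plan is to split the corollary into two parts: deriving the integrated density of states bounds from Theorem \ref{Thm:generalholder} by $\theta$-averaging, and extending the almost Mathieu statement from $|\lambda|<1$ to $|\lambda|>1$ by Aubry duality.

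For the general subcritical statement, I would start from the identity $\mathcal{N}_{v,\alpha}(E+\epsilon)-\mathcal{N}_{v,\alpha}(E-\epsilon)=n_{v,\alpha}((E-\epsilon,E+\epsilon])$ and recall that the DOS $n_{v,\alpha}$ is by definition the $\theta$-average of $\mu^{0}_{v,\alpha,\theta}$. Since $\mu^{0}_{v,\alpha,\theta}\leq \mu_{v,\alpha,\theta}=\mu^{0}_{v,\alpha,\theta}+\mu^{1}_{v,\alpha,\theta}$ as positive Borel measures and the canonical spectral measure is atomless at subcritical energies, Fubini gives
\[
\mathcal{N}_{v,\alpha}(E+\epsilon)-\mathcal{N}_{v,\alpha}(E-\epsilon)\leq \int_{\mathbb{T}}\mu_{v,\alpha,\theta}(E-\epsilon,E+\epsilon)\,d\theta.
\]
The upper bounds of Theorem \ref{Thm:generalholder} hold uniformly in $\theta$ once $\epsilon$ is sufficiently small (the smallness depending only on $v,\alpha,E$), so integrating preserves each of the three estimates and yields items (1)--(3).

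For the almost Mathieu statement, the range $|\lambda|<1$ is covered directly by the first part with subcritical radius $h=-\ln|\lambda|/(2\pi)$, matching the exponents of Corollary \ref{cor:generalholder}. To handle $|\lambda|>1$ I would invoke Aubry duality, which identifies the IDS of $H_{\lambda,\alpha,\theta}$ with that of $\lambda H_{1/\lambda,\alpha,\theta}$ and gives the well-known identity $\mathcal{N}_{\lambda,\alpha}(E)=\mathcal{N}_{1/\lambda,\alpha}(E/\lambda)$. Consequently
\[
\mathcal{N}_{\lambda,\alpha}(E+\epsilon)-\mathcal{N}_{\lambda,\alpha}(E-\epsilon)=\mathcal{N}_{1/\lambda,\alpha}\bigl(E/\lambda+\epsilon/|\lambda|\bigr)-\mathcal{N}_{1/\lambda,\alpha}\bigl(E/\lambda-\epsilon/|\lambda|\bigr),
\]
and the right-hand side falls back into the first part, because $|1/\lambda|<1$ places every spectral energy of the dual AMO in the subcritical regime by Avila's global theory.

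The main point requiring care is that the resonance parameter $\delta(\alpha,\cdot)$ depends only on the numerical value of the IDS at $E$ and is therefore preserved by the duality identity, while the linear rescaling $\epsilon\mapsto \epsilon/|\lambda|$ only absorbs into the constant $C$ without affecting the H\"older exponents in any of the three cases. With these two invariance observations in place the transfer from $|\lambda|<1$ to $|\lambda|>1$ is mechanical, and no new analytic estimate beyond Theorem \ref{Thm:generalholder} is required.
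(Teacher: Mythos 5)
Your proposal is correct and follows essentially the same route the paper intends: the corollary is deduced from Theorem \ref{Thm:generalholder} by averaging over $\theta$ (using that the bounds there are uniform in $\theta$ for $\epsilon$ below a $\theta$-independent threshold, together with $n_{v,\alpha}=\int_{\T}\mu^{0}_{v,\alpha,\theta}\,d\theta\leq\int_{\T}\mu_{v,\alpha,\theta}\,d\theta$), and the extension to the almost Mathieu operator with $|\lambda|>1$ via the Aubry duality identity $\mathcal{N}_{\lambda,\alpha}(E)=\mathcal{N}_{1/\lambda,\alpha}(E/\lambda)$, which preserves $\delta(\alpha,\cdot)$ and only rescales $\epsilon$ by $|\lambda|$, is exactly the standard argument the paper's remark presupposes.
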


 Let us revisit previous results on H\"older continuity of IDS.   In the regime of zero Lyapunov exponent,  sharp bounds ($1/2$-H\"older continuity) for the integrated density of states for Diophantine frequencies were derived through analyses by \cite{eliasson} and \cite{aj1}.
 These results were obtained in both perturbative and non-perturbative regimes, as discussed in \cite{amor} and \cite{aj1}, respectively.  However, Corollary \ref{cor:generalholder} indicates that this sharp $1/2 $-H\"older continuity can only occur in the case $\delta=\infty$.
 In the regime of positive Lyapunov exponents, 
 Goldstein and Schlag \cite{goldstein2008fine} demonstrated H\"older continuity of the integrated density of states for a full Lebesgue measure subset of Diophantine frequencies.  Their results were nearly optimal for the supercritical almost Mathieu operator, achieving $(1/2 - o(1))$-H\"older continuity for $|\lambda| > 1$. 
 Before this, Bourgain \cite{bourgain2000holder} had established almost $1/2$-H\"older continuity for almost Mathieu operators in the perturbative regime, for Diophantine $\alpha$, and sufficiently large $\lambda$ (depending on $\alpha$).
If the potential is not analytic, one can consult \cite{CCYZ,Klein,WangZhangRegu} and the references therein for more results on the regularity  of  IDS.

\subsection{Exact local distribution of ac measure: general analytic potential}

For general analytic potentials, Theorem \ref{Thm:generalholder} shows that although it may not be  possible to obtain the local dimension of the spectral measure as precisely as in the case of AMO (Theorem \ref{MainCor}), one can achieve stratified H\"older continuity of the absolutely continuous spectral measure. Note that Theorem \ref{MainCor} is a direct corollary of the exact local distribution (Theorem \ref{MainTHM}). This raises the natural question of whether this kind of exact local distribution is specific to AMO. In the following, we show that it is indeed possible to obtain the exact local distribution for dense small analytic potentials:

\begin{theorem}\label{THM1.2}
    Let $ \alpha\in \mathrm{DC}_d $, $ v\in C_h^\omega(\T^d,\R)$ with 
    \begin{equation}\label{re-n}2\pi h <  \delta(\alpha,\mathcal{N}_{ v, \alpha}(E))=\delta<\infty\end{equation}
   For any $ h'\in(0,h) $,  there exists $c_*=c_*(\alpha,h,h')>0 $ such that if $
         \|v\|_h\leq c_*  $, then  for any  $ \hat{\epsilon}>0 $,  there exist a sequence $(k_{s})$, $ \tilde{v}\in C^\omega_{h'}(\T,\R) $ with $ \|\tilde{v}-v\|_{h'}<\hat{\epsilon} $ and 
         $$\delta(\alpha,\mathcal{N}_{\tilde{v}, \alpha}(E))=\delta$$ 
         such that for any $\theta\in\R$, we have         
     $$
     \underline{D}\mu_{\tilde{v},\alpha,\theta}(E)=\frac{\delta}{2\delta-2\pi h},\  \overline{D}\mu_{\tilde{v},\alpha,\theta}(E)=1.
     $$ 
    More precisely, for any $ \varepsilon>0 $, there exists $ \epsilon_*=\epsilon_*(\varepsilon)>0 $, such that
     if $\epsilon\leq\epsilon_*$, 
    \[
        \epsilon^{f(\epsilon)+\varepsilon}\leq\mu_{\tilde{v},\alpha,\theta}(E-\epsilon,E+\epsilon)\leq \epsilon^{f(\epsilon)-\varepsilon},
    \]  
    where $f(\epsilon)$ has the following form:
    \begin{equation} \label{eq:AKf(epsilon)}
            f(\epsilon)=\left\lbrace\begin{aligned}
                    &\frac{1}{2}+\frac{2\pi h|k_{s}|}{2\ln\epsilon^{-1}} &  &\text{for } \epsilon\in[ e^{-(2\delta-2\pi h)|k_{s}|},e^{-2\pi h|k_{s}|}],\\
                    &\frac{1}{1-b_s}-\frac{b_s}{1-b_s} 
                    \frac{2\pi h |k_{s+1}|}{\ln\epsilon^{-1}} & &\text{for } \epsilon\in[e^{-2\pi h|k_{s+1}|},e^{-(2\delta-2\pi h)|k_{s}|}],
                \end{aligned}\right.
     \end{equation}
    and $ b_s=\frac{(\delta-2\pi h)|k_{s}|}{2\pi h|k_{s+1}|-\delta |k_{s}|} $.
\end{theorem}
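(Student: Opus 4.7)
The plan is to adapt the almost reducibility strategy from the proof of Theorem \ref{MainTHM} to the general small analytic potential setting, with an additional perturbative construction that implants a prescribed resonance pattern into $v$. The upper bound in Theorem \ref{THM1.2} will follow from the stratified H\"older continuity of Theorem \ref{Thm:generalholder}, suitably applied at each scale determined by the sequence $(k_s)$; the bulk of the work is to (a) construct $\tilde v$ so that a chosen resonance sequence $(k_s)$ is actually realized, and (b) match the upper bound with a tight lower bound at every $\epsilon$.

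First, I would invoke almost reducibility. Since $\alpha \in \mathrm{DC}_d$ and $\|v\|_h \leq c_*$ is small, the Schr\"odinger cocycle $(\alpha, S_E^v)$ is almost reducible in $C^\omega_{h'}$ for any $h'<h$: there is a sequence of real-analytic conjugacies $B_n \in C^\omega_{h_n}(\T^d, \mathrm{SL}(2,\R))$ bringing $(\alpha, S_E^v)$ close to a constant elliptic cocycle $R_{\rho}$, with $2\rho \equiv \mathcal{N}_{v,\alpha}(E) \bmod \Z$. The remaining errors at step $n$ are controlled by small divisors of the form $\|2\rho - \langle k,\alpha\rangle\|$ for $|k| \lesssim $ cutoff, which are exactly the quantities feeding $\delta$. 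In this regime the hypothesis $2\pi h < \delta$ guarantees that the effective coupling $e^{-2\pi h|k|}$ dominates the small divisor $e^{-\delta|k|}$ only on specific "resonance windows," playing the role of $\lambda$ from the AMO analysis.

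Second, I would engineer the resonance sequence. Pick an increasing sparse sequence $(k_s)$ with $|k_s|\to\infty$, and define $\tilde v = v + \sum_s \delta v_s$, where each $\delta v_s$ is a tiny analytic function Fourier-supported near mode $k_s$ chosen to shift $\mathcal{N}_{\tilde v,\alpha}(E)$ by an amount producing $\|\mathcal{N}_{\tilde v,\alpha}(E) - \langle k_s,\alpha\rangle\| \sim e^{-\delta|k_s|}$. Sparseness ensures the perturbations do not interact, that $\|\tilde v - v\|_{h'} < \hat\epsilon$, and that $\delta(\alpha, \mathcal{N}_{\tilde v, \alpha}(E)) = \delta$ is preserved. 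The fact that such an adjustment is possible is an inverse function/implicit function statement inside the almost reducibility scheme: the rotation number depends smoothly (and locally nontrivially) on $v$ through the conjugacies $B_n$, so the required Fourier-mode perturbations can be solved for order by order.

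Third, I would extract the exact local distribution. At each scale $|k_s|$, the conjugacy $B_{n(s)}$ produced by almost reducibility reduces the spectral problem near $E$ to a small (essentially constant) perturbation of $R_\rho$ with a single dominant Fourier resonance at $k_s$. The Weyl $m$-function of this reduced problem can be computed explicitly to leading order, and matches the AMO computation of Theorem \ref{MainTHM} with $-\ln\lambda$ replaced by $2\pi h$. The two regimes in \eqref{eq:AKf(epsilon)} then arise as in the AMO proof: the range $\epsilon \in [e^{-(2\delta-2\pi h)|k_s|}, e^{-2\pi h|k_s|}]$ corresponds to the "trapped" scale dominated by the resonance-induced gap, yielding the $\tfrac{1}{2}$-H\"older term modulated by $2\pi h|k_s|$; and the range $\epsilon \in [e^{-2\pi h|k_{s+1}|}, e^{-(2\delta-2\pi h)|k_s|}]$ corresponds to the "transition" interval, where $\mu_{\tilde v, \alpha, \theta}$ transitions affinely to the next resonance at $k_{s+1}$ with slope controlled by $b_s$. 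The matching lower bound comes from the explicit form of the $m$-function on the reduced side.

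The main obstacle is the joint control of the construction and the resonance analysis. Unlike in AMO, where resonances are intrinsic to $(\alpha, E)$, here one must build $\tilde v$ to \emph{have} the desired resonances while simultaneously (i) preserving the Diophantine/resonance profile $\delta(\alpha, \mathcal{N}_{\tilde v, \alpha}(E)) = \delta$, (ii) keeping $\|\tilde v - v\|_{h'} < \hat\epsilon$, and (iii) retaining enough quantitative almost reducibility at each scale $|k_s|$ to make the lower bound in \eqref{eq:AKf(epsilon)} tight rather than merely an upper bound. Reconciling these competing demands across all scales $s$ is the delicate part; it should be handled by a single KAM-type iteration in which, at step $s$, the perturbation $\delta v_s$ is chosen after the conjugacy $B_{n(s)}$ has been constructed, so that the spectral analysis decouples cleanly across scales.
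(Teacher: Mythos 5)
Your proposal correctly identifies the general skeleton (almost reducibility, a resonance sequence $(k_s)$, and the criterion that reproduces the AMO-type function $f(\epsilon)$ with $-\ln\lambda$ replaced by $2\pi h$), but it has a genuine gap at its core: the exact local distribution requires \emph{two-sided} control of the off-diagonal coefficient in the almost-reducibility normal form, i.e.\ $|\nu_n|\approx e^{-(2\pi h+o(1))|k_s|}$ (conditions $(\mathbf{H2})$ \emph{and} $(\mathbf{H3})$ of Proposition \ref{THM5.1}). General quantitative almost reducibility (Proposition \ref{GYZ-general-good-THM}) only gives the upper bound $|\nu_n|\leq e^{-(1-\varepsilon)2\pi h|k_s|}$; the lower bound, which is what forces $\underline{D}\mu=\frac{\delta}{2\delta-2\pi h}$ rather than a Lipschitz (reducible) behavior, is exactly what fails for a generic small $v$ and is not produced by your construction. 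Perturbing $v$ by tiny Fourier modes near $k_s$ so as to move $\mathcal{N}_{\tilde v,\alpha}(E)$ onto resonant positions addresses only the rotation-number resonances $\eta_n\approx\delta$; it gives no mechanism pinning $\zeta_n=-\ln|\nu_n|/|k_s|$ to $2\pi h$ from above, and without that the "matching lower bound from the explicit $m$-function on the reduced side" does not exist (the reduced cocycle could be reducible, or $\nu_n$ could be much smaller, collapsing $f(\epsilon)$ to $1$). Moreover, the implicit-function step you invoke is doubtful as stated: high Fourier modes of the potential barely move the fibered rotation number at fixed $E$ (they mainly open gaps), so "solving order by order" for the required rotation-number shifts is unsubstantiated, and simultaneously preserving $\delta(\alpha,\mathcal{N}_{\tilde v,\alpha}(E))=\delta$ across infinitely many scales is not addressed.

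The paper resolves precisely this difficulty by a structurally different route: it does not perturb $v$ mode by mode, but first builds an abstract $\mathrm{SU}(1,1)$ cocycle $A_\infty$ close to the identity via a fibered Anosov--Katok scheme (Lemma \ref{AKClemma}), in which at step $s$ the constant part and the perturbation are \emph{prescribed} with $\lambda_s=e^{-2\pi h|k_{s+1}|}$ and rotation number $\vartheta$ chosen so that $\eta_n=\delta+o(1)$, hence $\zeta_n=2\pi h+o(1)$ by construction; it then conjugates $(\alpha,S_E^v)$ near the identity (Lemma \ref{conjugatetonearId}, using almost reducibility and non-uniform hyperbolicity) and converts the resulting non-Schr\"odinger perturbation back into Schr\"odinger form by Avila's Theorem \ref{nonSchtoSch}, which is what produces $\tilde v$ with $\|\tilde v-v\|_{h'}<\hat\epsilon$ and transports the goodness and the resonance data (up to a degree shift) to $(\alpha,S_E^{\tilde v})$. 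Finally, note that the upper bound $\epsilon^{f(\epsilon)-\varepsilon}$ is itself scale-dependent and stronger than the uniform $\frac{\delta}{2\delta-2\pi h}$-H\"older bound of Theorem \ref{Thm:generalholder} in the non-resonant windows, so even the upper half of the statement needs the full criterion of Proposition \ref{THM5.1} rather than Theorem \ref{Thm:generalholder} alone.
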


\subsection{Optimality of reducibility condition}

Let us discuss why the condition \eqref{re-n} is essential. As demonstrated in Theorem \ref{MainTHM} and Theorem \ref{Thm:generalholder}, if $2\pi h >  \delta(\alpha,\mathcal{N}_{ v, \alpha}(E))$, the spectral measure always exhibits the trivial bound. The main reason for this is the reducibility theory of quasiperiodic cocycles.
Recall that $(\alpha,A_1)$ is conjugated to $(\alpha,A_2)$, if there exists $B\in C^\omega(\T^d,\op{PSL}(2,\R))$
such that 
$$B^{-1}(\theta+\alpha)A_1(\theta)B(\theta)=A_2(\cdot).$$ 
Then $(\alpha,A)$ is reducible if it is analytic conjugate to constant. 
In the context of quasiperiodic cocycles, the corresponding concept of IDS is the fibered rotation number $\rho(\alpha,A)$\footnote{Consult Section \ref{sec2.2} for details.}. It is noteworthy that reducibility theory has recently proven to be a powerful tool for investigating the spectral theory of quasiperiodic Schrödinger operators \cite{ds, eliasson,aj1,AvilaHolderContinuityAbsolutely2011,avila2016dry,CCYZ,LYZZ}.

\begin{theorem}\label{reducibility-main}\cite{Universal}
Assume $\alpha\in \op{DC}_d$, $A \in C_h^\omega(\mathbb{T}^d,\op{SL}(2,\mathbb{R}))$ with $2\pi h>2\pi h'>\delta(\alpha,2\rho(\alpha,A))$, $R\in \op{SL}(2,\R)$. Then there exists $c_*=c_*(\alpha,R,h,h')>0$ such that if $\|A(\cdot)-R\|_h<c_*$, then $(\alpha,A)$ is reducible. 
\end{theorem}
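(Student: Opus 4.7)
The plan is to run a KAM scheme calibrated so that the analytic‐norm loss at each step is strictly smaller than the exponential strength of the rotation‐number non‐resonance. The smallness hypothesis $\|A-R\|_h<c_*$ triggers the classical almost reducibility theorem for $\op{SL}(2,\R)$-cocycles over Diophantine rotations (Eliasson--Avila), yielding a sequence of real-analytic conjugacies $B_n\in C^\omega_{h_n}(\T^d,\op{PSL}(2,\R))$, with $h_n\searrow h''$ for some fixed $h''\in(h',h)$, such that
\[
B_n(\theta+\alpha)^{-1}A(\theta)B_n(\theta)=A_n\,e^{F_n(\theta)},\qquad A_n\in\op{SL}(2,\R),
\]
with $\|F_n\|_{h_n}$ decaying super-exponentially in a scale $N_n\to\infty$. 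Since conjugation preserves the fibered rotation number modulo $\tfrac12\alpha\Z^d$, the non-resonance hypothesis $\delta(\alpha,2\rho(\alpha,A))<2\pi h'$ propagates automatically to each $A_n$.

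\textbf{KAM step.} To promote almost reducibility to full reducibility I would continue the normal-form iteration one scale at a time. Given the pair $(A_n,F_n)$, I look for $Y\in C^\omega_{h_n-\sigma}$ solving the truncated cohomological equation
\[
Y(\theta+\alpha)\,A_n-A_n\,Y(\theta)=\mathcal{T}_{N_{n+1}}F_n(\theta)-\widehat{F_n}(0),
\]
where $\mathcal{T}_N$ is the Fourier projection onto modes $|k|\le N$. After a change of basis diagonalising $A_n$, the Fourier coefficients of $Y$ inherit small divisors of type $\|\langle k,\alpha\rangle\|_{\R/\Z}$ and $\|2\rho(\alpha,A_n)-\langle k,\alpha\rangle\|_{\R/\Z}$. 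The first family is governed by $\alpha\in\op{DC}_d$, while the second, by virtue of $\delta(\alpha,2\rho)<2\pi h'$, is bounded below by $e^{-(2\pi h'-2\eta)|k|}$ for some $\eta>0$ and all sufficiently large $|k|$. Combining these estimates with Cauchy bounds on the Fourier coefficients of $F_n$ produces a solution $Y\in C^\omega_{h_n-\sigma}$ of size $\lesssim \sigma^{-\tau}\|F_n\|_{h_n}$ once $\sigma$ is tuned to absorb the divisor factor $e^{(2\pi h'-2\eta)N_{n+1}}$. The conjugation by $e^Y$ then replaces $(A_n,F_n)$ by $(A_{n+1},F_{n+1})$ with $\|F_{n+1}\|$ obeying a quadratic‐type improvement.

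\textbf{Main obstacle.} The real difficulty is quantitative: the parameters $(h_n,N_n,\sigma_n,\|F_n\|_{h_n})$ must be coordinated so that $\sum_n\sigma_n<h''-h'$ (so that the limiting conjugacy still lies in $C^\omega_{h'}$) while each $\sigma_n$ is large enough to defeat the small-divisor factor $e^{(2\pi h'-2\eta)N_n}$; the strict inequality $2\pi h'>\delta(\alpha,2\rho)$ is precisely the margin that makes such a choice possible. A minor technicality is that the asymptotic bound on $\|2\rho-\langle k,\alpha\rangle\|_{\R/\Z}$ may fail for finitely many low-frequency $k$; these exceptional resonances are killed by a preliminary conjugation that shifts the rotation number by $\tfrac12\langle k,\alpha\rangle$, an operation freely available in $\op{PSL}(2,\R)$ (or absorbed into the choice of $c_*$). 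With the scheme convergent, the product $\widetilde B_n=B_n\circ e^{Y_1}\circ\cdots\circ e^{Y_n}$ converges in $C^\omega_{h'}$ to a bona fide reducibility conjugacy, proving the theorem.
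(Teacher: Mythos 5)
This theorem is quoted by the paper from \cite{Universal} without an internal proof, so your sketch can only be judged against the standard mechanism that the paper's own machinery (Proposition \ref{reducibility-stru-1}, Lemma \ref{relation1}) is built on. Your overall route -- almost reducibility plus the hypothesis $\delta(\alpha,2\rho)<2\pi h'<2\pi h$ fed into the cohomological equations -- is the right family of argument, but the decisive step is exactly the one you defer, and the accounting you do give for it is not coherent. You propose to beat the small-divisor factor $e^{(2\pi h'-2\eta)N_{n+1}}$ by the analyticity losses $\sigma_n$ while keeping $\sum_n\sigma_n<h''-h'$. Absorbing a divisor that is exponentially small at rate $\delta'$ in the frequency $|k|$ by shrinking the band costs a \emph{fixed} width $\delta'/(2\pi)$ at every step at which such divisors actually occur (the mode $k$ contributes $|\hat F_n(k)|e^{\delta'|k|}e^{2\pi(h_n-\sigma_n)|k|}$, which is only controlled when $2\pi\sigma_n\gtrsim\delta'$); a summable sequence of losses cannot do it. The correct mechanism is different: either the factor $e^{\delta' N_{n+1}}$ is absorbed into the smallness of $\|F_n\|$ by tying the truncation to the perturbation, $N_{n+1}\sim c\,|\ln\|F_n\|_{h_n}|$ with $c\delta'$ small, or one argues (as in the structured scheme the paper imports) that resonant steps terminate: a KAM resonance at site $\mathfrak{m}_s$ forces, through \eqref{es2} and the degree control \eqref{es4}--\eqref{es5}, an inequality of the type $\|2\rho-\langle \mathfrak{m}_s,\alpha\rangle\|_\T\lesssim e^{-2\pi\tilde h|\mathfrak{m}_s|}$ with $\tilde h$ close to $h>h'$, which is incompatible with $\delta<2\pi h'$ for infinitely many $s$; after the last resonance the non-resonant conjugacies are summably close to the identity and converge, giving reducibility. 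As written, your scheme neither proves termination of resonances nor sets up the truncation/smallness balance, so the convergence claim is unsupported.

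A second, related gap: the lower bound on the divisors must be applied to $\|2\rho(\alpha,A_n)-\langle k,\alpha\rangle\|_\T$, and by \eqref{degree} this equals $\|2\rho(\alpha,A)-\langle k+\deg B_n,\alpha\rangle\|_\T$ up to an error $O(\|F_n\|_{h_n}^{1/2})$ coming from \eqref{rotationnumber1}. Saying the hypothesis ``propagates automatically'' hides two quantitative requirements: the exponent you actually get is $\delta'\,(|k|+|\deg B_n|)$, so you need $|\deg B_n|$ controlled against $\ln(1/\|F_n\|_{h_n})$ (this is precisely what the ``structured'' estimates of Proposition \ref{reducibility-stru-1} provide), and you need $\|F_n\|^{1/2}$ to be negligible against $e^{-\delta'(N_{n+1}+|\deg B_n|)}$, which again forces a specific choice of $N_{n+1}$. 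Without these two pieces of bookkeeping the cohomological-equation estimate in your KAM step does not close, so the proposal as it stands has a genuine gap rather than a complete proof; the strict inequality $2\pi h'>\delta$ is indeed the margin that makes everything work, but exhibiting the compatible choice of $(N_n,\sigma_n,\deg B_n,\|F_n\|)$ is the content of the theorem, not a technicality.
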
 

One may understand the optimality in several different ways. Note if one Schr\"odinger cocycle is reducible (to elliptic), then the corresponding spectral measure is Lipschitz (Proposition \ref{Prop:finite-resonnace}), thus Theorem \ref{MainTHM}  shows that Theorem \ref{reducibility-main} is optimal in the setting of  almost Mathieu cocycles, while Theorem \ref{THM1.2}   
shows that Theorem \ref{reducibility-main} is optimal in the setting of Schr\"odinger cocycles:  if $2\pi h <  \delta(\alpha,\mathcal{N}_{ v, \alpha}(E))$, one can always perturb $ v $ to $  \tilde{v} $, such that the corresponding Schr\"odinger cocycle is not reducible.   Indeed, this is true in the setting of general analytic cocycles, actually  the proof of  Theorem \ref{THM1.2} enables us to prove the following:

\begin{theorem}\label{thm:irreducible}
Assume $\alpha\in \op{DC}_d$, $R\in \op{SL}(2,\R)$, $A \in C_h^\omega(\mathbb{T}^d, \op{SL}(2,\mathbb{R}))$ with 
$$
0<2\pi h'< 2\pi h<  \delta(\alpha,2\rho(\alpha,A))=\delta<\infty.
$$ 
There exists $c_*=c_*(\alpha, R,h,h')>0$ such that for  any  $\|A(\cdot)-R\|_h<c_* $, and any $ \hat{\epsilon}>0 $, there exists $A \in C_{h'}^\omega(\mathbb{T}^d, \op{SL}(2,\mathbb{R}))$ with $\|A-A'\|_{h'} \leq \hat{\epsilon}$, and
$$ \delta(\alpha,2\rho(\alpha,A'))=\delta,$$
such that $(\alpha,A')$ is not reducible. \end{theorem}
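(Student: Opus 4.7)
\emph{Proof proposal.} The plan is to adapt the inductive perturbation construction used in the proof of Theorem \ref{THM1.2} (developed there for Schr\"odinger cocycles) to general $\op{SL}(2,\R)$-valued base cocycles, replacing the spectral-measure lower bound used there by a direct Fourier-side obstruction to analytic reducibility. First, since $\|A-R\|_h<c_*$, the standard almost-reducibility scheme at radius $h$ produces a decreasing sequence $h=h_0>h_1>\cdots>h_\infty>h'$, constants $A_s^{(0)}\in\op{SL}(2,\R)$, and conjugations $B_s\in C^\omega_{h_s}(\T^d,\op{PSL}(2,\R))$ with
\[
B_s^{-1}(\cdot+\alpha)A(\cdot)B_s(\cdot)=A_s^{(0)}e^{F_s(\cdot)},\qquad \|F_s\|_{h_s}\to 0
\]
super-exponentially and $\rho(\alpha,A_s^{(0)})\to\rho(\alpha,A)$. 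Only steps indexed by the resonances $(k_s)$ of $2\rho(\alpha,A)$ require a nontrivial conjugation; since $\delta>2\pi h>2\pi h'$, infinitely many such steps occur.

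Fix $\sigma\in(0,\delta-2\pi h')$ and, in the basis diagonalizing $A_s^{(0)}$, let $J_s\in\mathfrak{sl}(2,\R)$ be a unit anti-diagonal element. Define
\[
c_s:=\frac{2^{-s}\hat\epsilon\,e^{-(2\pi h'+\sigma)|k_s|}}{\|B_s\|_{h'}\|B_s^{-1}\|_{h'}},\qquad P_s(\theta):=B_s(\theta)\bigl(c_s\,e^{2\pi i\langle k_s,\theta\rangle}J_s\bigr)B_s^{-1}(\theta+\alpha),
\]
and set $A':=A+\sum_s P_s$. By construction $\|P_s\|_{h'}\leq 2^{-s}\hat\epsilon$, hence $A'\in C^\omega_{h'}$ and $\|A'-A\|_{h'}<\hat\epsilon$; each $P_s$ has Fourier content concentrated near $\pm k_s$, so $|\rho(\alpha,A')-\rho(\alpha,A)|=O(\sum_s c_s)\ll 1$. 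Implementing the Cantor-type bootstrap from the proof of Theorem \ref{THM1.2} (first prescribe a target $2\rho(\alpha,A')$ with resonance sequence $(k_s)$ and $\limsup$ exponent $\delta$, then choose the amplitudes $c_s$ so that the target is realized) we obtain $\delta(\alpha,2\rho(\alpha,A'))=\delta$.

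Suppose for contradiction that some $\tilde B\in C^\omega_{h''}(\T^d,\op{PSL}(2,\R))$ with $h''>0$ reduced $(\alpha,A')$ to a constant. Composing with $B_s$ and expanding at the $s$-th step, $B_s^{-1}\tilde B$ must satisfy, at Fourier mode $k_s$, a cohomological equation whose forcing is $c_s e^{2\pi i\langle k_s,\theta\rangle}J_s$ up to higher-order terms; the small divisor is $\|2\rho(\alpha,A')-\langle k_s,\alpha\rangle\|_{\R/\Z}\approx e^{-\delta|k_s|}$, forcing the Fourier correction $Y_s$ to satisfy $|\widehat{Y_s}(k_s)|\geq c_s e^{(\delta-o(1))|k_s|}$. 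Juxtaposed with the analytic upper bound $|\widehat{Y_s}(k_s)|\leq \|Y_s\|_{h''}\,e^{-2\pi h''|k_s|}$ and our choice of $c_s$, this gives $\|Y_s\|_{h''}\geq e^{(\delta-2\pi h'-\sigma+2\pi h'')|k_s|}/\op{poly}(|k_s|)$, which grows unboundedly as $s\to\infty$ for any $h''\geq 0$ (since $\delta-2\pi h'-\sigma>0$), contradicting the assumed analyticity of $\tilde B$. The \textbf{main obstacle} is the simultaneous control in the second paragraph: prescribing the resonance structure of $2\rho(\alpha,A')$ in advance requires a Cantor-type bootstrap because each amplitude $c_s$ feeds back into $\rho(\alpha,A')$ via the $B_s$ that define $P_s$. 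This is exactly the implicit-function/Cantor device that already appears in the proof of Theorem \ref{THM1.2}, and it adapts \emph{mutatis mutandis} to the present setting.
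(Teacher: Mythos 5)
Your construction and your obstruction both have genuine gaps. First, the additive perturbation $A'=A+\sum_s P_s$ does not stay in $\op{SL}(2,\R)$ (and, as written, $P_s$ is not even real-valued); this is fixable by a multiplicative perturbation, but it signals a deeper problem: your $B_s$ are the almost-reducibility conjugations of the \emph{original} $A$, so $B_s^{-1}(\cdot+\alpha)A'(\cdot)B_s(\cdot)$ contains, besides the mode-$k_s$ term of size $c_s$, all the earlier perturbations $P_j$, $j<s$, whose amplitudes $c_j$ are exponentially larger than $c_s$, as well as the nonlinear interactions. Hence the claim that a reducing conjugacy must solve a cohomological equation at mode $k_s$ ``whose forcing is $c_s e^{2\pi i\langle k_s,\theta\rangle}J_s$ up to higher-order terms'' is not justified: the would-be conjugacy need not be close to constant, the linearization is not controlled, and in general a single large resonant Fourier mode in a perturbation does not by itself preclude reducibility. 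Second, the step where you ``prescribe'' $\delta(\alpha,2\rho(\alpha,A'))=\delta$ by a Cantor-type bootstrap is not backed by anything in the paper: the proof of Theorem \ref{THM1.2} contains no such implicit-function device. Since each $c_s$ shifts $\rho(\alpha,A')$, the sites $k_s$ need not remain resonances of the new rotation number with exponent $\delta$ (resonance requires exponential closeness, which is destroyed by perturbations of size $e^{-(2\pi h'+\sigma)|k_s|}$ when $\sigma<\delta-2\pi h'$), and your contradiction argument needs exactly the small divisor $\|2\rho(\alpha,A')-\langle k_s,\alpha\rangle\|_{\R/\Z}\approx e^{-\delta|k_s|}$ for the \emph{perturbed} cocycle. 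This feedback loop is the actual difficulty, and the proposal does not resolve it.

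The paper avoids both problems by not perturbing $A$ directly. It first conjugates $A$ close to the identity at radius $h'$ (Lemma \ref{conjugatetonearId}, after Proposition \ref{reducibility-stru-1}), then \emph{replaces} it by the fibered Anosov--Katok cocycle $A_\infty$ of Lemma \ref{AKClemma}, built multiplicatively inside $\op{SU}(1,1)$ with the rotation number $\vartheta$ (satisfying $\delta(\alpha,2\vartheta)=\delta$) prescribed from the start, so that $\rho(\alpha,A_\infty)=\vartheta$ exactly and the resonances, the decay rates $\zeta_n=2\pi h+o(1)$ and $\widehat{\eta}_n=\delta+o(1)$ are all under control; $A'$ is then $A_\infty$ conjugated back. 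Non-reducibility is obtained not from Fourier coefficients of a hypothetical conjugacy but from a conjugation-invariant quantity: by Proposition \ref{Prop:AsyFormula-2} and Remark \ref{rem:general}, a reducible cocycle forces $\det P_{(k),\pm}\thickapprox k^2$ or $k^4$, while the universal asymptotic structure of Section \ref{Sec5} applied to $A'$ yields, along the subsequence $m_i=e^{\eta_n n}$, the exponent $4-\frac{4\pi h}{\delta}\pm C\varepsilon$, which lies strictly between $2$ and $4$ because $2\pi h<\delta$ --- a contradiction. If you want to salvage your route, you would have to (i) perform the perturbations inside the conjugated picture at each step so that they are absorbed exactly (which is precisely the Anosov--Katok scheme), and (ii) replace the heuristic cohomological-equation bound by a quantitative, conjugation-invariant obstruction of the above type.
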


\subsection{Ideas of the proof, novelty}
We will establish a general criterion (Proposition \ref{THM5.1}) to obtain the exact local distribution of the spectral measure, extending Avila-Jitomirskaya's general criterion on $1/2$-Hölder continuity of the absolutely continuous spectral measure \cite{AvilaHolderContinuityAbsolutely2011}, based on: the dynamical reformulation of the power-law subordinacy techniques \cite{DamanikScalingEstimatesSolutions2005, KillipDynamicalUpperBounds2003,JL1999Powerlaw,JL2000Powerlaw}, and insights from the recently developed structured quantitative almost reducibility \cite{Universal}.
Recall that $(\alpha, A)$ is almost reducible in the band $\{\theta: |\Im \theta|<h\}$,  if there exist $B_j \in C_{h}^\omega(\T^d, \op{PSL}(2, \R))$, $A_j \in \op{SL}(2, \R)$, and $f_j \in C_{h}^\omega(\T^d, \op{sl}(2, \R))$ such that the following holds:
$$
B_j^{-1}(\theta + \alpha) A(\theta) B_j(\theta) = A_j e^{f_j(\theta)} = M^{-1} \exp\begin{pmatrix} i\rho_j & \nu_j \\ \bar{\nu}_j & -i\rho_j \end{pmatrix} M e^{f_j(\theta)}.
$$
Structured quantitative almost reducibility involves tracking how the conjugacy $B_j(\cdot)$, the constant $A_j$, and the perturbation $f_j(\cdot)$ depend on the resonances $n_s$ of the fibered rotation number. This structured and quantitative estimate is crucial for establishing the exact local distribution of the spectral measure and serves as the starting point of our proof. 
Within this dynamical input, to prove the exact local distribution, the main difficulty is to prove the lower bound of  $ \mu_{v,\alpha,\theta}(E- \epsilon, E+ \epsilon)$ (the method of the upper bound are mainly developed in \cite{AvilaHolderContinuityAbsolutely2011}).   
It is well-known the spectrum measure is related with Weyl-Titchmarsh $ m $-function as follows: 
  \begin{equation}\label{m-mu}
        \epsilon\Im M(E+\ii\epsilon)=\int\frac{\epsilon^2}{(E'-E)^2+\epsilon^{2}}\dif \mu(E').
\end{equation}
To obtain the lower bound, there are two  two novel aspects:
\begin{itemize}
    \item {\it Symmetry argument} to obtain the lower bound of $ \epsilon\Im M(E+\ii\epsilon)$ (Lemma  \ref{P_kX}).
    \item {\it Dislocation argument} to obtain lower bound of $ \mu_{v,\alpha,\theta}(E- \epsilon, E+ \epsilon)$  (Lemma \ref{mu2}). To obtain the lower bound,  one key is the asymptotic upper bound of  $ \mu_{v,\alpha,\theta}(E- \epsilon, E+ \epsilon)$, another key is the   {\it dislocation} of $\epsilon$:  in order to estimate $ \mu_{v,\alpha,\theta}(E- \epsilon, E+ \epsilon)$, we actually need to 
 estimate $\epsilon^{1+\tau}\Im M(E + \ii \epsilon^{1+\tau})$ in \eqref{m-mu}. 
    
\end{itemize}
With this criterion in hand:
\begin{itemize}
    \item For AMO, one can apply the recent developments in structured quantitative almost reducibility results for AMO \cite{Universal} to achieve the desired results.
    \item For general potentials, the fibered Anosov-Katok construction, first appearing in \cite{KXZ2020AnosovKatok}, can be further developed. The advantage of our approach is that it works for all Liouvillean rotation numbers.
\end{itemize}

Finally, it is necessary to remark that the arithmetic assumption on the frequency is quite crucial. While for weakly Liouvillean frequencies, IDS remains H\"older continuous \cite{LY2015Holder}, suggesting that similar H\"older continuity of the spectral measure can be expected for such frequencies with a small rate of exponential approximation \cite{AvilaHolderContinuityAbsolutely2011}. However,  for general potential $v$ and generic $\alpha$, the integrated density of states $\mathcal{N}_{ v, \alpha}$ is not H\"older. This is due to the discontinuity of the Lyapunov exponent at rational $\alpha$, implying non-H\"older continuity for generic $\alpha$. Recently, Avila-Last-Shamis-Zhou \cite{ALSZ2024abominable} even demonstrated that the Log-H\"older continuity of IDS, as obtained from the Thouless formula, is optimal for extremely Liouvillean frequencies. The determination of the exact arithmetic constraint on $\alpha$ for which Theorem \ref{MainTHM} holds remains an open question.

\section{Preliminaries}
For an $ n\times n $ complex matrix $A$, we will use the spectral norm $ \|A\|:=\sqrt[]{\lambda_{\max}(A^*A) } $, where $ A^* $ is the conjugate transpose of $ A $, and $ \lambda_{\max}(B^*B) $ is the maximum eigenvalue of $ B^*B $. 
For a bounded analytic (possibly matrix-valued) function $ F $ defined on $ \{\theta:|\Im \theta|<h\} $, let $ \|F\|_h=\sup_{|\Im \theta|<h}\|F(\theta)\| $, and denote by $ C_h^\omega(\T^d,*) $ the set of all these $ * $-valued functions ($ * $ will usually denote $ \R $, $ \mathrm{SL}(2,\R) $, $ \mathrm{SU}(1,1) $). Also we denote $ C^\omega(\T^d,*)=\bigcup_{h>0}C^\omega_h(\T^d,*) $. We use  
$\|\cdot\|_\T:=\inf_{j\in\Z}|\cdot- j| $.
We will usually use $ c $ or $ C $ to denote small or big numerical constants (it can be different at each use).
\subsection{Lie algebra isomorphism}\label{Lie-isomorphism}
It is known that the Caley element 
\[
M=\frac{1}{\sqrt{2\ii }}\begin{pmatrix}
    1&- \ii \\1& \ii 
    \end{pmatrix}
\]
induces an isomorphism from $ \mathrm{sl}(2,\R) $ to $ \mathrm{su}(1,1)$, which is given by $B\mapsto MBM^{-1}$.  
    More precisely, a direct computation leads to
    \begin{equation}\label{isom}
    M\begin{pmatrix}
        x&y+z\\y-z&-x
        \end{pmatrix}M^{-1}=\begin{pmatrix}
        \ii z&x-\ii y\\x+\ii y&-\ii z
        \end{pmatrix}.
    \end{equation}
Then the following quantitative  normal form result is very useful:
    
     \begin{lemma}[\cite{KXZ2020AnosovKatok}]\label{normalellip}
    Let the matrix
    \[
        A=\begin{pmatrix}
            \ii t&{z}\\
            \bar{z}&-\ii t
        \end{pmatrix} \in\mathrm{su}(1,1),
    \] 
    with $ t>|z|>0$. Then, calling $ \rho=\sqrt{t^2-|z|^2} $, there exists $ D\in\mathrm{SU}(1,1) $ such that 
    \[
        D^{-1}AD=\begin{pmatrix}
            \ii\rho&\\&-\ii\rho
        \end{pmatrix},
    \] 
    where $ D $ is of the form
    \[
        (\cos 2\varphi)^{-\frac{1}{2}}\begin{pmatrix}
            \cos\varphi&e^{\ii 2\phi}\sin\varphi\\
            e^{-\ii 2\phi}\sin\varphi&\cos\varphi
        \end{pmatrix}
    \]
    with 
    \[
         \|D\|^2=\frac{t+|z|}{\rho}.
    \] 
    More precisely, $ 2\phi=\arg z-\frac{\pi}{2} $ and $ \varphi\in(-\frac{\pi}{4},\frac{\pi}{4}) $ satisfies $ \tan 2\varphi=-\frac{|z|}{\rho} $. 
\end{lemma}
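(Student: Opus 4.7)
The plan is to verify the claimed diagonalization by direct computation, since the explicit ansatz for $D$ reduces the statement to matching a finite number of scalar equations. First, I would check that $D \in \mathrm{SU}(1,1)$. Writing the entries as $a = (\cos 2\varphi)^{-1/2}\cos\varphi$ and $b = (\cos 2\varphi)^{-1/2} e^{2\ii\phi}\sin\varphi$, the identity $|a|^2 - |b|^2 = (\cos^2\varphi - \sin^2\varphi)/\cos 2\varphi = 1$ gives exactly the defining relation of $\mathrm{SU}(1,1)$; in fact, $D$ as written is Hermitian, which will be useful for the norm calculation. Meanwhile, the characteristic polynomial of $A$ is $\lambda^2 + (t^2 - |z|^2) = \lambda^2 + \rho^2$, so the eigenvalues are $\pm \ii\rho$, matching the claimed diagonal form.

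Next, I would determine the parameters $\phi, \varphi$ by imposing $AD = D\,\op{diag}(\ii\rho, -\ii\rho)$. The choice $2\phi = \arg z - \pi/2$ gives $\ii e^{2\ii\phi} = e^{\ii\arg z}$, which absorbs the phase of $z$ out of the resulting scalar equations. The $(1,1)$ and $(1,2)$ equations both reduce to the single condition $(t-\rho)\cos\varphi + |z|\sin\varphi = 0$, i.e.\ $\tan\varphi = (\rho-t)/|z|$; using $|z|^2 = t^2-\rho^2 = (t-\rho)(t+\rho)$ to rewrite this as $\tan\varphi = -|z|/(t+\rho)$ and then applying the double-angle formula yields $\tan 2\varphi = -|z|/\rho$. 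This selects $\varphi\in(-\pi/4,\pi/4)$ uniquely and guarantees $\cos 2\varphi > 0$, which is required for the prefactor to be real.

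Finally, since $D$ is Hermitian, $D^*D = D^2$, and a short product gives
\[
D^2 = \frac{1}{\cos 2\varphi}\begin{pmatrix} 1 & \sin 2\varphi\, e^{2\ii\phi} \\ \sin 2\varphi\, e^{-2\ii\phi} & 1 \end{pmatrix},
\]
whose largest eigenvalue is $(1 + |\sin 2\varphi|)/\cos 2\varphi$. From $\tan 2\varphi = -|z|/\rho$ together with $\cos 2\varphi > 0$, one reads off $|\sin 2\varphi| = |z|/\sqrt{\rho^2 + |z|^2} = |z|/t$ and $\cos 2\varphi = \rho/t$, so that $\|D\|^2 = (t+|z|)/\rho$, as claimed.

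No deep obstacle arises: once the explicit ansatz is given, the work is essentially bookkeeping. The one subtle point is that the four scalar equations from $AD = D\,\op{diag}(\ii\rho, -\ii\rho)$ look overdetermined, but thanks to the phase normalization $2\phi = \arg z - \pi/2$ and the identity $|z|^2 = t^2 - \rho^2$, they collapse to a single trigonometric equation for $\varphi$, which is what makes the formulas for both $D$ and $\|D\|^2$ clean and unambiguous.
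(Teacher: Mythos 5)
Your verification is correct, and it is complete: the paper itself gives no argument for this lemma (it is quoted from \cite{KXZ2020AnosovKatok}), so a direct check of the stated ansatz is exactly what is needed. All the key points are in order: membership in $\mathrm{SU}(1,1)$ follows from $(\cos^2\varphi-\sin^2\varphi)/\cos2\varphi=1$ with $\cos2\varphi>0$ guaranteed by $\varphi\in(-\pi/4,\pi/4)$; the four entries of $AD=D\,\operatorname{diag}(\ii\rho,-\ii\rho)$ reduce by conjugation symmetry to the $(1,1)$ and $(1,2)$ entries, and with $\ii e^{2\ii\phi}=e^{\ii\arg z}$ these give $(t-\rho)\cos\varphi+|z|\sin\varphi=0$ and $(t+\rho)\sin\varphi+|z|\cos\varphi=0$ respectively, which you correctly observe are the same condition because $|z|^2=(t-\rho)(t+\rho)$, and the double-angle formula then yields $\tan2\varphi=-|z|/\rho$ (note $|z|<t<t+\rho$ also gives $|\tan\varphi|<1$, so the branch $\varphi\in(-\pi/4,\pi/4)$ is legitimate). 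Finally, since $D$ is Hermitian, $\|D\|^2=\lambda_{\max}(D^2)=(1+|\sin2\varphi|)/\cos2\varphi$, and $\cos2\varphi=\rho/t$, $|\sin2\varphi|=|z|/t$ (using $\rho^2+|z|^2=t^2$) give $(t+|z|)/\rho$ as claimed. No gap remains.
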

A further  direct computation deduces the following result:
\begin{lemma}\label{lem:quanti-Schurlem}
    Let the matrix
    \[
        A=\begin{pmatrix}
            \ii t&{z}\\
            \bar{z}&-\ii t
        \end{pmatrix} \in\mathrm{su}(1,1),
    \] 
    with $ t\geq |z|\geq 0$ and $ \rho=\sqrt{t^2-|z|^2}\in(0,\frac{\pi}{2}] $.
  Then there exists a unitary $ U\in\mathrm{SL}(2,\C) $ such that 
    \[
        U^{-1}\exp A\ U=\begin{pmatrix}
            e^{\ii\rho}&\nu\\ &e^{-\ii\rho}
        \end{pmatrix}.
    \] 
    with $|z|\leq |\nu|\leq 2|z|$.
\end{lemma}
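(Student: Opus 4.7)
The plan is to construct $U$ by writing down an eigenvector of $\exp A$ explicitly (its eigenvalues $e^{\pm\ii\rho}$ are distinct in the non-degenerate case) and then completing it to an orthonormal frame, all the while tracking the off-diagonal entry $\nu$.

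First I would dispose of the trivial case $|z|=0$: here $A$ is already diagonal, so $U=I$ and $\nu=0$ work. In the remaining case $|z|>0$ one has $t>|z|>0$ and $\rho\in(0,\pi/2]$ strictly positive. The key computational observation is the identity $A^{2}=(|z|^{2}-t^{2})I=-\rho^{2}I$, which collapses the exponential series to
\[
\exp A=\cos\rho\cdot I+\tfrac{\sin\rho}{\rho}\,A,
\]
giving an explicit $2\times 2$ formula for $\exp A$ with eigenvalues $e^{\pm\ii\rho}$.

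Next I would read off an eigenvector of $\exp A$ for the eigenvalue $e^{\ii\rho}$. Using $|z|^{2}=(t-\rho)(t+\rho)$, the first row of $(\exp A-e^{\ii\rho}I)v_{1}=0$ gives $v_{1}=(z,-\ii(t-\rho))^{T}$ with $\|v_{1}\|^{2}=|z|^{2}+(t-\rho)^{2}=2t(t-\rho)$. Setting $u_{1}:=v_{1}/\|v_{1}\|$ and $u_{2}:=\|v_{1}\|^{-1}(-\ii(t-\rho),\bar z)^{T}$, the matrix $U:=(u_{1}\mid u_{2})$ is unitary with $\det U=1$, hence $U\in\op{SL}(2,\C)$. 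Because $\exp(A)u_{1}=e^{\ii\rho}u_{1}$ the lower-left entry of $U^{-1}\exp(A)U$ vanishes automatically, and the trace identity $\operatorname{tr}(\exp A)=2\cos\rho$ forces the $(2,2)$-entry to be $e^{-\ii\rho}$.

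The main (and essentially only) computation is the off-diagonal entry $\nu=u_{1}^{*}\exp(A)u_{2}$. Expanding with the explicit formula for $\exp A$ and simplifying using $|z|^{2}=(t-\rho)(t+\rho)$, the imaginary cross-terms cancel and the real contributions combine through $(2t+\rho)+(2t-\rho)=4t$, yielding
\[
\nu=\frac{4\bar z\, t(t-\rho)\sin\rho/\rho}{\|v_{1}\|^{2}}=\frac{2\bar z\sin\rho}{\rho},\qquad |\nu|=\frac{2|z|\sin\rho}{\rho}.
\]
The claim then follows from the elementary fact that $\sin\rho/\rho$ is decreasing on $(0,\pi/2]$ with range $[2/\pi,1]\subset[\tfrac12,1]$: the upper bound $|\nu|\leq 2|z|$ uses $\sin\rho/\rho\leq 1$, while $|\nu|\geq|z|$ uses $\sin\rho/\rho\geq 2/\pi>\tfrac12$. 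There is no conceptual obstacle here; the only part requiring care is the algebraic cancellation in the computation of $\nu$, where $|z|^{2}=(t-\rho)(t+\rho)$ must be invoked twice to clear the $(t-\rho)$ factors introduced by the eigenvector and turn the closed-form answer into a clean multiple of $|z|$.
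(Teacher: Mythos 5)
Your proof is correct, and it is exactly the kind of ``direct computation'' the paper alludes to (the paper gives no details beyond citing Lemma \ref{normalellip}): your identity $A^{2}=-\rho^{2}I$, the explicit unit eigenvector, and the evaluation $\nu=\tfrac{2\bar z\sin\rho}{\rho}$ all check out, and with $\rho\in(0,\pi/2]$ this gives $|z|\leq\frac{4}{\pi}|z|\leq|\nu|\leq 2|z|$ as required. The only cosmetic slip is calling the range of $\sin\rho/\rho$ the closed interval $[2/\pi,1]$ (the value $1$ is not attained on $(0,\pi/2]$), which does not affect the bounds you use.
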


    \subsection{Quasi-periodic dynamics}\label{sec2.2}
    Given $ A\in C^0(\T^d,\mathrm{SL}(2,\C)) $, a quasi-periodic  \textit{cocycle} $ (\alpha,A) $ 
 is defined as 
\[
    (\alpha,A):\left\lbrace\ \begin{aligned}
        &\T^d\times\R^2 & &\to & &\T^d\times \R^2\\
        &(\theta,v) & &\mapsto & &(\theta+\alpha,A(\theta)v)
    \end{aligned}\right. ,
\] 
where $ \alpha\in\R^d $, and $ (1,\alpha) $ is assumed rationally independent.  Denote the $n$-th  iterate of $ (\alpha,A) $ by $ (n\alpha,\mathcal{A}_n) $, where
\[
    \mathcal{A}_n(\theta):=\left\{\begin{aligned}
        &A(\theta+(n-1)\alpha)\cdots A(\theta+\alpha)A(\theta), & &n\geq 0\\
        &A^{-1}(\theta+n\alpha)A^{-1}(\theta+(n+1)\alpha)\cdots A^{-1}(\theta-\alpha), & &n<0
    \end{aligned}\right. .
\] 
Then the \textit{Lyapunov exponent} of $ (\alpha,A) $ is defined by 
\[
    L(\alpha,A):=\lim_{n\to\infty}\frac{1}{n}\int_{\T^d}\ln\lVert\mathcal{A}_n(\theta)\rVert d \theta .
\]

    The cocycle $ (\alpha,A) $ is called \textit{uniformly hyperbolic} if for any $ \theta\in\T^d $, there exists a continuous splitting $ \C^2=E^s(\theta)\oplus E^u(\theta) $ such that for every $ n\geq 0 $,
\[
    \begin{aligned}
        \lVert \mathcal{A}_n(\theta)v\rVert&\leq Ce^{-cn}\lVert {v}\rVert, & &v\in E^s(\theta),\\
        \lVert \mathcal{A}_{-n}(\theta)v\rVert&\leq Ce^{-cn}\lVert {v}\rVert, & &v\in E^u({\theta}),
    \end{aligned}
\] 
for some constants $ C,c>0 $, and the splitting is invariant by the dynamics:
\[
    {A(\theta)E^s(\theta)=E^s(\theta+\alpha),\ A(\theta)E^u(\theta)=E^u(\theta+\alpha),\ \forall \theta\in\T^d.}
\]

Assume $ A\in C^0(\T^d,\mathrm{SL}(2,\R)) $ is homotopic to the identity, then there exist $ \psi:\T^d\times\T\to\R $ and $ u:\T^d\times\T\to\R_+ $ such that 
\[
    A(x)\cdot\begin{pmatrix}
        \cos 2\pi y\\\sin 2\pi y
    \end{pmatrix}=u(x,y)\begin{pmatrix}
        \cos 2\pi (y+\psi(x,y))\\\sin 2\pi (y+\psi(x,y))
    \end{pmatrix}.
\] 
The function $ \psi $ is called a \textit{lift} of $ A $. Let $ \mu $ be any probability measure on $ \T^d\times\R $ which is invariant by the continuous map $ T:(x,y)\mapsto(x+\alpha,y+\psi(x,y)) $, projecting over Lebesgue measure on the first coordinate. 
Then the \textit{fibered rotation number} of $ (\alpha,A) $ is defined by
\[
    \rho(\alpha,A)=\int\psi\ \dif\mu \mod \Z,
\] 
which does not depend on the choices of $ \psi $ and $ \mu $  \cite{Herman83,JM1982rotation}.  It follows from the definition, that 
	 for any $ A\in\mathrm{SL}(2,\R) $, there exists a numerical constant $ C $ such that 
    \begin{equation} \label{rotationnumber1}
        \|\rho(\alpha,B)-\rho(\alpha,A)\|_\T<C\|B(\cdot)-A\|_0^{\frac{1}{2}},
    \end{equation} 
    where $ \lVert\cdot\rVert_0 $ denotes the $ C^0 $ norm.

Given $ \phi\in\T^d $, let $ R_{\phi}:=\begin{pmatrix}
    \cos 2\pi \phi&-\sin 2\pi \phi\\
    \sin 2\pi \phi&\cos 2\pi \phi
\end{pmatrix} $. 
If $ B:2\T^d\to\mathrm{SL}(2,\R) $ is homotopic to {$ R_{\frac{\langle n,\theta\rangle}{2}}: 2\T^d\to \mathrm{SL}(2,\R) $} for some $ n\in\Z^d $, then $ n $ is called the \textit{degree} of $ B $, {and is }denoted by $ \deg B $.   The fibered rotation number is conjugacy invariant, i.e.  if  $ (\alpha,A) $ is conjugated to $ (\alpha, A') $ by $ B\in C^0 (2\mathbb{T}^d,\mathrm{SL}(2, \R)) $ with $ \deg B=n $, then we have
    \begin{equation}\label{degree}
	\rho(\alpha, A)=\rho(\alpha, A')+\frac{\langle n,\alpha\rangle}{2} \mod\Z.
	\end{equation}

\subsection{Schr\"odinger operators}
    We consider now quasiperiodic Schr\"odinger operators \eqref{Op1.1}.
Note that a sequence $\{ u(n)\}_{n \in \Z}$ is a formal solution of the
eigenvalue equation $H_{v,\alpha,\theta} u=Eu$ if and only if

$$\begin{pmatrix}
u(n+1)\\u(n)\end{pmatrix}=S_{E}^{v}(\theta+n\alpha) \cdot
\begin{pmatrix} u(n)\\u(n-1) \end{pmatrix},$$ where we denote
\begin{eqnarray*}
S_{E}^{v}(\theta)=\left( \begin{array}{ccc}
 E-v(\theta) &  -1\cr
  1 & 0\end{array} \right)\in \op{SL}(2,\mathbb{R}).
\end{eqnarray*}
 Then $(\alpha,S_{E}^{v} )$ can be seen as a  quasi-periodic cocycle, and we
 call it \textit{Schr\"odinger cocycle}.  When consider AMO case, that is, $ v(x)=2\lambda\cos(2\pi x) $, we simply denote the cocycle as $ (\alpha,S_E^{\lambda}) $, and  call it {\it almost Mathieu cocycle}.

It is well-known that $ E\notin \Sigma_{v,\alpha} $ if and only if $ (\alpha, S_E^v) $ is uniformly hyperbolic \cite{Joh1986Exponential}. It is easy to see that there is a continuous  branch $ \rho(\alpha,S_E^v) \in[0,1/2] $, which is related to  the integrated density of states as  \cite{JM1982rotation,AS1983Almost}
\begin{equation}\label{N=rho}
    \mathcal{N}_{v,\alpha}(E)=1-2\rho(\alpha,S_E^v).
\end{equation}

\subsection{Arithmetic resonances}

Let $ \alpha\in\R^d $, $ \theta\in\R $, $ \epsilon_0>0 $.  Recall  that $n$ is a $\epsilon_0$-resonance of $\theta$, if 
\begin{equation}\label{upper+}
\|\theta -\langle n,\alpha \rangle\|_{\T}\leq e^{-|n|\epsilon_0}
\end{equation} 
and 
\begin{equation}\label{upper++}
\|\theta -\langle n,\alpha\rangle\|_{\T}=\min_{|j|\leq |n|}\|\theta -\langle j,\alpha\rangle\|_{\T}.
\end{equation} 
For fixed $ \alpha $ and $ \theta $, we can order the $ \epsilon_0 $-resonances $ 0=n_0<|n_1|\leq |n_2|\leq \cdots $. We say that $ \theta $ is $ \epsilon_0 $-resonant if the set of $\epsilon_0$-resonances is infinite, otherwise we say $ \theta $ is $ \epsilon_0 $-non-resonant.

\begin{remark}
As pointed out by Avila-Jitomirskaya \cite{aj1}, if $\alpha \in \op{DC}_d(\gamma,\tau)$, then \eqref{upper+} implies \eqref{upper++} for $|n|>n(\gamma,\tau)$. Moreover, the Diophantine assumption  immediately implies exponential repulsion of resonances: 
\begin{equation}\label{exp-repulusion}
    |n_{s+1}|\geq c e^{c\epsilon_0|n_{s}|}
\end{equation}
where $c=c(\alpha,\epsilon_0)>0$.
\end{remark} 

\section{Spectral measures and $ m $-functions}

\subsection{$m$-function and Jitomirskaya-Last inequality}
 It is known that the study of the spectral measure $ \mu $ of the Schr\"odinger operator: 
\begin{equation} \label{H}
    (Hu)(n)=u(n+1)+u(n-1)+v(n) u(n)
\end{equation} 
is related to the study of the Weyl-Titchmarsh $ m $-function. We will briefly list the necessary related facts from \cite{AvilaHolderContinuityAbsolutely2011,DamanikScalingEstimatesSolutions2005,KillipDynamicalUpperBounds2003,JL2000Powerlaw}.
Let $ \Z^+=\{1,2,3,\dots\} $, and $ \Z^-=\{\dots, -2,-1,0\} $. Consider the family of phase boundary condition:
\begin{equation} \label{boundarycondtion}
    u(0)\cos\beta+u(1)\sin\beta=0,
\end{equation} 
where $ -\pi/2<\beta\leq \pi/2 $. We denote by $ u_{\beta}^\pm $ the solution of $ Hu=Eu $ on $ \Z^\pm $ satisfying the boundary conditons (\ref{boundarycondtion}) and normalized by $ u_{\beta}^\pm(0)^2+u_{\beta}^\pm(1)^2=1 $. 

Consider energies $ z=E+\ii\epsilon $, $ E\in\R $, $ \epsilon>0 $. Then there are non-zero solutions $ u_z^{\pm} $ of $ H u_z^{\pm}=z u_z^{\pm} $ which are $ l^2 $ at $ \pm \infty $, well defined up to normalization. 
Then the half-line $ m $-functions are defined as:
\begin{equation*} 
        m^\pm = \mp \frac{u_z^\pm(1)}{u_z^\pm(0)},
 \end{equation*}        
 and then we define       
   \begin{equation} \label{m_betapm}  
    m_\beta^+ =R_{-\beta/2\pi}\cdot m^+,\qquad
    m_\beta^- =R_{\beta/2\pi}\cdot m^-,
\end{equation} 
where we make use of the action of $ \mathrm{SL}(2,\C) $ on $ \overline{\C} $,
$$
    \begin{pmatrix}
        a&b\\c&d
    \end{pmatrix}\cdot z=\frac{az+b}{cz+d}.
$$ 
The whole-line $ m $-function is then defined for $ \Im z>0 $ by 
 \begin{equation}\label{M(z)}
     M(z)=\frac{m^+(z)m^-(z)-1}{m^+(z) + m^-(z)}.
 \end{equation} 
\hspace{2em}

We also define 
\[
    \begin{aligned}
        &(K^{+}(E)\psi)(n)=\sum_{m=1}^n k_{0}^{+}(n,m)\psi(m),\\
        &(K^{-}(E)\psi)(n)=\sum_{m=-n+1}^0 k_{0}^{-}(n,m)\psi(m) ,
    \end{aligned}
\] 
where
\[
    k_0^\pm(n,m)=u_0^\pm(n)u_{\pi/2}^\pm(m)-u_{\pi/2}^\pm(n)u_{0}^\pm(m). 
\] 
and one then can view $ K^{\pm}(E) $  as the integral operators acting on the $ [L] $ or $ [L]+1 $-dimensional Hilbert space \cite[]{KillipDynamicalUpperBounds2003}. 
Indeed,  for any  $ u^+:\Z^+\to\C $ (resp. $ u^-:\Z^-\to\C $), we denote by $ \|u^+\|_L $ (resp.  $ \|u^-\|_L$) the norm over a lattice interval of length $ L $:
$$
    \|u^+\|_L:=\left[\sum_{n=1}^{[L]}|u^+(n)|^2+(L-[L])|u^+([L]+1)|^2\right]^{1/2},
$$ 
$$
    \|u^-\|_L:=\left[\sum_{n=0}^{[L]-1}|u^-(-n)|^2+(L-[L])|u^-(-[L])|^2\right]^{1/2},
$$ 
where $ [L] $ denotes the integer part of $ L .$ The key observation is the following:

\begin{lemma}[\cite{KillipDynamicalUpperBounds2003}]
    The Hilbert-Schmidt norm of $ K^{\pm}(E) $ at the scale $ L $ is given by 
    \begin{align}\label{HSofK(E)}
        ||| K^\pm(E)|||_{L}^2
        =\inf_{\beta}\|u_{\beta}^\pm\|_L^2\|u_{\beta+\pi/2}^\pm\|_L^2.
    \end{align}
\end{lemma}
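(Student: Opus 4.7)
The plan is to exploit the Wronskian-like structure of $k_0^{\pm}(n,m)$ together with the rotation action on the family of boundary-conditioned solutions, reducing the identity to a determinant/Gram-matrix computation. The starting observation is that
$$k_0^{\pm}(n,m) = \det\begin{pmatrix} u_0^{\pm}(n) & u_{\pi/2}^{\pm}(n) \\ u_0^{\pm}(m) & u_{\pi/2}^{\pm}(m) \end{pmatrix}.$$
The normalization $u_\beta^{\pm}(0)^2 + u_\beta^{\pm}(1)^2 = 1$ together with the boundary condition \eqref{boundarycondtion} give $(u_\beta^{\pm}, u_{\beta+\pi/2}^{\pm})^T = R_\beta (u_0^{\pm}, u_{\pi/2}^{\pm})^T$ for a rotation $R_\beta \in \mathrm{SO}(2)$. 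Since $\det R_\beta = 1$, the kernel admits the $\beta$-invariant representation
$$k_0^{\pm}(n,m) = u_\beta^{\pm}(n) u_{\beta+\pi/2}^{\pm}(m) - u_{\beta+\pi/2}^{\pm}(n) u_\beta^{\pm}(m), \qquad \text{for every } \beta.$$

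Next I would unfold the Hilbert--Schmidt norm at scale $L$ by squaring this representation and summing with the weights (unit weights on interior indices and $L-[L]$ at the endpoint) that mirror those in $\|\cdot\|_L$. Since $K^{\pm}$ has triangular support in $(n,m)$, $|k_0^{\pm}(n,m)|^2$ is symmetric under $n \leftrightarrow m$, and $k_0^{\pm}(n,n) = 0$, the triangular sum is exactly half of the full square sum. Expanding the squared determinant and introducing the weighted inner product $(\cdot,\cdot)_L$ associated with $\|\cdot\|_L$ yields
$$|||K^{\pm}(E)|||_L^2 = \|u_\beta^{\pm}\|_L^2 \, \|u_{\beta+\pi/2}^{\pm}\|_L^2 - (u_\beta^{\pm}, u_{\beta+\pi/2}^{\pm})_L^2, \qquad \text{for every } \beta.$$
Both sides in fact equal $\det G_L$, where $G_L$ is the Gram matrix of $(u_0^{\pm}, u_{\pi/2}^{\pm})$ in $(\cdot,\cdot)_L$; the identity is $\beta$-independent precisely because $R_\beta G_L R_\beta^T$ has the same determinant as $G_L$.

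Finally I would minimize $\|u_\beta^{\pm}\|_L^2 \|u_{\beta+\pi/2}^{\pm}\|_L^2$ over $\beta$. The trace $\|u_\beta^{\pm}\|_L^2 + \|u_{\beta+\pi/2}^{\pm}\|_L^2 = \mathrm{tr}\, G_L$ is independent of $\beta$, while the diagonal entries of $R_\beta G_L R_\beta^T$ sweep the interval between the two eigenvalues of $G_L$. The product of two positive numbers with fixed sum is smallest when they are farthest apart, i.e.\ when they equal the eigenvalues of $G_L$, which is attained by the $\beta$ that diagonalizes $G_L$; this gives
$$\inf_\beta \|u_\beta^{\pm}\|_L^2 \|u_{\beta+\pi/2}^{\pm}\|_L^2 = \det G_L = |||K^{\pm}(E)|||_L^2,$$
which is the desired identity.

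The main bookkeeping obstacle is verifying that the fractional endpoint weight $L-[L]$ entering $\|\cdot\|_L$ is exactly the one entering the Hilbert--Schmidt norm of $K^{\pm}(E)$ viewed as an operator on the $([L]+1)$-dimensional weighted Hilbert space; this amounts to a careful reading of the integral-operator conventions at scale $L$ and does not interfere with the determinant/Gram identities above.
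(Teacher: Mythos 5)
Your proof is correct. Note that the paper itself gives no argument for this lemma --- it is quoted from Killip--Kiselev--Last \cite{KillipDynamicalUpperBounds2003} --- so the comparison is with the standard proof there, and your Gram-determinant route is essentially that argument: expanding $|k_0^\pm(n,m)|^2$ against the weighted sums, using that the kernel vanishes on the diagonal and is square-symmetric so the triangular sum is half the full one, gives $|||K^\pm(E)|||_L^2=\|u_0^\pm\|_L^2\|u_{\pi/2}^\pm\|_L^2-(u_0^\pm,u_{\pi/2}^\pm)_L^2=\det G_L$, and the rotation covariance $(u_\beta^\pm,u_{\beta+\pi/2}^\pm)=R_\beta(u_0^\pm,u_{\pi/2}^\pm)$ (up to the irrelevant sign ambiguity in the normalization) makes $\det G_L$ a $\beta$-invariant. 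Two small remarks. First, your final minimization can be shortened: for the positive semidefinite matrix $R_\beta G_L R_\beta^{T}=\begin{pmatrix} a(\beta) & b(\beta)\\ b(\beta) & d(\beta)\end{pmatrix}$ one has $a(\beta)d(\beta)=\det G_L+b(\beta)^2\geq\det G_L$, with equality exactly at the diagonalizing $\beta$ (where $(u_\beta^\pm,u_{\beta+\pi/2}^\pm)_L=0$), which is also the statement used later in the paper that the infimum in \eqref{detP_k} is attained at critical points of $\beta\mapsto\|u_\beta^\pm\|_L^2$; the ``fixed trace, maximal spread'' argument is fine but unnecessary. Second, the only real convention check is the one you flag: the Hilbert--Schmidt norm at scale $L$ must be taken on the weighted space whose endpoint weight $L-[L]$ matches $\|\cdot\|_L$ (so that for integer $L$ the extra site carries weight zero), and the weights must enter the double sum as $w_nw_m$; with that convention, which is the one in \cite{KillipDynamicalUpperBounds2003} and is consistent with \eqref{detP_k}--\eqref{kp} in this paper, your identity is exact and no further argument is needed.
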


Notice that $ ||| K^\pm(E)|||_{L} $ is strictly increasing. Hence $ \epsilon||| K^\pm(E)|||_{L}=1 $ determines $ L $ as a function of $ \epsilon $, or the inverse. 
It is known that the Jitomirskaya-Last inequality \cite{JL1999Powerlaw,JL2000Powerlaw}, provides a link between $ m $-functions and the solutions of (\ref{H}) satisfied the corresponding boundary conditions.
The following result is an application of the well-known Jitomirskaya-Last inequality, but based on a different version in \cite[]{KillipDynamicalUpperBounds2003}.
\begin{lemma}[\cite{DamanikScalingEstimatesSolutions2005}]\label{Damaniklem}
   For any pair $ (\epsilon,L^\pm) $ satisfies $ ||| K^\pm(E)|||_{L^\pm(\epsilon)}^2=\frac{1}{\epsilon^2} $, we have
   \[
     C^{-1}\leq \epsilon \Im m_{\beta}^\pm(E+i\epsilon)\|u_{\beta}^{\pm}\|_{L^\pm(\epsilon)}^2\leq C.
   \] 
\end{lemma}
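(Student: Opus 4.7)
The plan is to derive Lemma \ref{Damaniklem} as a direct application of Killip's reformulation \cite{KillipDynamicalUpperBounds2003} of the Jitomirskaya--Last inequality, which rephrases the classical phase-dependent choice of scale in the $\beta$-uniform language of the Hilbert--Schmidt norm $||| K^\pm(E)|||_L$ provided by \eqref{HSofK(E)}.

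\textbf{Step 1: Herglotz representation and truncation.} I would start from
\[
\Im m_\beta^\pm(E+\ii\epsilon) = \int \frac{\epsilon}{(E'-E)^2+\epsilon^{2}}\dif\mu_\beta^\pm(E'),
\]
where $\mu_\beta^\pm$ is the half-line spectral measure associated to the boundary phase $\beta$. Truncating the half-line operator to a window of length $L$ and expanding $\mu_\beta^\pm$ in the eigenbasis of the truncation, Parseval turns the estimation of $\epsilon\,\Im m_\beta^\pm(E+\ii\epsilon)$ into an estimate on a quadratic form in the generalized eigensolutions $u_\beta^\pm$ and $u_{\beta+\pi/2}^\pm$ restricted to that window. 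Two competing terms arise: a diagonal contribution of size $\epsilon^{-1}\,\|u_\beta^\pm\|_L^{-2}$, and an off-diagonal contribution governed by the resolvent kernel.

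\textbf{Step 2: Killip's HS-reformulation.} The key observation is that the off-diagonal part of the Green's function on a window of length $L$ is precisely the integral operator $K^\pm(E)$ from \eqref{HSofK(E)}, whose Hilbert--Schmidt norm gives a $\beta$-independent gauge of the length scale. Choosing $L^\pm(\epsilon)$ by the relation $||| K^\pm(E)|||_{L^\pm(\epsilon)}^{2}=\epsilon^{-2}$ therefore simultaneously balances, for every boundary phase $\beta$, the diagonal and off-diagonal contributions identified in Step~1. Substituting into the quadratic form from Step~1, Killip's argument then produces the desired universal two-sided bound
\[
C^{-1}\leq \epsilon\,\Im m_{\beta}^\pm(E+\ii\epsilon)\|u_{\beta}^{\pm}\|_{L^\pm(\epsilon)}^2\leq C,
\]
with constants independent of $\beta$, $\epsilon$, and $E$.

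\textbf{Main obstacle.} The delicate point is to justify that the $\beta$-independent scale $L^\pm(\epsilon)$ really is uniformly compatible with \emph{every} $\beta$, i.e.~that the infimum in \eqref{HSofK(E)} is only attained up to a universal multiplicative constant. This rests on the symplectic rigidity of Schr\"odinger cocycles: the Wronskian $W(u_\beta^\pm,u_{\beta+\pi/2}^\pm)=1$ is preserved along the chain because the transfer matrices lie in $\op{SL}(2,\R)$, which controls the ratio $\|u_\beta^\pm\|_L/\|u_{\beta+\pi/2}^\pm\|_L$ and therefore prevents the product $\|u_\beta^\pm\|_L\|u_{\beta+\pi/2}^\pm\|_L$ from exceeding $||| K^\pm(E)|||_L$ by more than a fixed constant factor. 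Once this comparability is established, the proof of Lemma \ref{Damaniklem} reduces to careful book-keeping of constants in Killip's estimate.
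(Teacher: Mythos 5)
The paper does not actually prove Lemma \ref{Damaniklem}: it is imported from \cite{DamanikScalingEstimatesSolutions2005} (Proposition 3 there covers $\beta=0$), with the remark that the argument extends to arbitrary $\beta$ essentially without change. So your proposal is an attempt at an independent proof, and it contains a genuine gap at exactly the point you flag as the "main obstacle". Constancy of the Wronskian $W(u_\beta^\pm,u_{\beta+\pi/2}^\pm)=1$ only gives, via Cauchy--Schwarz, the \emph{lower} bound $\|u_\beta^\pm\|_L\,\|u_{\beta+\pi/2}^\pm\|_L\geq cL$; it gives no control of the ratio $\|u_\beta^\pm\|_L/\|u_{\beta+\pi/2}^\pm\|_L$, and it certainly does not force the product to be comparable, uniformly in $\beta$, to the infimum $||| K^\pm(E)|||_L$ from \eqref{HSofK(E)}. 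Indeed, writing $\|u_\beta^\pm\|_{2k}^2=\langle P_{(k),\pm}e_\beta,e_\beta\rangle$ as in \eqref{normP_k} with $e_\beta,e_{\beta+\pi/2}$ orthonormal, the product $\|u_\beta^\pm\|_{2k}^2\|u_{\beta+\pi/2}^\pm\|_{2k}^2$ ranges between $\det P_{(k),\pm}$ (the infimum, \eqref{detP_k}) and roughly $\|P_{(k),\pm}\|^2$, i.e.\ it can exceed the infimum by the condition number $\|P_{(k),\pm}\|\,\|P_{(k),\pm}^{-1}\|$. By Proposition \ref{Prop:AsyFormula} this condition number is of size $k^{\eta^n_+(k)-2}$, e.g.\ of order $k^{2}$ in resonant windows, hence unbounded as $\epsilon\to 0$. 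So the comparability on which your uniformity-in-$\beta$ argument rests is false, and the reduction in Step 2 collapses.

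Note also that this comparability is not what the lemma needs: the statement pairs $\Im m_\beta^\pm$ with $\|u_\beta^\pm\|^2_{L^\pm(\epsilon)}$ for the \emph{same} $\beta$, only the length scale $L^\pm(\epsilon)$ being $\beta$-independent; different $\beta$'s give genuinely different sizes for both factors, in a way that cancels. The actual route (as in \cite{JL1999Powerlaw,KillipDynamicalUpperBounds2003,DamanikScalingEstimatesSolutions2005}) is the Jitomirskaya--Last variation-of-parameters argument for each fixed boundary phase, relating $m_\beta^\pm(E+\ii\epsilon)$ to the norms of $u_\beta^\pm$ and $u_{\beta+\pi/2}^\pm$, combined with Killip--Kiselev--Last's observation that the Hilbert--Schmidt quantity $||| K^\pm(E)|||_L$ furnishes a single scale at which these bounds hold with universal constants; it is not a spectral decomposition of a truncated operator, so your Step 1 (eigenbasis of the truncation plus Parseval) would also need to be replaced or substantially justified. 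If you want a proof rather than a citation, the cleanest fix is to reproduce the $\beta=0$ argument of \cite{DamanikScalingEstimatesSolutions2005} and check, line by line, that rotating the boundary condition only permutes the roles of $u_\beta^\pm$ and $u_{\beta+\pi/2}^\pm$, which is exactly what the paper's remark after the lemma asserts.
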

\begin{remark}
        The case $ \beta=0 $ is proved by \cite[Proposition 3]{DamanikScalingEstimatesSolutions2005}, while the proof  can be extended to any $ \beta $ almost without changes.
\end{remark}
 As further explored in \cite{AvilaHolderContinuityAbsolutely2011}, the invariant $ ||| K^\pm(E)|||_{L} $ has another representation.
Let $ L=2k $, $ k\geq 1 $ integer, denote   $ (\alpha,S_E^v)^n=(n\alpha,\mathcal{A}_n) $, and define 
$$
    \begin{aligned}\label{pk}
        P_{(k),+}&=\sum_{j=1}^k \mathcal{A}_{2j-1}^*(\theta+\alpha)\mathcal{A}_{2j-1}(\theta+\alpha),\\
        P_{(k),-}&=\sum_{j=1}^{k} \mathcal{A}_{-(2j-1)}^*(\theta+\alpha)\mathcal{A}_{-(2j-1)}(\theta+\alpha).
    \end{aligned}
$$ 
then $ P_{(k),\pm} $ is an increasing family of positive self-adjoint $ 2\times 2 $ matrix. It is immediate to see that if $ L=2k $ then 
\begin{equation} \label{normP_k}
    \|u_\beta^\pm\|_L^2=\langle P_{(k),\pm} \begin{pmatrix}
        u_\beta^\pm(1)\\u_\beta^\pm(0)
    \end{pmatrix},\begin{pmatrix}
        u_\beta^\pm(1)\\u_\beta^\pm(0)
    \end{pmatrix}\rangle\leq \|P_{(k),\pm}\|,
\end{equation} 
    with equality for $ \beta $ maximizing $ \|u_\beta^\pm\|_L^2 $. 
    Thus  we obtain
    \begin{equation} \label{detP_k}
        \det P_{(k),\pm}=\inf_\beta\|u_\beta^\pm\|_L^2\|u_{\beta+\pi/2}^\pm\|_L^2,
    \end{equation} 
 and   the infimum is attained at the critical points of $ \beta\mapsto\|u_\beta^\pm\|_L^2 $. Combines (\ref{HSofK(E)}) and (\ref{detP_k}), one has 
    \begin{equation}\label{kp}
     \det P_{(k),\pm}=||| K^\pm(E)|||_{L}^2 .
     \end{equation} 
We thus  give another version of Jitomirskaya-Last inequality,
 in terms of $P_{k,+}$:
\begin{lemma}[{\cite[Remark 4.1]{AvilaHolderContinuityAbsolutely2011}}]\label{AJDT}
    Let $ \epsilon $ be such that $ \det P_{(k),+}=\frac{1}{\epsilon^2} $, then we have 
    \[
        C^{-1}<\frac{2\epsilon \sup_\beta|m^+_\beta(E+i\epsilon)|}{\|P_{(k),+}^{-1}\|}<C.
    \] 
\end{lemma}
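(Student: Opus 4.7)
The plan is to upgrade the Jitomirskaya--Last estimate of Lemma~\ref{Damaniklem} — which controls $\Im m_\beta^+$ — to a bound on $\sup_\beta |m_\beta^+|$ via two auxiliary observations: (a) $\|u_\beta^+\|_L^2$ is a quadratic form against $P_{(k),+}$ evaluated on a unit vector, so its infimum over $\beta$ equals $1/\|P_{(k),+}^{-1}\|$; and (b) on the $PSO(2)$-orbit $\{m_\beta^+\}_\beta$ in the upper half plane, the Euclidean modulus and the imaginary part are maximized simultaneously, so $\sup_\beta|m_\beta^+|=\sup_\beta \Im m_\beta^+$.

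For the first step, the hypothesis $\det P_{(k),+}=1/\epsilon^2$ combined with \eqref{kp} gives $|||K^+(E)|||_{L}^2 = 1/\epsilon^2$ for $L=2k$, so Lemma~\ref{Damaniklem} applies and yields $\epsilon\,\Im m_\beta^+(E+\ii\epsilon)\,\|u_\beta^+\|_L^2\asymp 1$ uniformly in $\beta$. From \eqref{normP_k} together with the normalization $u_\beta^+(0)^2+u_\beta^+(1)^2=1$, one reads off that $\|u_\beta^+\|_L^2 = \langle P_{(k),+} w_\beta, w_\beta\rangle$, where $w_\beta$ traces the full unit circle in $\R^2$ as $\beta$ varies. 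Hence $\inf_\beta \|u_\beta^+\|_L^2 = \lambda_{\min}(P_{(k),+}) = (\det P_{(k),+})/\lambda_{\max}(P_{(k),+}) = 1/\|P_{(k),+}^{-1}\|$. Applying $\sup_\beta$ to the inequality $\epsilon\,\Im m_\beta^+ \leq C/\|u_\beta^+\|_L^2$, and plugging the particular $\beta^{**}$ attaining the infimum into the lower Jitomirskaya--Last bound, produces
\[
    \sup_\beta \epsilon\,\Im m_\beta^+(E+\ii\epsilon) \;\asymp\; \|P_{(k),+}^{-1}\|.
\]

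For the second step, I would verify the identity $\sup_\beta|m_\beta^+|=\sup_\beta \Im m_\beta^+$. By \eqref{m_betapm}, $m_\beta^+=R_{-\beta/(2\pi)}\cdot m^+$, and a direct computation gives $\Im m_\beta^+ = \Im m^+/V(\beta+\pi/2)$ and $|m_\beta^+|^2 = V(\beta)/V(\beta+\pi/2)$, where $V(\beta):=|\cos\beta\cdot m^+ + \sin\beta|^2$. Viewing $V$ as the quadratic form of a $2\times 2$ real symmetric matrix on the unit vector $(\cos\beta,\sin\beta)^T$, its extreme values are the eigenvalues of that matrix, whose product equals the determinant $|m^+|^2-(\Re m^+)^2=(\Im m^+)^2$. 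Therefore $(\max_\beta V)\cdot(\min_\beta V)=(\Im m^+)^2$, and
\[
    \sup_\beta |m_\beta^+|^2 \;=\; \frac{\max_\beta V}{\min_\beta V} \;=\; \frac{(\max_\beta V)^2}{(\Im m^+)^2} \;=\; \bigl(\sup_\beta \Im m_\beta^+\bigr)^2.
\]
Geometrically, the orbit $\{m_\beta^+\}_\beta$ is a Euclidean circle in $\mathbb{H}$ whose point of maximal Euclidean modulus is purely imaginary, namely the top of the circle.

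Combining the two steps, $\sup_\beta\epsilon|m_\beta^+| = \sup_\beta\epsilon\,\Im m_\beta^+ \asymp \|P_{(k),+}^{-1}\|$, and absorbing the numerical factor $2$ and all implicit constants into $C$ delivers the lemma. The only genuinely subtle point is the identity $\sup_\beta|m_\beta^+|=\sup_\beta \Im m_\beta^+$ of Step~2; without it, Lemma~\ref{Damaniklem} by itself controls only the imaginary part of the $m$-function, and the upgrade to the Euclidean modulus would fail.
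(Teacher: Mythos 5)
Your proposal is correct, but it does not follow "the paper's proof" for the simple reason that the paper has none: Lemma \ref{AJDT} is imported verbatim from Avila--Jitomirskaya (Remark 4.1 of \cite{AvilaHolderContinuityAbsolutely2011}), where it is obtained from the original Jitomirskaya--Last inequality, which locates $m_\beta^+$ inside an explicit disk and therefore controls the full modulus $|m_\beta^+|$ directly. Your route is genuinely different and essentially self-contained within the present paper's toolkit: you only use Lemma \ref{Damaniklem} (the Damanik--Killip-type version, which controls $\Im m_\beta^+$ alone), the identities \eqref{kp} and \eqref{normP_k} (so that $\det P_{(k),+}=1/\epsilon^2$ puts you exactly in the regime $L^+(\epsilon)=2k$, and $\inf_\beta\|u_\beta^+\|_{2k}^2=\lambda_{\min}(P_{(k),+})=1/\|P_{(k),+}^{-1}\|$ since the initial data $w_\beta$ sweep all directions of the unit circle up to sign), and the key identity $\sup_\beta|m_\beta^+|=\sup_\beta\Im m_\beta^+$. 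Your verification of that identity is sound: with $V(\beta)=|\cos\beta\, m^+ +\sin\beta|^2=\langle Qv_\beta,v_\beta\rangle$, $Q=\begin{pmatrix}|m^+|^2 & \Re m^+\\ \Re m^+ & 1\end{pmatrix}$, one has $\det Q=(\Im m^+)^2$, and since the eigenvectors of $Q$ are orthogonal they can be realized simultaneously as $v_\beta$ and $v_{\beta+\pi/2}$, so $\sup_\beta V(\beta)/V(\beta+\pi/2)=\max V/\min V=(\max V)^2/(\Im m^+)^2=\bigl(\sup_\beta\Im m_\beta^+\bigr)^2$ --- this is the geometric statement that the $\mathrm{SO}(2)$-orbit of $m^+$ is a Euclidean circle centered on the imaginary axis, whose top point maximizes both modulus and imaginary part. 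What the Avila--Jitomirskaya route buys is pointwise control of $|m_\beta^+|$ for each $\beta$ (via the JL disk); what yours buys is that only the $\Im m$-version quoted in this paper is needed, at the cost of the extra (but correctly handled) symmetry argument, and it only yields the statement after taking $\sup_\beta$, which is all Lemma \ref{AJDT} asserts.
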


As a direct consequence, we have the following:
  \begin{corollary}\label{Corspecmeas}
        Let $(k, \epsilon)$ be such that $ \det P_{(k),+}=\frac{1}{\epsilon^2} $, then we have the following:
        \begin{eqnarray}
          \label{upp}  \epsilon\Im M(E+i\epsilon)  &\leq& C \|P_{(k),+}^{-1}\|, \\
           \label{low} \epsilon\Im M(E+i\epsilon)&\geq& c\sup_{\beta}\min\set{\|u_{\beta}^+\|_{2k}^{-2},\|u_{\beta}^-\|_{L^-(\epsilon)}^{-2}},
        \end{eqnarray}
        where $L^{-}(\epsilon)$ is given by $||| K^-(E)|||_{L^{-}}^2=\frac{1}{\epsilon^2}.$ 
    \end{corollary}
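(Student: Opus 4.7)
The proof combines the formula $M(z) = (m^+(z)m^-(z) - 1)/(m^+(z) + m^-(z))$ with Lemmas \ref{AJDT} and \ref{Damaniklem}. First I would establish a joint M\"obius invariance: a direct bilinear computation shows that the replacement $(m^+, m^-) \mapsto (m^+_\beta, m^-_\beta)$ multiplies both $m^+ + m^-$ and $m^+m^- - 1$ by the same nonzero scalar, so
\[ M = \frac{m^+_\beta m^-_\beta - 1}{m^+_\beta + m^-_\beta}, \qquad \Im M = \frac{\Im m^+_\beta(1+|m^-_\beta|^2) + \Im m^-_\beta(1+|m^+_\beta|^2)}{|m^+_\beta + m^-_\beta|^2} \]
for every $\beta$. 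The additional symmetry $|m^\pm_{\beta+\pi/2}| = 1/|m^\pm_\beta|$ (from $R_{\pm\pi/4}\cdot z = \mp 1/z$) will also be used.

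For the upper bound \eqref{upp}, I would split $M = m^+ - ((m^+)^2+1)/(m^+ + m^-)$ and use $|m^+ + m^-| \geq \Im m^+$ to get $|M| \leq |m^+| + (|m^+|^2 + 1)/\Im m^+$. Writing $|m^+_\beta|^2$ as a M\"obius-ratio sinusoid in $2\beta$ with mean $A = (|m^+|^2+1)/2$ and exploiting $A^2 - R^2 = (\Im m^+)^2$, where $R$ is the oscillation amplitude, yields the identity
\[ \frac{|m^+|^2+1}{\Im m^+} = r + \frac{1}{r}, \qquad r := \sup_\beta |m^+_\beta|. \]
The orthogonal symmetry gives $r \cdot \inf_\beta|m^+_\beta| = 1$, so $r \geq 1$ and $r + 1/r \leq 2r$. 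Hence $|M| \leq 3r$, and Lemma \ref{AJDT} yields $\epsilon \Im M \leq \epsilon |M| \leq 3\epsilon r \leq C\|P_{(k),+}^{-1}\|$.

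For the lower bound \eqref{low}, the key estimate is $\Im M \geq \tfrac{1}{2}\min(\Im m^+_\beta, \Im m^-_\beta)$ valid for every $\beta$. This follows from the invariance formula above via (i) $|m^+_\beta + m^-_\beta|^2 \leq 2(|m^+_\beta|^2 + |m^-_\beta|^2) \leq 2[(1+|m^+_\beta|^2) + (1+|m^-_\beta|^2)]$, and (ii) the inequality $\min(a,b)(A+B) \leq aB + bA$ for nonnegative reals (if $a \leq b$ the difference equals $(b-a)A \geq 0$). Applying Lemma \ref{Damaniklem} at $L^+ = 2k$ (valid since $|||K^+(E)|||_{2k}^2 = \det P_{(k),+} = 1/\epsilon^2$ by \eqref{kp}) and at $L^-(\epsilon)$ produces $\Im m^\pm_\beta \geq C^{-1}/(\epsilon\|u^\pm_\beta\|_{L^\pm}^2)$, from which $\epsilon\Im M \geq c\min(\|u^+_\beta\|_{2k}^{-2}, \|u^-_\beta\|_{L^-(\epsilon)}^{-2})$ for each $\beta$; taking the supremum gives \eqref{low}.

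The two non-routine ingredients are the joint M\"obius-invariance of $M$ and the Herglotz identity $(|m|^2+1)/\Im m = r + 1/r$. Both reduce to elementary trigonometric manipulations, so I expect neither bound to pose a substantial obstacle; the only care needed is matching the scales $L^+ = 2k$ versus $L^-(\epsilon)$ correctly in the application of Lemma \ref{Damaniklem}.
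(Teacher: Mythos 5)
Your proposal is correct. The lower bound \eqref{low} follows the paper's own route essentially verbatim: the joint invariance $M=(m^+_\beta m^-_\beta-1)/(m^+_\beta+m^-_\beta)$ (which the paper also obtains by ``direct computation''), the resulting formula for $\Im M$, the bound $\Im M\geq c\min\{\Im m^+_\beta,\Im m^-_\beta\}$, and then Lemma \ref{Damaniklem} applied through \eqref{kp} at the two (in general different) scales $L^+=2k$ and $L^-(\epsilon)$ before taking the supremum over $\beta$; your step $\min(a,b)(A+B)\leq aB+bA$ just makes explicit what the paper leaves implicit. Where you genuinely deviate is the upper bound \eqref{upp}: the paper simply quotes the inequality $\Im M\leq\sup_\beta|m^+_\beta|$, proved in \cite{AvilaHolderContinuityAbsolutely2011,DKL2000Uniform} via the maximum modulus principle, and then invokes Lemma \ref{AJDT}, whereas you derive $|M|\leq 3\sup_\beta|m^+_\beta|$ by hand from the split $M=m^+-((m^+)^2+1)/(m^++m^-)$, the Herglotz property $\Im m^\pm>0$ (so $|m^++m^-|\geq\Im m^+$), the $\beta$-invariance of $(|m|^2+1)/\Im m$ together with the identity $(|m^+|^2+1)/\Im m^+=r+1/r$ for $r=\sup_\beta|m^+_\beta|$, and the inversion symmetry $|m^+_{\beta+\pi/2}|=1/|m^+_\beta|$ forcing $r\geq 1$. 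I checked these computations (the mean/amplitude identity $A^2-R^2=(\Im m^+)^2$ and the invariance of numerator and denominator of $M$ under the simultaneous $\beta$-rotation of $m^\pm$) and they are sound; your constant $3$ is weaker than the quoted sharp bound but harmless, since Lemma \ref{AJDT} already carries constants. The only blemish is notational: in the paper's convention $R_\phi$ rotates by angle $2\pi\phi$, so the half-period symmetry comes from $R_{\mp 1/4}\cdot z=-1/z$ rather than ``$R_{\pm\pi/4}$''. Net effect: your upper bound is self-contained and elementary, avoiding the external maximum-modulus input, at the cost of a constant; the paper's version is shorter but leans on the cited reference.
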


\begin{proof}
It was shown in \cite{AvilaHolderContinuityAbsolutely2011,DKL2000Uniform} that, as a corollary of the maximal modulus principle, 
  $\Im M \leq \sup_{\beta}|m^+_\beta|, $ 
then \eqref{upp} follows from Lemma \ref{AJDT}. On the other hand, by \eqref{m_betapm} and \eqref{M(z)}, direct computation shows that 
\[
    M(z)=\frac{m_{\beta}^+(z)m_{\beta}^-(z)-1}{m_{\beta}^+(z) + m_{\beta}^-(z)},
\] 
and thus
    \[
        \begin{aligned}
  \Im M&=\frac{1+|m_{\beta}^+|^2}{|m_{\beta}^+ + m_{\beta}^-|^2}\Im m_{\beta}^- +\frac{1+|m_{\beta}^-|^2}{|m_{\beta}^+ + m_{\beta}^-|^2}\Im m_{\beta}^+ \geq \frac{1}{4}\min\{\Im m_{\beta}^+,\Im m_{\beta}^-\}, 
        \end{aligned}
    \] 
then \eqref{low} follows from \eqref{kp} and  Lemma \ref{Damaniklem}.
 \end{proof}

 \begin{remark}
 Here we should be careful that while we select $L^+(\epsilon)=2k$,  it is not necessary $L^+(\epsilon)$ is equal to $L^-(\epsilon)$.
 \end{remark}

 \subsection{Spectral measure}
 Recall that the whole line $ m $-function coincides with the Borel transform of the canonical spectral measure $ \mu $ of the operator (\ref{H}): 
 \begin{equation} \label{MU}
     M(z)=\int \frac{d\mu(E')}{E'-z}=\langle \delta_0, (H-z)^{-1}\delta_0\rangle+\langle \delta_1, (H-z)^{-1}\delta_1\rangle .
 \end{equation} 
 Thus, clearly, for any $ \epsilon>0 $, we have 
 \begin{equation}\label{upperb-mu} 
    \mu(E-\epsilon,E+\epsilon)\leq 2\epsilon\Im M(E+i\epsilon) .
 \end{equation} 
Deriving a lower bound for $\mu(E-\epsilon, E+\epsilon)$ from $\epsilon\Im M(E+ i\epsilon)$ is a delicate endeavor. Our approach involves first obtaining a stratified upper bound for $\epsilon\Im M(E+ i\epsilon)$. This upper bound is then employed to deduce a stratified upper bound for the measure. Utilizing the identity: $$
\epsilon\Im M(E+i\epsilon) = \int \frac{\epsilon^2}{(E' - E)^2 + \epsilon^2} \dif \mu(E'),
$$ we further derive a lower bound for $\mu(E-\epsilon, E+\epsilon)$ in terms of $\epsilon\Im M(E+i\epsilon)$. The details of this process can be found in Section \ref{5.3}.

Corollary \ref{Corspecmeas} highlights that to estimate $\Im M(E+ i\epsilon)$, it suffices to calculate $\|P_{(k),+}^{-1}\|$ and any component of $u_\beta^\pm$'s $L^2$ norm. As we will demonstrate later, the $L^2$ norm of $u_\beta^\pm$ corresponds to the first entry of the matrix $P_{(k),\pm}$. Therefore, the foundation for understanding the asymptotic local distribution of the spectral measure lies in examining the asymptotic structure of $P_{(k),\pm}$, which we will address in the subsequent section.

\section{Asymptotic formula of $ P_{(k),\pm} $}

Given a  resonant set
 $ (n_s)_{s\geq 1}\subset \Z^d $ may be finite or infinite, and we label it as  $0<|n_1|< |n_2|<\cdots$.  If the sequence is finite, we assume the last one is $ n_m $, and formally set $ n_{m+1}=\infty $. If the sequence is infinite, we assume that    
\begin{equation} \label{cond-repulsion}
    \lim\limits_{s\to \infty}\frac{|n_s|}{|n_{s+1}|}=0.
\end{equation}  
Let us say that $ (\alpha,A)$ is $ (C_0',C_0, \gamma_0, (n_s)_{s\geq s_0}) $-good,  if for any $s\geq s_0$, and denoting $ n=|n_s| $ and $ N=|n_{s+1}| $,  there exists $ B_n\in C^\omega(\T^d,\mathrm{PSL}(2,\R)) $ with $\deg B_n=n $, $ \|B_n\|_0\leq C'_0n^{C_0} $ such that 
    \begin{equation}\label{alre}
        \begin{aligned}
            B_n^{-1}(\theta+\alpha)A(\theta)B_n(\theta)&=M^{-1} \exp 2\pi  \begin{pmatrix}
                \ii t_n &z_n \\ \overline{z_n}&-\ii t_n
            \end{pmatrix}M+F_n(\theta),
        \end{aligned}
    \end{equation}
    with 
    $$
    0\leq \sqrt{t_n^2-|z_n|^2},\, |z_n|\leq \frac{1}{4} ,
    $$ 
    and 
    $$ 
    \|F_n\|_0\leq e^{-\gamma_0 N} .
    $$ 
   
      \begin{remark}   
      If $d=1$,  similar concept of $ (C_0',C_0,\gamma_0,  (n_s)_{s\geq s_0}) $-goodness was first introduced by Avila-Jitomirskaya \cite{AvilaHolderContinuityAbsolutely2011}, and proved in \cite{aj1}. They prove that  if  $z_n$ has sub-exponential decay:
     $$|z_n| \leq C e^{-cn (\ln (1+n))^{-C} },$$
     then the spectral measure is $1/2$-H\"older continuous  (Theorem 4.1 of \cite{AvilaHolderContinuityAbsolutely2011}).
 Our contribution here is to show that when $z_n$ decays exponentially, one can achieve stratified H\"older continuity (as stated in Proposition \ref{THM5.4}). Moreover, when the decay is real exponential, implying both upper and lower bounds of $z_n$ decay exponentially, we can delve into the exact local distribution of the spectral measure.
 \end{remark}    

   \subsection{Asymptotic formula}

   In the following, we  will first show  if the Schr\"odinger cocycle  $ (\alpha,S_E^v) $  is $ (C_0',C_0, \gamma_0,  (n_s)_{s\geq s_0})$-good, then  one can obtain  the   asymptotic formula of $\|P_{(k),\pm}^{-1}\|^{-1} $ and $ \det P_{(k),\pm} $. Before stating the result, note as a direct consequence of  Lemma \ref{lem:quanti-Schurlem}, if  $ (\alpha,A) $  is $ (C_0',C_0, \gamma_0, (n_s)_{s\geq s_0})$-good, then for any $s\geq s_0$,  
   there exists $\Phi_n\in C^\omega(\T^d,\mathrm{PSL}(2,\C)) $ with $ \|\Phi_n\|_0\leq C'_0n^{C_0} $ such that 
    \[
        \begin{aligned}
           \Phi_n^{-1}(\theta+\alpha)S_E^v(\theta)\Phi_n(\theta)&=\begin{pmatrix}
            e^{2\pi \ii\rho_n}&\nu_n\\
            0&e^{-2\pi \ii\rho_n}
        \end{pmatrix}+H_n(\theta),
        \end{aligned}
    \]
    where $\rho_n=\sqrt{t_n^2-|z_n|^2}$ with estimates  $|z_n|\leq |\nu_n|\leq 2|z_n|$, $\|H_n\|_0\leq e^{-\gamma_0N}$. 

    \begin{remark}
        In the result of the paper, we will use this as formulation of  $ (C_0',C_0, \gamma_0, (n_s)_{s\geq s_0})$-goodness. We have to start with \eqref{alre}, as we need to use the invariance of rotation number (one can not define fibered rotation number for general $\op{SL}(2,\C)$ cocycles). More precisely,  apply 
     \eqref{degree} and \eqref{rotationnumber1} to \eqref{alre}, it follows that 
    \begin{equation}\label{Delta-eta}
        |\|2\rho(\alpha,S_E^v)-\langle n_s,\alpha\rangle\|_\T-\|2\rho_n\|_\T|\leq C e^{-\frac{1}{2}\gamma_0N},
    \end{equation}
    which implies that $\|2\rho_n\|_\T $ has exponential decay.
    \end{remark}

    Henceforth, we denote 
    \[
        -\frac{\ln \|2\rho_n\|_\T}{n}=\widehat{\eta}_n \in (0,\infty], \qquad
            -\frac{\ln |\nu_n|}{n}= \zeta_n\in (0,\infty],
    \] 
    for simplicity. Upon obtaining these,  we have the following:

    \begin{proposition}\label{Prop:AsyFormula}
      Suppose  that $ (\alpha,S_E^v) $ is $ (C_0',C_0, \gamma_0,  (n_s)_{s\geq s_0}) $-good, then there exists $ K_0=K_0(C_0',C_0, \gamma_0)>0$ such that  for $ k\geq K_0  $ and $e^{ \frac{\gamma_0}{5} n}\leq k\leq e^{\frac{\gamma_0}{5} N}$, we have        \[
            (\ln k)^{-8C_0} k\leq \|P_{(k),\pm}^{-1}\|^{-1}\leq (\ln k)^{8C_0}k, 
        \] 
        \[
            (\ln k)^{- 16C_0}k^{ \eta_+^n(k)  }\leq  \det P_{(k),\pm}\leq (\ln k)^{ 16C_0 }k^{ \eta_+^n(k)  },
        \] 
        where $  \eta_+^n(k) = \max\{\eta^n(k),2\}$, and  $\eta^n(k)$ is defined as: 
        \begin{equation} \label{eta_s}
            \eta^n(k)=\left\lbrace\begin{aligned}
                &4-\frac{2\zeta_n n}{\ln k}&\text{if }e^{\frac{\gamma_0}{5} n}\leq k\leq e^{\widehat{\eta}_n n},\\
                &2+ \frac{2\widehat{\eta}_n n-2\zeta_n n}{\ln k} &\text{if }e^{\widehat{\eta}_n n}\leq k\leq e^{\frac{\gamma_0}{5} N}.
            \end{aligned}\right.
        \end{equation} 
        Moreover,  there exists $ \beta_n\in(-\pi/2,\pi/2] $, such that\begin{equation*} 
            (\ln k)^{-8C_0} k\leq \|u_{\beta_n}^\pm\|_{2k}^{2}\leq (\ln k)^{8C_0}k.
        \end{equation*}
    \end{proposition}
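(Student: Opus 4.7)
The plan is to exploit the almost reducibility structure \eqref{alre} combined with the upper-triangular form from Lemma \ref{lem:quanti-Schurlem}, so that after an explicit change of variables the required matrices can be computed directly. Set $\tilde A(\theta) := \Phi_n^{-1}(\theta+\alpha)S_E^v(\theta)\Phi_n(\theta) = A_0 + H_n(\theta)$ with $A_0 = \bigl(\begin{smallmatrix} e^{2\pi\ii\rho_n} & \nu_n \\ 0 & e^{-2\pi\ii\rho_n}\end{smallmatrix}\bigr)$, $\|H_n\|_0 \leq e^{-\gamma_0 N}$, and $\|\Phi_n\|_0 \leq C_0' n^{C_0}$. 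A direct computation gives $A_0^j$ in closed form, satisfying $\|A_0^j\| \lesssim 1 + |\nu_n|\min(j, \|2\rho_n\|_\T^{-1}) = 1 + e^{-\zeta_n n}\min(j, e^{\widehat\eta_n n})$. Using the Duhamel expansion $\tilde{\mathcal A}_j - A_0^j = \sum_{\ell=0}^{j-1}\tilde{\mathcal A}_{j-1-\ell}(\cdot + (\ell+1)\alpha)H_n(\cdot + \ell\alpha)A_0^\ell$ together with a bootstrap on $\|\tilde{\mathcal A}_\ell\|$, one shows the error $\|\tilde{\mathcal A}_j - A_0^j\|_0$ is negligible relative to $\|A_0^j\|$ throughout the regime $j \le 2e^{\gamma_0 N/5}$.

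Next, set $\tilde P_{(k),+} := \sum_{j=1}^k \tilde{\mathcal A}_{2j-1}^*\tilde{\mathcal A}_{2j-1}$. Replacing $\tilde{\mathcal A}_{2j-1}$ by $A_0^{2j-1}$ (justified above), the sum evaluates explicitly as
\[
\sum_{j=1}^k (A_0^{2j-1})^*A_0^{2j-1} = \begin{pmatrix} k & \nu_n S_k \\ \bar\nu_n \bar S_k & k + |\nu_n|^2 T_k \end{pmatrix},
\]
where $T_k = \sum_j|U_{2j-1}|^2$ and $S_k = \sum_j e^{-2\pi\ii(2j-1)\rho_n}U_{2j-1}$. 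Using $\|2\rho_n\|_\T = e^{-\widehat\eta_n n}$ one obtains $T_k \asymp \min(k^3, ke^{2\widehat\eta_n n})$; geometric summation of $S_k$ followed by careful cancellation against the diagonal product $k(k + |\nu_n|^2 T_k)$ yields $\|\tilde P_{(k),+}^{-1}\|^{-1} \asymp k$ and $\det\tilde P_{(k),+} \asymp k^{\eta_+^n(k)}$, reproducing both cases of \eqref{eta_s}.

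To transfer back to $P_{(k),+}$, the conjugacy $\mathcal A_j(\theta)\Phi_n(\theta) = \Phi_n(\theta+j\alpha)\tilde{\mathcal A}_j(\theta)$ yields
\[
\Phi_n^*(\theta+\alpha)P_{(k),+}(\theta)\Phi_n(\theta+\alpha) = \sum_{j=1}^k \tilde{\mathcal A}_{2j-1}^*(\theta+\alpha)\,[\Phi_n^*\Phi_n](\theta+2j\alpha)\,\tilde{\mathcal A}_{2j-1}(\theta+\alpha).
\]
The sandwich $c\|\Phi_n\|_0^{-2} I \le \Phi_n^*\Phi_n \le \|\Phi_n\|_0^2 I$ combined with $\|\Phi_n\|_0 \le C_0' n^{C_0}$ implies that the eigenvalues of $P_{(k),+}$ and $\tilde P_{(k),+}$ agree up to factors of $n^{O(C_0)}$; since $k \ge e^{\gamma_0 n/5}$ gives $n \le (5/\gamma_0)\ln k$, this converts to the $(\ln k)^{O(C_0)}$ losses stated in the proposition. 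The $P_{(k),-}$ case is symmetric via backward iteration. Finally, identity \eqref{normP_k} reads $\|u_\beta^+\|_{2k}^2 = \langle P_{(k),+} v_\beta, v_\beta\rangle$ with $v_\beta = (u_\beta^+(1), u_\beta^+(0))^T$; as $\beta \in (-\pi/2,\pi/2]$ varies, $v_\beta$ sweeps the unit circle, so choosing $\beta_n$ such that $v_{\beta_n}$ is the eigenvector of the smallest eigenvalue of $P_{(k),+}$ yields $\|u_{\beta_n}^+\|_{2k}^2 \asymp k$.

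The main technical difficulty lies in the explicit evaluation of $\det\tilde P_{(k),+}$: the off-diagonal contribution $|\nu_n S_k|^2$ must cancel against the diagonal $k\cdot|\nu_n|^2 T_k$ in exactly the pattern dictated by \eqref{eta_s}, and one must separately handle the two regimes of \eqref{eta_s} together with the intermediate transition at $k \asymp e^{\zeta_n n}$ where the trace crosses from $k$-dominated to $k^3|\nu_n|^2$-dominated behavior. Small denominators $|\sin(2\pi\rho_n)|$ appearing in $S_k$ must be tracked throughout, but no single step demands a fundamentally new idea beyond the quantitative almost reducibility framework of \cite{Universal}.
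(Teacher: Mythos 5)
Your overall route is the same as the paper's: conjugate by $\Phi_n$ to the near-constant upper-triangular cocycle, compute the model sum $\sum_{j}(A_0^{2j-1})^*A_0^{2j-1}$, show the perturbation $H_n$ is harmless up to scale $e^{\gamma_0 N/5}$, and transfer back with losses $\|\Phi_n\|_0^{\pm4}$, $\|\Phi_n\|_0^{\pm8}$, converted into powers of $\ln k$ via $n\leq\frac{5}{\gamma_0}\ln k$. The only real differences are that you re-derive the model computation by hand (the paper simply quotes Lemma \ref{lem4.3inAJ} from \cite{AvilaHolderContinuityAbsolutely2011}) and that you compare eigenvalues by a congruence/Loewner sandwich rather than through the paper's Lemma \ref{P_kX}. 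On the model computation you only assert the decisive point, namely that $k\,T_k-|S_k|^2$ does not degenerate, so that the determinant is genuinely comparable to $k^2\bigl(1+|\nu_n|^2\min\{k^2,\|2\rho_n\|_\T^{-2}\}\bigr)$; Cauchy--Schwarz gives $|S_k|^2\leq k\,T_k$, so a quantitative gain must actually be proved. Since this is exactly the content of the cited lemma, either invoke it or supply the estimate.

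The genuine gap is in the ``Moreover'' part. You choose $\beta_n$ so that $v_{\beta_n}$ is the eigenvector of the smallest eigenvalue of $P_{(k),+}$. That choice depends on $k$ and on the half-line: it gives $\|u_{\beta_n}^+\|_{2k}^2=\|P_{(k),+}^{-1}\|^{-1}\asymp k$, but it gives no upper bound on $\|u_{\beta_n}^-\|_{2k}^2$, which by \eqref{normP_k} could be as large as $\|P_{(k),-}\|\asymp k^{\eta_+^n(k)-1}\gg k$ in the resonant regime. The proposition asserts a single $\beta_n$, depending only on the resonance scale $n$ (and $\theta$), for which both $\|u_{\beta_n}^{+}\|_{2k}^2$ and $\|u_{\beta_n}^{-}\|_{2k}^2$ are comparable to $k$ for every $k$ in the window; this uniformity in $k$ and in the sign is precisely what is consumed later, since Lemma \ref{upperb} feeds $\|u_{\beta_n}^+\|_{2k}$ and $\|u_{\beta_n}^-\|_{2k_1}$, at two different scales, into the $\sup_\beta\min$ lower bound of Corollary \ref{Corspecmeas}. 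The paper's fix is the symmetry argument of Lemma \ref{P_kX}: define $\beta_n$ by pulling back the direction $e_1$ (i.e.\ $\beta'=0$) through $\Phi_n(\theta+\alpha)$; then $\|u_{\beta_n}^{\pm}\|_{2k}^2$ is comparable, up to $\|\Phi_n\|_0^{\pm4}$, to the $(1,1)$ entry of the forward, respectively backward, model sum, and that entry equals $k$ exactly in both directions, independently of $k$. Replacing your eigenvector choice by this one closes the gap; the rest of your argument then delivers the stated bounds.
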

    
    \begin{remark}\label{Rem:4.3}
    If  $\widehat{\eta}_n=\infty $, then 
   $$\eta^n(k)= 4-\frac{2\zeta_n n}{\ln k} \qquad k\geq e^{\frac{\gamma_0}{5} n}.$$
 If $\zeta_n=\infty$, then  $ \eta_+^n(k)\equiv 2 $.
    Thus, the function $ \eta_+^n(k) $ is well-defined.
    \end{remark}

    \begin{proof}
     We only prove the estimates for $ \|P_{(k),+}^{-1}\|^{-1} $, $ \|u_{\beta_n}^+\|_{2k}^{2} $ and $ \det P_{(k),+} $, the negative side is similar. 
     The proof is based on the following result. 
\begin{lemma}[\cite{AvilaHolderContinuityAbsolutely2011}]\label{lem4.3inAJ}\label{lem4.4inAJ}
    Let 
    $
        T=\begin{pmatrix}
            e^{2\pi \ii \tilde{\theta}}&t\\
            0&e^{-2\pi\ii \tilde{\theta}}
        \end{pmatrix}
    $, $ \tilde{T}:\T^d\to\mathrm{SL}(2,\C) $.
Denote \[
X=\sum_{j=1}^k T^*_{2j-1}T_{2j-1}, \qquad \tilde{X}=\sum_{j=1}^k \tilde{T}^*_{2j-1}\tilde{T}_{2j-1}.
\]  Then we have the following:
    \begin{equation*} 
        \|X\| \thickapprox k(1+|t|^2\min\{k^2,\|2\tilde{\theta}\|_\T^{-2}\}),
    \end{equation*} 
    \begin{equation*} 
        \|X^{-1}\|^{-1}\thickapprox k,
    \end{equation*} 
       \begin{equation*} 
        \|\tilde{X}-X\|_0\leq 1, \text{ provided that } \|\tilde{T}-T\|_0\leq ck^{-2}(1+2k\|t\|_0)^{-2}.
    \end{equation*} 
\end{lemma}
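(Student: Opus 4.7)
The plan is to exploit the explicit upper-triangular structure of $T$. Setting $a := e^{2\pi\ii\tilde\theta}$ and $\mu := t/(a - a^{-1}) = t/(2\ii\sin(2\pi\tilde\theta))$ (the degenerate case $2\tilde\theta \in \Z$ is handled as a limit), an elementary induction yields the closed form
\[
T^n = \begin{pmatrix} a^n & \mu(a^n - a^{-n}) \\ 0 & a^{-n} \end{pmatrix}.
\]
Using $|a|=1$ and collecting terms produces explicit expressions for the entries of $X = \sum_{j=1}^{k}(T^{2j-1})^*T^{2j-1}$: one finds $X_{11}=k$, $X_{12}=\mu(k-\sigma)$, and $X_{22}=k+4|\mu|^2\sum_{j=1}^k\sin^2(2\pi(2j-1)\tilde\theta)$, where $\sigma := \sum_{j=1}^k e^{-4\pi\ii(2j-1)\tilde\theta}$ can be summed as a geometric series.

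For the first assertion I use the equivalence $\|X\|\asymp\mathrm{tr}(X)$ for positive semi-definite $2\times 2$ matrices. The Dirichlet-type sum in $X_{22}$ splits naturally into two regimes. When $k\|2\tilde\theta\|_\T\lesssim 1$, the Taylor expansion $\sin(2\pi n\tilde\theta)/\sin(2\pi\tilde\theta)\approx n$ yields $X_{22}-k \asymp |t|^2 k^3$. When $k\|2\tilde\theta\|_\T\gtrsim 1$, one uses the identity $\sum_{j=1}^k \sin^2(2\pi(2j-1)\tilde\theta) = (k-\mathrm{Re}\,\sigma)/2$ and bounds the oscillatory $\sigma$ by the closed form of the geometric series; combined with $|\mu|^2 \asymp |t|^2/\|2\tilde\theta\|_\T^2$, this gives $X_{22}-k \asymp |t|^2 k/\|2\tilde\theta\|_\T^2$. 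Combining the two regimes produces $\|X\|\asymp k(1+|t|^2\min\{k^2,\|2\tilde\theta\|_\T^{-2}\})$.

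For the second assertion I use $\|X^{-1}\|^{-1}=\det X/\|X\|$. The key algebraic step is the clean identity
\[
\det X \;=\; k^2 + |\mu|^2 \bigl(k^2 - |\sigma|^2\bigr),
\]
obtained by direct expansion of $X_{11}X_{22} - |X_{12}|^2$ and using $\sum\sin^2(2\pi(2j-1)\tilde\theta) = (k - \mathrm{Re}\,\sigma)/2$: the cross-terms linear in $\mathrm{Re}\,\sigma$ coming from $X_{11}X_{22}$ cancel exactly against those from $|X_{12}|^2 = |\mu|^2 |k-\sigma|^2$, leaving only the squared moduli. Since $|\sigma|\leq k$ by the triangle inequality, this yields the unconditional bound $\det X \geq k^2$. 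A parallel case analysis shows moreover that $|\mu|^2(k^2 - |\sigma|^2) \asymp |t|^2 k^2 \min\{k^2,\|2\tilde\theta\|_\T^{-2}\}$, so $\det X \asymp k \cdot \|X\|$ and $\|X^{-1}\|^{-1} \asymp k$; the upper bound is automatic from $\lambda_{\min}(X) \leq X_{11} = k$.

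For the perturbation claim, I use the telescoping identity $\tilde T^n - T^n = \sum_{j=0}^{n-1} \tilde T^j(\tilde T-T) T^{n-1-j}$, combined with the uniform estimate $\|T^n\| \leq C(1+n\|t\|_0)$, which follows from the explicit formula for $T^n$ and the classical inequality $|\sin(2\pi n\tilde\theta)|\leq n|\sin(2\pi\tilde\theta)|$. A short bootstrap ensures $\|\tilde T^n\|\leq 2\|T^n\|$ for all $n\leq 2k-1$ once $\|\tilde T-T\|_0$ is sufficiently small; summing the resulting termwise bound
\[
\bigl\|(\tilde T^{2j-1})^*\tilde T^{2j-1} - (T^{2j-1})^*T^{2j-1}\bigr\| \lesssim (2j-1)(1+2k\|t\|_0)^2 \|\tilde T-T\|_0
\]
over $j=1,\ldots,k$ yields the third claim. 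The main obstacle is the determinant identity above: a naive expansion produces several oscillatory sums, and verifying that the terms involving $\mathrm{Re}\,\sigma$ cancel exactly between $X_{11}X_{22}$ and $|X_{12}|^2$ is the algebraic heart of the lemma, since it is precisely this cancellation that makes the uniform lower bound $\det X\geq k^2$ possible without invoking any non-resonance assumption on $\tilde\theta$.
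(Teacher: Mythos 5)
The paper itself offers no proof of this lemma (it is quoted from Avila--Jitomirskaya), so your computation has to stand on its own. The algebraic core is right and well organized: the closed form of $T^n$, the entries $X_{11}=k$, $X_{12}=\mu(k-\sigma)$, $X_{22}=k+4|\mu|^2\sum_j\sin^2(2\pi(2j-1)\tilde\theta)$, and the identity $\det X=k^2+|\mu|^2(k^2-|\sigma|^2)$ with the unconditional bound $\det X\geq k^2$ are all correct. The trouble is in how you control $\sigma=\sum_{j=1}^k e^{-4\pi\ii(2j-1)\tilde\theta}$: its geometric ratio is $e^{-8\pi\ii\tilde\theta}$, so the closed form bounds $|\sigma|$ by (a constant times) $\|4\tilde\theta\|_\T^{-1}$, \emph{not} $\|2\tilde\theta\|_\T^{-1}$. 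When $2\tilde\theta$ lies within $O(1/k)$ of a half-integer, one has $\|2\tilde\theta\|_\T\approx 1/2$ while $\sigma\approx -k$, and your intermediate claim $|\mu|^2(k^2-|\sigma|^2)\asymp |t|^2k^2\min\{k^2,\|2\tilde\theta\|_\T^{-2}\}$ is false: at $2\tilde\theta=1/2$ exactly, $T^2=-\mathrm{Id}$, so $X=kT^*T$, the left-hand side vanishes, $\det X=k^2$, and $k\|X\|\asymp k^2(1+|t|^2)$, so the step ``$\det X\asymp k\|X\|$'' also fails there (indeed $\|X^{-1}\|^{-1}=k/\lambda_{\max}(T^*T)\asymp k/(1+|t|^2)$, which shows the second assertion is only true with constants depending on a bound for $|t|$, as in the application where $|t|=|\nu_n|\leq 1$). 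So this ``double-resonance'' regime $\|4\tilde\theta\|_\T\lesssim 1/k\lesssim\|2\tilde\theta\|_\T$ must be treated separately — there $\mathrm{Re}\,\sigma\leq 0$, which rescues the lower bound on $\sum_j\sin^2$ in your first assertion, and $\lambda_{\min}\asymp k$ follows from $\det X\asymp k^2$, $\|X\|\asymp k(1+|t|^2)$ rather than from your claimed equivalence.

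The perturbation part also does not close as written. Telescoping together with $\|T^n\|\lesssim 1+n\|t\|_0$ gives $\|\tilde T_n-T^n\|\lesssim n(1+n\|t\|_0)^2\|\tilde T-T\|_0$, and passing to $\tilde T_n^*\tilde T_n-(T^n)^*T^n$ costs a further factor $\|\tilde T_n\|+\|T^n\|\lesssim 1+n\|t\|_0$, so the honest termwise bound carries $(1+2k\|t\|_0)^{3}$, not the square you assert; and the cube is sharp: at $\tilde\theta=0$, perturbing the lower-left entry by $\epsilon$ changes $(T^n)_{12}=nt$ at first order by $\asymp\epsilon t^2n^3$, hence the $(2,2)$ entry of the $n$-th summand by $\asymp\epsilon t^3 n^4$. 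Summing over $j\leq k$, the stated smallness $\|\tilde T-T\|_0\leq ck^{-2}(1+2k\|t\|_0)^{-2}$ then only yields $\|\tilde X-X\|\lesssim c(1+2k\|t\|_0)$, which is not $\leq 1$ once $k\|t\|_0$ is large. So either an additional cancellation beyond naive telescoping, or an implicit restriction on $k\|t\|_0$ (or on the constants), is needed; as it stands your argument for the third claim has a genuine gap at exactly the power of $(1+2k\|t\|_0)$ that the hypothesis is calibrated to.
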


As $ (\alpha,S_E^v) $ is $ (C_0',C_0,\epsilon_0, \gamma_0) $-good, we can rewrite $(\alpha, S_E^v)$ as 
    $$
       \Phi_n(\theta+\alpha)^{-1}S_E^{v}(\theta)\Phi_n(\theta)=T+H_n:= \begin{pmatrix}
            e^{2\pi \ii\rho_n}&\nu_n\\
            0&e^{-2\pi \ii\rho_n}
        \end{pmatrix} + H_n.
    $$ 
with
 estimates $ \|\Phi_n\|_0\leq C'_0n^{C_0} $, $ |\nu_n|< 1 $, $ \|H_n\|_0\leq e^{-\gamma_0 N}$.  Denote    
    \begin{equation*} 
        \tilde{T}(\theta)=\Phi_n(\theta+\alpha)^{-1}S_E^{v}(\theta)\Phi_n(\theta)
  \end{equation*}   
 and define $ X $, $\tilde X$ as in Lemma \ref{lem4.3inAJ}.
First  we have the following basic observations:
    \begin{lemma}\label{P_kX}
        Denote 
         $ \tilde{X}=\begin{pmatrix}
            \tilde{x}_{11}&\tilde{x}_{12}\\ \tilde{x}_{21}&\tilde{x}_{22}
        \end{pmatrix} $, there exists $ \beta_n\in (-\pi/2,\pi/2] $ such that \begin{equation} \label{m^+}
                \|\Phi_n\|_0^{-4}\tilde{x}_{11}\leq {\|u_{\beta_n}^+\|_{2k}^2}\leq \|\Phi_n\|_0^4 \tilde{x}_{11}.
            \end{equation} 
   Moreover,  we have estimates 
        \begin{enumerate}
            \Item 
            \begin{align}\label{C1}
            \|\Phi_n\|_0^{-4}\|\tilde{X}(\theta+\alpha)\|  
                \leq  
                \|P_{(k),+}\| 
                \leq 
                \|\Phi_n\|_0^4\|\tilde{X}(\theta+\alpha)\|,
            \end{align}
            \Item   
            \begin{equation}\label{C2} 
                \|\Phi_n\|_0^{-4}\|\tilde{X}(\theta+\alpha)^{-1}\|^{-1} 
                \leq 
                \|P_{(k),+}^{-1}\|^{-1} 
                \leq 
                \|\Phi_n\|_0^{4}\|\tilde{X}(\theta+\alpha)^{-1}\|^{-1},
                     \end{equation}
            \Item  
            \begin{equation} \label{C3}
        \|\Phi_n\|_0^{-8}\|\tilde{X}\|\|\tilde{X}^{-1}\|^{-1}\leq \det P_{(k),+}\leq \|\Phi_n\|_0^8\|\tilde{X}\|\|\tilde{X}^{-1}\|^{-1}.
    \end{equation} 
        \end{enumerate}

    \end{lemma}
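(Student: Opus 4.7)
The plan is to push everything through the conjugation $\tilde T(\theta) = \Phi_n(\theta+\alpha)^{-1} S_E^v(\theta)\Phi_n(\theta)$. Iterating this identity yields, for every $m\geq 1$,
$$\mathcal{A}_{m}(\theta+\alpha) = \Phi_n(\theta+(m+1)\alpha)\,\tilde T_{m}(\theta+\alpha)\,\Phi_n(\theta+\alpha)^{-1}.$$
Plugging $m = 2j-1$ into the definition of $P_{(k),+}$ gives a sum in which every summand has the factor $\Phi_n(\theta+2j\alpha)^*\Phi_n(\theta+2j\alpha)$ sandwiched between two copies of $\tilde T_{2j-1}$, conjugated on the outside by $\Phi_n(\theta+\alpha)^{-1}$. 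Since $\det\Phi_n = \pm 1$, a $2\times 2$ matrix computation gives the uniform scalar sandwich
$$\|\Phi_n\|_0^{-2}\, I \;\leq\; \Phi_n(\theta+2j\alpha)^*\Phi_n(\theta+2j\alpha) \;\leq\; \|\Phi_n\|_0^{2}\, I,$$
which holds independently of $j$ and $\theta$. This is the key observation that allows the $j$-dependent middle factor to be absorbed into a constant.

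Applying this pointwise sandwich termwise and summing over $1\leq j\leq k$ produces the matrix sandwich
$$\|\Phi_n\|_0^{-2}\,\Phi_n(\theta+\alpha)^{-*}\,\tilde X(\theta+\alpha)\,\Phi_n(\theta+\alpha)^{-1}\;\leq\; P_{(k),+} \;\leq\; \|\Phi_n\|_0^{2}\,\Phi_n(\theta+\alpha)^{-*}\,\tilde X(\theta+\alpha)\,\Phi_n(\theta+\alpha)^{-1}.$$
From here the three estimates $(\ref{C1}), (\ref{C2}), (\ref{C3})$ follow by applying monotonicity of $\|\cdot\|$, of $\|\cdot^{-1}\|^{-1}$ and of $\det$ on $2\times 2$ positive symmetric matrices, together with the identity $\|\Phi_n^{-1}\|=\|\Phi_n\|$ (valid because $\det\Phi_n=\pm 1$) to absorb the conjugation by $\Phi_n(\theta+\alpha)^{-1}$ at the cost of an extra factor $\|\Phi_n\|_0^{\pm 2}$. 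For $(\ref{C3})$, one additionally uses the $2\times 2$ identity $\det \tilde X = \|\tilde X\|\,\|\tilde X^{-1}\|^{-1}$ and the fact that $\det$ is unchanged under conjugation by unimodular $\Phi_n(\theta+\alpha)^{-1}$.

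For the bound $(\ref{m^+})$ on $\|u_{\beta_n}^+\|_{2k}^2$, the plan is to choose $\beta_n\in(-\pi/2,\pi/2]$ so that the initial condition vector $\vec v_{\beta_n}:=(\cos\beta_n,-\sin\beta_n)^T$ is proportional to $\Phi_n(\theta+\alpha) e_1$; such a $\beta_n$ exists because every unit vector in $\R^2$ is of this form. Then $\Phi_n(\theta+\alpha)^{-1}\vec v_{\beta_n} = \rho_n\,e_1$ for some scalar $\rho_n$ satisfying $\|\Phi_n\|_0^{-2}\leq \rho_n^2\leq \|\Phi_n\|_0^{2}$. Using $(\ref{normP_k})$ to write $\|u_{\beta_n}^+\|_{2k}^2 = \langle P_{(k),+}\vec v_{\beta_n},\vec v_{\beta_n}\rangle$ and testing the sandwich of step~2 against this vector gives
$$\|\Phi_n\|_0^{-2}\,\rho_n^2\,\tilde x_{11} \;\leq\; \|u_{\beta_n}^+\|_{2k}^2 \;\leq\; \|\Phi_n\|_0^{2}\,\rho_n^2\,\tilde x_{11},$$
since $\langle \tilde X e_1, e_1\rangle = \tilde x_{11}$. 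Combined with the bound on $\rho_n^2$ this is exactly $(\ref{m^+})$.

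The only real obstacle is conceptual rather than computational: the middle factor $\Phi_n(\theta+2j\alpha)^*\Phi_n(\theta+2j\alpha)$ does depend on the summation index $j$, and cannot be pulled out of the sum directly. The resolution is that the scalar sandwich above is uniform in $j$, which is what makes the factorization through $\tilde X$ legitimate. Once this is recognized, everything reduces to standard positive-matrix monotonicity and the special algebraic properties of $2\times 2$ unimodular matrices, with no remaining analytic difficulty.
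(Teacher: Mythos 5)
Your operator-sandwich derivation of \eqref{C1}, \eqref{C2}, \eqref{C3} is correct: the termwise bound $\|\Phi_n\|_0^{-2}I\leq \Phi_n(\theta+2j\alpha)^*\Phi_n(\theta+2j\alpha)\leq\|\Phi_n\|_0^{2}I$ (valid since $|\det\Phi_n|=1$), the resulting two-sided matrix inequality for $P_{(k),+}$, monotonicity of $\|\cdot\|$, $\lambda_{\min}$ and $\det$ on positive matrices, and $\|\Phi_n^{-1}\|=\|\Phi_n\|$ are all legitimate, and for \eqref{C3} you even get a better constant than the stated $\|\Phi_n\|_0^{\pm 8}$. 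This is essentially the paper's argument (which runs the same computation vector by vector, via \eqref{ub}, \eqref{sym} and \eqref{normP_k'}, taking suprema over $\beta$) packaged as a single positive-matrix inequality; nothing is lost or gained beyond presentation.

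The genuine gap is in your proof of \eqref{m^+}. You choose $\beta_n$ so that $(\cos\beta_n,-\sin\beta_n)^T$ is proportional to $\Phi_n(\theta+\alpha)e_1$, invoking that ``every unit vector in $\R^2$ is of this form''. But $\Phi_n$ is $\mathrm{PSL}(2,\C)$-valued, not real: up to the real conjugation $B_n$ it is the Cayley matrix $M^{-1}$ times the unitary triangularizing the $\mathrm{su}(1,1)$ constant, so $\Phi_n(\theta+\alpha)e_1$ is (modulo the small perturbation) the image under $B_n$ of an eigenvector of an elliptic element of $\op{SL}(2,\R)$ with eigenvalue $e^{2\pi\ii\rho_n}$. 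Such a direction is proportional to a real vector only when the eigenvalue is real, i.e.\ essentially never in the resonant regime this lemma is used in; hence no real $\beta_n$ with $\Phi_n(\theta+\alpha)^{-1}\vec v_{\beta_n}=\rho_n e_1$ exists, and this step of your argument fails as written. The repair is short: test your sandwich against the complex unit vector $\hat w=\Phi_n(\theta+\alpha)e_1/\|\Phi_n(\theta+\alpha)e_1\|$ to obtain $\|\Phi_n\|_0^{-4}\tilde x_{11}\leq\langle P_{(k),+}\hat w,\hat w\rangle\leq\|\Phi_n\|_0^{4}\tilde x_{11}$; since $P_{(k),+}$ is real symmetric and positive, the number $\langle P_{(k),+}\hat w,\hat w\rangle$ lies between the minimum and maximum over real unit vectors of $\beta\mapsto\|u_\beta^+\|_{2k}^2=\langle P_{(k),+}\tilde u_\beta,\tilde u_\beta\rangle$, so by continuity in $\beta$ there is a real $\beta_n\in(-\pi/2,\pi/2]$ with $\|u_{\beta_n}^+\|_{2k}^2=\langle P_{(k),+}\hat w,\hat w\rangle$, which gives \eqref{m^+}. (The paper's own ``mutual determination between $\beta'$ and $\beta$'' with $\beta'=0$ glosses over the same realness issue, and is fixed the same way.)
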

    \begin{proof} 
Denote   $ (\alpha,S_E^v)^s=(s\alpha,\mathcal{A}_s) $ and dentoe $\tilde{u}_{\beta}=\begin{pmatrix}
                u_{\beta}^+(1)\\u_{\beta}^+(0)
            \end{pmatrix} $. Just notice that
    \begin{equation}\label{ub}
            \begin{aligned}
                \|u_\beta^+\|_{2k}^2
                =&\sum_{j=1}^k \langle \mathcal{A}_{2j-1}(\theta+\alpha)\tilde{u}_{\beta}, \mathcal{A}_{2j-1}(\theta+\alpha)\tilde{u}_{\beta}\rangle\\
                =&\sum_{j=1}^k \|\Phi_n(\theta+2j\alpha)\tilde{T}_{2j-1}(\theta+\alpha)\Phi_n(\theta+\alpha)^{-1}\tilde{u}_{\beta}\|^2\\
                \leq& \|\Phi_n\|_0^2\sum_{j=1}^k\langle \tilde{T}_{2j-1}(\theta+\alpha)\Phi_n(\theta+\alpha)^{-1}\tilde{u}_{\beta}, \tilde{T}_{2j-1}(\theta+\alpha)\Phi_n(\theta+\alpha)^{-1}\tilde{u}_{\beta}\rangle\\
                \leq& \|\Phi_n\|_0^4\sum_{j=1}^k\langle \tilde{T}_{2j-1}(\theta+\alpha)\tilde{u}_{\beta'}, \tilde{T}_{2j-1}(\theta+\alpha)\tilde{u}_{\beta'}\rangle
                =\|\Phi_n\|_0^4\langle \tilde{X}(\theta+\alpha)\tilde{u}_{\beta'},\tilde{u}_{\beta'}\rangle
            \end{aligned}
        \end{equation}
        where  
        \begin{equation} \label{beta'}
\tilde{u}_{\beta'}=\frac{\Phi_n(\theta+\alpha)^{-1}\tilde{u}_{\beta}} {\left\|\Phi_n(\theta+\alpha)^{-1}\tilde{u}_{\beta} \right\|},
        \end{equation} 
        and symmetrically, we have 
 \begin{equation} \label{sym}
            \langle \tilde{X}(\theta+\alpha)\tilde{u}_{\beta'},\tilde{u}_{\beta'}\rangle\leq \|\Phi_n\|_0^4 \|u_{\tilde{\beta}}^+\|_L^2.
          \end{equation} 
        where
        \begin{equation} \label{tilde-beta}
            \tilde{u}_{\tilde{\beta}}=\frac{\Phi_n(\theta+\alpha)\tilde{u}_{\beta'}}{\left\|\Phi_n(\theta+\alpha)\tilde{u}_{\beta'}\right\|}.
        \end{equation}
        Combine \eqref{beta'} and \eqref{tilde-beta}, we have $ \beta=\tilde{\beta} $, so there is a mutual determination between $ \beta' $ and $ \beta $.

        We first prove (\ref{m^+}).  Let $ \beta'=0 $, there exists an unique $ \beta_n $ determined by \eqref{beta'} and \eqref{tilde-beta}, such that 
        $$
            \begin{aligned}
    \|u_{\beta_n}^+\|_{2k}^2
                \leq \|\Phi_n\|_0^4\langle \tilde{X}(\theta+\alpha)\begin{pmatrix}
                    u_0^+(1)\\u_0^+(0)
                \end{pmatrix},\begin{pmatrix}
                    u_0^+(1)\\u_0^+(0)
                \end{pmatrix}\rangle=\|\Phi_n\|_0^4 \tilde{x}_{11}.
            \end{aligned}
        $$ 
        And the LHS of (\ref{m^+}) holds by symmetry \eqref{sym}. 

    Recall that \begin{equation} \label{normP_k'}
    \|u_\beta^\pm\|_{2k}^2=\langle P_{(k),\pm} \begin{pmatrix}
        u_\beta^\pm(1)\\u_\beta^\pm(0)
    \end{pmatrix},\begin{pmatrix}
        u_\beta^\pm(1)\\u_\beta^\pm(0)
    \end{pmatrix}\rangle\leq \|P_{(k),\pm}\|,
\end{equation} 
then the  LHS of (\ref{C1}) follows from \eqref{sym} and  \eqref{normP_k'}. Note \eqref{normP_k'} takes  equality for $ \beta $ maximizing $ \|u_\beta^\pm\|_L^2 $, then one can take 
supremum  of \eqref{ub} getting the proof of  the  RHS of (\ref{C1}).  (\ref{C2}) follows from the similar argument, we omit the details, while \eqref{C3} follows from \eqref{C1}, \eqref{C2} and the fact that $$ \det P_{(k),+}= \|P_{(k),+}\|\|P_{(k),+}^{-1}\|^{-1}.$$
We thus finish the whole proof.  \qedhere

    \end{proof}
\begin{remark}
Utilizing symmetry, we derive a lower bound for $\Im M(E + \ii\epsilon)$, an essential step in our proof.  While the observation is simple, it serves as the starting point of the whole proof. 
\end{remark}
    Once we have these, we can finish the estimates.   Let $ k_{\Delta}\geq 1 $ be maximal such that for $ 1\leq k<k_{\Delta} $, we have 
      $
        \|\tilde{X}-X\|_0\leq 1.
  $ 
   Notice that 
    $
        \|\tilde{T}-T\|_0\leq \|H_n\|_0,
    $
Lemma \ref{lem4.4inAJ}  implies that 
    \begin{equation} 
        \|H_n\|_0\geq ck_{\Delta}^{-2}(1+2k_{\Delta}|\nu_n|)^{-2}\geq ck_{\Delta}^{-4},
    \end{equation} 
    so that 
    \begin{equation*}\label{k_Delta}
        k_{\Delta}\geq c^{\frac{1}{4}}e^{\frac{\gamma_0}{4} N}.
    \end{equation*} 
     Then there exists $ K_0=K_0(C_0',C_0,\gamma_0)>0 $, such that  for any $k\geq K_0$ and $ e^{\frac{\gamma_0}{5} n} \leq k\leq  e^{\frac{\gamma_0}{5}  N} $, we have 
    \begin{equation}\label{eq:lnkbound}
        k< k_\Delta,\ C'_0n^{C_0}\leq \tfrac{1}{2}(\ln k)^{2C_0}.
    \end{equation}
    Indeed, there exists $N_1(C_0',C_0,\gamma_0)>0$ such that  $\ C'_0n^{C_0}\leq \tfrac{1}{2}(\frac{\gamma_0}{5}n)^{2C_0}$ and $k<k_{\Delta}$ for $n>N_1 $. 
    If the resonant set is finite, and the last resonance $|n_m|<N_1$, there exists $N_2(C_0',C_0)>0$ such that $\ C'_0N_1^{C_0}\leq \tfrac{1}{2}(\ln k)^{2C_0}$ for $k>N_2$. It suffices to take $K_0=\max\{e^{\frac{\gamma_0}{5}N_1},N_2\} $.

  Meanwhile, as  $ X=\begin{pmatrix}
        k&x_1\\\bar{x}_1&x_{2}
    \end{pmatrix} $ (see \cite{AvilaHolderContinuityAbsolutely2011}), and for $ 1\leq k<k_{\Delta} $, 
    $$
        \begin{aligned}
            \|X\|-1&\leq & 
            &\|\tilde{X}\|&
            &\leq \|X\|+1,\\
            \|X^{-1}\|^{-1}-1&\leq &
            &\|\tilde{X}^{-1}\|^{-1}&
            &\leq\|X^{-1}\|^{-1}+1.
        \end{aligned}
    $$ 
   Lemma \ref{lem4.3inAJ}, Lemma \ref{P_kX} and the fact that $ \|\Phi_n\|\leq C'_0n^{C_0} $ imply that 
    \begin{align} 
        (C'_0n^{C_0})^{-8}\leq \frac{\det P_{(k),+}}{k^2(1+|\nu_n|^2\min\{k^2,\|2\rho_n\|_\T^{-2}\}}\leq (C'_0n^{C_0})^{8} , \ C\leq k< k_\Delta,\label{P11}
    \end{align} 
    \begin{align} 
        (C'_0n^{C_0})^{-4}\leq \frac{\|P_{(k),+}^{-1}\|^{-1}}{k}\leq (C'_0n^{C_0})^{4}, \ C\leq k< k_\Delta,\label{P21}
    \end{align} 
    \begin{align} 
        (C'_0n^{C_0})^{-4}\leq \frac{\|u_{\beta_n}^+\|_{2k}^2}{k}\leq (C'_0n^{C_0})^{4}, \ C\leq k< k_\Delta.
    \end{align}

Consequently, \eqref{eq:lnkbound} implies that
    \begin{eqnarray*} 
        (\ln k)^{-8C_0} k &\leq& \|P_{(k),+}^{-1}\|^{-1}\leq (\ln k)^{8C_0}k,\\
        (\ln k)^{-8C_0} k&\leq& \|u_{\beta_n}^+\|_{2k}^{2}\leq (\ln k)^{8C_0}k.
    \end{eqnarray*}   
Moreover,  if $e^{\frac{\gamma_0}{5}n}\leq k\leq \|2\rho_n\|_\T^{-1}$, we have 
          $$(\ln k)^{-16C_0} k^2(1+|\nu_n|^2k^2)\leq \det P_{(k),+}\leq \tfrac{1}{2}(\ln k)^{16C_0} k^2(1+|\nu_n|^2k^2), $$
          if $\|2\rho_n\|_\T^{-1}\leq k\leq e^{\frac{\gamma_0}{5}N}$, we have
          $$ (\ln k)^{-16C_0} k^2(1+\frac{|\nu_n|^2}{\|2\rho_n\|_\T^2})\leq \det P_{(k),+}\leq \tfrac{1}{2}(\ln k)^{16C_0} k^2(1+\frac{|\nu_n|^2}{\|2\rho_n\|_\T^2}),$$
          then the result follows. 
\end{proof}

While Proposition \ref{Prop:AsyFormula} mainly takes care the case the Schr\"odinger  cocycle is almost reducible, if the cocycle is reducible, one can obtain better estimates. In the following, we use the notation $a \thickapprox b$ $(a,b>0)$ denotes $C^{-1}a \leq b \leq  Ca$.

    \begin{proposition}\label{Prop:AsyFormula-2}
     We have the following:      \begin{enumerate}
          \item If $ (\alpha,S_E^v) $ is conjugated to $\begin{pmatrix}
          \cos 2\pi\rho & -\sin 2\pi\rho\\
          \sin 2\pi\rho &\cos 2\pi\rho
          \end{pmatrix}$. Then for all $ k $ sufficiently large,  we have       
          \[  \|P_{(k),\pm}^{-1}\|^{-1}\thickapprox k, \qquad
             \det P_{(k),\pm}\thickapprox k^{ 2 },
        \]
        Moreover,  there exists $ \beta\in(-\pi/2,\pi/2] $, such that $ 
            \|u_{\beta}^\pm\|_{2k}^{2}\thickapprox k. $
          \item If $ (\alpha,S_E^v) $ is conjugated to $\begin{pmatrix}
          1 & \nu\\
          0 &1
          \end{pmatrix}$ for some $\nu\neq 0$. Then for all $ k $ sufficiently large,  we have 
          \[
          \|P_{(k),\pm}^{-1}\|^{-1} \thickapprox k, 
        \qquad
             \det P_{(k),\pm}\thickapprox k^{ 4 },
        \]
        Moreover,  there exists $ \beta\in(-\pi/2,\pi/2] $, such that $ 
            \|u_{\beta}^\pm\|_{2k}^{2}\thickapprox k. $
      \end{enumerate}
    \end{proposition}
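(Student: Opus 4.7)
The plan is to reuse the framework of Proposition \ref{Prop:AsyFormula} and Lemma \ref{P_kX}, now dramatically simplified by the fact that the perturbation is identically zero. In both cases, the hypothesis furnishes a real-analytic conjugacy $B\in C^\omega(\T^d,\op{PSL}(2,\R))$, bounded above and below, satisfying $B^{-1}(\theta+\alpha)S_E^v(\theta)B(\theta)=A_0$, where $A_0=R_\rho$ in case (1) and $A_0=\begin{pmatrix}1&\nu\\0&1\end{pmatrix}$ in case (2). Taking $B$ in place of $\Phi_n$ in Lemma \ref{P_kX}, the conjugated one-step cocycle equals the \emph{constant} $A_0$, and the comparisons \eqref{C1}--\eqref{C3} and \eqref{m^+} reduce the problem to studying the explicit sum
$$
\tilde X=\sum_{j=1}^k (A_0^{2j-1})^\ast A_0^{2j-1},
$$
via $\|P_{(k),+}\|\thickapprox\|\tilde X\|$, $\|P_{(k),+}^{-1}\|^{-1}\thickapprox\|\tilde X^{-1}\|^{-1}$, $\det P_{(k),+}\thickapprox\det\tilde X$, and $\|u_{\beta_\ast}^+\|_{2k}^2\thickapprox\tilde x_{11}$ for the phase $\beta_\ast$ produced by that lemma.

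The remaining work is an elementary linear algebra computation on $\tilde X$. In case (1), $A_0^{2j-1}$ is orthogonal, so each summand equals $I$ and $\tilde X=kI$, giving $\|\tilde X\|=\|\tilde X^{-1}\|^{-1}=\tilde x_{11}=k$ and $\det\tilde X=k^2$, exactly the scaling claimed. In case (2), $A_0^n=\begin{pmatrix}1&n\nu\\0&1\end{pmatrix}$ and the identities $\sum_{j=1}^k(2j-1)=k^2$ and $\sum_{j=1}^k(2j-1)^2=\tfrac{k(4k^2-1)}{3}$ yield
$$
\tilde X=\begin{pmatrix}k & \nu k^2 \\ \nu k^2 & k+\tfrac{(4k^2-1)k\nu^2}{3}\end{pmatrix}.
$$
For $k$ large, one reads off $\|\tilde X\|\thickapprox k^3$, $\det\tilde X=\tfrac{k^2(3+\nu^2(k^2-1))}{3}\thickapprox k^4$, hence $\|\tilde X^{-1}\|^{-1}=\det\tilde X/\|\tilde X\|\thickapprox k$, together with $\tilde x_{11}=k$. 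Substituting these into the abstract estimates of the previous paragraph produces the claimed bounds $\|P_{(k),+}^{-1}\|^{-1}\thickapprox k$, $\det P_{(k),+}\thickapprox k^4$, and $\|u_{\beta_\ast}^+\|_{2k}^2\thickapprox k$.

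The corresponding $P_{(k),-}$ bounds follow from the same argument applied to negative iterates. I do not anticipate any serious obstacle: the logarithmic factors $(\ln k)^{\pm 8C_0}$ and the scale cutoff $k_\Delta$ appearing in Proposition \ref{Prop:AsyFormula} both disappear here, because $\|B\|_0$ is a fixed constant and the error term $H_n$ is identically zero, so the approximation ``$\tilde X\approx X$'' used there is replaced by the exact equality $\tilde T\equiv A_0$, valid at every scale $k$. The only routine point to verify is that the $\beta_\ast$ supplied by Lemma \ref{P_kX}, corresponding via \eqref{beta'}--\eqref{tilde-beta} to $\tilde u_{\beta'}=(1,0)^\top$ in the conjugated frame, genuinely realizes $\tilde x_{11}$, which is immediate by construction.
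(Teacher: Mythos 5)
Your proposal is correct and follows essentially the same route as the paper, which simply invokes Lemma \ref{lem4.3inAJ} and Lemma \ref{P_kX} with the constant conjugated cocycle (zero perturbation, bounded conjugacy, so the cutoff $k_\Delta$ and logarithmic losses disappear). The only cosmetic difference is that you compute $\tilde X=\sum_{j=1}^k (A_0^{2j-1})^\ast A_0^{2j-1}$ explicitly in both the elliptic and parabolic cases instead of quoting the asymptotics $\|X\|\thickapprox k(1+|t|^2\min\{k^2,\|2\tilde\theta\|_\T^{-2}\})$, $\|X^{-1}\|^{-1}\thickapprox k$ from Lemma \ref{lem4.3inAJ}, and the explicit sums you obtain agree with those bounds.
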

    \begin{proof}
        It is a direct application of Lemma \ref{lem4.3inAJ} and Lemma \ref{P_kX}.
    \end{proof}
 \begin{remark}\label{rem:general}
 As we can see, the proof of Proposition \ref{Prop:AsyFormula} doesn't use Schr\"odinger structure, thus the results of  Proposition \ref{Prop:AsyFormula} and Proposition \ref{Prop:AsyFormula-2} still hold for more general quasi-periodic cocycles.  It is only necessary to  take \eqref{normP_k} as the definition of $\|u_{\beta}^{\pm}\|^2_{2k} $, all the proofs remain valid.
    \end{remark}

\subsection{Stratified H\"older continuity}

As an immediate application of Proposition \ref{Prop:AsyFormula}, we show that if $\nu_n$ has uniform exponential decay, one can obtain stratified H\"older continuity:

\begin{proposition}\label{THM5.4}
    Let $\alpha\in \mathrm{DC}_d $, $v\in C^{\omega}(\T^d,\R)$, $E\in\Sigma_{v,\alpha}$ with 
    $$ 
    h\leq \delta(\alpha,2\rho(\alpha,S_E^v))=\delta<\infty .
    $$ Given $0<\e\ll \min\{1,h\} $. If there exists $s_0=s_0(\varepsilon)$ large enough, such that
\\ 
$ \mathbf{(H1)}$:  $ (\alpha,S_E^v) $ is $ (C_0',C_0, \gamma_0, (n_s)_{s\geq s_0}) $-good, with $ \gamma_0(\varepsilon)<h<\infty $; \\
$ \mathbf{(H2)}$:   $\zeta_n\in [(1-\varepsilon)h,\infty]$.\\
    Then  there exist $ \epsilon_*=\epsilon_*(C_0', C_0, \gamma_0,  (n_s), \varepsilon) >0 $,  such that for any $\epsilon<\epsilon_*$,
    \begin{equation}\label{eq:upperholder}
        \mu(E- \epsilon, E+ \epsilon)\leq \epsilon^{\frac{\delta}{2\delta-h} -C\varepsilon}. 
    \end{equation}
            \end{proposition}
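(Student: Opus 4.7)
My plan is to combine the trivial bound $\mu(E-\epsilon,E+\epsilon)\leq 2\epsilon\,\Im M(E+\ii\epsilon)$ from \eqref{upperb-mu} with the refined estimate \eqref{upp} of Corollary \ref{Corspecmeas} and the asymptotic formula of Proposition \ref{Prop:AsyFormula}. For each small $\epsilon>0$, I would choose the integer $k=k(\epsilon)$ by the invariant equation $\det P_{(k),+}=\epsilon^{-2}$, so that
\[
\mu(E-\epsilon,E+\epsilon)\leq 2\epsilon\,\Im M(E+\ii\epsilon)\leq C\,\|P_{(k),+}^{-1}\|.
\]
Hypothesis $\mathbf{(H1)}$ and the exponential repulsion \eqref{cond-repulsion} of resonances locate $k(\epsilon)$ in a unique window $[e^{\gamma_0 n/5},e^{\gamma_0 N/5}]$ with $n=|n_s|$, $N=|n_{s+1}|$, $s\geq s_0$, so that Proposition \ref{Prop:AsyFormula} yields $\|P_{(k),+}^{-1}\|^{-1}\thickapprox k$ and $\det P_{(k),+}\thickapprox k^{\eta_+^n(k)}$, both up to $(\ln k)^{O(C_0)}$.

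Solving $k^{\eta_+^n(k)}\thickapprox \epsilon^{-2}$ and substituting back gives
\[
\mu(E-\epsilon,E+\epsilon)\;\leq\; C\,\|P_{(k),+}^{-1}\|\;\leq\; \epsilon^{\,2/\eta_+^n(k(\epsilon))-C\varepsilon},
\]
where the $(\ln k)^{O(C_0)}$ correction is absorbed into $\epsilon^{-C\varepsilon}$ (valid since $\ln k\leq \gamma_0 N/5$ inside the window, and $\epsilon^{-C\varepsilon}$ beats any polynomial in $\ln(1/\epsilon)$ once $\epsilon$ is small depending on $\varepsilon,s_0,C_0,C_0'$). In view of the algebraic identity $2/(4-2h/\delta)=\delta/(2\delta-h)$, the proof thus reduces to the uniform-in-$s$ bound $\eta_+^n(k)\leq 4-2h/\delta+C\varepsilon$.

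For this, inspection of \eqref{eta_s} shows that $\eta^n(k)$ is a tent in $\ln k$ peaking at $k=e^{\widehat\eta_n n}$ with peak value $4-2\zeta_n/\widehat\eta_n$, and is otherwise bounded by $2$ (in particular whenever $\widehat\eta_n\leq\zeta_n$). Hypothesis $\mathbf{(H2)}$ provides $\zeta_n\geq(1-\varepsilon)h$; the identification \eqref{Delta-eta} together with \eqref{cond-repulsion} and the definition of $\delta$ as a $\limsup$ forces $\widehat\eta_n\leq \delta+o(1)$ uniformly for $s$ large, because the error $Ce^{-\gamma_0N/2}$ in \eqref{Delta-eta} is negligible compared with the genuine resonance strength $\thickapprox e^{-\delta n}$ once $N/n$ is large. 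Using $h\leq\delta$ (so $4-2h/\delta\geq 2$),
\[
\eta_+^n(k)\;\leq\; \max\!\Bigl(2,\;4-\tfrac{2(1-\varepsilon)h}{\delta+o(1)}\Bigr)\;\leq\; 4-\tfrac{2h}{\delta}+C\varepsilon,
\]
and plugging back yields the claimed $\mu(E-\epsilon,E+\epsilon)\leq \epsilon^{\delta/(2\delta-h)-C\varepsilon}$.

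The main obstacle is precisely the uniform estimate $\widehat\eta_n\leq \delta+o(1)$: it is not automatic because $\widehat\eta_n$ tracks the block-diagonalised rotation $\rho_n$ of the almost-reduced cocycle rather than the fibered rotation number directly, so one must transfer information via \eqref{Delta-eta} while ruling out cancellations in the error that could produce artificially strong resonances at $n_s$. The remaining bookkeeping---absorbing $(\ln k)^{O(C_0)}$ into $\epsilon^{-C\varepsilon}$, handling the formal last window $[e^{\gamma_0 n_m/5},+\infty)$ when the resonance set is finite, and ensuring consecutive windows of Proposition \ref{Prop:AsyFormula} overlap to cover every $k(\epsilon)$ for $\epsilon$ small---is routine.
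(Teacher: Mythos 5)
Your proposal is correct and follows essentially the same route as the paper: bound $\mu(E-\epsilon,E+\epsilon)$ by $2\epsilon\,\Im M\leq C\|P_{(k),+}^{-1}\|$ at the scale $\det P_{(k),+}=\epsilon^{-2}$, use Proposition \ref{Prop:AsyFormula} and the peak value of $\eta_+^n(k)$, control it via $\mathbf{(H2)}$ together with $\widehat{\eta}_n\leq\eta_n+o(1)\leq\delta+\varepsilon$ transferred through \eqref{Delta-eta}, and conclude with $2/(4-2h/\delta)=\delta/(2\delta-h)$. The only points you gloss over (passing from the discrete values $\epsilon(k)$ to all small $\epsilon$ via $\epsilon(k+1)\geq c\,\epsilon(k)$ and monotonicity of $\epsilon^{-1}\Im M$, and stating the lower bound on the resonance strength as $\geq e^{-(\delta+\varepsilon)n}$ rather than $\thickapprox e^{-\delta n}$) are exactly the routine steps the paper dispatches in a line.
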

    \begin{proof}
   Denote 
 \[
        -\frac{\ln   \|2\rho(\alpha,S_E^v)-\langle n_s,\alpha\rangle\|_\T
        }{n}=\eta_n, 
    \]   then by the definition of $\delta(\alpha,2\rho(\alpha,S_E^v) )$ and \eqref{cond-repulsion}, one can select $s_1\geq s_0$ to be the minimal integer such that 
    \begin{eqnarray}
        \eta_n&\leq& \min\{\delta+\e, \frac{\gamma_0 N}{5n}\}, \label{noH3:s_0prime}\\
         C e^{-\frac{1}{4}\gamma_0N}&\leq&  \frac{\varepsilon^2}{100}, \label{noH3:s_0prime2}
    \end{eqnarray}
 and  select $ K_1 \geq \max\{K_0,e^{\frac{\gamma_0}{5}|n_{s_1}|}\}$ large enough such that for any $k\geq K_1$, we also have 
      \begin{equation}
                (\ln k)^{16C_0}\leq k^{\varepsilon}.      \end{equation}   
    By Proposition \ref{Prop:AsyFormula} and Corollary \ref{Corspecmeas}, for any $ k\in [e^{\frac{\gamma_0}{5}n},e^{\frac{\gamma_0}{5}N}]\subset [K_1,\infty)  $, we have  
                \begin{align}
                    \epsilon\Im M(E+\ii \epsilon)&\leq C\|P_{(k),+}^{-1}\|
                    \leq k^{-1+\varepsilon}, \label{eq:fir} \\
                     \epsilon(k)^{-2}:=\det P_{(k),+}&\leq k^{2\max\{2-\frac{\zeta_n}{\widehat{\eta}_n},1\}+\varepsilon}, \label{eq:sec}
                \end{align}
                where the inequality in  \eqref{eq:sec} follows by \eqref{eta_s}.

    Note that by \eqref{noH3:s_0prime}, \eqref{noH3:s_0prime2} and \eqref{Delta-eta}, we have \begin{eqnarray}\label{noH3:D}
    |\eta_n -\widehat{\eta}_n |\leq |\eta_n n-\widehat{\eta}_n n|\leq C e^{-\frac{1}{4}\gamma_0 N}\leq \frac{\e^2}{100}.
    \end{eqnarray}
    Then by $(\mathbf{H2})$, \eqref{noH3:D} and \eqref{noH3:s_0prime}, if  $h\leq \delta<\infty$, we have
                    \begin{equation}\label{noH3:E}
                     \epsilon(k)^{-2}=\det P_{(k),+}\leq k^{2\max\{2-\frac{h}{\eta_n},1\}+C\varepsilon}\leq  k^{4-2\frac{h}{\delta}+C\varepsilon}.
                \end{equation}
                Combines \eqref{upperb-mu},\eqref{eq:fir},\eqref{noH3:E}, we have for $\epsilon=\epsilon(k)$, 
                \begin{equation}\label{upu}
                     \mu(E- \epsilon, E+ \epsilon)\leq 2\epsilon\Im M(E+\ii \epsilon)\leq \epsilon^{\frac{\delta}{2\delta-h}-C\varepsilon}. 
                \end{equation}

                 Since for any bounded potential and any solution $ u $ we have $ \|u\|_{L+1}\leq C\|u\|_L $, thus by (\ref{detP_k}), $ \epsilon(k+1)\geq c \epsilon(k) $. Meanwhile as $ \frac{1}{\epsilon}\Im M(E+\ii\epsilon) $ is monotonic in $ \epsilon $, it follows that 
                 \eqref{upu} holds for any $\epsilon$ with 
                 $\epsilon\leq \epsilon_*=\epsilon_*(C_0',C_0,\gamma_0,(n_s),\varepsilon)>0 $.
            \end{proof}

\begin{remark}
We make two short comments.
\begin{enumerate}
    \item $(\mathbf{H2})$ condition is non-trivial, it was 
was first observed in \cite{LYZZ} as the main input to prove exponential decay of the spectral gaps.
    \item Here we only need to consider the case  $\delta(\alpha,2\rho(\alpha,S_E^v)):=\delta\geq h$. Otherwise, if $\delta<h$, the Schr\"odinger cocycle $(\alpha,S_E^v)$ is reducible, which implies the spectral measure is Lipschitz, as was shown in Proposition \ref{Prop:finite-resonnace}.
\end{enumerate}
   
\end{remark}
        
\section{Exact local  distribution of the spectral measure}\label{Sec5}

In the previous section, Proposition \ref{Prop:AsyFormula} show that if the Schr\"odinger cocycle  $ (\alpha,S_E^v) $  is $ (C_0',C_0,\gamma_0,(n_s)_{s\geq s_0}) $-good, with 
 \[
        \begin{aligned}
           \Phi_n^{-1}(\theta+\alpha)S_E^v (\theta)\Phi_n(\theta)&=\begin{pmatrix}
            e^{2\pi \ii\rho_n}&\nu_n\\
            0&e^{-2\pi \ii\rho_n}
        \end{pmatrix}+H_n(\theta),
        \end{aligned}
    \]
then   one can obtain  the  asymptotic formula of $P_{(k),\pm}$.  We should remark that $(\mathbf{H3})$-type condition was first observed for AMO, which plays an essential role in proving non-critical Dry Ten Martini Problem \cite{avila2016dry}.

In this section, we will show that if we have further information on the exponential decay rate of $\nu_n$ (c.f. $ \mathbf{(H2)}$, $ \mathbf{(H3)}$ conditions of Proposition  \ref{THM5.1}), one can obtain exact local distribution of  the spectral measure. In the rest of the paper,  we will always fix the frequency $\alpha$, the energy $E$, and the analytic band $h$ of the potential $v$, thus usually omit the dependence on these parameters. 
Recall that   \[
        -\frac{\ln \|2\rho_n\|_\T}{n}=\widehat{\eta}_n \in (0,\infty], \qquad
            -\frac{\ln |\nu_n|}{n}= \zeta_n\in (0,\infty],
    \] 
    and
 \[
        -\frac{\ln   \|2\rho(\alpha,S_E^v)-\langle n_s,\alpha\rangle\|_\T
        }{n}=\eta_n \in (0,\infty], 
    \]  
    then we have the following criterion:

\begin{proposition}\label{THM5.1}
Let $\alpha\in \mathrm{DC}_d $, $v\in C^{\omega}(\T^d,\R)$, $E\in\Sigma_{v,\alpha}$. Given $0<\e\ll \min\{1,h\} $. If there exists $s_0=s_0(\varepsilon)$ large enough, such that:
\\ 
$ \mathbf{(H1)}$:  $ (\alpha,S_E^v) $ is $ (C_0',C_0, \gamma_0, (n_s)_{s\geq s_0}) $-good, with $ \gamma_0<h<\infty $; \\
$ \mathbf{(H2)}$:   $\zeta_n\in [(1-\varepsilon)h,\infty]$; \\
$ \mathbf{(H3)}$:   $ \zeta_n\leq (1+\varepsilon)h, \text{\ if \ }\widehat{\eta}_n \geq (1+\varepsilon)h$.\\
Then, there exist $ \epsilon_*=\epsilon_*(C_0', C_0, \gamma_0,  (n_s), \varepsilon) >0 $,  $ \widetilde{C}>0 $,  and a continuous function \[
f(\epsilon): (0, \epsilon_*]\to \left(1/2,1\right],
\]
such that
            \[
                \epsilon^{ f(\epsilon)+\widetilde{C}\varepsilon}\leq \mu(E- \epsilon, E+ \epsilon)\leq 
\epsilon^{ f(\epsilon)-\widetilde{C}\varepsilon}.
            \]    
            More precisely,  
            $ f(\epsilon) $ takes the following form:
            \begin{enumerate}
                \item If $ \eta_{n}\in[0,h] $, then $ f(\epsilon)=1 $;
                \item If $ \eta_{n}\geq h $, then  
            \end{enumerate}
            \begin{equation} \label{f(epsilon)}
                f(\epsilon)=\left\lbrace\begin{aligned}
                    &\frac{1}{2}+\frac{hn}{2\ln\epsilon^{-1}} &  &\text{for } \epsilon^{-1}\in[e^{hn}, e^{(2\eta_{n}-h)n}],\\
                    &\frac{1}{1-b_n}-\frac{b_n}{1-b_n} 
                    \frac{hN}{\ln\epsilon^{-1}} & &\text{for } \epsilon^{-1}\in[e^{(2\eta_{n}-h)n},e^{hN}],
                \end{aligned}\right.
            \end{equation} 
    where $ b_n=\frac{(\eta_{n}-h)n}{hN-\eta_{n} n} $.
        \end{proposition}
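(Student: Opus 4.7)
The plan is to invoke Proposition~\ref{Prop:AsyFormula} together with Corollary~\ref{Corspecmeas}, using the calibration $\det P_{(k),+}=\epsilon^{-2}$ to convert between the almost-reducibility scale $k$ and the spectral scale $\epsilon$. Under $(\mathbf{H2})$ and $(\mathbf{H3})$, $\zeta_n$ is pinned to the interval $[(1-\varepsilon)h,(1+\varepsilon)h]$ in the resonant regime $\widehat{\eta}_n\geq (1+\varepsilon)h$, while \eqref{Delta-eta} identifies $\widehat{\eta}_n$ with $\eta_n$ up to an exponentially small error. Substituting these into \eqref{eta_s} produces two explicit affine expressions for $\eta^{n}(k)$ in $\ln k$; inverting the relation $k^{\eta^n_+(k)}=\epsilon^{-2}$ then yields precisely the piecewise function \eqref{f(epsilon)}, with the breakpoints at $e^{-hn}$, $e^{-(2\eta_n-h)n}$, and $e^{-hN}$ arising from the critical values $k=e^{hn}$, $k=e^{\widehat{\eta}_n n}$, $k=e^{\gamma_0 N/5}$ in Proposition~\ref{Prop:AsyFormula}.

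For the upper bound, I would simply chain
\[
\mu(E-\epsilon,E+\epsilon)\leq 2\epsilon\,\Im M(E+\ii\epsilon)\leq 2C\|P_{(k),+}^{-1}\|\thickapprox C/k,
\]
using \eqref{upperb-mu}, \eqref{upp}, and Proposition~\ref{Prop:AsyFormula} with $k=k(\epsilon)$ chosen so that $\det P_{(k),+}=\epsilon^{-2}$. The $(\ln k)^{16 C_0}$ factors are absorbed into the $\epsilon^{-\widetilde{C}\varepsilon}$ slack, yielding $\mu(E-\epsilon,E+\epsilon)\leq \epsilon^{f(\epsilon)-\widetilde{C}\varepsilon}$.

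The lower bound is the delicate half and requires the dislocation argument. I would introduce a small parameter $\tau=\tau(\varepsilon)>0$ and analyze
\[
\epsilon^{1+\tau}\,\Im M(E+\ii\epsilon^{1+\tau})=\int\frac{\epsilon^{2(1+\tau)}}{(E'-E)^{2}+\epsilon^{2(1+\tau)}}\,d\mu(E').
\]
From the lower bound \eqref{low} of Corollary~\ref{Corspecmeas}, combined with the symmetric choice of $\beta_n$ from Lemma~\ref{P_kX} applied on both half-lines so that $\|u_{\beta_n}^{\pm}\|^{2}\thickapprox k'$ simultaneously, I get $\epsilon^{1+\tau}\Im M(E+\ii\epsilon^{1+\tau})\geq c/k'$, where $k'$ is determined by $\det P_{(k'),+}=\epsilon^{-2(1+\tau)}$. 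Splitting the integrand at $|E'-E|=\epsilon$ isolates the main contribution, which is $\mu(E-\epsilon,E+\epsilon)$ up to a factor comparable to $\epsilon^{2\tau}$, plus the tail
\[
\epsilon^{2(1+\tau)}\sum_{j\geq 0}(2^{j}\epsilon)^{-2}\mu(E-2^{j+1}\epsilon,E+2^{j+1}\epsilon).
\]
The already-established upper bound of the previous step, together with the global $1/2$-H\"older envelope from Proposition~\ref{THM5.4}, dominates this tail dyadically. For $\tau$ chosen small compared to $\varepsilon$, the tail is negligible against the main term $c/k'\thickapprox \epsilon^{(1+\tau)f(\epsilon^{1+\tau})}$. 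Rearranging and using continuity of $f$ at $\epsilon$ vs.\ $\epsilon^{1+\tau}$ produces $\mu(E-\epsilon,E+\epsilon)\geq \epsilon^{f(\epsilon)+\widetilde{C}\varepsilon}$.

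The hard part will be the dislocation bookkeeping: one must verify that the tail estimate stays strictly below the main contribution across all four affine branches of $f(\epsilon)$, and in particular at the three breakpoints where $f$ changes slope, without the implicit constants depending on which branch we sit in. A secondary technical point is the simultaneous matching of $\beta$ on the $+$ and $-$ half-lines in the symmetry argument, since $L^{+}(\epsilon^{1+\tau})$ and $L^{-}(\epsilon^{1+\tau})$ derived from $|||K^{\pm}(E)|||^{2}=\epsilon^{-2(1+\tau)}$ need not coincide; exploiting monotonicity of $\|u_{\beta}^{\pm}\|_{L}^{2}$ in $L$ together with the common asymptotic of $\det P_{(k),\pm}$ from Proposition~\ref{Prop:AsyFormula} should bridge this gap with only a polylogarithmic loss that is harmless after renaming $\widetilde{C}$.
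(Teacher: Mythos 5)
Your architecture is the same as the paper's (asymptotics of $P_{(k),\pm}$ plus the Jitomirskaya--Last-type bounds of Corollary \ref{Corspecmeas} for the upper bound; a dislocation at scale $\epsilon^{1+\tau}$ for the lower bound), but two concrete points in the lower-bound half are not just bookkeeping and, as written, one of them is wrong. First, the sign of the choice of $\tau$ is backwards: you propose ``$\tau$ small compared to $\varepsilon$'', whereas the argument only closes when $\tau$ is a sufficiently \emph{large} multiple of $\varepsilon$ (the paper takes $\tau=16\widetilde{C}\varepsilon$). The main term you extract is $\epsilon^{(1+\tau)(f(\epsilon^{1+\tau})+\widetilde{C}\varepsilon)}$ and the tail bound is $O(\epsilon^{f(\epsilon)+2\tau-3\widetilde{C}\varepsilon})$; the gain of the dislocation is of size $\tau$, while both exponents carry losses of size $\varepsilon$, so with $\tau\ll\varepsilon$ the tail dominates and the resulting lower bound is vacuous. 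Second, even with the correct choice of $\tau$, the inequality ``main term $>$ tail'' requires the uniform slope bound $f(\epsilon^{1+\tau})-f(\epsilon)\leq \tfrac{3}{5}\tau$ across \emph{all} branches and breakpoints of $f$ (the paper's \eqref{error-f} in Lemma \ref{in1}). This is exactly the point you defer as ``dislocation bookkeeping'', but it is a genuine input, not a routine check: on the increasing branch of $f$ the increment per unit of $\tau$ is $\frac{b_n}{1-b_n}\frac{hN}{\ln\epsilon^{-1}}$, and bounding this by a constant strictly less than $1$ uses $\eta_n n\ll hN$, i.e.\ the output of Lemma \ref{lem:possible}, which in turn needs $(\mathbf{H3})$ and a comparison of $\det P_{(k),\pm}$ across two consecutive resonance windows. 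Without this, $f$ could a priori rise fast enough under $\epsilon\mapsto\epsilon^{1+\tau}$ that the tail $\epsilon^{f(\epsilon)+2\tau}$ beats the main term, and your lower bound fails on the second branch.

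A related gap is in your derivation of \eqref{f(epsilon)} itself: inverting $k^{\eta^n_+(k)}=\epsilon^{-2}$ only uses Proposition \ref{Prop:AsyFormula} on $e^{\gamma_0 n/5}\leq k\leq e^{\gamma_0 N/5}$, while the claimed formula must hold up to $\epsilon^{-1}=e^{hN}$, i.e.\ for $k$ well beyond $e^{\gamma_0 N/5}$ since $\gamma_0<h$. Covering this range requires gluing with the next resonance window and the smoothing interpolation that produces $\psi(x)$ (the factor $\bigl[1-\frac{\ln x-\eta_n n}{hN-\eta_n n}\bigr]$ in \eqref{psi(x)}); it is this interpolated $\psi$, not the raw \eqref{eta_s}, that yields the second branch of \eqref{f(epsilon)} with the coefficient $b_n$, continuity of $f$, and the monotonicity statements (Lemmas \ref{strictincrease} and \ref{notexceed}) that you also need to compare $L^{+}$ and $L^{-}$ and to interpolate between the discrete calibration points $\epsilon(k)$. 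Your upper-bound chain and the treatment of the mismatch between $L^{\pm}(\epsilon)$ are consistent with the paper; the missing pieces are the universality/monotonicity of $\psi$ and the quantitative slope bound on $f$, together with the corrected requirement $\tau\gtrsim\varepsilon$.
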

        \begin{remark}\label{Rem5.1}
            If $ N=\infty $ and $ h < \eta_n<\infty $, we have 
            \[
                f(\epsilon)=\left\lbrace\begin{aligned}
                    &\frac{1}{2}+\frac{hn}{2\ln\epsilon^{-1}} &  &\text{for } \epsilon^{-1}\in[e^{hn}, e^{(2\eta_{n}-h)n}],\\
                    &1-\frac{(\eta_n-h)n}{\ln\epsilon^{-1}} & &\text{for } \epsilon^{-1}\in[e^{(2\eta_{n}-h)n},\infty).
                \end{aligned}\right.
            \] 
        \end{remark}

        By assumption, the Schr\"odinger cocycle
        $ (\alpha,S_E^v) $ is $ (C_0',C_0,\gamma_0,(n_s)_{s\geq s_0}) $-good, then by Proposition \ref{Prop:AsyFormula}, there exist $ K_0=K_0(C_0',C_0,\gamma_0) $, such that  $ \det P_{(k),\pm} $ and $ \|P_{(k),\pm}\|^{-1} $ have well asymptotic estimates for 
         $k\geq \max \set{ e^{\frac{\gamma_0}{5}|n_{s_0}|},K_0 }.$  
 Moreover, by \eqref{cond-repulsion}, one can select $s_1\geq s_0$ to be the minimal integer such that
  \begin{equation} \label{s_0prime}
            \max\left\{\frac{4hn}{\gamma_0 N}, 32C_0\frac{\ln \frac{\gamma_0}{5} N}{\gamma_0 N}, C e^{-\frac{1}{3}\gamma_0N}\right\}\leq  \frac{\varepsilon^2}{100},
        \end{equation} 
 and  select $ K_1 \geq \max\{K_0,e^{\frac{\gamma_0}{5}|n_{s_1}|}\}$ large enough such that for any $k\geq K_1$, we also have 
      \begin{equation}
                (\ln k)^{16C_0}\leq k^{\varepsilon}.       \end{equation}

 The proof consists of three key steps, with the first being to derive a universal asymptotic estimate for $\det P_{(k),\pm}$, as detailed in Section \ref{5.1}. This universality refers to the fractal nature of the asymptotic formula, which remains consistent around any resonances. Leveraging Corollary \ref{Corspecmeas}, we can then deduce the universal asymptotic structure of $ \epsilon\Im M(E+\ii \epsilon)$  in Section \ref{5.2}. Ultimately, these results lead to the exact local distribution of the spectral measure, as presented in Section \ref{5.3}.

  \subsection{Universal asymptotic structure of $ \det P_{(k),\pm}$} \label{5.1}
  As we already explore from Proposition \ref{Prop:AsyFormula}, 
  $\det P_{(k),\pm}$ has a asymptotic structure, however the asymptotic  formula $\eta_+^n(k)$ is of local nature, i.e, it depends on $\zeta_n$, $\widehat{\eta}_n$ and $\gamma_0$ of the iteration scheme (ususally the scheme is KAM based), thus to obtain universal asymptotic structure of $ \det P_{(k),\pm}$, the key is to remove these dependence. 
  Before giving the proof, we first state the following key observation, which says that if $(\mathbf{H3})$ holds, then we can control the decay rate of $\rho_n$:

        \begin{lemma}\label{lem:possible}
    Let $ s\geq s_1 ,$  then we have 
    \begin{eqnarray}
    |\eta_n -\widehat{\eta}_n|\ 
    &\leq&  \frac{\e^2}{100}, \label{eq:posiible2}\\
     \max\{\eta_n n,\,|\eta_n n-h n|\} &\leq& \frac{\e^2}{50}\gamma_0 N. \label{eq:eta_n-hn}
    \end{eqnarray}
        \end{lemma}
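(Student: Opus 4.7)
The plan is to exploit the fundamental approximation \eqref{Delta-eta} coming from the goodness condition, which in terms of $\eta_n$ and $\widehat{\eta}_n$ reads
\[
|e^{-\eta_n n} - e^{-\widehat{\eta}_n n}| \leq C e^{-\gamma_0 N/2},
\]
together with the quantitative control \eqref{s_0prime} that defines $s_1$. The whole proof is a case analysis keyed on $(\mathbf{H3})$, where the two branches correspond precisely to the two branches in the asymptotic formula \eqref{eta_s}.

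First I would prove \eqref{eq:eta_n-hn}. Since $hn \leq \varepsilon^2 \gamma_0 N /400$ by \eqref{s_0prime}, the bound $|\eta_n n - hn| \leq \varepsilon^2 \gamma_0 N /50$ reduces to showing $\eta_n n \leq \varepsilon^2 \gamma_0 N/50$. I would split on $(\mathbf{H3})$. If $\widehat{\eta}_n \leq (1+\varepsilon)h$, then $\widehat{\eta}_n n \leq 2hn \leq \varepsilon^2\gamma_0 N/200$, and applying the displayed inequality together with $Ce^{-\gamma_0 N/2} \leq \tfrac{1}{2}e^{-\widehat{\eta}_n n}$ (guaranteed by \eqref{s_0prime}) yields $\eta_n n \leq \widehat{\eta}_n n + \ln 2 \leq \varepsilon^2 \gamma_0 N/100$. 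If instead $\widehat{\eta}_n \geq (1+\varepsilon)h$, then by $(\mathbf{H3})$, $\zeta_n \leq (1+\varepsilon)h$, and combined with $(\mathbf{H2})$, $\zeta_n \in [(1-\varepsilon)h, (1+\varepsilon)h]$. I would then argue that the structure of the almost-reducibility scheme encoded by $(\mathbf{H1})$ forces $\widehat{\eta}_n n \leq \gamma_0 N/5$ in this regime, because the remaining perturbation at scale $n$ has size $e^{-\gamma_0 N}$ and $n_{s+1}$ is chosen precisely when $\|2\rho_n\|_\T$ becomes comparable to this perturbative error. The same transfer argument then delivers the claim.

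Next I would derive \eqref{eq:posiible2} from the previous step. Writing $m = \min(\eta_n, \widehat{\eta}_n) n$ and $D = |\eta_n - \widehat{\eta}_n|n$, the key inequality becomes $e^{-m}(1-e^{-D}) \leq Ce^{-\gamma_0 N/2}$. Since the previous step established $m \leq \varepsilon^2 \gamma_0 N/50 < \gamma_0 N/6$, one obtains
\[
D \leq -\ln\bigl(1 - Ce^{-\gamma_0 N/2 + m}\bigr) \leq 2Ce^{-\gamma_0 N/3}.
\]
Dividing by $n$ and invoking $Ce^{-\gamma_0 N/3} \leq \varepsilon^2/100$ from \eqref{s_0prime} yields $|\eta_n - \widehat{\eta}_n| \leq \varepsilon^2/100$.

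The principal difficulty is the bound $\widehat{\eta}_n n \leq \gamma_0 N/5$ in the second case of the first step, since it is not an immediate algebraic consequence of the stated hypotheses $(\mathbf{H1})$--$(\mathbf{H3})$ but rather reflects the intrinsic consistency of the quantitative almost-reducibility scheme: the choice of $N = |n_{s+1}|$ as the next resonance scale forces $\|2\rho_n\|_\T$ to exceed the remaining error, otherwise a further KAM reduction step would already have been performed before reaching scale $N$. Making this precise is the real technical content of the lemma; everything else is straightforward manipulation of \eqref{Delta-eta} against the quantitative thresholds of \eqref{s_0prime}.
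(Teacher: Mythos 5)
Your proposal correctly handles the easy half (the regime $\widehat{\eta}_n \leq (1+\varepsilon)h$, and the transfer from $\widehat{\eta}_n$ to $\eta_n$ via \eqref{Delta-eta}), but the crucial case is left unproven, and the bound you assert there would be too weak anyway. In the regime $\widehat{\eta}_n \geq (1+\varepsilon)h$ you claim that "the structure of the almost-reducibility scheme encoded by $(\mathbf{H1})$ forces $\widehat{\eta}_n n \leq \gamma_0 N/5$," on the grounds that $n_{s+1}$ is chosen when $\|2\rho_n\|_\T$ becomes comparable to the error $e^{-\gamma_0 N}$. That is not a consequence of the hypotheses: the $(C_0',C_0,\gamma_0,(n_s))$-goodness condition only asserts $\|F_n\|_0 \leq e^{-\gamma_0 N}$ and imposes no relation between $\|2\rho_n\|_\T$ and the choice of the next resonance, so a lower bound on $\|2\rho_n\|_\T$ cannot be read off from $(\mathbf{H1})$; deriving it is exactly the content of the lemma. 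The paper obtains it by playing the two consecutive resonance windows against each other at the overlap scale $k_0=e^{\gamma_0 N/5}$: Proposition \ref{Prop:AsyFormula} applies to $\det P_{(k_0),\pm}$ both with exponent $\eta^n_+(k_0)$ and with exponent $\eta^N_+(k_0)$, and $(\mathbf{H2})$ at scale $N$ (together with $\gamma_0<h$) forces $\eta^N_+(k_0)=2$, whence $0\leq \eta^n_+(k_0)-2\leq 32C_0\ln\ln k_0/\ln k_0$; on the other hand $(\mathbf{H3})$ (and $(\mathbf{H2})$) at scale $n$ give, via \eqref{eta_s}, a lower bound on $\eta^n_+(k_0)-2$ of order $\widehat{\eta}_n n/(\tfrac{\gamma_0}{5}N)$ (and of order $1$ if $\widehat{\eta}_n n\geq \tfrac{\gamma_0}{5}N$, a contradiction). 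This yields the much stronger bound $\widehat{\eta}_n n \leq \max\{4hn,\,32C_0\ln(\tfrac{\gamma_0}{5}N)\}$, which by \eqref{s_0prime} is $\leq \tfrac{\varepsilon^2}{100}\gamma_0 N$.

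This quantitative strength is not a luxury. Even if your asserted bound $\widehat{\eta}_n n\leq \gamma_0 N/5$ were available, the conclusion \eqref{eq:eta_n-hn} demands $\eta_n n\leq \tfrac{\varepsilon^2}{50}\gamma_0 N$, and your transfer argument would only deliver $\eta_n n\lesssim \gamma_0 N/5$, off by a factor of order $\varepsilon^{-2}$; moreover the conversion of \eqref{Delta-eta} into $|\eta_n n-\widehat{\eta}_n n|\leq Ce^{-\gamma_0 N/3}\leq \varepsilon^2/100$ (needed for \eqref{eq:posiible2}) itself uses that $\widehat{\eta}_n n$ is bounded by a small multiple of $\gamma_0 N$, not merely by $\gamma_0 N/5$. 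So the gap you flag at the end of your proposal is indeed the heart of the lemma, and the missing ingredient is the comparison of the asymptotic formulas of $\det P_{(k),\pm}$ across the two windows at $k_0=e^{\gamma_0 N/5}$, not any structural property of the KAM scheme.
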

        \begin{proof}
We first prove  \begin{eqnarray}\label{eq:hat-eta_n}
        \widehat{\eta}_n&\leq& \max\{4h,32C_0\frac{\ln\frac{\gamma_0}{5}N}{n}\},\label{eq:possible1}
   \end{eqnarray}
and only need to consider the case  $ N=|n_{s+1}|<\infty $.

Let $N_+=|n_{s+2}|$.   Notice that for any $e^{ \frac{\gamma_0}{5} n}\leq k\leq e^{\frac{\gamma_0}{5} N}$, we have
        \[
            (\ln k)^{- 16C_0}k^{ \eta_+^n(k)  }\leq  \det P_{(k),\pm}\leq (\ln k)^{ 16C_0 }k^{ \eta_+^n(k)  },
        \] 
        and for any $e^{ \frac{\gamma_0}{5} N}\leq k\leq e^{\frac{\gamma_0}{5} N_+}$, we have 
        \[
            (\ln k)^{- 16C_0}k^{ \eta_+^{N}(k)  }\leq  \det P_{(k),\pm}\leq (\ln k)^{ 16C_0 }k^{ \eta_+^{N}(k)  } ,
        \] 
           where we recall that
       \begin{equation*}
            \eta^n(k)=\left\lbrace\begin{aligned}
                &4-\frac{2\zeta_n n}{\ln k}&\text{if }e^{\frac{\gamma_0}{5} n}\leq k\leq e^{\widehat{\eta}_n n},\\
                &2+ \frac{2\widehat{\eta}_n n-2\zeta_n n}{\ln k} &\text{if }e^{\widehat{\eta}_n n}\leq k\leq e^{\frac{\gamma_0}{5} N}.
            \end{aligned}\right.
        \end{equation*} 
  In particular, for  $ k_0=e^{\tfrac{\gamma_0}{5}N} $, we have
            \[
                (\ln k_0)^{-16C_0}k_0^{\eta^n_+(k_0)}\leq  \det P_{(k_0),+}\leq (\ln k_0)^{ 16C_0 }k_0^{\eta^{N}_+(k_0)}.
            \] 
 
      By the assumption $(\mathbf{H2})$, $\zeta_{N}\in [(1-\varepsilon)h,\infty],$  $ \gamma_0<h$, it follows that     
    $ \eta^{N}_+(k_0)=2 $. 
 Consequently, one can estimate
         \begin{equation}\label{k1}
                0\leq\eta^n_+(k_0)-2\leq 32C_0\frac{\ln\ln k_0}{\ln k_0}\leq 32C_0\frac{\ln \frac{\gamma_0}{5}N}{\frac{\gamma_0}{5}N} \leq\frac{\e^2}{100}\ll 1. 
            \end{equation}
   \smallskip
\textbf{Case 1:   $ \widehat{\eta}_nn\geq \frac{\gamma_0}{5}N $.} \eqref{s_0prime} implies that  $\widehat{\eta}_n \geq (1+\varepsilon)h$, then by the assumption  $(\mathbf{H3})$, we have $\zeta_n \leq (1+\varepsilon)h$, consequently
             by the definition of $ \eta^n_+(k)$, we have 
            \[
                \eta^n_+(k_0)-2\geq 2- \frac{2\zeta_n n}{\tfrac{\gamma_0}{5}N}\geq 2-\frac{2(1+\varepsilon)hn}{\tfrac{\gamma_0}{5}N}\geq 2-\frac{\varepsilon^2}{20}\geq 1,
            \] 
            which  contradicts with \eqref{k1}.\\
    \smallskip
\textbf{Case 2:     
         $ 4h n<\widehat{\eta}_nn<\frac{\gamma_0}{5}N $}.  Again by the assumption $ \mathbf{(H3)}$, we have $\zeta_n \leq (1+\varepsilon)h$, consequently
            \[
                \eta^n_+(k_0)-2\geq  \frac{2\widehat{\eta}_{n} n-2\zeta_{n} n}{\tfrac{\gamma_0}{5}N}\geq \frac{\widehat{\eta}_n n}{\tfrac{\gamma_0}{5}N}.
            \] 
 Furthermore by \eqref{k1}, we have
            \[
                \widehat{\eta}_n n\leq 32C_0\ln\frac{\gamma_0}{5}N,
            \] 
            this gives \eqref{eq:hat-eta_n}.
            
            Note that by  \eqref{eq:hat-eta_n}, \eqref{Delta-eta} and \eqref{s_0prime}, we have 
    \begin{eqnarray*}
    |\eta_n n-\widehat{\eta}_n n|\leq C e^{-\frac{1}{3}\gamma_0 N}&\leq& \frac{\e^2}{100},\\
     \max\{\widehat{\eta}_n n,\,|\widehat{\eta}_n n-hn|\}&\leq& \frac{\e^2}{100}\gamma_0N,
    \end{eqnarray*}
      and then \eqref{eq:posiible2},\eqref{eq:eta_n-hn} follows immediately. 
        \end{proof}

In a word, Lemma \ref{lem:possible} shows that one can give an upper bound of  $\eta_n  $. Then we can distinguish the proof into two steps, the first step is to get rid of   $\zeta_n$ and $\widehat{\eta}_n $ in the asymptotic formula of $\det P_{(k),\pm}$.

\subsubsection{Step 1: Get rid of  $\zeta_n$ and $\widehat{\eta}_n$:} 
    For any $k\geq K_1 $,  we define $ \tilde{\eta}(k)$ as below:

        \begin{enumerate}
            \item If $ \eta_{n}\leq h $. Let $ \tilde{\eta}(k)=2 $ for $ e^{\frac{\gamma_0}{5} n}\leq k\leq e^{\frac{\gamma_0}{5} N} $. 
            \item If $ \eta_{n}>h $. Let
            $$
                \tilde{\eta}(k)=\left\lbrace\begin{aligned}
                    &4-\frac{2h n}{\ln k} &  &\text{for } e^{\frac{\gamma_0}{5} n} \leq k\leq e^{\eta_{n}n},\\
                    &2+\frac{2\eta_{n}n-2h n}{\ln k} & &\text{for } e^{\eta_{n}n}\leq k<e^{\frac{\gamma_0}{5} N}.
                \end{aligned}\right.
            $$ 
        \end{enumerate}
        Denote $ \max\{\tilde{\eta}(k),2\} $ by $ \tilde{\eta}_+(k) $ for short. Then we have

    \begin{lemma} 
For any $ k\geq K_1 $,   we have
        $$
        (\ln k)^{-16C_0}k^{\tilde{\eta}_+(k)}k^{-6\varepsilon}\leq \det P_{(k),\pm}\leq (\ln k)^{16C_0}k^{\tilde{\eta}_+(k)}k^{6\varepsilon}. 
        $$ 
    \end{lemma}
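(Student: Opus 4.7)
The plan is to start from the two-sided asymptotic estimate of Proposition \ref{Prop:AsyFormula} and show that $\eta^n_+(k)$ and $\tilde{\eta}_+(k)$ differ by at most $O(\varepsilon)$, so that the mismatch can be absorbed into the factor $k^{\pm 6\varepsilon}$. The logarithmic prefactor $(\ln k)^{16C_0}$ is already bounded by $k^\varepsilon$ for $k\geq K_1$ by our choice of $K_1$, so the core of the proof reduces to the pointwise bound $|\eta^n_+(k)-\tilde{\eta}_+(k)|\leq 5\varepsilon$ for every admissible $k$.

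To obtain this bound, observe that $\tilde{\eta}$ is obtained from $\eta^n$ by the formal substitutions $\zeta_n\leadsto h$ and $\widehat{\eta}_n\leadsto \eta_n$. The second substitution is controlled by Lemma \ref{lem:possible}, which yields $|\eta_n-\widehat{\eta}_n|\leq \varepsilon^2/100$, shifting the transition threshold of the piecewise formula by a negligible amount. The first substitution splits into two subcases, depending on whether $\widehat{\eta}_n\geq (1+\varepsilon)h$ or not. In the former case, both $(\mathbf{H2})$ and $(\mathbf{H3})$ are active and pin $\zeta_n$ to the band $[(1-\varepsilon)h,(1+\varepsilon)h]$, so $|2(\zeta_n-h)n/\ln k|\leq 2\varepsilon$ in both branches of $\eta^n$, in the relevant range $\ln k\geq hn$. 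In the latter case, Lemma \ref{lem:possible} forces $\eta_n\leq h+O(\varepsilon^2)$, hence $\tilde{\eta}_+(k)\equiv 2$, and one need only show $\eta^n_+(k)\leq 2+O(\varepsilon)$. This follows from the observation that the maximum of $\eta^n(k)$ is attained at $k=e^{\widehat{\eta}_n n}$, where $\eta^n=2+2(\widehat{\eta}_n-\zeta_n)/\widehat{\eta}_n$, and assumption $(\mathbf{H2})$ alone yields $\zeta_n/\widehat{\eta}_n\geq (1-\varepsilon)h/((1+\varepsilon)h)\geq 1-3\varepsilon$.

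The main obstacle is the treatment of the thin transition band in which $\eta^n(k)$ and $\tilde{\eta}(k)$ use different branches of their respective piecewise definitions, namely the values of $k$ satisfying $\ln k\in [\min(\eta_n,\widehat{\eta}_n)n,\,\max(\eta_n,\widehat{\eta}_n)n]$. By Lemma \ref{lem:possible} this band has logarithmic width at most $\varepsilon^2 n/100$, and a direct computation inside the window shows that the two expressions still differ by only $O(\varepsilon)$, since the affine slopes of the two branches are uniformly bounded and the evaluation point is perturbed by only $\varepsilon^2/100$. Assembling the contributions from each subregion yields the uniform pointwise bound $|\eta^n_+(k)-\tilde{\eta}_+(k)|\leq 5\varepsilon$, and the factor $k^{6\varepsilon}$ then absorbs both the exponent mismatch and the logarithmic prefactor $(\ln k)^{16C_0}\leq k^\varepsilon$, completing the proof.
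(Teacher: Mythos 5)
Your overall strategy coincides with the paper's: compare $\eta^n_+(k)$ with $\tilde{\eta}_+(k)$ pointwise, split according to the size of $\widehat{\eta}_n$ relative to $(1\pm\varepsilon)h$, control the substitution $\widehat{\eta}_n\leadsto\eta_n$ by Lemma \ref{lem:possible} and the substitution $\zeta_n\leadsto h$ by $(\mathbf{H2})$--$(\mathbf{H3})$, treat the thin band where the two piecewise formulas switch branches separately, and absorb the resulting $O(\varepsilon)$ mismatch into $k^{6\varepsilon}$; this is exactly the computation in the paper's Cases 1--3.

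One step, however, is wrong as stated. In the subcase $\widehat{\eta}_n<(1+\varepsilon)h$ you assert that Lemma \ref{lem:possible} forces $\eta_n\leq h+O(\varepsilon^2)$ and ``hence $\tilde{\eta}_+(k)\equiv 2$''. Estimate \eqref{eq:posiible2} only gives $|\eta_n-\widehat{\eta}_n|\leq \varepsilon^2/100$, so in this subcase $\eta_n$ may be as large as $(1+\varepsilon)h+\varepsilon^2/100$, i.e.\ it can exceed $h$ by an amount of order $\varepsilon h$, not $\varepsilon^2$; and whenever $\eta_n>h$ the function $\tilde{\eta}(k)=4-\frac{2hn}{\ln k}$ rises above $2$ for $\ln k\in(hn,\eta_n n]$, reaching $4-\frac{2h}{\eta_n}$ at the junction, so $\tilde{\eta}_+$ is \emph{not} identically $2$ but has a bump of height up to about $4\varepsilon$. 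The bound you need, $|\eta^n_+(k)-\tilde{\eta}_+(k)|\leq 5\varepsilon$, still holds because that bump is itself $O(\varepsilon)$ --- this is precisely the paper's Case 2, sub-case $\eta_n\in[h,(1+\tfrac{3\varepsilon}{2})h]$, where $2\leq\tilde{\eta}_+(k)\leq 2+\frac{4\varepsilon}{1+2\varepsilon}$ --- so the slip is repaired by the same one-line computation you already perform for $\eta^n_+$, but the deduction as written fails. A smaller gloss: in the strong-resonance case you argue only on ``the relevant range $\ln k\geq hn$'', whereas the estimate must cover all $\ln k\geq\frac{\gamma_0}{5}n$ with $\gamma_0<h$; there the inequality $|2(\zeta_n-h)n/\ln k|\leq 2\varepsilon$ is false in general, but both functions are truncated at $2$ and satisfy $2\leq \eta^n_+(k),\tilde{\eta}_+(k)\leq 2+\frac{2\varepsilon}{1-\varepsilon}$ (the paper's sub-cases (1)--(2) of its Case 1), so this region costs only a sentence.
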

    \begin{proof}
 We distinguish the proof into three cases:\\
\smallskip
\textbf{Case 1: $ \widehat{\eta}_n>(1+\varepsilon)h $.} Then by \eqref{eq:posiible2} of Lemma \ref{lem:possible}, $\eta_{n}>(1+\frac{\e}{2})h$. By assumption  $ \mathbf{(H3)}$, we have $ \zeta_n\in[(1-\varepsilon)h,(1+\varepsilon)h] $. Let $ a_n=\min\{\widehat{\eta}_n,\eta_{n}\} $, $ b_n=\max\{\widehat{\eta}_n,\eta_{n}\} $,  consequently by Proposition \ref{Prop:AsyFormula} and \eqref{eq:posiible2}, we have the following: 
     \begin{enumerate}
        \item
    If $ \frac{\gamma_0}{5} n \leq \ln k\leq (1-\varepsilon)hn $, we have 
    \[
           \eta^n_+(k)=\max\{4-\frac{2\zeta_n n}{\ln k},2\}= 2= \max\{4-\frac{2h n}{\ln k},2\} =\tilde{\eta}_+(k)   
        \]  
\item If $ (1-\varepsilon)hn\leq \ln k\leq a_n n $, we have
    \[
        |\tilde{\eta}_+(k)-\eta^n_+(k)|\leq \left|4-\frac{2h n}{\ln k}-(4-\frac{2\zeta_n n}{\ln k})\right|\leq \frac{2\varepsilon}{1-\varepsilon}.
    \] 
\item If $a_n n\leq \ln k\leq b_n n $, we have either
    \[
        |\tilde{\eta}_+(k)-\eta^n_+(k)|\leq \left|2+\frac{2\eta_{n}n-2h n}{\ln k}-(4-\frac{2\zeta_n n}{\ln k})\right|\leq \frac{3\varepsilon}{1-\varepsilon}.
    \] 
or 
    \[
        |\tilde{\eta}_+(k)-\eta^n_+(k)|\leq \left|4-\frac{2h n}{\ln k}-( 2+ \frac{2\widehat{\eta}_n n-2\zeta_n n}{\ln k} )\right|\leq \frac{3\varepsilon}{1-\varepsilon}.
    \] 

 \item   If $ b_n n\leq \ln k\leq \frac{\gamma_0}{5}N $,
    we have 
    \[
        |\tilde{\eta}_+(k)-\eta^n_+(k)|\leq \left|2+\frac{2\eta_{n}n-2h n}{\ln k}-(2+ \frac{2\widehat{\eta}_n n-2\zeta_n n}{\ln k}) \right|\leq \frac{2\varepsilon}{1+\varepsilon}.
    \] 
            \end{enumerate}
Combine these sub-cases, it follows that
    $$
    (\ln k)^{-16C_0}k^{\tilde{\eta}_+(k)}k^{-\frac{4\varepsilon}{1-\varepsilon}}\leq \det P_{(k),\pm}\leq (\ln k)^{16C_0}k^{\tilde{\eta}_+(k)}k^{\frac{4\varepsilon}{1-\varepsilon}}.         $$ 
\smallskip
\textbf{Case 2: $ \widehat{\eta}_n\in[(1-\varepsilon)h,(1+\varepsilon)h] $.} By \eqref{eq:posiible2}, $\eta_{n}\in [(1-\frac{3\e}{2})h,(1+\frac{3\e}{2})h]$. By Proposition \ref{Prop:AsyFormula},  for any $ e^{\frac{\gamma_0}{5} n}\leq k\leq e^{\frac{\gamma_0}{5} N} $, we have
    \[
        2\leq \eta^n_+(k)\leq \max\{4-\frac{2\zeta_n n}{\widehat{\eta}_n n},2\}\leq \max\{4-\frac{2(1-\varepsilon)}{1+\varepsilon},2\}\leq 2+\frac{4\varepsilon}{1+\varepsilon}.
    \] 
    It follows that
        \[
            (\ln k)^{-16C_0}k^{2} \leq \det P_{(k),\pm}\leq (\ln k)^{16C_0}k^{2+\frac{4\varepsilon}{1+\varepsilon}}.          
        \] 
We further distinguish  two sub-cases:             
    \begin{enumerate}
        \item  If $ \eta_{n}\in[(1-\frac{3\e}{2})h,h] $, then $ \tilde{\eta}_+(k)= 2 $.
        \item  If $ \eta_{n}\in[h,(1+\frac{3\e}{2})h] $,we have 
        \[
            2\leq \tilde{\eta}_+(k)\leq \max\{4-\frac{2h n}{\eta_{n} n},2\}\leq \max\{4-\frac{2}{1+2\varepsilon},2\}\leq 2+\frac{4\varepsilon}{1+2\varepsilon }.
        \] 
                        \end{enumerate}
        Combines two sub-cases, we have
        $$
        (\ln k)^{-16C_0}k^{\tilde{\eta}_+(k)}k^{-\frac{4\varepsilon}{1+2\varepsilon }}\leq \det P_{(k),\pm}\leq (\ln k)^{16C_0}k^{\tilde{\eta}_+(k)}k^{\frac{4\varepsilon}{1+\varepsilon}}.
        $$ 
\smallskip
\textbf{Case 3:  $ \widehat{\eta}_n\in[0,(1-\varepsilon)h) $.} By \eqref{eq:posiible2}, $\eta_{n}\in [0,(1-\frac{\e}{2})h)$. Again by Proposition \ref{Prop:AsyFormula}, we have 
        \[
            (\ln k)^{-16C_0}k^{2} \leq \det P_{(k),\pm}\leq (\ln k)^{16C_0}k^{2}, \text{ for } e^{\frac{\gamma_0}{5}  n}\leq k\leq e^{\frac{\gamma_0}{5}  N},
        \] 
    Summarize all cases, the result follows.
\end{proof}

\subsubsection{Step 2: Get rid of $ \gamma_0 $ and smoothing argument:} In the first step, we are able to remove the dependence on local parameters $\zeta_n$ and $\widehat{\eta}_n$, however as we can see, if one extends $ \tilde{\eta}(k)$ to a function on $[K_1,\infty)$, it is not a continuous function anymore. The 
 second step is to further modify $ \tilde{\eta}(k)$ to $\psi(k)$, such that it can extended to a continuous function $\psi(x)$ on $[K_2,\infty)$.  Indeed,   we  can define $\psi(x)  $ as follows:

\begin{enumerate}
    \item If $ \eta_n\leq h $. Let $ \psi (x)=2 $ for $ x\in[e^{hn},e^{hN}].$
    \item If $ \eta_n>h $. Let
     \begin{equation}\label{psi(x)} 
\psi (x)=\left\lbrace\begin{aligned}
    &4-\frac{2h n}{\ln x} &  &\text{for } x\in[e^{hn}, e^{\eta_{n}n}],\\
    &2+\frac{2\eta_nn-2h n}{\ln x}\left[1-\frac{\ln x-\eta_n n}{hN-\eta_n n}\right] & &\text{for } x\in[e^{\eta_{n}n},e^{hN}].
\end{aligned}\right.
\end{equation}         \end{enumerate}
Just bear in mind, we will use the notations $k\in \Z$ and $x\in \R$.
Then we have the following:

\begin{lemma}\label{doas}
    For any $ k\geq K_2:=\max\{ e^{h|n_{s_1}|},K_1\}\geq K_1 $, we have
$$
k^{\psi(k)}k^{-C\varepsilon}\leq \det P_{(k),\pm}\leq k^{\psi(k)}k^{C\varepsilon}, 
$$
\begin{equation*} 
k^{-\varepsilon} k\leq \|P_{(k),\pm}^{-1}\|^{-1}\leq k^{\varepsilon}k.
\end{equation*} 
\end{lemma}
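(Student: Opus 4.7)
The plan is to compare the newly defined $\psi(k)$ against the piecewise function $\tilde\eta_+(k)$ produced in the previous step and show $|\psi(k)-\tilde\eta_+(k)|\le C\varepsilon^2$ uniformly for $k\ge K_2$. Once this is done, the $\det P_{(k),\pm}$ bound follows from the preceding inequality $(\ln k)^{-16C_0}k^{\tilde\eta_+(k)}k^{-6\varepsilon}\le \det P_{(k),\pm}\le (\ln k)^{16C_0}k^{\tilde\eta_+(k)}k^{6\varepsilon}$ together with the choice $(\ln k)^{16C_0}\le k^{\varepsilon}$ valid on $k\ge K_1\ge K_2$, which absorbs both the logarithmic prefactor and the $O(\varepsilon^2)$ discrepancy in the exponent into $k^{C\varepsilon}$. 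The bound on $\|P_{(k),\pm}^{-1}\|^{-1}$ is then immediate from Proposition \ref{Prop:AsyFormula}, since $(\ln k)^{8C_0}\le k^{\varepsilon/2}$.

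For the comparison I fix $s\ge s_1$ with $k\in[e^{h|n_s|},e^{h|n_{s+1}|}]$, abbreviate $n=|n_s|$, $N=|n_{s+1}|$, and split the window at the scale $k=e^{(\gamma_0/5)N}$; by $\gamma_0<h$ together with the exponential repulsion \eqref{exp-repulusion}, this threshold lies strictly inside the window. On the left piece $[e^{hn},e^{(\gamma_0/5)N}]$ both $\psi$ and $\tilde\eta_+$ involve the common pair $(n,N)$, and a direct inspection of their definitions shows they either both equal $2$ (case $\eta_n\le h$) or agree identically on $[e^{hn},e^{\eta_n n}]$ and differ on $[e^{\eta_n n},e^{(\gamma_0/5)N}]$ by exactly the smoothing factor $\frac{2(\eta_n n-hn)}{\ln k}\cdot\frac{\ln k-\eta_n n}{hN-\eta_n n}$. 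Since Lemma \ref{lem:possible} bounds both $\eta_n n$ and $|\eta_n n-hn|$ by $\tfrac{\varepsilon^2}{50}\gamma_0 N$, this factor is $O(\varepsilon^2)$.

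The delicate region is the right piece $[e^{(\gamma_0/5)N},e^{hN}]$, where $\psi$ still uses the outdated pair $(n,N)$ while $\tilde\eta_+$ has already transitioned to the next pair $(N,|n_{s+2}|)$. The same inputs from Lemma \ref{lem:possible} yield $|\psi(k)-2|\le C\varepsilon^2$, because at $k=e^{hN}$ the smoothing factor forces $\psi$ down to $2$ while its prefactor is uniformly $O(\varepsilon^2)$. On the $\tilde\eta_+$ side, the increasing branch $4-\tfrac{2hN}{\ln k}$ of the new pair does not reach $2$ anywhere on $k\le e^{hN}$, so $\tilde\eta_+(k)\equiv 2$ throughout this piece, and the two functions once again match up to $O(\varepsilon^2)$.

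The main obstacle is precisely this right sub-window, in which $\psi$ and $\tilde\eta_+$ speak in different languages: they are anchored to different resonance pairs. The resolution is baked into the definition of $\psi$ itself, the smoothing factor $[1-\frac{\ln x-\eta_n n}{hN-\eta_n n}]$ having been engineered so that $\psi$ is continuous across its joints and lands exactly on $2$ at $x=e^{hN}$, dovetailing seamlessly with the next piece of $\psi$. Once this structural continuity is tracked alongside the quantitative estimates from Lemma \ref{lem:possible}, all sub-cases assemble into the uniform inequality $|\psi(k)-\tilde\eta_+(k)|\le C\varepsilon^2$, completing the proof.
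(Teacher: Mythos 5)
Your proposal is correct and follows essentially the same route as the paper: a case-by-case comparison of $\psi(k)$ with $\tilde{\eta}_+(k)$, split at the scale $\ln k=\tfrac{\gamma_0}{5}N$, using the bounds of Lemma \ref{lem:possible} (in particular \eqref{eq:eta_n-hn}) to show the exponents differ by $O(\varepsilon)$, and then absorbing the factors $(\ln k)^{16C_0}$ via $(\ln k)^{16C_0}\leq k^{\varepsilon}$; the paper's own proof does exactly this, including the observation that on $[e^{\frac{\gamma_0}{5}N},e^{hN}]$ the new-pair branch of $\tilde{\eta}$ stays $\leq 2$ so $\tilde{\eta}_+\equiv 2$ while the smoothing factor keeps $\psi$ within $O(\varepsilon)$ of $2$. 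The only cosmetic difference is that you quantify the discrepancy as $O(\varepsilon^2)$ where the paper settles for $O(\varepsilon)$, which changes nothing.
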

\begin{proof}
To see this, we distinguish into the following cases:\\
\smallskip
\textbf{Case 1:   $ \eta_n\leq h $.} Then  we have the following: \begin{enumerate}
    \item If $ hn\leq \ln k\leq \frac{\gamma_0}{5} N $, we have $ \psi(k)=\tilde{\eta}_+(k)=2 $. 
    \item If $ \frac{\gamma_0}{5} N\leq \ln k\leq hN $, if $ \eta_N\leq h $, we have $ \tilde{\eta}(k)=2$, if $ \eta_N>h $, we have  $\tilde{\eta}(k)=4-\frac{2hN}{\ln k}\leq 2$, thus $\tilde{\eta}_+(k)=\psi(k)=2 $.
\end{enumerate}
\smallskip
\textbf{Case 2:   $ \eta_n>h $.} Then we have the following:
\begin{enumerate}
\item If $ hn\leq \ln k\leq \eta_n n $, we have $ \psi(k)=\tilde{\eta}_+(k) $;
\item If  $ \eta_n n\leq \ln k\leq \frac{\gamma_0}{5}N $.
By \eqref{eq:eta_n-hn} of Lemma \ref{lem:possible} and  $ \gamma_0< h $,  we have 
\[
    |\psi(k)-\tilde{\eta}_+(k)|=\frac{2(\eta_n-h)n}{hN-\eta_nn}\frac{\ln k-\eta_n n}{\ln k}\leq\frac{2(\eta_n-h)n}{hN-\eta_nn} \leq \frac{\varepsilon\frac{\gamma_0}{25}N}{\frac{\gamma_0}{5}N}\leq\varepsilon.
\] 
\item If $ \frac{\gamma_0}{5}N\leq \ln k\leq h N $, similar  as in (2) of Case 1, if $ \eta_N\leq h $, we have $ \tilde{\eta}(k)=2$, if $ \eta_N>h $, we have  $\tilde{\eta}(k)=4-\frac{2hN}{\ln k}\leq 2$, thus $\tilde{\eta}_+(k)=2$, consequently by \eqref{eq:eta_n-hn} of Lemma \ref{lem:possible}  one can estimate 
\[
    |\psi(k)-\tilde{\eta}_+(k)|=\left|2+\frac{2\eta_nn-2h n}{\ln k}\left[1-\frac{\ln k-\eta_n n}{hN-\eta_n n}\right] -2\right|\leq \frac{2(\eta_n-h)n}{\frac{\gamma_0}{5}N} \leq \frac{\varepsilon\frac{\gamma_0}{25}N}{\frac{\gamma_0}{5}N}\leq\varepsilon.
\] 
\end{enumerate}
The result then follows. 
\end{proof}

\subsection{Universal asymptotic structure of the $M$-function}  \label{5.2}   
Once we  give the  universal asymptotic structure structure of  $ P_{(k),\pm}$,  one can  obtain the universal asymptotic structure  of  $M$-function. The link is given by Corollary \ref{Corspecmeas}, 
and the key is the selection $ \det P_{(k),+}=\frac{1}{\epsilon^2},$ which defines the function $\epsilon=\epsilon(k)$. However,
due to the complicated definition of $\det P_{(k),+}$, this functional relationship is not clear and not computable. The basic observation here is that  from Lemma \ref{doas}, the dominated terms of $\det P_{(k),+}$ is $k^{\psi(k)}$,
and our first step is to study the function $\psi(x)$. 

\subsubsection{Basic property of $\psi(x)$:}
\begin{lemma}\label{strictincrease}
Let $x\geq K_2 $, $ y=x^{1+\delta} $ for some $ \delta>0 $, then  
\begin{equation}\label{del-sti}
   \begin{aligned}
    \Delta(x,y)&:=     (1+\delta)\psi(y)-\psi(x)>2\delta/3>0.\\
\nabla(x,y)&:=|\psi(y)-\psi(x)|< 9\delta.
\end{aligned} 
\end{equation}
Moreover, if $ \delta\leq 3/2 $, then 
\begin{equation} \label{deltaless1.5}
    \Delta(x,y)\geq (2-C\varepsilon^2)\delta.
\end{equation}  
As a consequence, we have
 \begin{enumerate}
    \item[S.1] For $ x\geq K_2 $, $ x^{\psi(x)} $ is a strictly increasing continuous function.
    \item[S.2] For $ k\geq K_2 $, $ (k+1)^{\psi(k+1)}-k^{\psi(k)}\leq C k^{\psi(k)-1} $.
 \end{enumerate}

\end{lemma}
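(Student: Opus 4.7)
The plan is to reduce every assertion to slope estimates for the single-variable function $f(L):=L\,\psi(L)$ with $L=\ln x$. The identity
\[
\Delta(x,y)\,\ln x \;=\; (1+\delta)\psi(y)\ln x - \psi(x)\ln x \;=\; \psi(y)\ln y - \psi(x)\ln x \;=\; f(\ln y) - f(\ln x),
\]
combined with $\ln y - \ln x = \delta\ln x$, turns $\Delta$ into an average slope of $f$; and since $x^{\psi(x)}=e^{f(\ln x)}$, strict monotonicity of $x^{\psi(x)}$ is just monotonicity of $f$.

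First I verify that $\psi$ is continuous on $[\ln K_2,\infty)$. Within a single resonance window $[e^{hn_s},e^{hn_{s+1}}]$, continuity at the joint $L=\eta_{n_s} n_s$ is checked by substituting into both branches of \eqref{psi(x)} (both evaluate to $4-2h/\eta_{n_s}$); at the endpoints $L=hn_s$ and $L=hn_{s+1}$ both branches give $\psi=2$, matching the adjacent window. Next I compute $f'$ piecewise: where $\psi\equiv 2$ one has $f'=2$; where $\psi(L)=4-2hn/L$ one has $f(L)=4L-2hn$, so $f'=4$; and where $\psi(L)=2+\tfrac{2(\eta_n-h)n(hN-L)}{L(hN-\eta_n n)}$, one has $f(L)=2L+\tfrac{2(\eta_n-h)n(hN-L)}{hN-\eta_n n}$, yielding $f'(L)=2-\tfrac{2(\eta_n-h)n}{hN-\eta_n n}$. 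Lemma \ref{lem:possible} gives $(\eta_n-h)n\le \tfrac{\varepsilon^2}{50}\gamma_0 N$ and $hN-\eta_n n\ge hN/2$, so $f'(L)\ge 2-C\varepsilon^2$ in this last case. Hence $f$ is continuous and piecewise affine with $f'(L)\in[2-C\varepsilon^2,\,4]$ a.e.

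Integrating gives $\Delta(x,y)=\tfrac{1}{\ln x}\int_{\ln x}^{\ln y}f'(L)\,dL\in [(2-C\varepsilon^2)\delta,\;4\delta]$, which yields both $\Delta(x,y)>2\delta/3$ for $\varepsilon$ small and the sharper $\Delta(x,y)\ge (2-C\varepsilon^2)\delta$ (valid for every $\delta>0$, in particular for $\delta\le 3/2$). Writing $\psi(y)-\psi(x)=\Delta(x,y)-\delta\psi(y)$ and using $\psi\le 4$, I obtain $|\psi(y)-\psi(x)|\le |\Delta(x,y)|+4\delta\le 8\delta<9\delta$, completing \eqref{del-sti} and \eqref{deltaless1.5}. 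Consequence S.1 follows immediately: for $y>x\ge K_2$, writing $y=x^{1+\delta}$ with $\delta>0$ gives $y^{\psi(y)}/x^{\psi(x)}=x^{\Delta(x,y)}>1$, while continuity of $x^{\psi(x)}$ is inherited from that of $\psi$. For S.2, pick $\delta$ by $1+\delta=\ln(k+1)/\ln k$ so that $\delta\le 1/(k\ln k)$; then $k^{\Delta(k,k+1)}\le k^{4\delta}=e^{4\delta\ln k}\le e^{4/k}$, and therefore
\[
(k+1)^{\psi(k+1)}-k^{\psi(k)}\;=\;k^{\psi(k)}\bigl(k^{\Delta(k,k+1)}-1\bigr)\;\le\;Ck^{\psi(k)-1}.
\]

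The main (routine but delicate) obstacle is the slope computation in the third sub-interval: one must check that Lemma \ref{lem:possible} is strong enough to force $\tfrac{2(\eta_n-h)n}{hN-\eta_n n}$ to be as small as $O(\varepsilon^2)$, uniformly over all resonance levels $s\ge s_1$, so that the lower slope bound $2-C\varepsilon^2$ survives at every junction where $f'$ has a downward jump. The rest is bookkeeping.
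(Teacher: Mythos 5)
Your proposal is correct, and it takes a genuinely different route from the paper. The paper proves the lemma by an explicit case analysis on where $x$ and $y$ fall relative to the breakpoints $e^{hn}$, $e^{\eta_n n}$, $e^{hN}$ (Cases 1, 2.1--2.4), computing $\Delta$ and $\nabla$ separately in each configuration and chaining estimates through intermediate points $x_0$ when $x$ and $y$ straddle a breakpoint; for $\delta>3/2$ it falls back on the crude bound $\Delta\geq 2\delta-2$ coming from $\psi\in[2,4)$. You instead observe that $\Delta(x,y)\ln x = f(\ln y)-f(\ln x)$ for $f(L)=L\psi(L)$, check that $f$ is continuous and piecewise affine with slopes $2$, $4$, and $2-\tfrac{2(\eta_n-h)n}{hN-\eta_n n}$ on the three types of branches, and bound the last slope below by $2-C\varepsilon^2$ using exactly the same input the paper uses in its Case 2.2, namely \eqref{eq:eta_n-hn} of Lemma \ref{lem:possible} together with $\gamma_0<h$ (and your auxiliary step $hN-\eta_n n\geq hN/2$ is justified by $\eta_n n\leq \tfrac{\varepsilon^2}{50}\gamma_0 N$). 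The mean-value/integral formulation then yields $\Delta\in[(2-C\varepsilon^2)\delta,4\delta]$ uniformly in $\delta>0$, which is slightly stronger than the paper's statement, and it collapses the $\nabla$ estimate to the one-line identity $\psi(y)-\psi(x)=\Delta-\delta\psi(y)$, giving $\nabla\leq 8\delta<9\delta$ without any case analysis. Your derivations of S.1 (monotonicity of $f$, i.e.\ $y^{\psi(y)}/x^{\psi(x)}=x^{\Delta}>1$) and S.2 (with $\delta=O(1/(k\ln k))$ and $k^{\Delta}\leq e^{4/k}$) match the paper's conclusions; the paper reaches S.2 through $|\psi(k+1)-\psi(k)|<9\delta$, but the exponent it manipulates, $\psi(k+1)-\psi(k)+\delta\psi(k+1)$, is precisely your $\Delta(k,k+1)$, so the content is the same. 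What your approach buys is uniformity (one slope computation per branch instead of a combinatorial enumeration of positions of $x,y$, including the $N=\infty$ and $\eta_n=\infty$ windows, which your formula handles automatically with slope $2$ resp.\ $4$) and the sharper constant range; what the paper's approach buys is that it never needs to argue about continuity of $\psi$ across junctions or invoke the fundamental theorem of calculus for the piecewise-affine $f$, points which you do address correctly but which must be checked for your argument to go through.
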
  
\begin{proof}

First notice that $ \psi(x)\in [2,4) $, which implies  for any $ \delta>3/2 $, 
    $
        \Delta(x,y)\geq 2\delta-2\geq 2\delta/3 >0, \nabla(x,y)\leq 2< 2\delta  $, 
    thus we only need to  consider $ \delta\leq 3/2 $.  In the following,  we always  fix  $ x\in[e^{hn},e^{hN}] $, so $ y\in [e^{(1+\delta)hn},e^{(1+\delta)hN}] $. \\
   \smallskip \textbf{Case 1: $ \eta_n\leq h $.} In this case,  
   $\psi(x)\equiv 2 $, which  immediately implies        
$ \Delta(x,y)\geq 2\delta$.  As for $ \nabla(x,y) $, we distinguish three subcases:\\
\smallskip \textbf{Case 1.1:} If $ y\in [e^{(1+\delta)hn},e^{hN}] $, then $ \nabla(x,y)=0 $.\\
Note if $y\in [e^{hN},e^{(1+\delta)hN}]$, then we only need to consider the case $\eta_N>h$, otherwise $ \nabla(x,y)=0 $. Then we can further discussed as:\\
\smallskip \textbf{Case 1.2:} If $ y\in [e^{hN},e^{\eta_NN}] $, then $ \nabla(x,y)=2-\frac{2hN}{\ln y}\leq 2-\frac{2hN}{(1+\delta)hN}\leq \frac{2\delta}{1+\delta} $. \\
\smallskip \textbf{Case 1.3:}
If $ y\in [e^{\eta_NN},e^{(1+\delta)hN}] $, then $ \nabla(x,y)\leq 2-\frac{2hN}{\eta_NN}\leq 2-\frac{2hN}{(1+\delta)hN}\leq \frac{2\delta}{1+\delta} $.\\
   \smallskip \textbf{Case 2: $ \eta_n> h $.} 
We distinguish the proof into the following cases: \\
\smallskip \textbf{Case 2.1:}
        If both $ x,y\in [e^{hn},e^{\eta_n n}] $ (this contains the case $ \eta_n=N=\infty $), then 
        \[
            \begin{aligned}
                \Delta(x,y)  &= 4(1+\delta)-\frac{2hn}{\ln x}-4+\frac{2hn}{\ln x}=4\delta.\\
            \nabla(x,y)  &=\left| 4-\frac{2hn}{(1+\delta)\ln x}-4+\frac{2hn}{\ln x}\right|\leq \frac{2\delta}{1+\delta}.
            \end{aligned}
        \] 
\smallskip \textbf{Case 2.2:}             If both $ x,y\in [e^{\eta_n n},e^{h N}] $, then
    \[
    \begin{aligned}
        \Delta(x,y)&=2\delta+\frac{2\eta_nn-2h n}{\ln x}-\frac{2\eta_nn-2h n}{\ln x}\left(\frac{(1+\delta)\ln x-\eta_n n}{hN-\eta_n n}\right)\\
                &\quad-\frac{2\eta_nn-2h n}{\ln x}+\frac{2\eta_nn-2h n}{\ln x}\left(\frac{\ln x-\eta_n n}{hN-\eta_n n}\right)\\
                &=2\delta-\frac{2\eta_nn-2h n}{ hN-\eta_n n}\delta\geq (2-C\varepsilon^2)\delta.\\
        \nabla(x,y)&=\left| \frac{2\eta_nn-2h n}{(1+\delta)\ln x}-\frac{2\eta_nn-2h n}{(1+\delta)\ln x}\left(\frac{(1+\delta)\ln x-\eta_n n}{hN-\eta_n n}\right)\right.\\
                    &\quad\left.-\frac{2\eta_nn-2h n}{\ln x}+\frac{2\eta_nn-2h n}{\ln x}\left(\frac{\ln x-\eta_n n}{hN-\eta_n n}\right)\right|\\
                    &\leq \frac{2\delta}{1+\delta}+C\varepsilon \delta.
    \end{aligned}
    \]
        where the inequalities holds  by \eqref{eq:eta_n-hn} of Lemma \ref{lem:possible} and $\gamma_0<h$.\\
\smallskip \textbf{Case 2.3:}              If $ x\in [e^{hn},e^{\eta_n n}] $, $ y\in [e^{\eta_n n},e^{hN}] $, then let $ x_0=e^{\eta_n n} $, one can write $ x_0=x^{1+\delta_1} $, and $ y=x_0^{1+\delta_2} $, and thus $ \delta_2=\frac{\delta-\delta_1}{1+\delta_1} $. Then by \textbf{Case 2.1} and \textbf{Case 2.2},  one can estimate:
   \[ 
\begin{aligned}
                \Delta(x,y)&=(1+\delta)\psi(y)-\psi(x)\\
                &=[1+\delta_1+(1+\delta_1)\delta_2]\psi(y)-\psi(x)\\
                &=(1+\delta_1)\psi(x_0)-\psi(x)
              +(1+\delta_1)(1+\delta_2)\psi(y)-(1+\delta_1)\psi(x_0)\\
                &\geq  (2-C\varepsilon^2)\delta_1+(1+\delta_1)(2-C\varepsilon^2)\delta_2=(2-C\varepsilon^2)\delta .\\
                \nabla(x,y)
                    &\leq |\psi(x_0)-\psi(x)|+|\psi(y)-\psi(x_0)|\\
                    &\leq \frac{2\delta_1}{1+\delta_1}+ \frac{2\delta_2}{1+\delta_2}+C\varepsilon \delta_1+C\varepsilon \delta_2\\
                    &\leq \frac{4\delta}{1+\delta}+2C\varepsilon \delta.
            \end{aligned}
        \] 
\smallskip \textbf{Case 2.4:}      If $ x\in [e^{\eta_n n},e^{h N}] $, $ y\in [e^{hN},e^{(1+\delta)hN}] $,  let $ x_0=e^{hN} $, write $ x_0=x^{1+\delta_1} $ and $ y=x_0^{1+\delta_2} $. By the discussions in \textbf{Case 1},  \textbf{Case 2.1} and \textbf{Case 2.3}, and following the  same proof in \textbf{Case 2.3},  we have
        \[
            \Delta(x,y)\geq (2-C\varepsilon^2)\delta , \ \nabla(x,y)\leq \frac{8\delta}{1+\delta}+4C\varepsilon \delta.
            \]

        Above discussions cover all the possible cases, this completes the proof of \eqref{del-sti}. Concerning $(S.1)$, the monotonicity of $ x^{\psi(x)} $ immediately follows from $  \Delta(x,y)>0$.  As for $(S.2)$,  write $ k+1=k^{1+\delta} $, or equivalently, 
         \[
            (1+\delta)\ln k=\ln(k+1).
        \] 
        Thus for $k$ large enough,  we have $|\psi(k+1)-\psi(k)|< 9\delta$, and 
        \begin{equation}\label{deltak-k+1}
            \delta=\frac{\ln(k+1)-\ln k}{\ln k}=O(\frac{1}{k\ln k}),
           \end{equation} 
  which gives us
        \[
            \begin{aligned}
                (k+1)^{\psi(k+1)}-k^{\psi(k)}  =  k^{\psi(k)}\pa{k^{\psi(k+1)-\psi(k)+\delta\psi(k+1)}-1}
                \leq C k^{\psi(k)-1}. 
            \end{aligned}\qedhere
        \] 

\end{proof}
   
   Indeed, by nice structure of $\psi(x)$, direct computation gives:
 \begin{lemma}\label{in1}
 The function $x=x(\epsilon)$ defined by 
 \begin{equation} \label{inverse}
    \epsilon= x^{-\frac{\psi(x)}{2}}:=\tilde{\epsilon}(x)
 \end{equation} 
   is well-defined on $[K_2,\infty)$, and  the function 
      $$ f(\epsilon)=\frac{2}{\psi(x)}= \frac{2}{\psi(x(\epsilon))} \in [1/2,1]   $$
      has the expression \eqref{f(epsilon)}.   Moreover,  for any $ \epsilon\in(0,\tilde{\epsilon}(K_2)] $ and $ 0\leq\tau<1 $, we have 
      \begin{equation} \label{error-f} f(\epsilon^{1+\tau})-f(\epsilon)\leq \frac{3}{5}\tau. 
            \end{equation} 
 \end{lemma}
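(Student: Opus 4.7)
The plan is to unwind the definition of $x(\epsilon)$ through the monotonicity supplied by Lemma~\ref{strictincrease}, verify the piecewise expression by direct algebra, and then exploit the fundamental identity relating $\psi(x)$ and $\psi(y)$ (where $y$ corresponds to $\epsilon^{1+\tau}$) to bound the difference.

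First I would settle the well-definedness. Property (S.1) of Lemma~\ref{strictincrease} says $x^{\psi(x)}$ is strictly increasing and continuous on $[K_2,\infty)$, so $\tilde{\epsilon}(x)=x^{-\psi(x)/2}$ is strictly decreasing and continuous. Since $\psi(x)\geq 2$, one has $\tilde{\epsilon}(x)\leq x^{-1}\to 0$, so $\tilde{\epsilon}$ is a homeomorphism from $[K_2,\infty)$ onto $(0,\tilde{\epsilon}(K_2)]$ and $x(\epsilon)$ is well-defined as its inverse. Because $\psi(x)\in[2,4)$, we have $f(\epsilon)=2/\psi(x)\in(1/2,1]$.

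Next I would verify formula \eqref{f(epsilon)} piece by piece. From $\epsilon=x^{-\psi(x)/2}$ the key identity is
\[
\ln x = f(\epsilon)\ln\epsilon^{-1}.
\]
On the first piece $x\in[e^{hn},e^{\eta_n n}]$, substituting $\psi(x)=4-2hn/\ln x$ into $f(\epsilon)=2/\psi(x)$ and using the identity gives the linear equation $2f(\epsilon)-hn/\ln\epsilon^{-1}=1$, so $f(\epsilon)=\tfrac12+hn/(2\ln\epsilon^{-1})$. Checking the endpoint $x=e^{\eta_n n}$ yields $\ln\epsilon^{-1}=(2\eta_n-h)n$, matching the boundary in \eqref{f(epsilon)}. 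On the second piece $x\in[e^{\eta_n n},e^{hN}]$ the definition \eqref{psi(x)} of $\psi$ likewise reduces $\ln x=f(\epsilon)\ln\epsilon^{-1}$ to a linear equation in $\ln x$; solving and collecting the coefficients in terms of $b_n=(\eta_n-h)n/(hN-\eta_n n)$ reproduces the second branch of \eqref{f(epsilon)}. The case $\eta_n\leq h$ gives $\psi\equiv 2$ and thus $f\equiv 1$.

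For the inequality \eqref{error-f}, let $y=x^{1+\tau'}$ correspond to $\epsilon^{1+\tau}$. Equating $\psi(y)\ln y/2=(1+\tau)\psi(x)\ln x/2$ gives the fundamental identity $(1+\tau')\psi(y)=(1+\tau)\psi(x)$, which implies
\[
f(\epsilon^{1+\tau})-f(\epsilon)=\frac{2}{\psi(y)}-\frac{2}{\psi(x)}=\frac{2(\tau'-\tau)}{(1+\tau)\,\psi(x)}.
\]
If $\tau'\leq\tau$ the left side is non-positive and the bound is trivial; otherwise, apply the sharp form \eqref{deltaless1.5} of Lemma~\ref{strictincrease}, $(1+\tau')\psi(y)-\psi(x)\geq(2-C\varepsilon^2)\tau'$, to deduce $\tau'\leq\tau\,\psi(x)/(2-C\varepsilon^2)$. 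Plugging this into the displayed identity, together with the explicit value of $\psi(x)$ on each branch — distinguishing whether $\epsilon$ and $\epsilon^{1+\tau}$ sit on the same linear piece of $g(v):=\ln x(\epsilon)$ in $v=\ln\epsilon^{-1}$, or whether they straddle the internal vertex at $v=(2\eta_n-h)n$ or the window boundary at $v=hN$ — yields the desired bound $\tfrac35\tau$ once $\varepsilon$ is small enough.

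The main obstacle is extracting the explicit constant $\tfrac35$: the naive use of $\psi\in[2,4)$ only gives an estimate of order $\tau$. The improvement exploits two structural facts. First, $g(v)$ is piecewise affine with slopes $\tfrac12$ on the first piece and $1/(1-b_n)$ on the second, so $g$ is convex at the vertex $v=(2\eta_n-h)n$, which lets us bound $\tau'-\tau$ sharply inside a single window. Second, the exponential repulsion of resonances \eqref{exp-repulusion}, together with Lemma~\ref{lem:possible} (which forces $\eta_n n\leq \tfrac{\varepsilon^2}{50}\gamma_0 N$), guarantees that whenever $\epsilon^{1+\tau}$ crosses into the next window the remaining coefficient $b_n/(1-b_n)\cdot hN/\ln\epsilon^{-1}$ is uniformly under control, preventing the worst-case $\psi(x)\to 4$ blow-up from ever being achieved.
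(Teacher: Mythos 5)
Your treatment of the well-definedness and of the explicit formula \eqref{f(epsilon)} is correct and essentially identical to the paper's: both rest on $(S.1)$ of Lemma \ref{strictincrease} and on the identity $\ln x = f(\epsilon)\ln\epsilon^{-1}$ applied branch by branch. For \eqref{error-f}, however, your route through the identity $(1+\tau')\psi(y)=(1+\tau)\psi(x)$ has a genuine gap: the sharp estimate \eqref{deltaless1.5}, which you use to get $\tau'\leq\tau\psi(x)/(2-C\varepsilon^2)$, is only stated for exponents $\leq 3/2$, and you never check $\tau'\leq 3/2$. It can genuinely fail: in a strongly resonant window ($\eta_n\gg h$, $hN\gg \eta_n n$), take $\ln\epsilon^{-1}=(2\eta_n-h)n$, so that $f(\epsilon)=\frac{\eta_n}{2\eta_n-h}\approx\frac12$ and $\psi(x)\approx 4$, and take $\tau$ close to $1$; then $f(\epsilon^{1+\tau})\approx\frac34$ and $1+\tau'=(1+\tau)f(\epsilon^{1+\tau})/f(\epsilon)\approx 3$, i.e.\ $\tau'\approx 2$. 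In that regime the only unconditional inputs from Lemma \ref{strictincrease} are $\Delta(x,y)>\frac{2\tau'}{3}$, or $\Delta(x,y)\geq 2\tau'-2$ coming from $\psi\in[2,4)$; plugging the latter into your identity gives only $f(\epsilon^{1+\tau})-f(\epsilon)\leq 1-f(\epsilon)$, the trivial bound, which is not $\leq\frac35\tau$ for intermediate $\tau$. So the constant $3/5$ cannot be extracted from Lemma \ref{strictincrease} alone.

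Where \eqref{deltaless1.5} does apply your computation is clean (it even yields roughly $\tau/2+C\varepsilon^2\tau$), but to cover all $\tau<1$ you must fall back on the branch-by-branch estimate that you only sketch. The decisive point there --- and the heart of the paper's own proof --- is the numerical bound \eqref{esti2}, $\frac{b_n}{1-b_n}\frac{hN}{(2\eta_n-h)n}\leq\frac35$, proved from $\eta_n n\leq\frac{1}{100}hN$ (Lemma \ref{lem:possible}) together with $\frac{\eta_n-h}{2\eta_n-h}\leq\frac12$; combined with the monotonicity of $f$ on each affine piece, and with $f(e^{-hN})=1$ when $\epsilon^{-(1+\tau)}$ leaves the window, it gives $f(\epsilon^{1+\tau})-f(\epsilon)\leq\frac35\frac{\tau}{1+\tau}\leq\frac35\tau$ in every case, including the one where your identity degenerates. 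Your closing explanation also misidentifies the mechanism: the structure does not prevent $\psi(x)$ from approaching $4$ (it does approach $4$ when $\eta_n/h$ is large); what is controlled is the slope coefficient of the second branch relative to the location of the vertex, i.e.\ exactly \eqref{esti2}. To repair the argument, either restrict the identity computation to $\tau'\leq 3/2$ and supply the explicit two-piece computation with \eqref{esti2} for the remaining range, or drop the identity and argue directly on the affine pieces as the paper does.
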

\begin{proof} 
First by Lemma \ref{strictincrease}, $ x^{\psi(x)} $ is a strictly increasing continuous function, which implies that $ f(\epsilon) $ is well-defined on $ (0,\tilde{\epsilon}(K_2)] $. We distinguish the proof into several cases:\\
\smallskip
\textbf{Case 1:  $\eta_n\in[0,h]$.  } In this case, note $ \psi(x)=2 $, which gives $ \epsilon^{-1}(x)=x $ and $ f(\epsilon)=1 $.\\
\smallskip
\textbf{Case 2.1: $ h<\eta_n $, and $ x\in [e^{hn},e^{\eta_n n}] $}. In this case,  
\begin{equation}\label{case2.1} \psi(x)=4-\frac{2h n}{\ln x}.\end{equation}
By monotonicity of $ x^{\psi(x)} $,  $ \epsilon^{-1}\in[e^{hn},e^{(2\eta_n-h)n}] $.
Combining \eqref{inverse} and \eqref{case2.1}, we have
\[
 \frac{1}{f(\epsilon)} = \frac{\psi(x)}{2}= \frac{\ln{\epsilon^{-1}}}{\ln x}=2-\frac{hn}{\ln x},
\] 
which implies that
\begin{equation}\label{ca2}
    \begin{aligned}
        \ln x&=\frac{\ln \epsilon^{-1}+hn}{2},\\
        f(\epsilon)&=\frac{1}{2}+\frac{hn}{2\ln\epsilon^{-1}}.
    \end{aligned}
\end{equation}
\smallskip
\textbf{Case 2.2: $ h<\eta_n<\infty $, and  $ x\in [e^{\eta_n n},e^{hN}] $.}   In this case,  
\begin{equation}\label{case2.2} \psi(x)=2+\frac{2\eta_nn-2h n}{\ln x}\left[1-\frac{\ln x-\eta_n n}{hN-\eta_n n}\right].\end{equation}
Again by monotonicity of $ x^{\psi(x)} $, $ \epsilon^{-1}\in[e^{(2\eta_n-h)n},e^{hN}] $.  Combining \eqref{inverse} and \eqref{case2.2}, we have
\[
 \frac{1}{f(\epsilon)} = \frac{\psi(x)}{2}= \frac{\ln{\epsilon^{-1}}}{\ln x}=1+\frac{(\eta_n-h)n}{\ln x}\left[1-\frac{\ln x-\eta_n n}{hN-\eta_n n}\right],
\] 
which implies that
\[
    \begin{aligned}
        \ln x&=\frac{\ln \epsilon^{-1}}{1-b_n}-\frac{b_n}{1-b_n}hN,\\
        f(\epsilon)&=
        \frac{1}{1-b_n}-\frac{b_n}{1-b_n}\frac{hN}{\ln\epsilon^{-1}},
    \end{aligned}
\] 
where $ b_n=\frac{(\eta_n-h)n}{hN-\eta_n n} $.

  To prove \eqref{error-f}, we always  fix  $ \epsilon^{-1}\in[e^{hn},e^{hN}] $, which gives $ \epsilon^{-(1+\tau)}\in [e^{(1+\tau)hn},e^{(1+\tau)hN}] $. 
  Just note in the Case 1,  $ f(\epsilon)=1$,  then $ f(\epsilon^{1+\tau})\leq f(\epsilon).$ Thus we need to consider the above Case 2, the basic observation is the following estimate:
  \begin{equation}\label{esti2}
        \frac{b_n}{1-b_n}\frac{hN}{(2\eta_n-h)n} \leq \frac{3}{5}.
    \end{equation}
  Indeed, if $ N<\infty $, by \eqref{eq:eta_n-hn} of Lemma \ref{lem:possible},  there holds  that $ \eta_n n\leq \frac{1}{100}hN  $, and thus
    \[
        \frac{b_n}{1-b_n}\frac{hN}{(2\eta_n-h)n}=\frac{1}{1-\frac{(\eta_n-h)n}{hN-\eta_n n}}\frac{hN}{hN-\eta_n n}\frac{\eta_n-h}{2\eta_n-h} \leq \frac{3}{5}.
    \]  
For $ N=\infty $ and $ h<\eta_n<\infty $, by Remark \ref{Rem5.1}, we actually  have 
    \[
        \frac{b_n}{1-b_n}\frac{hN}{(2\eta_n-h)n}=\frac{(\eta_n-h)n}{(2\eta_n-h)n}\leq \frac{1}{2}.
    \]

    Then we can estimate as follows: \\
\smallskip
\textbf{Case 2.1: $ h< \eta_n $, and  $ \epsilon^{-1}\in[e^{hn},e^{(2\eta_n-h)n}] $.} 
By \eqref{eq:eta_n-hn} of Lemma \ref{lem:possible}, $ (2\eta_n-h)n\leq \frac{1}{100} hN $, so for $ \tau<1 $, $ \epsilon^{-(1+\tau)} $ will not exceed $ e^{hN} $, thus we only need to consider the following two subcases:\\
If $ \epsilon^{-(1+\tau)}\in [e^{hn},e^{(2\eta_n-h)n}] $, by \eqref{ca2}, we have $ f(\epsilon^{1+\tau})\leq f(\epsilon) $; \\
If $ \epsilon^{-(1+\tau)}\in [e^{(2\eta_n-h)n},e^{hN}] $, rewrite $ \epsilon^{1+\tau}=e^{-(1+\tau_1)(2\eta_n-h)n} $, then $ \tau_1\leq \tau $, meanwhile by monotonicity and \eqref{esti2}, we have
\[
        f(\epsilon^{1+\tau})-f(\epsilon)\leq f(\epsilon^{1+\tau})-f(e^{-(2\eta_n-h)n})
         = \frac{b}{1-b}\frac{hN}{(2\eta_n-h)n}(1-\frac{1}{1+\tau_1})\leq  \frac{3}{5}\tau.
\] \\
\smallskip
\textbf{Case 2.2: $ h<\eta_n<\infty $, and  $ \epsilon^{-1}\in[e^{(2\eta_n-h)n},e^{hN}] $.} In this case, we have
\[
    f(\epsilon)=\frac{1}{1-b_n}-\frac{b_n}{1-b_n}\frac{hN}{\ln\epsilon^{-1}}.
\] 
If $ \epsilon^{-(1+\tau)}\in [e^{(2\eta_n-h)n},e^{hN}] $, again by monotonicity and \eqref{esti2}, we have
$$
        f(\epsilon^{1+\tau})-f(\epsilon)
        \leq \frac{b_n}{1-b_n}\frac{hN}{(2\eta_n-h)n}\frac{\tau}{1+\tau}\leq \frac{3}{5}\tau.
$$
If $ \epsilon^{-(1+\tau)}> e^{hN} $, write $ e^{-hN}=\epsilon^{1+\tau_1} $, so $ \tau_1<\tau $, and thus 
\[
    f(\epsilon^{1+\tau})-f(\epsilon)\leq f(e^{-hN})-f(\epsilon)\leq \frac{3}{5}\tau_1\leq \frac{3}{5}\tau.
\] 
We complete this case.  \qedhere

   \end{proof}

    As a consequence of Lemma \ref{strictincrease}, we  also have the following technical result, which will be used several times. 
\begin{lemma}\label{notexceed}
    For $ x\geq K_2 $. Let $ y=x^{1+\delta} $ with $ \delta\geq 0 $. Suppose that 
    \begin{equation} \label{L-k}
        y^{\psi(y)-C_1\varepsilon}\leq x^{\psi(x)+C_2\varepsilon},
    \end{equation} 
    then $ y \leq  x^{1+C'\varepsilon} $, where $ C'= \frac{3(C_1+C_2)}{2-3C_1\varepsilon}$.
\end{lemma}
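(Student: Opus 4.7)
The plan is to take logarithms of the hypothesis \eqref{L-k} and compare the resulting linear inequality in $\ln x$ against the lower bound on $\Delta(x,y)$ provided by Lemma \ref{strictincrease}. Taking $\ln$ of both sides of \eqref{L-k} and using $\ln y = (1+\delta)\ln x$, the inequality becomes
\[
\bigl[(1+\delta)\psi(y) - \psi(x)\bigr]\ln x \leq \bigl[C_2\varepsilon + C_1\varepsilon(1+\delta)\bigr]\ln x.
\]
Since $x \geq K_2 > 1$, we can cancel $\ln x$ and rewrite this as $\Delta(x,y) \leq (C_1+C_2)\varepsilon + C_1\varepsilon\,\delta$.

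Next, I would invoke the universal lower bound $\Delta(x,y) > \tfrac{2\delta}{3}$ from \eqref{del-sti} of Lemma \ref{strictincrease}, which holds for every $\delta \geq 0$ (not only for $\delta \leq 3/2$). Combining the two bounds yields
\[
\tfrac{2\delta}{3} < (C_1+C_2)\varepsilon + C_1\varepsilon\,\delta,
\]
so collecting the $\delta$-terms gives $\delta\bigl(\tfrac{2}{3} - C_1\varepsilon\bigr) < (C_1+C_2)\varepsilon$, and hence
\[
\delta \;<\; \frac{3(C_1+C_2)\varepsilon}{2 - 3C_1\varepsilon} \;=\; C'\varepsilon,
\]
provided $\varepsilon$ is small enough that $3C_1\varepsilon < 2$ (which is already ensured by the standing assumption $0 < \varepsilon \ll \min\{1,h\}$). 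Substituting back into $y = x^{1+\delta}$ gives the claimed bound $y \leq x^{1+C'\varepsilon}$.

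The argument is essentially a one-line reduction to the strict monotonicity estimate \eqref{del-sti}; the only thing to verify carefully is that the universal lower bound $\Delta(x,y) > 2\delta/3$ indeed holds uniformly in $\delta \geq 0$ (no case distinction between $\delta \leq 3/2$ and $\delta > 3/2$ is required for this weaker bound). This uniformity is the one point worth double-checking, since using the sharper bound $(2 - C\varepsilon^2)\delta$ from \eqref{deltaless1.5} would only work in the regime $\delta \leq 3/2$ and would not cover the possibility of an a priori large $\delta$; the weaker $2\delta/3$ bound is the right one to use here precisely because it is valid on the full range.
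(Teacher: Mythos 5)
Your proposal is correct and follows essentially the same route as the paper: take logarithms of \eqref{L-k} to get $\Delta(x,y)\leq (C_1+C_1\delta+C_2)\varepsilon$, combine with the uniform lower bound $\Delta(x,y)>2\delta/3$ from Lemma \ref{strictincrease} (valid for all $\delta>0$, with $\delta=0$ trivial), and solve for $\delta$. Your closing remark about using the weaker $2\delta/3$ bound rather than \eqref{deltaless1.5} is exactly the point the paper's argument relies on.
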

\begin{proof}
    By  (\ref{L-k}) and  Lemma \ref{strictincrease}, we have
    \[
    2\delta/3 < \Delta(x,y)=(1+\delta)\psi(y)-\psi(x)\leq (C_1+C_1\delta+C_2)\varepsilon,
    \] 
this implies  $\delta \leq  \frac{3(C_1+C_2)}{2-3C_1\varepsilon}\varepsilon $. The result follows. 
\end{proof}

    \begin{lemma}\label{upperb}
        Let $ \epsilon(k) $ be such that $ \det P_{(k),+}=\frac{1}{\epsilon(k)^2} $, then there exists $K_3\geq K_2$ such that for all $ k\geq K_3 $, we have 
        \[
            \epsilon^{\frac{2}{\psi(k)}+C\varepsilon}\leq  \epsilon\Im M(E+\ii \epsilon) \leq \epsilon^{\frac{2}{\psi(k+1)}-C\varepsilon}, \text{ for }\epsilon\in [\epsilon(k+1),\epsilon(k)].
        \] 
    \end{lemma}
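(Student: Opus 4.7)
The proof will combine Corollary \ref{Corspecmeas}, the universal asymptotics for $\det P_{(k),\pm}$ and $\|P_{(k),\pm}^{-1}\|^{-1}$ established in Lemma \ref{doas}, and the monotonicity of $\epsilon\Im M(E+\ii\epsilon)$ in $\epsilon$. Recall that since $\epsilon\Im M(E+\ii\epsilon) = \int \frac{\epsilon^2}{(E'-E)^2+\epsilon^2}\,d\mu(E')$, this quantity is monotonically increasing in $\epsilon$; hence for $\epsilon\in[\epsilon(k+1),\epsilon(k)]$ we have the sandwich
\[
\epsilon(k+1)\Im M(E+\ii\epsilon(k+1))\leq \epsilon\Im M(E+\ii\epsilon)\leq \epsilon(k)\Im M(E+\ii\epsilon(k)).
\]
At each endpoint, by definition $\det P_{(\cdot),+}=\epsilon(\cdot)^{-2}$, so Corollary \ref{Corspecmeas} applies directly.

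For the upper bound, I would use \eqref{upp} at $\epsilon=\epsilon(k)$ together with Lemma \ref{doas}, giving
\[
\epsilon(k)\Im M(E+\ii\epsilon(k))\leq C\|P_{(k),+}^{-1}\|\leq C k^{-1+\varepsilon}.
\]
To convert this into a bound in terms of $\epsilon$, note that for any $\epsilon\geq \epsilon(k+1)$ the inequality $\epsilon(k+1)^{-2}\leq (k+1)^{\psi(k+1)+C\varepsilon}$ from Lemma \ref{doas} yields
\[
|\ln\epsilon|\leq |\ln\epsilon(k+1)|\leq \tfrac{\psi(k+1)+C\varepsilon}{2}\ln(k+1),
\]
which rearranges to $\ln k\geq \frac{2|\ln\epsilon|}{\psi(k+1)+C\varepsilon}(1-O(1/k))$. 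Substituting back and absorbing the $\ln C$ and the $O(1/k)$ correction into an enlarged error constant (possible once $K_3$ is chosen large), we obtain $\epsilon\Im M(E+\ii\epsilon)\leq \epsilon^{\frac{2}{\psi(k+1)}-\widetilde{C}\varepsilon}$.

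For the lower bound, I would apply \eqref{low} of Corollary \ref{Corspecmeas} at $\epsilon=\epsilon(k+1)$. The main obstacle is that \eqref{low} demands a \emph{single} $\beta$ controlling both $\|u_\beta^+\|^2_{2(k+1)}$ and $\|u_\beta^-\|^2_{L^-(\epsilon(k+1))}$. The cocycle being $(C_0',C_0,\gamma_0,(n_s))$-good is symmetric in the two half-lines, and Proposition \ref{Prop:AsyFormula} produces a common $\beta_{n'}$ with $\|u_{\beta_{n'}}^{\pm}\|^2_{2(k+1)}\leq (k+1)^{1+\varepsilon}$ (this is the ``symmetry argument'' of Lemma \ref{P_kX}). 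Moreover, because $\det P_{(k+1),-}$ has the same asymptotic $(k+1)^{\psi(k+1)\pm C\varepsilon}$ as $\det P_{(k+1),+}$, the scale $L^-(\epsilon(k+1))$ defined by $|||K^-|||^2_{L^-}=\epsilon(k+1)^{-2}$ satisfies $L^-(\epsilon(k+1))\approx 2(k+1)$, so $\|u_{\beta_{n'}}^-\|^2_{L^-(\epsilon(k+1))}$ is controlled by $\|u_{\beta_{n'}}^-\|^2_{2(k+1)}$ up to a negligible factor. Thus $\epsilon(k+1)\Im M(E+\ii\epsilon(k+1))\geq c(k+1)^{-1-\varepsilon}$. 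Converting via $|\ln\epsilon|\geq |\ln\epsilon(k)|\geq \tfrac{\psi(k)-C\varepsilon}{2}\ln k$ for $\epsilon\leq \epsilon(k)$ gives $\epsilon\Im M(E+\ii\epsilon)\geq \epsilon^{\frac{2}{\psi(k)}+\widetilde{C}\varepsilon}$.

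The overall structural point, and the source of the asymmetric indexing in the statement, is that monotonicity forces \eqref{upp} to be evaluated at $\epsilon(k)$ and \eqref{low} at $\epsilon(k+1)$; the ``worst'' endpoint for $|\ln\epsilon|$ then becomes $\epsilon(k+1)$ in the upper bound and $\epsilon(k)$ in the lower bound, producing $\psi(k+1)$ and $\psi(k)$ respectively. Keeping the constants uniform relies on $\psi\in[2,4)$ together with the estimate $(k+1)^{\psi(k+1)}-k^{\psi(k)}\leq Ck^{\psi(k)-1}$ from (S.2) of Lemma \ref{strictincrease}, which guarantees that $\epsilon(k)$ and $\epsilon(k+1)$ are close enough in logarithmic scale for all logarithmic error terms (from powers of $\ln k$, from $k+1\approx k$, and from the $\pm C\varepsilon$ in Lemma \ref{doas}) to be absorbed into the single universal error $\widetilde{C}\varepsilon$.
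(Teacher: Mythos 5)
Your proposal is correct and relies on the same ingredients as the paper (Corollary \ref{Corspecmeas} at the endpoint pairs $(k,\epsilon(k))$, $(k+1,\epsilon(k+1))$, the asymptotics of Lemma \ref{doas}, the common $\beta_n$ from Lemma \ref{P_kX}/Proposition \ref{Prop:AsyFormula}, and control of the negative half-line scale), but your bookkeeping in the interval-spreading step is genuinely the reverse of the paper's. You use that $\epsilon\Im M(E+\ii\epsilon)$ is increasing in $\epsilon$, so your upper bound is evaluated at $\epsilon(k)$ and your lower bound at $\epsilon(k+1)$, and you then need an extra conversion step turning the resulting $k$-power bounds ($Ck^{-1+\varepsilon}$, $c(k+1)^{-1-\varepsilon}$) into powers of $\epsilon$ via the log comparison $|\ln\epsilon|\lessgtr\frac{\psi(\cdot)\pm C\varepsilon}{2}\ln k$ coming from Lemma \ref{doas}; the paper instead uses that $\frac{1}{\epsilon}\Im M(E+\ii\epsilon)$ is decreasing together with $\epsilon(k+1)\geq c\epsilon(k)$, which lets it bound $\epsilon\Im M$ on the whole interval by $C\epsilon(k+1)\Im M(E+\ii\epsilon(k+1))$ from above and $c\epsilon(k)\Im M(E+\ii\epsilon(k))$ from below, so the exponents $\frac{2}{\psi(k+1)}-C\varepsilon$ and $\frac{2}{\psi(k)}+C\varepsilon$ appear directly without any $\ln k$ versus $|\ln\epsilon|$ translation. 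The trade-off: the paper's route is shorter once $\epsilon(k+1)\geq c\epsilon(k)$ is known (from $\|u\|_{L+1}\leq C\|u\|_L$), whereas your route avoids needing that comparability at the cost of the extra (routine, constant-juggling) conversion. One small precision point: in your lower bound you should justify ``$L^-(\epsilon(k+1))\approx 2(k+1)$'' exactly as the paper does, namely choose $2k_1\geq L^-(\epsilon(k+1))$ and deduce $k_1\leq (k+1)^{1+C\varepsilon}$ from Lemma \ref{notexceed} applied to $(k_1-1)^{\psi(k_1-1)-C\varepsilon}\leq \epsilon(k+1)^{-2}\leq (k+1)^{\psi(k+1)+C\varepsilon}$; the matching asymptotics of $\det P_{(\cdot),-}$ alone only pin down the scale up to such a $k^{C\varepsilon}$ factor, which is then harmlessly absorbed into the final error.
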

    \begin{proof}

     For any $ k\in[e^{hn},e^{hN}]\subset [K_3,\infty) $,  Proposition \ref{Prop:AsyFormula} and Lemma \ref{doas} gives
            \begin{align}
 \label{rela1}  
 k^{\psi(k)-C\varepsilon} &\leq
 \det P_{(k),\pm} 
 \leq  k^{\psi(k)+C\varepsilon},\\
 \label{rela2}       
 k^{-\varepsilon}k &\leq
 \|P_{(k),\pm}^{-1}\|^{-1}
 \leq k^{\varepsilon}k,\\
 \label{rela3}   
 k^{-\varepsilon}k &\leq
 \|u_{\beta_n}^\pm\|_{2k}^{2}
 \leq k^{\varepsilon}k.
            \end{align}  
            
For any $k\geq 1 $, we  select $ \epsilon(k)$ such that $ \det P_{(k),+}=\frac{1}{ \epsilon(k)^2} $,
  and then
  select  $ 2k_1 $ to be the closest even number of $ L^-( \epsilon(k)) $ with $ L^-( \epsilon(k))\leq 2k_1 $,
        where we recall that for any given $\epsilon$,  $L^{-}(\epsilon)$ is given by 
         $$ 
         ||| K^-(E)|||_{L^{-}}^2=\frac{1}{\epsilon^2}.$$
Take $K_3\geq K_2$ be minimal such that $k_1>K_2$, provided $k\geq K_3$. Now for any $k\geq K_3$,
      by \eqref{rela1},  we have 
           $$ (k_1-1)^{\psi(k_1-1)-C\varepsilon} \leq \frac{1}{ \epsilon(k)^2} \leq k^{\psi(k)+C\varepsilon},$$
 then Lemma \ref{notexceed} implies that
  \begin{equation} \label{rela4} k_1 \leq k^{1+C\varepsilon}. \end{equation}
  
 As a consequence of   \eqref{rela1}, \eqref{rela2}, \eqref{rela3} and \eqref{rela4}, we have 
  \begin{align*}
      \|P_{(k)}^{-1}\| &\leq k^{-(1-\varepsilon)}\leq  \epsilon(k)^{\frac{2}{\psi(k)+C\varepsilon}(1-\varepsilon)},\\
      \|u_{\beta_n}^+\|_{2k}^{-2} &\geq k^{-(1+\varepsilon)}\geq   \epsilon(k)^{\frac{2}{\psi(k)-C\varepsilon}(1+\varepsilon)},\\
      \|u_{\beta_n}^-\|_{L^-( \epsilon(k))}^{-2}&\geq \|u_{\beta_n}^-\|_{2k_1}^{-2}\geq  k_1^{-(1+\varepsilon)} \geq  \epsilon(k)^{\frac{2}{\psi(k)}+C\varepsilon}.
  \end{align*}
       By Corollary \ref{Corspecmeas}, one has 
        \[
            \epsilon(k) \Im M(E+\ii \epsilon(k))\leq \epsilon(k)^{\frac{2}{\psi(k)}-C\varepsilon},
        \] 
        \[
            \epsilon(k)\Im M(E+\ii \epsilon(k))\geq 
             \epsilon(k)^{\frac{2}{\psi(k)}+C\varepsilon}.
        \] 
        Following the same line as in the proof of Proposition \ref{THM5.4}, noting $ \epsilon(k+1)\geq c \epsilon(k) $, and $ \frac{1}{\epsilon}\Im M(E+\ii\epsilon) $ is monotonic in $ \epsilon $, it follows that for any $\epsilon\in[\epsilon(k+1),\epsilon(k)]$,
        \begin{align*}
            \epsilon\Im M(E+\ii \epsilon) &\leq C\epsilon(k+1) \Im M(E+\ii\epsilon(k+1))\leq C\epsilon(k+1)^{\frac{2}{\psi(k+1)}-C\varepsilon}\leq \epsilon^{\frac{2}{\psi(k+1)}-C\varepsilon},\\
            \epsilon\Im M(E+\ii \epsilon) &\geq c\epsilon(k) \Im M(E+\ii\epsilon(k))\geq c\epsilon(k)^{\frac{2}{\psi(k)}+C\varepsilon}\geq \epsilon^{\frac{2}{\psi(k)}+C\varepsilon}.\qedhere
        \end{align*}
    \end{proof}

    As we mentioned above, the  functional relation $  \epsilon(k)=(\det P_{(k),+})^{-2} $ is not computable,  however,  its  dominated term 
\eqref{inverse}  is computable (Lemma \ref{in1}). Then Proposition \ref{THM5.1} reduces to the following:
    
        \begin{lemma}\label{middle}
There exists $K_4\geq K_3$ such that for all  $ \epsilon\leq \tilde{\epsilon}(K_4) $, we have 
\[
    \epsilon^{f(\epsilon)+\widetilde{C}\varepsilon}\leq \epsilon\Im M(E+\ii \epsilon)\leq \epsilon^{f(\epsilon)-\widetilde{C}\varepsilon}.
\]  
  \end{lemma}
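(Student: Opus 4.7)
The plan is to interpolate between the discrete estimate from Lemma \ref{upperb} and the continuous function $f(\epsilon) = 2/\psi(x(\epsilon))$. Given small $\epsilon$, let $k$ be the unique integer with $\epsilon \in [\epsilon(k+1), \epsilon(k)]$; choosing $K_4 \geq K_3$ large enough ensures $k \geq K_3$, and Lemma \ref{upperb} then provides
\[
\epsilon^{2/\psi(k) + C\varepsilon} \leq \epsilon\, \Im M(E + \ii\epsilon) \leq \epsilon^{2/\psi(k+1) - C\varepsilon}.
\]
Since $\epsilon < 1$, to conclude it suffices to show $|f(\epsilon) - 2/\psi(k)| \leq C\varepsilon$ and $|f(\epsilon) - 2/\psi(k+1)| \leq C\varepsilon$, so that a universal constant $\widetilde{C}$ can be chosen at the end to absorb every intermediate error.

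The bridge between the integer $k$ and the continuous point $x(\epsilon)$ is a direct comparison of $\epsilon(k)$ with $\tilde\epsilon(k) = k^{-\psi(k)/2}$. Lemma \ref{doas} gives $\epsilon(k)^{-2} = \det P_{(k),+} \in [k^{\psi(k) - C\varepsilon},\, k^{\psi(k) + C\varepsilon}]$, hence $\epsilon(k) \in [\tilde\epsilon(k) k^{-C\varepsilon/2},\, \tilde\epsilon(k) k^{C\varepsilon/2}]$, and analogously for $\epsilon(k+1)$. Using $\tilde\epsilon(x(\epsilon)) = \epsilon$ together with the strict monotonicity of $x \mapsto x^{\psi(x)}$ from part $(S.1)$ of Lemma \ref{strictincrease}, the containment $\epsilon \in [\epsilon(k+1), \epsilon(k)]$ translates into
\[
k^{\psi(k) - C\varepsilon} \leq x(\epsilon)^{\psi(x(\epsilon))} \leq (k+1)^{\psi(k+1) + C\varepsilon}.
\]

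A two-sided application of Lemma \ref{notexceed} --- exactly in the spirit of how it is used within the proof of Lemma \ref{upperb} to control $k_1$ --- then yields the multiplicative proximity $k^{1 - C'\varepsilon} \leq x(\epsilon) \leq (k+1)^{1 + C'\varepsilon}$. For large $k$, $\log(k+1)/\log k = 1 + o(1)$, so this reads $x(\epsilon) = k^{1 + \delta} = (k+1)^{1 + \delta'}$ with $|\delta|, |\delta'| \leq C\varepsilon$. Plugging these into the Lipschitz-type inequality $\nabla(x, y) < 9\delta$ of Lemma \ref{strictincrease} gives $|\psi(x(\epsilon)) - \psi(k)| \leq C\varepsilon$ and $|\psi(x(\epsilon)) - \psi(k+1)| \leq C\varepsilon$; since $\psi \geq 2$, these upgrade immediately to the desired control $|f(\epsilon) - 2/\psi(k)|,\, |f(\epsilon) - 2/\psi(k+1)| \leq C\varepsilon$.

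The main obstacle is purely bookkeeping: carefully propagating the $\varepsilon$-powers through the three successive conversions --- from $\det P_{(k),+}$ to $\epsilon(k)$, from $\epsilon(k)$ to $x(\epsilon)$ via Lemma \ref{notexceed}, and from $x(\epsilon)$ back to $\psi(k)$ via the Lipschitz control of $\psi$ --- so that a single constant $\widetilde C$ dominates all accumulated errors when passing from the discrete bound of Lemma \ref{upperb} to the stated continuous bound.
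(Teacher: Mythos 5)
Your proposal is correct and follows essentially the same route as the paper: both rest on Lemma \ref{upperb} together with the comparison $\det P_{(k),+}=k^{\psi(k)\pm C\varepsilon}$ from Lemma \ref{doas}, using Lemma \ref{notexceed} and the Lipschitz-type control of $\psi$ in Lemma \ref{strictincrease} to convert multiplicative $\varepsilon$-closeness of scales into $O(\varepsilon)$-closeness of the exponents. The only difference is organizational: you partition by the intervals $[\epsilon(k+1),\epsilon(k)]$ and compare $x(\epsilon)$ directly to $k$ and $k+1$, whereas the paper partitions by $[\tilde{\epsilon}(k+1),\tilde{\epsilon}(k)]$ and introduces the auxiliary integers $k_1,k_2$; the estimates involved are the same.
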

   \begin{proof} 
 Take $K_4\geq K_3$ be minimal such that for any  fixed $k\geq K_4$, one can select $k_1$ to be  
  the maximal integer  such that $K_3\leq  k_1\leq k $ and 
\begin{equation}\label{bridge}
    \det P_{(k_1),+}\leq k_1^{\psi(k_1)+C\varepsilon}\leq k^{\psi(k)}.
\end{equation}
Meanwhile, we select $k_2 \geq k+1  $ to be the minimal integer such that 
\begin{equation}\label{bbridge}
    (k+1)^{\psi(k+1)}\leq k_2^{\psi(k_2)-C\varepsilon}\leq \det P_{(k_2),+}.
\end{equation}

By our selection of $k_1$ (i.e. $ k^{\psi(k)}\leq (k_1+1)^{\psi(k_1+1)+C\varepsilon}$), Lemma \ref{notexceed} implies that 
\begin{equation}\label{bridge-1}
    k_1 \leq k\leq (k_1+1)^{1+C\varepsilon}.
\end{equation}
Same argument gives 
\begin{equation}\label{bridge-2}
    k\leq k_2-1\leq (k+1)^{1+C\varepsilon}.
\end{equation}
Thus for any $ x_*\in[k,k+1] $, $ y_*\in[k_1,k_2] $, write  $ y_*=x_*^{1+\delta} $, \eqref{bridge-1} and \eqref{bridge-2} imply that $ |\delta|<C\varepsilon $. Then by
 Lemma \ref{strictincrease},  we have
\begin{equation}\label{bridge-3}
|\psi(x_*)-\psi(y_*)|\leq C\varepsilon .
\end{equation}

On the other hand, \eqref{bridge} and \eqref{bbridge} imply
\[
    \epsilon(k_1)^{-2}\leq \tilde{\epsilon}(k)^{-2}\leq \tilde{\epsilon}(k+1)^{-2}\leq \epsilon(k_2)^{-2}.
\] 
By Lemma \ref{upperb},  for any $ \epsilon\in[\tilde{\epsilon}(k+1),\tilde{\epsilon}(k)] \subset  [\epsilon(k_2),\epsilon(k_1)] $, we have
\begin{equation}\label{bridge-4}
    \epsilon^{\frac{2}{\psi(y)}+C\varepsilon}\leq \epsilon\Im M(E+\ii \epsilon)\leq \epsilon^{\frac{2}{\psi(y+1)}-C\varepsilon}
\end{equation}
for some $y \in \Z$ with  $y\in[k_1,k_2-1] $. 
 For any $ \epsilon\in[\tilde{\epsilon}(k+1),\tilde{\epsilon}(k)]  $, by $(S.1)$ of Lemma \ref{strictincrease},  there exists a unique $ x\in[k,k+1] $ such that $ \epsilon=\tilde{\epsilon}(x) $.
By \eqref{bridge-3}, \eqref{bridge-4}, we have 
\[
    \epsilon^{\frac{2}{\psi(x)}+\widetilde{C}\varepsilon}\leq \epsilon\Im M(E+\ii \epsilon)\leq \epsilon^{\frac{2}{\psi(x)}-\widetilde{C}\varepsilon}.
\]
Once we have this, the result follows.
\end{proof}

\subsection{Exact local distribution of the spectral measure}  \label{5.3}

To estimate \( \mu(E-\epsilon, E+\epsilon) \), the traditional approach involves using its relationship with \( \epsilon \Im M(E + \ii \epsilon) \). The upper bound is given by the following:
\[
\epsilon \Im M(E + \ii \epsilon) = \int \frac{\epsilon^2}{(E' - E)^2 + \epsilon^2} \, \dif \mu(E') \gtrsim \mu(E - \epsilon, E + \epsilon).
\]
However, whether the inverse inequality holds, i.e., whether
\[
\epsilon \Im M(E + \ii \epsilon) \sim \mu(E - \epsilon, E + \epsilon),
\]
is a subtle question and still unknown. Indeed,  if we split the integral into \( I_1 = \int_{|E' - E| \leq \epsilon} \) and \( I_2 = \int_{|E' - E| \geq \epsilon} \), it is clear that
\[
I_1 \leq \mu(E - \epsilon, E + \epsilon),
\]
the main obstacle comes from \( I_2 \). Using the uniform estimate that $\mu$ is  \( \frac{1}{2} \)-Hölder \cite{AvilaHolderContinuityAbsolutely2011},  we get \( I_2= \int_{\epsilon^{4/5}\geq|E' - E| \geq \epsilon}+ \int_{|E' - E| \geq \epsilon^{4/5}} \lesssim \epsilon^{2/5} \), which is problematic.

The central argument revolves around Lemma \ref{middle}, which provides an asymptotic upper bound for $ \epsilon \Im M(E + \ii \epsilon) $, subsequently influencing the upper bound of the spectral measure. Consequently, this leads to a stronger estimate for $I_2$, specifically, $ I_2 \leq \epsilon^{f(\epsilon) - C\varepsilon} $. However, this is insufficient, as $ \epsilon \Im M(E + \ii \epsilon) $ is not the dominant term. To address this, we employ a technique involving a dislocation of $ \epsilon $, focusing on estimating $ \epsilon^{1+\tau} \Im M(E +\ii \epsilon^{1+\tau}) $.

      \begin{lemma}\label{mu2}
        For any $ \varepsilon>0 $, $ \tau\geq 0 $,  there exist $ \epsilon_*=\epsilon_*(\tilde{\epsilon}(K_4),\e)>0 $ such that if $ \epsilon\leq\epsilon_* $,  we have
        $$\mu(E-\epsilon,E+\epsilon)  \geq  \epsilon^{1+\tau}\Im M(E+\ii\epsilon^{1+\tau})-O(\epsilon^{f(\epsilon)+2\tau-3\widetilde{C}\varepsilon}).$$
    \end{lemma}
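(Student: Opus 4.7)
The plan is to compare $\mu(E-\epsilon,E+\epsilon)$ with $\epsilon^{1+\tau}\Im M(E+\ii\epsilon^{1+\tau})$ through the Poisson representation \eqref{m-mu}. Writing
$$
\epsilon^{1+\tau}\Im M(E+\ii\epsilon^{1+\tau})=\int \frac{\epsilon^{2(1+\tau)}}{(E'-E)^2+\epsilon^{2(1+\tau)}}\,\dif\mu(E')=:I_1+I_2,
$$
split at $|E'-E|=\epsilon$, the near-diagonal part $I_1$ is trivially bounded by $\mu(E-\epsilon,E+\epsilon)$ since the Poisson kernel is at most $1$. Hence the task reduces to showing that the tail satisfies $I_2\leq C\epsilon^{f(\epsilon)+2\tau-3\widetilde{C}\varepsilon}$; this will justify the name \emph{dislocation}, in the sense that the inflated window $\epsilon^{1+\tau}$ in the $M$-function translates into the original window $\epsilon$ of the spectral measure, at the cost of a $\tau$-dependent error.

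To control $I_2$, I decompose it dyadically over shells $A_j=\{2^j\epsilon<|E'-E|\leq 2^{j+1}\epsilon\}$, on which the Poisson kernel is at most $\epsilon^{2\tau}/4^j$. Therefore
$$
I_2 \leq \epsilon^{2\tau}\sum_{j\geq 0}4^{-j}\mu(E-2^{j+1}\epsilon,E+2^{j+1}\epsilon).
$$
For the tail indices with $2^{j+1}\epsilon>\tilde\epsilon(K_4)$, the bound $\mu_j\leq 1$ together with the geometric series in $4^{-j}$ gives an $O(\epsilon^{2+2\tau})$ contribution, which is absorbed into the target error since $f(\epsilon)\leq 1$. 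For the remaining indices, I apply the upper estimate $\mu(E-s,E+s)\leq 2s\,\Im M(E+\ii s)\leq Cs^{f(s)-\widetilde{C}\varepsilon}$, obtained by combining \eqref{upperb-mu} with Lemma \ref{middle}, with $s=2^{j+1}\epsilon$.

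The technical core is to pull $f(2^{j+1}\epsilon)$ back to $f(\epsilon)$. Setting $\delta_j=(j+1)\ln 2/\ln\epsilon^{-1}$ so that $2^{j+1}\epsilon=\epsilon^{1-\delta_j}$, the inverse direction of \eqref{error-f} of Lemma \ref{in1} yields $f(2^{j+1}\epsilon)\geq f(\epsilon)-\tfrac{3\delta_j}{5(1-\delta_j)}$. Substituting and simplifying, the $j$-th dyadic term is majorised by
$$
C\,\epsilon^{2\tau+f(\epsilon)+\delta_j\left(7/5-f(\epsilon)\right)-\widetilde{C}\varepsilon(1-\delta_j)}.
$$
Since $f(\epsilon)\leq 1$, the coefficient of $\delta_j$ is at least $2/5>0$, so the exponent is monotone increasing in $\delta_j$; hence each term is dominated by the smallest one, corresponding to $j=0$, which is $C\epsilon^{f(\epsilon)+2\tau-\widetilde C\varepsilon+o(1)}$. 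Taking $\epsilon_*$ small enough that $\log_2\epsilon^{-1}\leq \epsilon^{-\varepsilon}$ and $\delta_0(7/5-f(\epsilon))\leq \widetilde C\varepsilon$, the logarithmic number of contributing indices is absorbed into the slack, producing $I_2\leq C\epsilon^{f(\epsilon)+2\tau-3\widetilde C\varepsilon}$, which is exactly what the statement demands.

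The main obstacle is the scale-dependence of $f$: because $f$ is only piecewise linear and is allowed to vary substantially over the dyadic shells, one cannot naively extract it from the sum, and the uniform $1/2$-H\"older bound of \cite{AvilaHolderContinuityAbsolutely2011} is too coarse whenever $f(\epsilon)>1/2$. The modulus of continuity \eqref{error-f}, together with the sign $7/5-f(\epsilon)>0$ ensuring geometric convergence of the dyadic sum towards the $j=0$ scale, is precisely what allows the tail to be localised back to the single scale $\epsilon$; a secondary care point is to track the $\widetilde{C}\varepsilon$ corrections coming simultaneously from Lemma \ref{middle} and from the change of scale, so that their cumulative effect stays within the prescribed $3\widetilde{C}\varepsilon$ budget.
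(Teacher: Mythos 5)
Your proposal is correct in substance and reaches the stated bound, but it takes a genuinely different route from the paper. The paper splits the tail at $\epsilon^{1/3}$ (so the far region is disposed of crudely as $O(\epsilon^{4/3+2\tau})$) and then controls the middle region by summation by parts over the intrinsic scales $\tilde{\epsilon}(k)$, using the structural properties of $\psi$ from Lemma \ref{strictincrease} (monotonicity $(S.1)$ and the increment bound $(S.2)$) to telescope the sum and to compare $\psi(n)$ with $\psi(x)$ at the endpoint. You instead decompose the tail into dyadic shells $2^j\epsilon<|E'-E|\leq 2^{j+1}\epsilon$, bound each shell by the asymptotic upper bound $\mu(E-s,E+s)\leq 2s\,\Im M(E+\ii s)\leq Cs^{f(s)-\widetilde{C}\varepsilon}$ from \eqref{upperb-mu} and Lemma \ref{middle}, and pull $f(2^{j+1}\epsilon)$ back to $f(\epsilon)$ via the modulus of continuity \eqref{error-f} of Lemma \ref{in1}; the sign $7/5-f(\epsilon)\geq 2/5$ then makes the dyadic sum geometric toward the $j=0$ scale. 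This is arguably more elementary: it bypasses $(S.2)$ and the $\psi(n)$-versus-$\psi(x)$ comparison entirely, at the price of using \eqref{error-f} (which the paper proves but deploys only later, in Corollary \ref{lowerbound-mu}). Your exponent bookkeeping, the treatment of the near-diagonal part, the $O(\epsilon^{2+2\tau})$ contribution from scales beyond $\tilde{\epsilon}(K_4)$, and the absorption of the logarithmic number of shells into the $3\widetilde{C}\varepsilon$ slack are all sound.

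One technical point to repair: \eqref{error-f} is stated only for $0\leq\tau<1$, i.e.\ for $\delta_j<1/2$ in your parametrization, whereas your range of ``remaining indices'' lets $2^{j+1}\epsilon$ grow up to the fixed constant $\tilde{\epsilon}(K_4)$, where $\delta_j$ approaches $1$. For those largest shells you cannot invoke \eqref{error-f} as written. The fix is immediate: either insert an intermediate cutoff at, say, $|E'-E|\leq\epsilon^{1/3}$ (as the paper does, using only $f\geq 1/2$ and the kernel bound for the far region, which yields $O(\epsilon^{4/3+2\tau})$), or iterate \eqref{error-f} through an intermediate scale. With that adjustment the argument is complete.
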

    \begin{proof}
        Take $ \epsilon_*:=\min\{\pa{\tilde{\epsilon}(K_4)}^3,(\widetilde{C}\varepsilon)^2, (2C)^{-1/\widetilde{C}\varepsilon}\} $,   then for any $\epsilon\leq \epsilon_*$, one can apply Lemma \ref{middle}, consequently by   \eqref{upperb-mu}, we have        
        \begin{equation}\label{uppmu}
            \mu(E-\epsilon,E+\epsilon)\leq \epsilon\Im M(E+\ii \epsilon) \leq \epsilon^{f(\epsilon)-\widetilde{C}\varepsilon}.
 \end{equation}

 On the other hand,  by \eqref{MU}, one has
       \[
        \epsilon^{1+\tau}\Im M(E+\ii\epsilon^{1+\tau})=\int\frac{\epsilon^{2+2\tau}}{(E'-E)^2+\epsilon^{2+2\tau}}\dif \mu(E'),
       \] 
       we  then split the integral into $ I_1=\int_{|E'-E|\leq \epsilon} $, $ I_2=\int_{\epsilon\leq |E'-E|\leq \epsilon^{1/3}} $, and $ I_3=\int_{|E'-E|\geq \epsilon^{1/3}} $. Clearly  we have 
       \(
        I_1\leq \mu(E-\epsilon,E+\epsilon),
       \) 
       and 
       \(
        I_3\leq  C\epsilon^{4/3+2\tau}, 
       \) 
       thus the key is to estimate $I_2$. To do this, by $(S.1)$ of Lemma \ref{strictincrease},
      we can let $ m$ be the integer such that 
       \[
        \begin{aligned}
     \tilde{\epsilon}(m)&\geq \epsilon^{1/3}\geq \tilde{\epsilon}(m+1). 
               \end{aligned}
       \]
Meanwhile,  there exist integer $n$ and  $x\in [n-1,n]$ such that  
\begin{equation}\label{epn}
    (n-1)^{-\frac{\psi(n-1)}{2}} =\tilde{\epsilon}(n-1)\geq \epsilon=\tilde{\epsilon}(x)\geq \tilde{\epsilon}(n)=n^{-\frac{\psi(n)}{2}}.
\end{equation}

Note that $(S.2)$ of Lemma \ref{strictincrease} gives 
\begin{equation}\label{diffk} (k+1)^{\psi(k+1)}-k^{\psi(k)}\leq C k^{\psi(k)-1} \leq C k^{\psi(k)} \end{equation}
for $ k $ sufficiently large, 
which implies that
\begin{equation}\label{m-m}
    \tilde{\epsilon}(m+1) \geq c \tilde{\epsilon}(m) \geq c\epsilon^{1/3}.
\end{equation}

       Then we can estimate 
       $$
       \begin{aligned}
        I_2&=\int_{\epsilon\leq |E'-E|\leq \epsilon^{1/3}}\frac{\epsilon^{2+2\tau}}{(E'-E)^2+\epsilon^{2+2\tau}}\dif\mu(E')\\
        &\leq\sum_{k=m+1}^{n}\int_{\tilde{\epsilon}(k)\leq |E'-E|\leq \tilde{\epsilon}(k-1) }\frac{\epsilon^{2+2\tau}}{\epsilon^{2+2\tau} +\tilde{\epsilon}(k)^2}\dif\mu(E')\\
        &\leq \frac{\epsilon^{2+2\tau}}{\epsilon^{2+2\tau}+\tilde{\epsilon}(m+1)^2}\mu(E-\tilde{\epsilon}(m),E+\tilde{\epsilon}(m))\\
        &\quad+\sum_{k=m+1}^{n-1}\left[\frac{\epsilon^{2+2\tau}}{\epsilon^{2+2\tau}+\tilde{\epsilon}(k+1)^2}-\frac{\epsilon^{2+2\tau}}{\epsilon^{2+2\tau}+\tilde{\epsilon}(k)^2}\right]\mu(E-\tilde{\epsilon}(k),E+\tilde{\epsilon}(k))\\
        &\quad-\frac{\epsilon^{2+2\tau}}{\epsilon^{2+2\tau}+\tilde{\epsilon}(n)^2}\mu(E-\tilde{\epsilon}(n),E+\tilde{\epsilon}(n))\\
        &\leq C \epsilon^{4/3+2\tau}+\epsilon^{2+2\tau}\sum_{k=m+1}^{n-1}\left[\frac{(k+1)^{\psi(k+1)}-k^{\psi(k)}}{(\epsilon^{2+2\tau}(k+1)^{\psi(k+1)}+1)(\epsilon^{2+2\tau}k^{\psi(k)}+1)}\right]k^{-\frac{\psi(k)}{2}(\frac{2}{\psi(k)}-\widetilde{C}\varepsilon)}\\
        &\leq C\epsilon^{4/3+2\tau}+ \sum_{k=m+1}^{n-1}C\epsilon^{2+2\tau}k^{\psi(k)-2+2\widetilde{C}\varepsilon}
      \end{aligned}$$
        where the third inequality follows by \eqref{uppmu},\eqref{m-m}, and the last inequality holds by
        \eqref{diffk}. 
        
   Meanwhile, by \eqref{deltak-k+1},  if we write $ k+1=k^{1+\delta} $, we have
        \[
            \delta=O\left(\frac{1}{k\ln k}\right)\leq 3/2,
        \]
        and thus by Lemma \ref{strictincrease},  
        \[
            \Delta(k,k+1)>(2-C\varepsilon^2)\delta>(2-2\widetilde{C}\varepsilon)\delta, 
        \] 
        which is equivalent to 
        \[
            (k+1)^{\psi(k+1)-2+2\widetilde{C}\varepsilon}>k^{\psi(k)-2+2\widetilde{C}\varepsilon},
        \] 
thus one can estimate
  $$ 
        I_2\leq C\epsilon^{4/3+2\tau}+C\epsilon^{2+2\tau}(n-1)^{\psi(n-1)-1+2\widetilde{C}\varepsilon} .$$
        
        By our selection of $x$ that $x\in [n-1,n]$ which satisfy \eqref{epn}. One can   
        write $ n=x^{1+\delta_1}=(n-1)^{1+\delta_2} $ with $ \delta_1<\delta_2 $. Noting $\psi(\cdot)\in [2,4]$, then by  \eqref{deltak-k+1} and Lemma \ref{strictincrease}, we have 
       \[
        \left|\frac{2}{\psi(n)}-\frac{2}{\psi(x)}\right|<\frac{1}{2}|\psi(n)-\psi(x)|<5\delta_2<\frac{C}{n\ln n}<\epsilon^{1/2}<\widetilde{C}\varepsilon,
       \] 
       where the fourth inequality follows by \eqref{epn} and the last inequality follows from the selection of $\epsilon_*$. 
       Finally, again  by \eqref{epn}, we can further estimate
       $$ \begin{aligned}
        I_2 &\leq  
         C\epsilon^{4/3+2\tau}+ C\epsilon^{2+2\tau}\epsilon^{-2}\epsilon^{(1-2\widetilde{C}\varepsilon)\frac{2}{\psi(n)}} \\
        &\leq C\epsilon^{4/3+2\tau}+ C\epsilon^{\frac{2}{\psi(x)}+2\tau-3\widetilde{C}\varepsilon}
    \end{aligned}$$
    the result follows. 
    \end{proof}

As a consequence, we have the following corollary, which finish the whole proof of Proposition \ref{THM5.1}:

 \begin{corollary}\label{lowerbound-mu}
        For any $ \varepsilon>0 $,  if $ \epsilon\leq \epsilon_*(\tilde{\epsilon}(K_4),\varepsilon)$, we have
        \[
            \epsilon^{f(\epsilon)+32\widetilde{C}\varepsilon}\leq \mu(E-\epsilon,E+\epsilon)\leq \epsilon^{f(\epsilon)-\widetilde{C}\varepsilon}.
        \] 
    \end{corollary}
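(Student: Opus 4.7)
The upper bound is essentially immediate: the standard inequality \eqref{upperb-mu} gives $\mu(E-\epsilon,E+\epsilon)\leq 2\epsilon\Im M(E+\ii\epsilon)$, and then Lemma \ref{middle} yields $\mu(E-\epsilon,E+\epsilon)\leq \epsilon^{f(\epsilon)-\widetilde C\varepsilon}$ after absorbing the factor $2$ into the small loss in the exponent (by shrinking $\epsilon_*$ if necessary). So the entire content of the corollary is the lower bound.

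The plan for the lower bound is to combine Lemma \ref{mu2} with Lemma \ref{middle} and balance a free parameter $\tau>0$. Write $A(\tau):=(1+\tau)\bigl(f(\epsilon^{1+\tau})+\widetilde C\varepsilon\bigr)$ and $B(\tau):=f(\epsilon)+2\tau-3\widetilde C\varepsilon$. From Lemma \ref{middle} applied at the dislocated scale $\epsilon^{1+\tau}$ we get $\epsilon^{1+\tau}\Im M(E+\ii \epsilon^{1+\tau})\geq \epsilon^{A(\tau)}$, and Lemma \ref{mu2} then yields
\[
\mu(E-\epsilon,E+\epsilon)\geq \epsilon^{A(\tau)}-C\epsilon^{B(\tau)}.
\]
If we can arrange $B(\tau)>A(\tau)+c\varepsilon$ for some positive $c$, then the first term dominates and we obtain $\mu(E-\epsilon,E+\epsilon)\geq \tfrac12\epsilon^{A(\tau)}$ for $\epsilon$ small enough, which is the desired lower bound once we show $A(\tau)\leq f(\epsilon)+32\widetilde C\varepsilon$.

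The key inputs for the balance are (i) the modulus-of-continuity estimate $f(\epsilon^{1+\tau})-f(\epsilon)\leq \tfrac35\tau$ from \eqref{error-f} of Lemma \ref{in1}, and (ii) the bound $f(\epsilon)\leq 1$. Using these,
\[
A(\tau)\leq f(\epsilon)+\tau f(\epsilon)+\tfrac35\tau(1+\tau)+(1+\tau)\widetilde C\varepsilon,\qquad B(\tau)-A(\tau)\geq \tau\bigl(\tfrac25-\tfrac35\tau\bigr)-5\widetilde C\varepsilon,
\]
so the natural choice is $\tau=c_0\widetilde C\varepsilon$ for a numerical constant $c_0$ (say $c_0=20$), which makes $B(\tau)-A(\tau)\geq c_1\widetilde C\varepsilon>0$ provided $\varepsilon$ is sufficiently small, and simultaneously gives $A(\tau)\leq f(\epsilon)+(c_0+\tfrac35 c_0+1)\widetilde C\varepsilon+O((\widetilde C\varepsilon)^2)\leq f(\epsilon)+32\widetilde C\varepsilon$ after absorbing the quadratic correction into the constant by shrinking $\epsilon_*$. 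With this $\tau$, the above display becomes $\mu(E-\epsilon,E+\epsilon)\geq \tfrac12\epsilon^{A(\tau)}\geq \epsilon^{f(\epsilon)+32\widetilde C\varepsilon}$ (again absorbing the factor $\tfrac12$ into the exponent loss), completing the proof.

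The main technical point, and really the only nontrivial one now that Lemma \ref{mu2} and Lemma \ref{middle} are in hand, is verifying that the modulus estimate \eqref{error-f} is strong enough to make $A(\tau)$ outrun the error $B(\tau)$ for a choice of $\tau$ that is only linearly small in $\varepsilon$; I expect this to be routine but needs careful bookkeeping of constants, since the constant $32$ is chosen to absorb $c_0+\tfrac35 c_0+1$ plus a safety margin for the quadratic terms and the factor $2$ from $C^{-1}\geq\tfrac12$.
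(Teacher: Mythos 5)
Your proposal is correct and follows essentially the same route as the paper: the paper also gets the upper bound from \eqref{uppmu}, applies Lemma \ref{mu2} together with Lemma \ref{middle} at the dislocated scale $\epsilon^{1+\tau}$, and uses \eqref{error-f} to verify $(1+\tau)\bigl(f(\epsilon^{1+\tau})+\widetilde{C}\varepsilon\bigr)+\widetilde{C}\varepsilon\leq f(\epsilon)+2\tau-3\widetilde{C}\varepsilon$, with the specific choice $\tau=16\widetilde{C}\varepsilon$. The only adjustment needed is your choice $c_0=20$, whose leading term $(c_0+\tfrac{3}{5}c_0+1)\widetilde{C}\varepsilon=33\widetilde{C}\varepsilon$ already exceeds the stated $32\widetilde{C}\varepsilon$; taking $c_0=16$ as in the paper (or any slightly smaller value) keeps both the domination $B(\tau)>A(\tau)$ and the constant $32$.
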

    \begin{proof}
In view of \eqref{uppmu}, we only prove the lower bound.  Note by \eqref{error-f} of Lemma \ref{in1}, if $\tau \in [16\widetilde{C}\varepsilon, \frac{1}{24}]$, then we have estimate
\[
            (1+\tau)(f(\epsilon^{1+\tau})+\widetilde{C}\varepsilon) +\widetilde{C}\varepsilon\leq f(\epsilon)+2\tau-3\widetilde{C}\varepsilon.
        \] 
Once we have this, just take $\tau =16\widetilde{C}\varepsilon$, and apply Lemma \ref{mu2}, we have  estimate
        \begin{align*}
                \mu(E-\epsilon,E+\epsilon)&\geq \epsilon^{1+\tau}\Im M(E+\ii\epsilon^{1+\tau})-O(\epsilon^{f(\epsilon)+2\tau-3\widetilde{C}\varepsilon}) \\
                &\geq  \epsilon^{(1+\tau)(f(\epsilon^{1+\tau})+\widetilde{C}\varepsilon)}-O(\epsilon^{f(\epsilon)+2\tau-3\widetilde{C}\varepsilon})\\
                &\geq \epsilon^{f(\epsilon)+32\widetilde{C}\varepsilon}. 
                \qedhere
            \end{align*}
\end{proof}

\section{Proof of the Main Theorem}\label{Sec6}

\subsection{Exact local distribution for reducible energies}

 We want to emphasize  that  Proposition \ref{THM5.1} actually gives the mechanism of getting the local  distribution of the spectral measure. In our subsequent discussion, we will focus on the most elementary scenario, where the Schrödinger cocycle is reducible.

\begin{proposition}\label{Prop:finite-resonnace}
  We have the following:
      \begin{enumerate}
          \item If $ (\alpha,S_E^v) $ is conjugated to $\begin{pmatrix}
          \cos 2\pi\rho & -\sin 2\pi\rho\\
          \sin 2\pi\rho &\cos 2\pi\rho
          \end{pmatrix}$. Then for every $\e>0$, there exists $ \epsilon_*=\epsilon_*(\e)>0$ such that  for $ \epsilon\leq\epsilon_*  $ we have        
          \[
            C^{-1}\epsilon^{1+\e} \leq \mu(E-\epsilon,E+\epsilon)\leq C\epsilon. 
        \] 
        \item If $ (\alpha,S_E^v) $ is conjugated to $\begin{pmatrix}
          1 & \nu\\
          0 &1
          \end{pmatrix}$ with $\nu\neq 0$. Then for every $\e>0$, there exists $ \epsilon_*=\epsilon_*(\e)>0$ such that  for $ \epsilon\leq\epsilon_*  $ we have        
          \[
            C^{-1}\epsilon^{\frac{1}{2}+\e}\leq \mu(E-\epsilon,E+\epsilon)\leq C\epsilon^{\frac{1}{2}}.
        \] 
        
      \end{enumerate}
    \end{proposition}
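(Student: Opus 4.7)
The plan is to specialize the criterion developed in Section \ref{Sec5} to the fully reduced setting, where Proposition \ref{Prop:AsyFormula-2} directly supplies clean two-sided asymptotics for $P_{(k),\pm}$ with no resonance bookkeeping required. In case (1), we have $\|P_{(k),\pm}^{-1}\|^{-1}\thickapprox k$ and $\det P_{(k),\pm}\thickapprox k^2$; in case (2), $\|P_{(k),\pm}^{-1}\|^{-1}\thickapprox k$ but $\det P_{(k),\pm}\thickapprox k^4$. These correspond, respectively, to the constant choices $\psi\equiv 2$ (so $f(\epsilon)\equiv 1$) and $\psi\equiv 4$ (so $f(\epsilon)\equiv 1/2$) in the framework of Proposition \ref{THM5.1}. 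Defining $\epsilon(k)$ by $\det P_{(k),+}=\epsilon(k)^{-2}$ yields $\epsilon(k)\thickapprox k^{-1}$ in case (1) and $\epsilon(k)\thickapprox k^{-2}$ in case (2).

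For the upper bounds I combine \eqref{upp} of Corollary \ref{Corspecmeas} with these asymptotics to get $\epsilon(k)\Im M(E+\ii\epsilon(k))\leq C\|P_{(k),+}^{-1}\|\thickapprox C/k$, which translates to $\epsilon\Im M(E+\ii\epsilon)\leq C\epsilon$ in case (1) and $\leq C\sqrt{\epsilon}$ in case (2); one extends from the discrete grid $\{\epsilon(k)\}$ to all small $\epsilon$ by monotonicity of $\epsilon^{-1}\Im M$ in $\epsilon$, exactly as in Proposition \ref{THM5.4}. The inequality \eqref{upperb-mu} then immediately delivers $\mu(E-\epsilon,E+\epsilon)\leq C\epsilon$ (resp.\ $C\sqrt{\epsilon}$). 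For the matching lower bound on $\Im M$, I invoke \eqref{low} of Corollary \ref{Corspecmeas} together with the second claim of Proposition \ref{Prop:AsyFormula-2}, which provides $\beta$ with $\|u_{\beta}^\pm\|_{2k}^2\thickapprox k$; this yields $\epsilon\Im M(E+\ii\epsilon)\gtrsim \epsilon$ in case (1) and $\gtrsim \sqrt{\epsilon}$ in case (2).

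To promote this lower bound on $\Im M$ to a lower bound on the spectral measure, I would apply the dislocation argument of Lemma \ref{mu2}, namely
\[
\mu(E-\epsilon,E+\epsilon)\geq \epsilon^{1+\tau}\Im M(E+\ii\epsilon^{1+\tau})-O\bigl(\epsilon^{f(\epsilon)+2\tau-3\widetilde{C}\varepsilon}\bigr).
\]
With $f\equiv 1$ (resp.\ $f\equiv 1/2$) and $\tau=c\varepsilon$, the main term is bounded below by $c\epsilon^{1+\tau}$ (resp.\ $c\epsilon^{(1+\tau)/2}$), while the error is $O(\epsilon^{1+2\tau-O(\varepsilon)})$ (resp.\ $O(\epsilon^{1/2+2\tau-O(\varepsilon)})$), hence strictly dominated; this produces the claimed $\epsilon^{1+\varepsilon}$ and $\epsilon^{1/2+\varepsilon}$ lower bounds. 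The main technical point — more bookkeeping than genuine obstacle — is verifying that the split $I_1+I_2+I_3$ in the proof of Lemma \ref{mu2} still goes through with constant $f$; this is automatic, since the monotonicity and continuity properties of $\psi$ invoked there are trivially satisfied when $\psi$ is constant, and the upper bound on $\mu$ needed to control $I_2$ has just been established. No new cocycle-theoretic input is required beyond Lemmas \ref{lem4.3inAJ} and \ref{P_kX}, which already underlie Proposition \ref{Prop:AsyFormula-2}.
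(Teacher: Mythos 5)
Your proposal is correct and follows essentially the same route as the paper: two-sided asymptotics $\epsilon\Im M(E+\ii\epsilon)\thickapprox\epsilon$ (resp.\ $\epsilon^{1/2}$) from Proposition \ref{Prop:AsyFormula-2}, Corollary \ref{Corspecmeas} and monotonicity, the upper bound on $\mu$ via \eqref{upperb-mu}, and the dislocation/split-integral argument for the lower bound. The only caveat is that Lemma \ref{mu2} cannot be cited verbatim (it is stated under the hypotheses of Proposition \ref{THM5.1}), so, as you yourself note, its proof must be rerun with constant exponent --- which is exactly what the paper does, carrying out the $I_1+I_2+I_3$ estimate on the explicit grid $\epsilon(k)\thickapprox k^{-2}$ (resp.\ $k^{-1}$).
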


  \begin{proof}
  We only prove the item (2), as the case (1) is similar. The strategy of the proof is same as Proposition \ref{THM5.1}, now instead of Lemma \ref{upperb}, we have the following estimate:
    \begin{lemma}\label{lem:finite-M}
        There exists $\epsilon_*>0$ such that for all $ \epsilon\leq \epsilon_* $, we have 
        \[
              \epsilon\Im M(E+\ii \epsilon) \thickapprox \epsilon^{\frac{1}{2}}.
        \] 
    \end{lemma}
    \begin{proof}
    By  Proposition \ref{Prop:AsyFormula-2}, there exists $K_0>0$ and $\beta\in(-\pi/2,\pi/2]$, such that for all $k\geq K_0$, we have
        \begin{align}
 \label{rela1-2}  
 \det P_{(k),\pm} 
 &\thickapprox k^{4},\\
 \label{rela2-2}       
 \|P_{(k),\pm}^{-1}\|^{-1}
 &\thickapprox  k,\\
 \label{rela3-2}   
\|u_{\beta}^\pm\|_{2k}^{2}
 &\thickapprox k.
            \end{align}  
We  select $ \epsilon(k)$ such that $ \det P_{(k),+}=\frac{1}{ \epsilon(k)^2} $,
  and then
  select  $ 2k_1 $ to be the closest even number of $ L^-( \epsilon(k)) $ with $ L^-( \epsilon(k))\leq 2k_1 $,
        where we recall that for any given $\epsilon$,  $L^{-}(\epsilon)$ is given by 
         $$ 
         ||| K^-(E)|||_{L^{-}}^2=\frac{1}{\epsilon^2}.$$
    Take $K_1\geq K_0 $ be minimal such that $k_1\geq K_0$, provided $k\geq K_1$.
      By \eqref{rela1-2},  we have 
           $$  k_1^4 \thickapprox \frac{1}{ \epsilon(k)^2}\thickapprox k^4,$$
 hence $k_1\thickapprox k$, $\epsilon(k)\thickapprox k^{-2}$. Then as a consequence of   \eqref{rela1-2}, \eqref{rela2-2}, \eqref{rela3-2}, we have 
        \[
            \|P_{(k),+}^{-1}\|\lesssim k^{-1}\lesssim \epsilon(k)^{\frac{1}{2}},
        \] 
        \[
            \|u_{\beta}^+\|_{2k}^{-2}\gtrsim  k^{-1}\gtrsim \epsilon(k)^{\frac{1}{2}},
        \] 
        \[
            \|u_{\beta}^-\|_{L^-( \epsilon(k))}^{-2}\geq \|u_{\beta}^-\|_{2k_1}^{-2}\gtrsim  k^{-1}\gtrsim \epsilon(k)^{\frac{1}{2}}.
        \] 
       By Corollary \ref{Corspecmeas}, one has for any $k\geq K_1 $,
        \[
            \epsilon(k) \Im M(E+\ii \epsilon(k))\thickapprox \epsilon(k)^{\frac{1}{2}}.
        \] 
        Following the same line as in the proof of Lemma \ref{upperb}, 
        noting $ \epsilon(k+1)\geq c \epsilon(k) $, and $ \frac{1}{\epsilon}\Im M(E+\ii\epsilon) $ is monotonic in $ \epsilon $, it follows that  
        \[
            \epsilon\Im M(E+\ii \epsilon) \thickapprox \epsilon^{\frac{1}{2}}.
        \] 
     for any  $\epsilon\leq \epsilon_*=\epsilon(K_1)$, we thus finish the proof.
    \end{proof}

      Now let $ \epsilon\leq \epsilon_*(K_1)^3 $ and small enough, then one can apply Lemma \ref{lem:finite-M}, consequently by   \eqref{upperb-mu}, we have        
        \begin{equation}\label{uppmu-2}
            \mu(E-\epsilon,E+\epsilon)\lesssim \epsilon^{\frac{1}{2}}.
 \end{equation}
 In the following, we estimate the lower bound. 
        As before, we split the integral $$  \epsilon^{1+\tau}\Im M(E+\ii\epsilon^{1+\tau})=\int\frac{\epsilon^{2+2\tau}}{(E'-E)^2+\epsilon^{2+2\tau}}\, d\mu (E'):=I$$ into three parts: $ I_1=\int_{|E'-E|\leq \epsilon} $, $ I_2=\int_{\epsilon\leq |E'-E|\leq \epsilon^{1/3}} $, and $ I_3=\int_{|E'-E|\geq \epsilon^{1/3}} $, where 
       \(
        I_1\leq \mu(E-\epsilon,E+\epsilon),
       \) 
       and 
       \(
        I_3\lesssim \epsilon^{4/3+2\tau}. 
       \) 
       Let $ m,n $ be the integer such that $$m^{-2}\geq \epsilon^{1/3}\geq (m+1)^{-2},\qquad (n-1)^{-2}\geq \epsilon\geq n^{-2}.$$     
       We now estimate $I_2$ as follows: 
       $$
       \begin{aligned}
        I_2=&\int_{\epsilon\leq |E'-E|\leq \epsilon^{1/3}}\frac{\epsilon^{2+2\tau}}{(E'-E)^2+\epsilon^{2+2\tau}}\dif\mu(E')
        \leq\sum_{k=m+1}^{n}\int_{k^{-2}\leq |E'-E|\leq(k-1)^{-2}}\frac{\epsilon^{2+2\tau}}{\epsilon^{2+2\tau} +k^{-4}}\dif\mu(E')\\
        &\leq \frac{\epsilon^{2+2\tau}}{\epsilon^{2+2\tau}+(m+1)^{-4}}\mu(E-m^{-2},E+m^{-2})\\
        &\quad+\sum_{k=m+1}^{n-1}\left[\frac{\epsilon^{2+2\tau}}{\epsilon^{2+2\tau}+(k+1)^{-4}}-\frac{\epsilon^{2+2\tau}}{\epsilon^{2+2\tau}+k^{-4}}\right]\mu(E-k^{-2},E+k^{-2})\\
        &\quad-\frac{\epsilon^{2+2\tau}}{\epsilon^{2+2\tau}+n^{-4}}\mu(E-n^{-2},E+n^{-2})\\        
        &\lesssim \epsilon^{4/3+2\tau}+\epsilon^{2+2\tau}\sum_{k=m+1}^{n-1}\left[(k+1)^{4}-k^{4}\right]k^{-1}\\
        &\lesssim \epsilon^{4/3+2\tau}+\sum_{k=m+1}^{n-1}\epsilon^{2+2\tau}k^{2}\\
        &\leq \epsilon^{4/3+2\tau}+\epsilon^{2+2\tau}(n-1)^{3} \\
        &\leq\epsilon^{4/3+2\tau}+ \epsilon^{\frac{1}{2}+2\tau}.
      \end{aligned}$$

   Once we have this, again by Lemma \ref{lem:finite-M}, we have 
   \begin{eqnarray*}
       \mu(E-\epsilon,E+\epsilon)\geq\epsilon^{1+\tau}\Im M(E+\ii\epsilon^{1+\tau})-O(\epsilon^{\frac{1}{2}+2\tau})  \gtrsim \epsilon^{\frac{1}{2}+\frac{\tau}{2}}. 
   \end{eqnarray*} Take $\tau=2\e,$ the result follows.
      \end{proof}

\subsection{Proof of Theorem \ref{Thm:generalholder}:}

We first recall the following version of quantitative almost reducibility:

    \begin{proposition}\label{GYZ-general-good-THM}\label{GYZ-general-good-THM-multi}
        Let $ \alpha\in \op{DC}_d $, $A\in C^\omega(\T^d,\op{SL}(2,\R))$, suppose that $(\alpha,A)$ is almost reducible in the band $\{\theta:|\Im \theta|<h\}$ and is not uniformly hyperbolic.
        Then for any $ \varepsilon>0 $,  there exist $ (n_s)_{s\geq 1}\subset \Z^d $ and $s_0>0$ such that 
    
         \begin{itemize}
         \item  $(\alpha,A)$ is $ (C_0'(\alpha,A,\varepsilon),C_0(\alpha),c_0\varepsilon,(n_s)_{s\geq s_0}) $-good.
         \item If $(n_s)_{s\geq 1}$ is infinite, then for $s\geq s_0$,
             \begin{eqnarray}\label{decay}
                 -\dfrac{\ln |\nu_{n}|}{n}&=&\zeta_{n}\in[(1-\e)2\pi h,\infty].
             \end{eqnarray}
         \end{itemize}   
     \end{proposition}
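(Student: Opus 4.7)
The plan is to derive this from a resonance-adapted KAM iteration built on the almost-reducibility hypothesis. By assumption, there exist $B_j\in C_h^\omega(\T^d,\op{PSL}(2,\R))$, $A_j\in \op{SL}(2,\R)$ and $f_j\in C_h^\omega(\T^d,\op{sl}(2,\R))$ with $\|f_j\|_h\to 0$ and $B_j^{-1}(\cdot+\alpha)A(\cdot)B_j(\cdot)=A_j e^{f_j}$. Once $\|f_j\|_h$ is small enough, the non-uniform-hyperbolicity hypothesis forces $A_j$ (via the Caley element $M$ of Section \ref{Lie-isomorphism}) to be conjugate in $\op{SU}(1,1)$ to an element of the form $\exp\begin{pmatrix}\ii t_j & z_j\\ \bar z_j & -\ii t_j\end{pmatrix}$ with $t_j^2\geq|z_j|^2$, and the fibered rotation number is related to $\sqrt{t_j^2-|z_j|^2}$ modulo the degrees of the conjugacies already used. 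The strategy is to run the Eliasson-type KAM scheme in these $\op{SU}(1,1)$-coordinates on $(\alpha,A_j e^{f_j})$, organizing the iteration according to the resonances of $\rho(\alpha,A)$ with respect to $\alpha$.

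At each inductive step, I would solve the homological equation truncated at a suitably chosen Fourier radius. Modes $k$ satisfying $\|2\rho-\langle k,\alpha\rangle\|_\T\geq e^{-c_0\varepsilon|k|}$ are eliminated directly, consuming a controlled slice of the analyticity band; modes $k$ for which this fails are declared resonant. When the first resonant $n_s$ is reached, one applies a rotational conjugacy of degree $n_s$ which, by \eqref{degree}, shifts $2\rho$ by $\langle n_s,\alpha\rangle$ to the near-zero regime, restoring the elliptic normal form \eqref{alre}; then the iteration restarts. Labelling the resonant scales by $(n_s)_{s\geq 1}$ yields the desired sequence. The polynomial bound $\|B_{n_s}\|_0\leq C'_0|n_s|^{C_0}$ arises because at each resonant step the new conjugacy has degree $|n_s|$ but contributes only a logarithmic loss of analyticity, while non-resonant steps contribute geometrically convergent factors. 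The remainder estimate $\|F_{n_s}\|_0\leq e^{-c_0\varepsilon|n_{s+1}|}$ is obtained by iterating non-resonantly between $n_s$ and $n_{s+1}$ until the next cohomological step becomes obstructed at scale $|n_{s+1}|$; the Diophantine repulsion \eqref{exp-repulusion} guarantees enough room for this.

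The main obstacle is the sharp lower bound $\zeta_n\geq (1-\varepsilon)2\pi h$ on the off-diagonal term. After Lemma \ref{lem:quanti-Schurlem} and the Caley isomorphism, $|\nu_n|$ is essentially $|z_n|$ in \eqref{alre}, which equals (up to controlled errors) the $n_s$-th Fourier coefficient of the $\op{su}(1,1)$-component of $f_j$, viewed in the strip of width $h'=(1-\varepsilon)h$ that survives the cumulative analyticity loss. The Cauchy estimate immediately gives the upper bound $|\nu_n|\leq\|f_j\|_{h'}\, e^{-2\pi h'|n_s|}$, i.e.\ $\zeta_n\geq (1-\varepsilon)2\pi h$. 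The technical subtlety is to make sure that the KAM bookkeeping really preserves the band up to a factor $(1-\varepsilon)$, rather than losing half of it in the usual Kolmogorov fashion: this forces one to use slowly shrinking cut-offs (at rate $\sim (\ln n)/n$ per resonant step, and $\sim c_0\varepsilon$ distributed over non-resonant steps) rather than the customary geometric halving. This is precisely the content of the structured quantitative almost-reducibility developed in \cite{Universal}, and it is what makes the exponent $2\pi h$ appear sharply in $\zeta_n$. Once these choices are in place, the two bullets of the proposition follow from the standard convergence of the scheme, and Theorem \ref{Thm:generalholder} is then obtained by feeding the output into Proposition \ref{THM5.4}.
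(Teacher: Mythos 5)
Your overall strategy is the right one in spirit, but as written it re-proves the black box rather than the proposition, and the two places where the paper's actual argument has content are exactly the places your sketch is soft. The paper's proof is a short reduction: using non-uniform hyperbolicity it first conjugates $(\alpha,A)$ by some $B_\e$ so that the resulting cocycle is within the threshold $\mathfrak{c}(\alpha,R_\e,h,(1-\tfrac{\e}{2})h)$ of a rotation $R_\e\in\op{SO}(2,\R)$ (this step is quoted from \cite{yzhou}, and the compactness of $\op{SO}(2,\R)$ is what makes $\mathfrak{c}$ uniform in $R_\e$); it then applies the ready-made structured estimates of Proposition \ref{reducibility-stru-1} and transfers them, setting $n_s=\deg B_\e B_s$ and using \eqref{es5} to get $|n_s-\mathfrak{m}_s|=o(1)\mathfrak{m}_s$ so that the rates in \eqref{es0}, \eqref{es2}, \eqref{es4} become the goodness constants and the bound \eqref{decay}. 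Your sketch skips the first step and replaces it with the assertion that non-uniform hyperbolicity forces the constants $A_j$ in the almost-reducibility sequence to be elliptic/parabolic; that is not true as stated (only the hyperbolicity of $A_j$ relative to $\|f_j\|$ is constrained, and $\|A_j\|$ may be unbounded along the sequence), and without controlling the constant part the phrase ``once $\|f_j\|_h$ is small enough'' is not meaningful, because the KAM smallness threshold depends on the constant.

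The second soft spot is the decisive estimate $\zeta_n\geq(1-\e)2\pi h$. You justify it by identifying $|\nu_n|$ with the $n_s$-th Fourier coefficient of the perturbation in a strip of width $(1-\e)h$ and applying a Cauchy estimate. In the actual scheme $\nu_n$ is not a Fourier coefficient of $f_j$: it is produced by the resonant truncation/rotation step, and the content of \eqref{es2}, namely $|\nu^s|\leq 2\epsilon_{i_s}^{15/16}e^{-2\pi|\mathfrak{m}_s|\tilde h}$ with only an $\e$-loss of band accumulated over all previous steps, together with the polynomial bound \eqref{es4} on $\|B_s\|_0$, is precisely the nontrivial theorem of \cite{Universal}. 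You acknowledge this (``this is precisely the content of the structured quantitative almost-reducibility developed in \cite{Universal}''), but then the proposal is circular as a proof: everything that is not bookkeeping is deferred to the result you would need to establish. If you intend to argue as the paper does, the honest route is to quote Proposition \ref{reducibility-stru-1} explicitly and supply the missing bookkeeping: the preliminary $\op{SO}(2,\R)$-conjugation with uniform threshold, the degree comparison $n_s=\deg B_\e+\deg B_s$ versus the KAM resonances $\mathfrak{m}_s$ (needed so that exponential rates in $|\mathfrak{m}_s|$ translate into rates in $|n_s|$), and the verification that \eqref{es0} and \eqref{es4} yield the $(C_0',C_0,c_0\e,(n_s)_{s\geq s_0})$-goodness in the sense of \eqref{alre}.
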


\begin{proof} 
The proof is essentially contained in \cite{Universal}, we give a sketch. 
Indeed, since $(\alpha,A)$ is almost reducible in the band $\{\theta : |\Im \theta|< h\}$ and is not uniformly hyperbolic, then for any $\e>0$, there exist $B_\e\in C^\omega_{h'}(\T,\op{SL}(2,\R))$ and $R_\e\in\op{SO}(2,\R)$ (see \cite{yzhou} for details)  such that $$
\|B_\e(\theta+\alpha)^{-1}A(\theta)B_\e(\theta)-R_\e\|_{h'}<\mathfrak{c}=\mathfrak{c}(\alpha,R_\e, h, (1-\tfrac{\e}{2})h).
$$ Indeed, $ R_\e $ varies in $ \mathrm{SO}(2,\R) $, one can just take 
$\mathfrak{c}(\alpha,R_\e, h, (1-\frac{\e}{2})h)$ to be  uniform with respect to $ R_\e $. Then one can apply Proposition \ref{reducibility-stru-1} to $(\alpha,\tilde{A})$, where $\tilde{A}(\theta)=B_\e(\theta+\alpha)^{-1}A(\theta)B_\e(\theta)$. 

Consequently, we have 
 $$
    B_s^{-1}(\theta+\alpha)\tilde{A}(\theta)B_s(\theta)=A_se^{f_s(\theta)}=M^{-1}exp \left(
    \begin{array}{ccc}
     i t^s &  \nu^s\\
    \bar{\nu}^s &  -i t^s
     \end{array}\right)Me^{f_s(\theta)}
    $$
    with estimates \eqref{es2}-\eqref{es5} hold. 
    Let $n_s=\deg B_\e B_s$, \eqref{es5} imply that 
    \begin{equation}\label{nmres}
        |n_s- \mathfrak{m}_{s}| \leq |n_s-\deg B_s |+ |\deg B_s-\mathfrak{m}_{s} | \leq o(1) \mathfrak{m}_{s}
    \end{equation}
    where $\mathfrak{m}_{s} $ are the KAM resonances defined in Proposition \ref{reducibility-stru-1}. 
 Combining with \eqref{es0}, \eqref{es4}, these  imply there exists $s_0>0$ such that 
    $(\alpha,A)$ is $ (C_0'(\alpha,A,\e),C_0(\alpha),c_0\e,(n_s)_{s\geq s_0}) $-good. Furthermore, \eqref{decay} follows from \eqref{es2} and \eqref{nmres}. 
\end{proof}
For $E\in\Sigma^-_{v,\alpha}$ with subcritical radius $h$, by Avila's solution to his Almost Reducibility Conjecture \cite{Avi2023KAM}, $(\alpha,S_E^v)$ is almost reducible in the band $\{\theta:|\Im \theta|<h'\}$ for any $0<h'<h$ (and is not uniformly hyperbolic). On the other hand, by Lemma \ref{relation1} and \eqref{N=rho}, $2\pi h\leq \delta<\infty$ means the sequence $(n_s)$ is infinite (see Section 6.3 for details).
Then (1) follows from Proposition \ref{GYZ-general-good-THM} and Proposition \ref{THM5.4}.  For the case $(2)$, 
choose $\e$ small enough such that $\delta<(1-\tfrac{\e}{2})h$, again by \cite{Avi2023KAM},  $(\alpha,A)$ is almost reducible in the band $\{\theta : |\Im \theta|<(1-\frac{\e}{2})h\} $. More precisely, 
there exist $B_\e\in C^\omega_{(1-\frac{\e}{2})h}(\T,\op{SL}(2,\R))$ and $R_\e\in\op{SO}(2,\R)$ such that $$\|B_\e(\theta+\alpha)^{-1}A(\theta)B_\e(\theta)-R_\e\|_{h'}<c_*(\alpha,R_\e, (1-\tfrac{\e}{2})h, (1-\e)h),$$  where $c_*$ is defined in Theorem \ref{reducibility-main}, then  $(\alpha,A)$ is reducible, and $(2)$ follows from Proposition \ref{Prop:finite-resonnace} (1).
For case $(3)$, this is already proved by \cite{AvilaHolderContinuityAbsolutely2011}. \qed

\begin{remark}
    If $d\geq 2$, we can apply Proposition \ref{GYZ-general-good-THM-multi} to Theorem 4.1 of \cite{AvilaHolderContinuityAbsolutely2011}, and conclude  \[
            \mu_{v,\alpha,\theta}(E-\epsilon,E+\epsilon)\leq C\epsilon^{\frac{1}{2}}.
        \]
\end{remark}

\subsection{Proof of Theorem \ref{MainTHM}}
As direct consequence of Proposition \ref{GYZ-general-good-THM}, we have the following:

    \begin{corollary}\label{GYZ-AMO-THM}
        Let $ \alpha\in\mathrm{DC}_1 $, $ 0<\lambda<1 $, $E\in\Sigma_{\lambda,\alpha}$. 
        Then for any $ \varepsilon>0 $, there exists $(n_s)_{s\geq 1}\subset \Z$ and  $ s_0>0 $ such that 
    
         \begin{enumerate}
         \item  $(\alpha,S_E^\lambda)$ is $ (C_0'(\lambda,\alpha,\varepsilon),C_0(\alpha),c_0\varepsilon,(n_s)_{s\geq s_0}) $-good.
         \item If $(n_s)_{s\geq 1}$ is infinite, then for $s\geq s_0$,
             $$ -\dfrac{\ln |\nu_{n}|}{n}=\zeta_{n}\in[-(1-\varepsilon)\ln \lambda,\infty],$$
        \item  $$
            \zeta_{n}\leq -(1+\varepsilon) \ln\lambda, \text{\qquad if\quad}\widehat{\eta}_{n} \geq -(1+\varepsilon) \ln\lambda. $$
         \end{enumerate}
     \end{corollary}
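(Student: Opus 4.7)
The corollary follows by specializing Proposition \ref{GYZ-general-good-THM} to the almost Mathieu cocycle and invoking the sharp AMO-specific structure from \cite{Universal}.

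\textbf{Identifying the correct band.} For $v(\theta)=2\lambda\cos(2\pi\theta)$ with $0<\lambda<1$, Avila's global theory \cite{avila2015global} (via the quantized acceleration) shows that $(\alpha, S_E^\lambda)$ is subcritical on the entire spectrum $\Sigma_{\lambda,\alpha}$ with subcritical radius exactly $h = -\ln\lambda/(2\pi)$, so that $2\pi h = -\ln\lambda$. By Avila's Almost Reducibility Conjecture \cite{Avi2023KAM}, subcriticality yields almost reducibility in every band $\{|\Im\theta|<h'\}$ with $h'<h$. Since $E\in\Sigma_{\lambda,\alpha}$, the cocycle is not uniformly hyperbolic.

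\textbf{Parts (1) and (2).} Given $\varepsilon>0$, fix $h'=(1-\varepsilon/4)h$ and apply Proposition \ref{GYZ-general-good-THM} with tolerance $\varepsilon/4$ in the band of width $h'$. This produces the sequence $(n_s)_{s\geq s_0}$ together with the $(C_0',C_0,c_0\varepsilon,(n_s))$-good structure, establishing (1) directly. The lower bound supplied by Proposition \ref{GYZ-general-good-THM} becomes $\zeta_n \geq (1-\varepsilon/4)\cdot 2\pi h' = (1-\varepsilon/4)^2(-\ln\lambda) \geq -(1-\varepsilon)\ln\lambda$, which is (2).

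\textbf{Part (3), the AMO upper bound.} This is the content that does not follow from the general Proposition and is specific to AMO. The mechanism is Aubry duality: the subcritical AMO at parameter $\lambda$ is dual to the supercritical AMO at $\lambda^{-1}$, whose Lyapunov exponent on the spectrum equals $-\ln\lambda$. In the KAM normal form produced for $(\alpha, S_E^\lambda)$, the off-diagonal coefficient $\nu_n$ corresponds, under duality, to a resonant coupling appearing in the localized eigenfunction of the supercritical dual operator. The sharpness of the dual Lyapunov exponent forces the matching upper bound on $|\nu_n|$ in the strong-resonance regime $\widehat{\eta}_n \geq -(1+\varepsilon)\ln\lambda$: were $|\nu_n|<\lambda^{(1+\varepsilon)n}$ and $\|2\rho_n\|_\T<\lambda^{(1+\varepsilon)n}$ to hold simultaneously, the conjugated cocycle would be reducible to a constant at a scale exponentially finer than permitted by the Lyapunov exponent $L=-\ln\lambda$, which would force a generalized eigenfunction of $H_{\lambda^{-1},\alpha,\cdot}$ to decay strictly faster than the Lyapunov rate $e^{-Ln}=\lambda^{n}$, a contradiction.

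\textbf{Main obstacle.} The delicate point is the sharpness of the constant $-(1+\varepsilon)\ln\lambda$ in (3); any loss there would blur the exact local distribution in Theorem \ref{MainTHM}. This sharpness is precisely what Aubry duality combined with the integer-valued acceleration of AMO provides, and extracting it rigorously inside the KAM iteration is the structured quantitative almost reducibility for AMO established in \cite{Universal}, which we invoke here as a black box. These AMO-specific features are unavailable for general analytic potentials, explaining why Theorem \ref{MainTHM} is strictly more refined than the stratified statement Theorem \ref{Thm:generalholder} in the general case.
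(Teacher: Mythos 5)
Your proposal is correct and follows essentially the same route as the paper: parts (1)--(2) are obtained by specializing Proposition \ref{GYZ-general-good-THM} to the almost Mathieu cocycle, which is subcritical with radius $-\frac{\ln\lambda}{2\pi}$ by \cite{avila2015global} and almost reducible there by \cite{Avi2023KAM}, while part (3) is delegated to the AMO-specific structured quantitative almost reducibility of \cite{Universal}, exactly as the paper does by citing \cite[Proposition 8.2]{Universal}. Your Aubry-duality paragraph is a heuristic gloss on that black box rather than an independent argument, so the proof content coincides with the paper's.
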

\begin{proof}
    If $ 0<\lambda<1 $, then $(\alpha,S_E^\lambda)$ is subcritical with analytic radius $-\frac{\ln \lambda}{2\pi}$ \cite{avila2015global}, then the result follows from   Proposition \ref{GYZ-general-good-THM} except (3), where $(3)$ follows from \cite[Proposition 8.2]{Universal}.
\end{proof}

Now we can give the proof of Theorem \ref{MainTHM},
and distinguish the proof into three cases:\\
 {\bf Case 1: If $\delta(\alpha,\mathcal{N}_{\lambda,\alpha}(E))=\delta\geq -\ln\lambda $ and $\mathcal{N}_{\lambda,\alpha}(E)\neq k\alpha \mod \Z$.} Fix $\epsilon_0=-\frac{1}{2}\ln\lambda$ for instance. By \eqref{N=rho},  $\mathcal{N}_{\lambda,\alpha}(E) $ shares the same $\epsilon_0$-resonances with $2\rho(\alpha,S_E^{\lambda})$ and 
\(
\delta(\alpha,\mathcal{N}_{\lambda,\alpha}(E))=\delta(\alpha,2\rho(\alpha,S_E^{\lambda}))
\). 
   Then the result follows from Corollary \ref{GYZ-AMO-THM}, Lemma \ref{relation1} and  Proposition \ref{THM5.1}. 
   
   Indeed, by definition, the $\epsilon_0$-resonances $(\ell_s)_{s\geq 1}$ of $2\rho(\alpha,S_E^\lambda)$ satisfies 
   \begin{equation}\label{eq:reso-1}
       \|2\rho(\alpha,S_E^\lambda)-\ell_s\alpha\|_\T\leq e^{-\epsilon_0|\ell_s|}.
   \end{equation}
   Repeating the line as in the proof of Proposition \ref{GYZ-general-good-THM-multi}: For any $\e>0$,  there exist $B_\e\in C^\omega_{-\frac{\ln\lambda}{2\pi}(1-\frac{\e}{2})}(\T,\op{SL}(2,\R))$ and $R_\e\in\op{SO}(2,\R)$ such that $B_\e(\theta+\alpha)^{-1}S_E^\lambda(\theta)B_\e(\theta)=A(\theta)$ with $\|A(\theta)-R_\e \|_{-\frac{\ln\lambda}{2\pi}(1-\frac{\e}{2})}$ small enough, and thus one can apply Proposition \ref{reducibility-stru-1}. By \eqref{degree} and \eqref{eq:reso-1}, for $s$ large enough, we have 
   \[
   \|2\rho(\alpha,A)-(\ell_s-\deg B_\e)\alpha\|_\T\leq e^{-\epsilon_0|\ell_s-\deg B_\e|/2}.
   \]
   Then by Lemma \ref{relation1}, there exist $s_0>0$ such that the sequence $(\ell_s-\deg B_\e)_{s\geq s_0}$ coincides with a subsequence of $(\deg B_s)_{s\geq 1}$. This is equivalent to the sequence $(\ell_s)_{s\geq s_0}$ coincides with a subsequence of $(n_s)_{s\geq 1}$. 
    On the other hand, notice that when apply Proposition \ref{THM5.1}, those  non-$\epsilon_0$-resonance sites, i.e. those sites $ n_s$ with 
    \[
   \|2\rho(\alpha,S_E^\lambda)-n_s\alpha\|_\T > e^{-\epsilon_0|n_s|}
   \] 
   have {\bf no} contribution to the asymptotic function $f(\epsilon)$: The effective sites $n_s$ actually satisfy $ \|2\rho(\alpha,S_E^\lambda) - n_s \alpha \|_\mathbb{T} \leq e^{|n_s| \ln |\lambda|} $, it even does not depend on the choice of $ 0 < \epsilon_0 < -\ln \lambda $.\\
{\bf Case 2: If $\delta(\alpha,\mathcal{N}_{\lambda,\alpha}(E))=\delta<-\ln \lambda $.} By
Proposition \ref{Prop:finite-resonnace} (1), one only need to prove $ (\alpha,S_E^\lambda) $ is reducible. When $\delta = 0$, the claim is established in Theorem 4.1 of \cite{aj1}. For the case where $0 < \delta < -\ln\lambda$, it has been verified by Theorem 1.2 of \cite{Universal}.  \\
{\bf Case 3: If $\mathcal{N}_{\lambda,\alpha}(E)=k \alpha\mod\Z$ for some $k \in \Z$.} By \cite{aj1}, the dry version of the Ten Martini Problem holds for $\alpha\in \op{DC}_1$, $0<\lambda<1$, which implies $ (\alpha,S_E^v) $ is reducible to $\begin{pmatrix}
          1 & \nu\\
          0 &1
          \end{pmatrix}$ with $\nu\neq 0$. Then the result follows from Proposition \ref{Prop:finite-resonnace} (2).    \qed

    \bigskip
\noindent\textbf{Proof of Theorem \ref{MainCor}:}
This is an immediate consequence of Theorem \ref{MainTHM}, we omit the details. \qed

  \section{General potential case}
  
    In this section, we use Anosov-Katok construction to show that the lower local dimension of spectral measure falls in $ (1/2,1) $ is a general phenomenon.

    \begin{proposition}\label{THM7.1}
        Let $ \alpha\in\mathrm{DC}_d $, $ 0<\e_0<2\pi h'<2\pi h\leq \delta<\infty $. If $ v\in C^\omega_{h}(\T^d,\R) $ satisfies 
        \[
            \|v\|_h\leq c_*(\alpha,h,h').
        \]
        Then for any $ E\in\Sigma_{v,\alpha} $,  and any $ \epsilon>0 $, there exists a sequence  $ (\hat{k}_{s})_{s\geq \hat{s}_0} $\footnote{We remark that such sequence is a subsequence of $(\delta-\e_0)$-resonances associated to $\alpha$ and $2\rho_{\tilde{v},\alpha}(E)$.},  $ \tilde{v}\in C^\omega_{h'}(\T^d,\R) $ with $
            \|\tilde{v}-v\|_{h'}\leq \epsilon, 
$ such that $( (\alpha,S_E^{\tilde{v}}) )$ is almost reducible:

\begin{enumerate}
\item $ (\alpha,S_E^{\tilde{v}}) $ is 
$ (C_0'(v,\alpha,E,h,h',\epsilon),C_0(\alpha,\delta),c_0,(\hat{k}_{s})_{s\geq \hat{s}_0}) $-good, with $ c_0<2\pi h $.  
\item For any $s\geq \hat{s}_0$, denoting $n=|\hat{k}_s|$, we have estimates
\begin{align*}
    -\dfrac{\ln |\nu_n|}{n}&=\zeta_n=2\pi h+o(1),\\
  -\dfrac{\ln \|2\rho_s\|_\T}{n} &= \widehat{\eta}_n=\delta+o(1). 
\end{align*}

\end{enumerate}

    \end{proposition}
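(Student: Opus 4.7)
The plan is to carry out a fibered Anosov--Katok construction in the spirit of \cite{KXZ2020AnosovKatok}, adapted to the Schrödinger setting where only the scalar potential $v$ may be perturbed (not the cocycle matrix directly). The output $\tilde v$ arises as a telescoping series $\tilde v = v + \sum_{s\geq 0} \Delta v_s$, where each increment $\Delta v_s$ is a real-analytic correction that installs one new resonance $\hat k_{s+1}$ into the fibered rotation number with the prescribed decay rates, while remaining small enough in $C^\omega_{h'}$ that both the convergence and the total budget $\|\tilde v - v\|_{h'}\leq \epsilon$ are respected.

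First, since $\|v\|_h\leq c_*$, Avila's almost reducibility theorem \cite{Avi2023KAM} produces an initial conjugacy $B_0\in C^\omega_h(\T^d,\op{PSL}(2,\R))$ and a constant $R_0\in\op{SO}(2,\R)$ with $\|B_0^{-1}(\theta+\alpha)S_E^v(\theta)B_0(\theta)-R_0\|_h$ exponentially small. I would then inductively construct $(v_s,\hat k_s,B_s)$ such that after conjugation
\[
B_s^{-1}(\theta+\alpha)S_E^{v_s}(\theta)B_s(\theta) = M^{-1}\exp 2\pi\begin{pmatrix} it_s & z_s \\ \bar z_s & -it_s\end{pmatrix} M + F_s(\theta),
\]
with $|z_s|=e^{-2\pi h|\hat k_s|(1+o(1))}$, $\rho_s:=\sqrt{t_s^2-|z_s|^2}$ satisfying $\|2\rho_s\|_\T = e^{-\delta|\hat k_s|(1+o(1))}$, and $\|F_s\|_{h'}\leq e^{-c_0|\hat k_{s+1}|}$ for a fixed $c_0<2\pi h$. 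The passage $s\to s+1$ splits into (i) choosing $|\hat k_{s+1}|$ super-exponentially larger than $|\hat k_s|$ so that $\hat k_{s+1}$ becomes a $(\delta-\varepsilon_0)$-resonance for the next rotation number, and (ii) adding a real-analytic correction $\Delta v_s$ with Fourier mass concentrated near the mode $\hat k_{s+1}$. This single scalar $\Delta v_s$ must simultaneously shift the rotation number so that $\|2\rho_{s+1}\|_\T \approx e^{-\delta|\hat k_{s+1}|}$ and, after reconjugation, produce an off-diagonal entry $z_{s+1}$ of size $e^{-2\pi h|\hat k_{s+1}|}$. Taking $|\hat k_{s+1}|$ large enough gives $\|\Delta v_s\|_{h'}\leq \epsilon\cdot 2^{-s}$, with the band loss $h-h'$ absorbing the growth of $\|B_s\|$.

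The limit $\tilde v = v+\sum_s \Delta v_s$ then converges in $C^\omega_{h'}$ with $\|\tilde v-v\|_{h'}\leq\epsilon$, and the accumulated conjugacy satisfies $\deg(B_0B_1\cdots B_s) = \hat k_s$. Combining \eqref{degree} with the smallness of $F_s$ and \eqref{rotationnumber1} yields
\[
\bigl|\|2\rho(\alpha,S_E^{\tilde v})-\langle \hat k_s,\alpha\rangle\|_\T - \|2\rho_s\|_\T\bigr| \leq Ce^{-c_0|\hat k_{s+1}|/2},
\]
which confirms both that $\hat k_s$ is a $(\delta-\varepsilon_0)$-resonance of $2\rho(\alpha,S_E^{\tilde v})$ and that $\widehat\eta_n = \delta+o(1)$; the $(C_0',C_0,c_0,(\hat k_s))$-goodness is then immediate from the construction.

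The main obstacle is that a \emph{single} scalar perturbation $\Delta v_s$ must match \emph{two} independent exponential rates, namely $2\pi h$ for $|z_s|$ and $\delta$ for $\|2\rho_s\|_\T$. This is a two-parameter implicit function problem at each step, and its solvability hinges precisely on the standing hypothesis $2\pi h\leq \delta$: the conjugacy naturally generates off-diagonal entries decaying at the analytic rate $2\pi h$, and $\delta\geq 2\pi h$ guarantees that the rotation number can be \emph{additionally} adjusted to decay at least that fast. The constraint $c_0<2\pi h$ on the residual $F_s$ is what forces the unavoidable loss $h'<h$ in the final analytic band, and the coherent tracking of all three scales ($|z_s|$, $\|2\rho_s\|_\T$, $\|F_s\|_{h'}$) across infinitely many stages constitutes the technical heart of the argument.
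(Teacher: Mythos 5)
Your proposal has a genuine gap at precisely the point you yourself flag as the ``technical heart'': you never show how a single scalar correction $\Delta v_s$, acting only through the potential, can be tuned so that after reconjugation the constant part has off-diagonal entry $|z_{s+1}|\approx e^{-2\pi h|\hat k_{s+1}|}$ \emph{and} the rotation number satisfies $\|2\rho_{s+1}\|_\T\approx e^{-\delta|\hat k_{s+1}|}$. Asserting that solvability ``hinges on $2\pi h\leq\delta$'' is a heuristic, not an argument: one would need quantitative control of the (nonlinear) map from the relevant Fourier data of $\Delta v_s$ to the pair $(z_{s+1},\rho_{s+1})$, uniformly along the induction and with the energy $E$ held fixed, and nothing in your sketch supplies this. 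In addition, your scheme must steer the limiting rotation number onto a target whose $\epsilon_0$-resonances are exactly the prescribed $(\hat k_s)$ with rate $\delta+o(1)$, which requires tracking how each potential increment moves $\rho$; this is also left unaddressed.

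The paper avoids both difficulties by decoupling the two tasks. First (Lemma \ref{AKClemma}) it runs the Anosov--Katok construction purely at the level of $\mathrm{SL}(2,\R)$ (in fact $\mathrm{SU}(1,1)$) cocycles: a rotation number $\vartheta$ with $\delta(\alpha,2\vartheta)=\delta$ is fixed in advance, and at each step the target constant matrix is prescribed directly through the parameters $t_s$ and $\lambda_s=e^{-2\pi h|k_{s+1}|}$, with the conjugacy $D_sH_{s_*}$ and the arithmetic identity \eqref{keyOb} doing the matching; since the perturbation there is an arbitrary $\mathrm{su}(1,1)$-valued function, hitting both exponential rates is immediate. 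The Schr\"odinger structure is only restored afterwards: Lemma \ref{conjugatetonearId} conjugates $S_E^v$ $\epsilon$-close to the identity (using almost reducibility plus non-uniform hyperbolicity on the spectrum), so that $C_\epsilon(\cdot+\alpha)A_\infty(\cdot)C_\epsilon^{-1}(\cdot)$ is $\epsilon$-close to $S_E^v$, and then Avila's structure theorem (Theorem \ref{nonSchtoSch}) converts this non-Schr\"odinger perturbation into an honest $S_E^{\tilde v}$ with $\|\tilde v-v\|_{h'}<\epsilon$ via a near-identity conjugacy; the degree shift $k^*=\deg\tilde B_\epsilon^{-1}C_\epsilon$ then gives $\hat k_s=k_s+k^*$ and the stated goodness and rates. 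Theorem \ref{nonSchtoSch} is exactly the tool that dissolves your obstacle; without it, or an equivalent implicit-function argument within the Schr\"odinger class, your potential-level induction is not justified as written.
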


    \begin{proof}

    As we mentioned, the whole proof further develops the fibered Anosov-Katok construction which first appeared in  \cite{KXZ2020AnosovKatok}, the advantage of our approach is that it works for \textit{all} Liouvillean rotation number. We first introduce the parameters in the construction.

Recall that
$$
\delta(\alpha,2\vartheta)=\limsup_{k\to\infty}-\frac{\ln\lVert 2\vartheta+\langle k,\alpha\rangle\rVert_\T}{|k|}.
$$ 
For any $ \epsilon\in(0,1) $, fix any $ \vartheta\in (-\frac{\epsilon}{8\pi},\frac{\epsilon}{8\pi}) $ satisfies $ \delta(\alpha,2\vartheta)=\delta\geq 2\pi h $. Thus for any $\e_0>0$, there exist sequence  $ (k_{s})_{s\geq 2} $  such that
\begin{equation*}
\|2\vartheta -\langle k_s,\alpha \rangle\|_{\R/\Z}\leq e^{-|k_s| (\delta-\e_0/2) }.
\end{equation*} 
Once we have this, let $ k_1=\bf{0} $, for $ s\geq 1 $, we inductively  define $(k_{s})$, $(\lambda_{s})$ and $(t_{s})$ as follows:
 \begin{equation}\label{k-n-1}
      -\frac{\ln \|2\vartheta-\langle k_{s},\alpha\rangle\|_\T}{|k_{s}|}=\delta+o(1),
 \end{equation}
  \begin{equation} \label{lambda-n}
         \lambda_{s}=e^{-2\pi h|k_{s+1}|},
    \end{equation} 
     \begin{equation} \label{t-n}
        \begin{aligned}
            &t_{s}=\sqrt{\lambda_{s}^2+\|2\vartheta-\langle k_{s+1},\alpha\rangle\|_\T^2}.
        \end{aligned}
     \end{equation} 
 Denote $ \Sigma^+=[0,1/2)$ and $\Sigma^-=[1/2,1) $. Write
    \[
        \begin{aligned}
            2\vartheta-\langle k_{s},\alpha\rangle =j_s+x_s=[2\vartheta-\langle k_{s},\alpha\rangle]+\{2\vartheta-\langle k_{s},\alpha\rangle \}.
        \end{aligned}
    \] 
    Then by Pigeonhole principle, there are infinitely many $ k_{s} $ satisfies one of the following:
    \begin{enumerate}[label=$(\mathrm{\roman*})$\ ,align=left,widest=ii,labelsep=0pt,leftmargin=0.5em]
        \item $ x_{s}\in\Sigma^+ $, $ j_{s}\in 2\Z $,
        \item $ x_{s}\in\Sigma^+ $, $ j_{s}\in 2\Z+1 $,
        \item $ x_{s}\in\Sigma^- $, $ j_{s}\in 2\Z $,
        \item $ x_{s}\in\Sigma^- $, $ j_{s}\in 2\Z+1 $.
    \end{enumerate} 
Without loss of generality, one will just  assume
\begin{equation} \label{Special-case}
    \begin{aligned}
        \bullet\ & \vartheta\geq 0,  x_{s}\in\Sigma^+ ,  j_{s}\in 2\Z \text{ holds for all }  k_{s} 
    \end{aligned}
\end{equation} 
The other case can be dealt with similarly.    Just note by symmetry, if $ \vartheta $ has infinitely many $ k_{s} $ such that $ x_{s}\in\Sigma^\pm $, then $ -\vartheta $ has infinitely many $ k_{s} $ such that $ x_{s}\in\Sigma^{\mp} $.  Moreover,  
the Diophantine condition $\alpha\in\mathrm{DC}_d$ implies that 
\begin{equation}\label{k-n-3}
    |k_{s}|\geq e^{\frac{1}{\tau}(\delta+o(1)) |k_{s-1}|}
\end{equation}
and consequently  one can just assume
\begin{equation}\label{k-n-2}
 \sum_{s=2}^{\infty} e^{-(2\pi h-2\pi h'-o(1))|k_{s}|}<\frac{\epsilon}{2}.
\end{equation}

With the above selected sequence,  we will split the proof into three steps:\\

    {\bf \noindent Step 1: Refined Anosov-Katok construction }
    
\begin{lemma}\label{AKClemma}   Let $ \alpha\in DC_d$, $ 0<2\pi h'<2\pi h\leq \delta<\infty $. For any $ \epsilon>0$,
there exist $ A_{\infty}(\cdot)\in C^\omega_{h'}(\T^d,\mathrm{SL}(2,\R)),$
with 
$$ \|A_{\infty}(\cdot)-\mathrm{Id} \|_{h'}<\epsilon $$ 
such that $(\alpha,A_{\infty})$ is almost reducible:
        \[
        B_{s}(\cdot+\alpha)^{-1}A_{\infty}(\cdot)B_{s}(\cdot)=\begin{pmatrix}
            e^{2\pi \ii\rho_{s}}&\nu_{s}\\
             0&e^{-2\pi \ii\rho_{s}}
    \end{pmatrix}+F_{s}(\cdot).
    \] 
Moreover, we have the following:
\begin{enumerate}
\item $ (\alpha,A_{\infty}) $ is $ (1,C_0(\alpha,\delta),c_0,(k_{s})_{s\geq 2}) $-good, with $ c_0<2\pi h<\infty $.
\item $ -\dfrac{\ln |\nu_s|}{|k_s|}=2\pi h+o(1) $,
\item  $ -\dfrac{\ln \|2\rho_s\|_\T}{|k_s|}= \delta+o(1) $.
\end{enumerate}  

                    \end{lemma}
    \begin{proof}
     As  $ \mathrm{SL}(2,\R) $ is isomorphic to  $ \mathrm{SU}(1,1) $, we just carry out the Anosov-Katok construction in the group $\mathrm{SU}(1,1)$.
        Assume we are at the $ s $-th ($ s\geq 0 $) step of  construction, we start at the cocycle $ (\alpha,\bar{A}_s) $, where 
    \[
        \bar{A}_s=\exp \pi \ii \begin{pmatrix}
            t_s&\lambda_s\\
            -\lambda_s&-t_s
        \end{pmatrix},
    \] 
    and its eigenvalues are $ e^{\pm \pi\ii \|2\vartheta-\langle k_{s+1},\alpha\rangle\|_\T} $. 
    The goal of the $ (s+1) $-th step of the construction, is to construct suitable transformation $ G_s(\cdot) $ and perturbation $ \bar{f}_s (\cdot) $ such that if we perturb the cocycle $ (\alpha,\bar{A}_s) $ to $ (\alpha,\bar{A}_se^{\bar{f}_s(\cdot)}) $, the transformation $ G_s (\cdot)$ conjugate the cocycle $ (\alpha,\bar{A}_se^{\bar{f}_s(\cdot)})$ into $ (\alpha,\bar{A}_{s+1}) $.

    By  \eqref{t-n} and Lemma \ref{normalellip}, there exists $ D_s\in\mathrm{SU}(1,1) $ such that
    \[
        D_s^{-1}\bar{A}_sD_s=\tilde{A}_s:=
            \exp \pi\ii  \begin{pmatrix}
                \|2\vartheta-\langle k_{s+1},\alpha\rangle\|_\T&0\\
                 0&-\|2\vartheta-\langle k_{s+1},\alpha\rangle\|_\T
        \end{pmatrix},
    \] 
    with $ D_0=\mathrm{Id} $, and for $ n\geq 1 $,
    \begin{equation}\label{eqDn} 
        \|D_s\|^2=\frac{t_s+\lambda_s}{\sqrt{t_s^2-\lambda_s^2}}\in[\frac{t_s}{\|2\vartheta-\langle k_{s+1},\alpha\rangle\|_\T},\frac{2t_s}{\|2\vartheta-\langle k_{s+1},\alpha\rangle\|_\T}].
    \end{equation} 

    Denote 
    \[
        H_{m}(\cdot)=\exp \pi\ii  \begin{pmatrix}
        \langle m,\cdot\rangle&0\\
        0&-\langle m,\cdot\rangle
        \end{pmatrix}\in \mathrm{SU}(1,1),
    \] 
    then we have the following:

    \begin{lemma}
        Let $ s_*=k_{s+2}-k_{s+1}\in\Z^d $. There exists  $ \tilde{f}_s\in C^\omega_{h'}(\T^d,\mathrm{su}(1,1)) $ with 
        \begin{equation} \label{eqfn}
            \|\tilde{f}_s\|_{h'}\leq  e^{-(2\pi h-2\pi h'-o(1)) |k_{s+2}|}
        \end{equation} 
        such that
   \begin{equation} \label{eqfn-1}
        H_{s_*}(\cdot+\alpha)^{-1}\tilde{A}_se^{\tilde{f}_s(\cdot)}H_{s_*}(\cdot)\\
            =\exp \pi\ii \begin{pmatrix}
                t_{s+1}&\lambda_{s+1} \\
                -\lambda_{s+1} &-t_{s+1}
            \end{pmatrix}=\bar{A}_{s+1}.    
       \end{equation} 
    \end{lemma}
    \begin{proof}
    Recall that 
       \[
        \begin{aligned}
            2\vartheta-\langle k_{s+1},\alpha\rangle =j_{s+1}+x_{s+1}=[2\vartheta-\langle k_{s+1},\alpha\rangle]+\{2\vartheta-\langle k_{s+1},\alpha\rangle \}.
        \end{aligned}
    \] 
    then by our construction $ x_{s+1},x_{s+2}\in\Sigma^+ $, $ j_{s+1},j_{s+2}\in 2\Z $, consequently we have
    \begin{equation} \label{keyOb}
        \begin{aligned}
            \|2\vartheta-\langle k_{s+1},\alpha\rangle\|_\T-\langle s_*,\alpha\rangle \mod 2\Z=\|2\vartheta-\langle k_{s+2},\alpha\rangle\|_\T.
        \end{aligned}
    \end{equation}

 As  $\tilde{A}_s $ commutes with  $H_{s_*}(\cdot)$, direct computation shows that
        \[
            \begin{aligned}
                Z
            =&\ \tilde{A}_s^{-1}H_{s_*}(\cdot+\alpha)\exp \pi\ii \begin{pmatrix}
                t_{s+1}&\lambda_{s+1} \\
                -\lambda_{s+1} &-t_{s+1}
            \end{pmatrix}H_{s_*}(\cdot)^{-1}\\
            =&\ \left(H_{s_*}(\cdot+\alpha)\tilde{A}_s^{-1}H_{s_*}(\cdot)^{-1}\right)\left(H_{s_*}(\cdot)\exp \pi\ii \begin{pmatrix}
                t_{s+1}&\lambda_{s+1} \\
                -\lambda_{s+1} &-t_{s+1}
            \end{pmatrix}H_{s_*}(\cdot)^{-1}\right)= e^X e^Y
        \end{aligned}
        \] 
        where $ X\in \mathrm{su}(1,1)$,  $  Y\in C^\omega(\T^d,\mathrm{su}(1,1)) $ are of the form
        \[
            \begin{aligned}
            X
            =&-\pi \ii \begin{pmatrix}
                \|2\vartheta-\langle k_{s+1},\alpha\rangle\|_\T-\langle s_*,\alpha\rangle &0\\
                 0&-\|2\vartheta-\langle k_{s+1},\alpha\rangle\|_\T+\langle s_*,\alpha\rangle\end{pmatrix}\\
                 =&-\pi \ii \begin{pmatrix}
                    \|2\vartheta-\langle k_{s+2},\alpha\rangle\|_\T&0\\
                     0&-\|2\vartheta-\langle k_{s+2},\alpha\rangle\|_\T\end{pmatrix},\\
            Y=&\ \pi\ii \begin{pmatrix}
                    t_{s+1}&\lambda_{s+1}e^{2\pi\ii \langle s_*,\cdot\rangle} \\
                    -\lambda_{s+1}e^{-2\pi\ii \langle s_*,\cdot\rangle} &-t_{s+1}
                \end{pmatrix},
            \end{aligned}
        \] 
        by (\ref{k-n-1}-\ref{t-n}) and \eqref{k-n-3}, we have estimates
        \[
            \begin{aligned}
                &\|X\|\leq  e^{-(\delta-o(1))|k_{s+2}|}\leq e^{-(2\pi h-2\pi h'-o(1)) |k_{s+2}|},\\
                &\|Y\|_{h'}\leq 2\pi e^{-2\pi h |k_{s+2}|}e^{2\pi h' |k_{s+2}-k_{s+1}|}\leq e^{-(2\pi h-2\pi h'-o(1)) |k_{s+2}|}.
            \end{aligned} 
        \] 

We recall the following well-known Baker-Campbell-Hausdorff (BCH) Formula:
    \begin{lemma}\label{BCHF}
        Let $ \mathcal{A} $ be a Banach algebra over $ \R $ or $ \C $. Let $ X, Y\in\mathcal{A} $ with $$ \|X\|+\|Y\| <\frac{\log 2}{2},$$ then there exists $ \tilde{Z}\in\mathcal{A} $ such that
        \[
            e^Xe^Y=e^{\tilde{Z}},
        \] 
        where $\tilde{Z}$ takes the form
        \[
            \tilde{Z}=X+Y+\frac{1}{2}[X,Y]+\frac{1}{12}([X,[X,Y]]+[Y,[Y,X]])+\text{\rm Higher order terms}.
        \] 
    \end{lemma}
    Applying Lemma \ref{BCHF}, there exists $ \tilde{f}_s\in C^\omega_{h'}(\T^d,\mathrm{su}(1,1)) $ such that 
            $Z=e^{X}e^Y=e^{\tilde{f}_s},
           $ i.e. \eqref{eqfn-1} holds. Furthmore, we have estimate
    \[
        \|\tilde{f}_s\|_{h'}\leq 2(\|X\|+\|Y\|_{h'})\leq  e^{-(2\pi h-2\pi h'-o(1)) |k_{s+2}|}. 
    \] 
    The result follows.
    \end{proof}

    Notice that if one let $ \bar{f}_s(\cdot)=D_s\tilde{f}_s(\cdot)D_s^{-1} $,
    then
    \begin{equation} \label{AKC1}
        \begin{aligned}
            H_{s_*}(\cdot+\alpha)^{-1}D_s^{-1}\bar{A}_se^{\bar{f}_s(\cdot)}D_s H_{s_*}(\cdot)
            &= H_{s_*}(\cdot+\alpha)^{-1}\tilde{A}_se^{\tilde{f}_s(\cdot)}H_{s_*}(\cdot)\\
            &=\exp \pi\ii \begin{pmatrix}
                t_{s+1}&\lambda_{s+1} \\
                -\lambda_{s+1} &-t_{s+1}
            \end{pmatrix}=\bar{A}_{s+1}.
        \end{aligned}
    \end{equation} 
    We finished the $ (s+1) $-th step of the construction, as desired.

    Let $ G_s(\cdot)=D_s H_{s_*}(\cdot) $ and $ \bar{B}_0=\mathrm{Id} $, $ \bar{B}_s=G_0\cdots G_{n-2}G_{n-1} $. Then we can construct the desired cocycle sequences:
    \begin{equation} \label{AKC2}
        A_s(\cdot)=\bar{B}_s(\cdot+\alpha)\bar{A}_s \bar{B}_s(\cdot)^{-1}.
    \end{equation} 
    In the following, we will show the cocycle $ (\alpha,A_s(\cdot)) $ converges. To show this, just notice that by (\ref{AKC1}) and (\ref{AKC2}), 
     \[
        \|A_0-\mathrm{Id}\|_{h'}\leq \pi t_0 = 2\pi \vartheta \leq \frac{\epsilon}{4},
    \]  and furthermore
    \begin{equation}\label{n+1-n}
        A_{s+1}-A_s=\bar{B}_s(\cdot+\alpha)(\bar{A}_se^{\bar{f}_s(\cdot)}-\bar{A}_s)\bar{B}_s(\cdot)^{-1}.
    \end{equation} 
By (\ref{eqDn}-\ref{eqfn}), one can estimate
    \begin{equation} \label{eqBn}
        \begin{aligned}
            \|B_{s}\|_{h'}^2 &\leq \prod_{i=0}^{s-1}\|D_i\|^2\|H_{i_*}\|_{h'}^2\\
            &\leq e^{2\pi h'\sum_{i=0}^{s-1}(|k_{i+1}-k_{i+2}|)}\prod_{i=1}^{s-1} \frac{2t_i}{\|2\vartheta-\langle k_{i+1},\alpha\rangle\|_\T}\\
            &\leq e^{2\pi h' \sum_{i=0}^{s-1}(|k_{i+1}-k_{i+2}|) +(\delta-2\pi h+o(1))\sum_{i=2}^{s}|k_i|}\\
            &\leq e^{2\pi h' (1+o(1))|k_{s+1}|+(\delta-2\pi h+o(1))|k_{s}|},
         \end{aligned}
     \end{equation} 
As a consequence, by \eqref{k-n-3}, we have estimates 
\begin{equation}\label{bn0}
   \|\bar{B}_{s}\|_{0} \leq e^{(\delta-2\pi h+o(1))|k_{s}|/2}\leq  |k_{s+1}|^{C_0}, 
\end{equation}
    \begin{equation} 
            \begin{aligned}
            \|A_{s+1}-A_s\|_{h'}&\leq 2\|\bar{B}_s\|_{h'}^2\|\bar{f}_s\|_{h'}\leq 2\|\tilde{f}_s\|_{h'} \|D_s\|^2\|B_{s}\|_{h'}^2\\
            &\leq e^{-(2\pi h-2\pi h'-o(1))|k_{s+2}|}
            e^{2\pi h' (1+o(1))|k_{s+1}|+(\delta-2\pi h+o(1))|k_{s}|}\\
           &\leq e^{-(2\pi h-2\pi h'-o(1))|k_{s+2}|}.
        \end{aligned}
    \end{equation} 
    Therefore, by \eqref{k-n-2}, the cocycle $ (\alpha,A_s(\cdot)) $ converges to some cocycle $ (\alpha,A_{\infty}(\cdot))\in \T^d\times \mathrm{SU}(1,1) $, with estimate
    \[
        \|A_\infty-\mathrm{Id} \|_{h'}\leq \|A_0-\mathrm{Id}\|_{h'}+\sum_{s=0}^{\infty}\|A_{s+1}-A_s\|_{h'} <\frac{\epsilon}{4}+\frac{\epsilon}{2}<\epsilon.
    \] 
    And by \eqref{k-n-2}, (\ref{n+1-n}), \eqref{eqBn}, we have
    \[
        \begin{aligned}
            &\bar{B}_s(\cdot+\alpha)^{-1}A_{\infty}(\cdot)\bar{B}_s(\cdot)\\=&\bar{B}_s(\cdot+\alpha)^{-1}A_{n}(\cdot)\bar{B}_s(\cdot)+\sum_{j=n}^\infty \bar{B}_s(\cdot+\alpha)^{-1}(A_{j+1}(\cdot)-A_j(\cdot))\bar{B}_s(\cdot)\\
            =&\bar{A}_s+\sum_{j=s}^\infty (\bar{B}_s^{-1}\bar{B}_j)(\cdot+\alpha)(\bar{A}_j e^{\bar{f}_j(\cdot)}-\bar{A}_j)(\bar{B}_s^{-1}\bar{B}_j)^{-1}(\cdot):=\bar{A}_s+\bar{F}_s(\cdot),
        \end{aligned}
    \] 
    with estimates for $ s\geq 1 $, 
    \begin{equation}\label{fn0}
        \begin{aligned}
            \|\bar{F}_s\|_{h'}\leq& 2\sum_{j=s}^{\infty} \|\bar{B}_s^{-1}\bar{B}_j\|_{h'}^2\|\bar{f}_j\|_{h'}\leq 2\sum_{j=s}^{\infty}\|\bar{f}_j\|_{h'}\prod_{i=0}^{j-1}\|D_i\|^2\|H_{i_*}\|_{h'}^2\\
            \leq& \sum_{j=s}^{\infty}  e^{-(2\pi h-2\pi h'-o(1)) |k_{j+2}|}\leq e^{-(2\pi h-2\pi h'-o(1)) |k_{s+2}|}\to 0.
        \end{aligned}
        \end{equation}
    Thus $ (\alpha,A_\infty) $ is almost reducible. 

          Combines Lemma \ref{lem:quanti-Schurlem} and (\ref{k-n-1}-\ref{t-n}), there exist unitary $ U_s\in\mathrm{SL}(2,\C) $ such that
    \[
        U_s^{-1}\exp \pi \ii \begin{pmatrix}
            t_s&\lambda_s\\
            -\lambda_s&-t_s
        \end{pmatrix} U_s=\begin{pmatrix}
            e^{2\pi \ii\rho_{s+1}}&\nu_{s+1}\\
             0&e^{-2\pi \ii\rho_{s+1}}
    \end{pmatrix}
    \] 
    with 
    \begin{eqnarray*}
         \|2\rho_{s+1}\|_\T= \|2\vartheta-\langle k_{s+1},\alpha\rangle\|_\T=e^{-(\delta+o(1)) |k_{s+1}|},\label{Esti-1}\\
     \nu_{s+1}\in  [\pi\lambda_s,2\pi \lambda_s]\sim e^{-(2\pi h+o(1)) |k_{s+1}|}.\label{Esti-2}
    \end{eqnarray*}
     Let $ B_{s+1}=\bar{B}_sU_s $, then $(2)$, $(3)$ follows.

    The rest is to show the goodness of the cocycle. First we show  \begin{equation}\label{rho=theta}
    \rho(\alpha,A_{\infty})=\vartheta\mod \Z .
\end{equation}
Indeed, as 
        \[
            \deg G_s=\deg H_{s_*}=k_{s+2}-k_{s+1},\qquad \deg \bar{B}_s=\sum_{i=0}^{s-1}(k_{i+2}-k_{i+1})=k_{s+1},
        \] 
        then by assumption \eqref{Special-case}, and \eqref{rotationnumber1},
        \begin{equation} 
            \begin{aligned}
                &\left\|\rho(\alpha,A_{\infty})-\frac{\langle \deg \bar{B}_s, \alpha\rangle}{2}\mod \Z -\rho(\alpha,\bar{A}_s) \right\|_\T\\
                &\qquad\qquad=\inf_{j\in\Z}\left\vert\rho(\alpha,A_\infty)-\frac{\langle k_{s+1}, \alpha\rangle}{2}\mod \Z-\frac{\|2\vartheta-\langle k_{s+1},\alpha\rangle\|_\T}{2} -j\right\vert\\
                &\qquad\qquad=\inf_{j\in\Z}\left\vert\rho(\alpha,A_\infty)-\frac{\langle k_{s+1}, \alpha\rangle}{2}\mod \Z-\vartheta+\frac{\langle k_{s+1}, \alpha\rangle}{2}-\frac{j_{s+1}}{2}-j\right\vert\\
                &\qquad\qquad\leq C \|\bar{F}_s\|_0^{\frac{1}{2}}\to 0,
            \end{aligned}
        \end{equation}  
        which implies $ \rho(\alpha,A_\infty)=\vartheta \mod \Z $, by the selection, this further implies that $(k_{s})$ is a subsequence of the $(\delta- \frac{\e_0}{2})$-resonances of $2\rho(\alpha,A_{\infty})$. By \eqref{lambda-n}, \eqref{t-n}, \eqref{bn0} and \eqref{fn0}, we conclude that $(\alpha,A_{\infty})$  is $ (1,C_0,c_0,(k_{s})_{s\geq 2}) $-good.
        \end{proof}
\begin{remark}
    In the proof of other cases, the key difference is \eqref{keyOb}. It might be adjusted slightly, these adjustments will lead  those $ \bar{A}_s $ in the almost reducible progress should be replaced by $ -\bar{A}_s $, or $ \pm \bar{A}_{s}^{-1} $. But all these adjustments will not change any estimates in the next step.
\end{remark}

\smallskip

\noindent\textbf{Step 2: Schr\"odinger form}. In this step,  we will perturbate $ v $ to  $ \tilde{v} $, such that $ S_E^{\tilde{v}} $ can be conjugated to the $ A_{\infty} $ we constructed.

    \begin{lemma}\label{conjugatetonearId}
        Let $ \alpha\in \mathrm{DC}_d $,  $ 0<h'<h $.  Suppose that $(\alpha, A)$ is almost reducible in the band $\{|\Im \theta|<h\}$ but is not uniformly hyperbolic. Then, for any $ \epsilon>0 $, there exists $ C_{\epsilon}\in C^\omega_{h'}(\mathbb{T}^d,\mathrm{PSL}(2,\R)) $ such that:
         \[
            \|A(\theta)-C_{\epsilon}(\theta+\alpha)\mathrm{Id}C_{\epsilon}^{-1}(\theta)\|_{h'}<\epsilon.
         \]
    \end{lemma}
    \begin{proof}
        The proof is essentially contained in \cite[Section 4.2]{KXZ2020AnosovKatok}. We summerized it into this version. 
    \end{proof}
    To finish the proof of Proposition \ref{THM7.1}, we need the following observation, which states that any non-Schr\"odinger perturbation of Schr\"odinger cocycle can be converted to a Schr\"odinger
    cocycle. 
    \begin{theorem}\cite[Theorem 18]{avila2015global}\label{nonSchtoSch}
        Let $ v\in C^\omega_h(\T^d,\R) $ be non-identically zero. There exists $ \tilde{\varepsilon}>0 $, such that if $ A\in C^\omega_h(\T^d,\mathrm{SL}(2,\R)) $ satisfies $ \|A-S_E^{v} \|_h<\varepsilon<\tilde{\varepsilon} $, then there exist $ \tilde{v}\in C^\omega_h(\T^d,\R) $  with $ \|\tilde{v}-v\|_h<\varepsilon $,  $ \tilde{B}\in C^\omega_h(\T^d,\mathrm{SL}(2,\R)) $, 
        with $ \|\tilde{B}-\mathrm{Id}\|_h<\varepsilon $,
         such that 
        \[
            \tilde{B}(\theta+\alpha)^{-1}A(\theta)\tilde{B}(\theta)=S_E^{\tilde{v}}(\theta).
        \] 
    \end{theorem}
    
Now we finish the proof of Proposition \ref{THM7.1}.   
Take $ \tilde{h}=\frac{h+h'}{2} $. Suppose that $v$ is a small analytic potential in the perturbative regime, i.e., $\|v\|_h\leq c_*:= c_*(\alpha,h,\tilde h)$, then $ (\alpha,S_E^v) $ is almost reducible in the band $\{|\Im \theta|<\tilde{h}\}$, by Proposition \ref{reducibility-stru-1} (we can take $\mathfrak{c}$ to be uniform with respect to $E\in \Sigma_{v,\alpha}$). 
On the other hand, if $ E\in\Sigma_{v,\alpha} $, then $ (\alpha,S_E^v) $ is not uniformly hyperbolic.
  For any $ 0<\epsilon<\tilde{\e} $, by Lemma \ref{conjugatetonearId}, there exist $ C_{\epsilon}\in C^\omega_{h'}(\T^d,\mathrm{PSL}(2,\R)) $, such that 
  \begin{equation}\label{eq:close-to-id-1}
      \|S_E^v-C_{\epsilon}(\theta+\alpha)\mathrm{Id}C_{\epsilon}^{-1}(\theta)\|_{h'}\leq \frac{\epsilon}{2}.
  \end{equation}
    
    By Lemma \ref{AKClemma}, there exists a sequence $(k_{s})$, $ A_{\infty}\in C^\omega_{h'}(\T^d,\mathrm{SL}(2,\R))  $ satisfies 
    \begin{equation}\label{close-1}
    \|A_\infty-\mathrm{Id}\|_{h'}<\frac{\epsilon}{2\|C_{\epsilon }\|_{h'}^2}.    
    \end{equation} 
    such that 
\begin{enumerate}
\item $ (\alpha,A_{\infty}) $ is $ (1,C_0(\alpha,\delta),c_0,(k_{s})_{s\geq 2}) $-good,  with $ c_0<2\pi h<\infty $. 
\item $ -\dfrac{\ln |\nu_s|}{|k_s|}=2\pi h+o(1) $,
\item  $ -\dfrac{\ln \|2\rho_s\|_\T}{|k_s|} =\delta+o(1) $.
\end{enumerate}
Note \eqref{eq:close-to-id-1} and \eqref{close-1} imply that
    \[
        \|S_E^v-C_{\epsilon}(\cdot+\alpha)A_{\infty}(\cdot)C_{\epsilon}^{-1}(\cdot)\|_{h'}<\epsilon.
    \] 
  Consequently by Theorem \ref{nonSchtoSch}, there exist $ \tilde{v}\in C^\omega_{h'}(\T^d,\R) $, $ \tilde{B}_{\epsilon}\in C^\omega_{h'}(\T^d,\mathrm{SL}(2,\R)) $ such that 
    \[
        \tilde{B}_{\epsilon}(\cdot+\alpha)^{-1}C_{\epsilon}(\cdot+\alpha)A_{\infty}(\cdot)C_{\epsilon}^{-1}(\cdot)\tilde{B}_{\epsilon}(\cdot)=S_E^{\tilde{v}}(\theta),
    \] 
    with $ \|\tilde{v}-v\|_{h'}<\epsilon $. 
    In other word, for any $ \epsilon>0 $ there exist $ \tilde{v} $ such that $ \|\tilde{v}-v\|_{h'}<\epsilon $ such that 
    \[
        C_{\epsilon}^{-1}(\cdot+\alpha)\tilde{B}_{\epsilon}(\cdot+\alpha)S_E^{\tilde{v}}(\theta)\tilde{B}_{\epsilon}(\cdot)^{-1}C_{\epsilon}(\cdot)=A_{\infty}(\cdot)
    \] 
    with estimate
    \[
        \|\tilde{B}_{\epsilon}(\cdot)^{-1}C_{\epsilon}(\cdot)\|_0\leq C_0'(v,\alpha,E,h,h',\epsilon).
    \] 

Since $ (\alpha,A_{\infty}) $ is $ (1,C_0(\alpha,\delta),c_0,(k_{s})_{s\geq 2}) $-good, and $(k_{s})_{s\geq 2}$ is a subsequence of the $(\delta-\frac{\e_0}{2})$-resonances of $2\rho(\alpha,A_{\infty})$. 
Denote  $ k^*=\deg \tilde{B}_{\epsilon}^{-1}C_{\epsilon}$, $\hat{k}_s=k_{s}+k^*$. By \eqref{degree}, we have 
\[
2\rho(\alpha,S_E^{\tilde{v}})-\langle \hat{k}_s,\alpha \rangle = 2\rho(\alpha,A_{\infty})-\langle k_{s},\alpha \rangle \mod \Z.
\]
Noting that $|\hat{k}_s|=(1+o(1))|k_{s}|$ as $s$ grows, there exists $\hat{s}_0>0 $ such that $(\hat{k}_{s})_{s\geq \hat{s}_0 }$ is a subsequence of the $(\delta-\e_0)$-resonances of $2\rho(\alpha,S_E^{\tilde{v}})$. 
Then one can easily check that $  (\alpha,S_E^{\tilde{v}})$ is $(C_0',2C_0,\frac{c_0}{2},(\hat{k}_{s})_{s\geq \hat{s}_0}) $-good, and the resulting estimates follow directly.
\qedhere
   
\end{proof}

\subsection{Proof of Theorem \ref{THM1.2}}
    Fix any $ h'\in(0,h) $, $\hat{\epsilon}>0$. Let $\delta$ be such that $ 2\pi h< \delta< \infty $.  Suppose that $ \alpha\in \mathrm{DC}_d $. If  $\|v\|_h\leq c_*$, 
    then Proposition \ref{THM7.1} shows that there exists a sequence  $ (\hat{k}_{s}) $, $ \tilde{v}\in C^\omega_{h'}(\T^d,\R)  $ with $\|\tilde{v}-v\|_{h'}<\hat{\epsilon} $ such that for any $\e>0$, one can apply Proposition \ref{THM5.1} at $s\geq s_0(\e)\geq \hat{s}_0 $.

    Therefore, for any $ \varepsilon>0 $, there exist  $ \epsilon_*=\epsilon_*(\e)
    >0 $, and a continuous function $ f(\epsilon): (0, \epsilon_*]\mapsto (1/2,1] $ such that 
                    \[
                        \epsilon^{f(\epsilon)+\varepsilon}\leq \mu_{\tilde{v},\alpha,\theta}(E-\epsilon,E+ \epsilon)\leq \epsilon^{f(\epsilon)-\varepsilon}.
                    \]        
    Let $n=|\hat{k}_s|$, and recall that 
\begin{equation}\label{eq:eta_n-as-before}
    -\frac{\ln \|2\rho(\alpha,S_E^{\tilde{v}})-\langle \hat{k}_{s},\alpha\rangle\|_\T
        }{n}=\eta_n \in (0,\infty], 
\end{equation}
    then  Proposition \ref{THM7.1} (3) further gives $\eta_n=\delta+o(1)>2\pi h$ as $s$ becomes large enough. This means $  f(\epsilon) $ always takes the form 
     \begin{equation} 
                f(\epsilon)=\left\lbrace\begin{aligned}
                    &\frac{1}{2}+\frac{2\pi h|\hat{k}_{s}|}{2\ln\epsilon^{-1}} &  &\text{for } \epsilon^{-1}\in[e^{2\pi h|\hat{k}_{s}|}, e^{(2\eta_{n}-2\pi h)|\hat{k}_{s}|}],\\
                    &\frac{1}{1-b_s}-\frac{b_s}{1-b_s} 
                    \frac{2\pi |\hat{k}_{s+1}|}{\ln\epsilon^{-1}} & &\text{for } \epsilon^{-1}\in[e^{(2\eta_{n}-2\pi h)|\hat{k}_{s}|},e^{2\pi h|\hat{k}_{s+1}|}],
                \end{aligned}\right.
     \end{equation}
     where $ b_s=\frac{(\eta_n-2\pi h)|\hat{k}_{s}|}{2\pi h|\hat{k}_{s+1}|-\eta_n |\hat{k}_{s}|} $. Moreover, we can replace $ f(\epsilon) $ by \eqref{eq:AKf(epsilon)}.
    
\subsection{Proof of Theorem \ref{thm:irreducible}} 

     Assume $\alpha\in \op{DC}_d $, let $ 0<2\pi h'< 2\pi h<\delta<\infty$ be given. Take $\title{h}=\frac{h+h'}{2}$.
By Proposition \ref{reducibility-stru-1}, if
     \begin{equation}\label{epsilon}
         \|A-R\|_h\leq \mathfrak{c}= \mathfrak{c}(\alpha,R,h,\tilde{h}), 
     \end{equation}
then we have $(\alpha,A)$ is almost reducible in the band $\{\theta: |\Im \theta|<\title{h}\}$, and  $\delta(\alpha,2\rho(\alpha,A))<\infty$ implies that 
$(\alpha,A)$ is not uniformly hyperbolic.
  For any $ \hat{\epsilon}>0 $, by Lemma \ref{conjugatetonearId}, there exist $ C_{\hat{\epsilon}}\in C^\omega_{h'}(\T^d,\mathrm{PSL}(2,\R)) $, such that 
  \begin{equation}\label{eq:close-to-id}
      \|A-C_{\hat{\epsilon}}(\theta+\alpha)\mathrm{Id}C_{\hat{\epsilon}}^{-1}(\theta)\|_{h'}\leq \frac{\hat{\epsilon}}{2}.
  \end{equation}
    By Lemma \ref{AKClemma}, there exists a sequence $(k_{s})$, $ A_{\infty}\in C^\omega_{h'}(\T^d,\mathrm{SL}(2,\R))  $ satisfies 
  \begin{equation}\label{eq:close-to-id-2}
    \|A_\infty-\mathrm{Id}\|_{h'}<\frac{\hat{\epsilon}}{2\|C_{\hat{\epsilon} }\|_{h'}^2}. 
  \end{equation}
    such that  
\begin{enumerate}
\item  $(\alpha,A_{\infty}) $ is $ (1,C_0(\alpha,\delta),c_0,(k_{s})_{s\geq 2}) $-good, with $ c_0<2\pi h<\infty $.
\item \begin{equation}\label{zetas} -\dfrac{\ln |\nu_s|}{|k_s|}=2\pi h+o(1),   \end{equation}
\item  \begin{equation}\label{etas} -\dfrac{\ln \|2\rho_s\|_\T}{|k_s|} =\delta+o(1).   \end{equation}
\end{enumerate}
Then let $A'(\theta)=C_{\hat{\epsilon}}(\theta+\alpha)A_{\infty}(\theta)C_{\hat{\epsilon}}^{-1}(\theta)$,  \eqref{eq:close-to-id} and \eqref{eq:close-to-id-2} implies that 
    $
        \|A-A'\|_{h'}<\hat{\epsilon}.
    $
With same discussion as in previous subsection,  we have  $$\delta(\alpha,2\rho(\alpha,A'))=\delta(\alpha,2\rho(\alpha,A_{\infty}))=\delta$$ and 
$  (\alpha,A')$ is $(C_0',2C_0,\frac{c_0}{2},(\hat{k}_{s})_{s\geq \hat{s}_0}) $-good, moreover, \eqref{zetas} and \eqref{etas} still holds.

 Below we show that $(\alpha,A')$ is not reducible. We argue by contradiction. By Remark \ref{rem:general}, if $(\alpha,A')$ is reducible, then by Proposition \ref{Prop:AsyFormula-2},
we have
\begin{equation}\label{pk2}
\det P_{(k),\pm}\thickapprox k^2 \text{ or } k^4.
\end{equation}
Again by  Remark \ref{rem:general}, this will imply  the criteria we established in subsection 5.1:  the universal asymptotic structure of $\det P_{(k),\pm}$, still hold for general quasi-periodic cocycles. More precisely, \eqref{zetas}-\eqref{etas} guarantee that one can apply Lemma \ref{lem:possible}- Lemma \ref{doas}, and for $k$ large enough,
\[
k^{\psi(k)-C\e}\leq\det P_{(k),\pm}\leq k^{\psi(k)+C\e}
\]
where if we denote $n=|\hat{k}_s|$, $N=|\hat{k}_{s+1}|$, define $\eta_n$ as in  \eqref{eq:eta_n-as-before}, then
\begin{equation}
\psi (k)=\left\lbrace\begin{aligned}
    &4-\frac{4\pi h n}{\ln k} &  &\text{for } k\in[e^{hn}, e^{\eta_{n}n}],\\
    &2+\frac{2\eta_nn-4\pi h n}{\ln k}\left[1-\frac{\ln k-\eta_n n}{2\pi hN-\eta_n n}\right] & &\text{for } k\in[e^{\eta_{n}n},e^{hN}].
\end{aligned}\right.
\end{equation}   

Meanwhile, the assumption  $2\pi h<\delta $ allows us to choose $\e\ll 1$ such that
\[
2<4-\frac{4\pi h}{\delta}-C\e  <4,
\]
and by our choice on $(k_s)$, we have 
\[
        -\frac{\ln \|2\rho(\alpha,S_E^{\tilde{v}})-\langle \hat{k}_{s},\alpha\rangle\|_\T
        }{|\hat{k}_s|}=\eta_n\to \delta, 
 \]
as $s$ grows to infinity. Thus,  there exists an infinite subsequence $(m_i)$ (actually $m_i=e^{\eta_{n}n}$ is sufficient for our choice) such that
\[
m_i^{4-\frac{4\pi h}{\delta}-C\e}\leq\det P_{(m_i),\pm}\leq m_i^{4-\frac{4\pi h}{\delta}+C\e},
\]
however, this contradicts with \eqref{pk2}.

	\appendix
\section{Structure quantitative almost reducibility}
\begin{proposition}\cite{Universal}\label{reducibility-stru-1}
    Let $\alpha\in \op{DC}_d(\kappa,\tau)$,  $A\in C_{h}^\omega(\T^d,\op{SL}(2,\R))$ with $ h> \tilde{h}>0$, $R\in \op{SL}(2,\R)$. 
    If 
     $$\|A(\cdot)-R\|_{h}\leq \epsilon \leq \mathfrak{c}:=\frac{D_0(\kappa,\tau,d)}{\|R\|^{C_0}}(h-\tilde{h})^{C_0\tau},  $$
    then there exist $B_s\in C_{\tilde{h}}^\omega(\T^d, \op{PSL}(2,\R))$,  $f_s\in C_{\tilde{h}}^\omega(\T^d, \op{sl}(2,\R))$  such that 
    $$
    B_s^{-1}(\theta+\alpha)A(\theta)B_s(\theta)=A_se^{f_s(\theta)}=M^{-1}exp \left(
    \begin{array}{ccc}
     i t^s &  \nu^s\\
    \bar{\nu}^s &  -i t^s
     \end{array}\right)Me^{f_s(\theta)}
    $$
    with estimates
    \begin{eqnarray}
     \label{es2}
    |\nu^s| &\leq& 2 \epsilon_{i_s}^{\frac{15}{16}}e^{-2\pi|\mathfrak{m}_{s}|\tilde{h}},\\
       \label{es0} \|f_s\|_{\tilde{h}} &\leq&  \epsilon_{i_s}e^{- \frac{\pi}{8} |\mathfrak{m}_{s+1}|(h-\tilde{h})}.
    \end{eqnarray}
     where $\epsilon_s\leq \epsilon^{2^s}$.   Moreover,  the conjugation $B_s(\cdot)$ takes the form 
      $$B_s(\theta)=\tilde{B}_s(\theta)R_{\frac{\langle \mathfrak{m}_{s},\theta\rangle}{2}}e^{Y_s(\theta)},$$
     with estimate 
      \begin{eqnarray} \label{es1}
     \|Y_s\|_{\tilde{h}} &<&   e^{-2\pi|\mathfrak{m}_{s}|^4\tilde{h}},\\
      \label{es3}
    \|\tilde{B}_s\|_{\tilde{h}}&<&C(\alpha)|\mathfrak{m}_{s}|^{\tau}e^{2 \epsilon_{i_{s-1}}^{\frac{1}{18\tau}} |\mathfrak{m}_{s}|},\\
       \label{es4}
    \|B_s\|_0&<&C(\alpha)|\mathfrak{m}_{s}|^{2\tau},\\
    \label{es5} 
    |\deg B_s - \mathfrak{m}_{s} | &\leq&  2\epsilon_{i_{s-1}}^{\frac{1}{18\tau}} |\mathfrak{m}_{s}|.
    \end{eqnarray} 
\end{proposition}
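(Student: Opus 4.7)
Since this proposition is cited from \cite{Universal}, the plan is to recall the strategy used there rather than reproduce the full induction. The proof is a quantitative KAM iteration conducted in $\op{SU}(1,1)$ (via the Cayley isomorphism of Section \ref{Lie-isomorphism}) that alternates between ordinary non-resonant reduction steps, which shrink the perturbation quadratically, and resonance-absorbing \emph{degree} steps, in which a half-integer conjugation $R_{\langle \mathfrak{m},\theta\rangle/2}$ is introduced to kill a small denominator and to record the accumulated topological degree. The sequence $\epsilon_s \leq \epsilon^{2^s}$ is the super-exponentially decreasing size of the residual perturbation at step $s$, and the resonance indices $\mathfrak{m}_s$ correspond to those Fourier modes on which the cohomological equation fails to be solvable.

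First, after pre-conjugation by $M$, write $B_0^{-1}(\cdot+\alpha) A(\cdot) B_0(\cdot) = A_0 e^{f_0}$ with $A_0$ in diagonal form (Lemma \ref{normalellip}) and $\|f_0\|_h \leq \epsilon$. At each step choose a Fourier cut-off $N_s \sim |\ln \epsilon_s|/(h-\tilde h)$ so the truncation error is of order $\epsilon_s^2$. If none of the resonant conditions $|\langle n,\alpha\rangle \pm 2\rho^s| < \epsilon_s^{1/18\tau}$ hold for $|n|\leq N_s$, solve the cohomological equation mode by mode; the Diophantine assumption on $\alpha$ controls denominators away from resonance and we obtain a near-identity conjugation $e^{Y}$ that reduces the perturbation to size $\approx \epsilon_s^{3/2}$. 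When a resonance at some $n$ does occur, take $\mathfrak{m}_{s+1} = n$ and pre-conjugate by $R_{\langle \mathfrak{m}_{s+1},\theta\rangle/2}$: this shifts the rotation number by $\langle \mathfrak{m}_{s+1},\alpha\rangle/2$, turns $A_s$ into a matrix with small rotation part and small off-diagonal entry $\nu^{s+1}$, and the Fourier truncation at level $|\mathfrak{m}_{s+1}|$ produces the crucial factor $e^{-2\pi|\mathfrak{m}_{s+1}|\tilde h}$ in \eqref{es2}. The bound \eqref{es0} on the new remainder is then the usual KAM quadratic contraction.

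The heart of the argument is the structured decomposition $B_s = \tilde B_s R_{\langle\mathfrak{m}_s,\theta\rangle/2} e^{Y_s}$, which must be preserved inductively. At a non-resonant step $B_s$ is updated by a near-identity factor that gets absorbed into $e^{Y_s}$, making $Y_s$ super-exponentially small as in \eqref{es1}. At a resonant step one picks up a new rotation factor $R_{\langle\mathfrak{m}_{s+1}-\mathfrak{m}_s,\theta\rangle/2}$; the exponential repulsion of resonances \eqref{exp-repulusion} (itself a consequence of $\alpha\in\op{DC}_d(\kappa,\tau)$) guarantees $|\mathfrak{m}_{s+1}| \geq c e^{c|\mathfrak{m}_s|}$, which both gives the polynomial bound $\|B_s\|_0 \leq C|\mathfrak{m}_s|^{2\tau}$ of \eqref{es4} and lets all previously accumulated near-identity corrections be absorbed without destroying \eqref{es3}. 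The degree estimate \eqref{es5} follows by iterating \eqref{degree} together with $\rho(\alpha, A_s) = \rho(\alpha,A) - \langle \mathfrak{m}_s,\alpha\rangle/2$ modulo a remainder of order $\epsilon_{i_{s-1}}^{1/18\tau}|\mathfrak{m}_s|$ coming from the non-resonant corrections applied between the $s$-th and $(s{+}1)$-th resonance events.

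The main obstacle is the simultaneous \emph{double} smallness of $\nu^s$ in \eqref{es2}: one needs both the perturbative factor $\epsilon_{i_s}^{15/16}$ and the truncation factor $e^{-2\pi|\mathfrak{m}_s|\tilde h}$, and the balance between them forces a careful coordination of the cut-off $N_s$, the loss of analyticity $h-\tilde h$, and the resonance-detection tolerance $\epsilon_{i_{s-1}}^{1/18\tau}$. This sharp two-parameter smallness is precisely what allows the criterion of Proposition \ref{THM5.1} to deliver the \emph{exact} local distribution rather than only $1/2$-Hölder continuity, and it is the content of the induction carried out in \cite{Universal}, to which we refer for the complete verification of \eqref{es2}--\eqref{es5}.
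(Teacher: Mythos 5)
The paper offers no proof of this proposition at all—it is imported verbatim from \cite{Universal}—so there is nothing internal to compare your argument against, and your decision to recall the strategy and defer the full induction to \cite{Universal} matches exactly what the paper does. Your sketch of the alternating non-resonant/resonant KAM scheme (quadratic contraction off resonance, conjugation by $R_{\langle\mathfrak{m},\theta\rangle/2}$ at a resonance producing the factor $e^{-2\pi|\mathfrak{m}_s|\tilde h}$ in \eqref{es2}, exponential repulsion of resonances yielding the polynomial bound \eqref{es4} and the degree estimate \eqref{es5}) is consistent with the structured quantitative almost reducibility framework the paper relies on, so the proposal is acceptable as a citation-level account.
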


\begin{remark}\label{kamres}
The proof of Proposition \ref{reducibility-stru-1} is based on a novel KAM scheme, and  here we call $\mathfrak{m}_{s}$ the KAM resonances. 
\end{remark}

\begin{lemma}\cite{Universal} \label{relation1}
    Under the assumption of Proposition  \ref{reducibility-stru-1}, then for any $\epsilon_0>0$, let $(\ell_s)$ be the $\epsilon_0$-resonances of $2\rho(\alpha,A)$, there exist $B_{s}\in C_{\tilde{h}}^\omega(\T^d, \op{PSL}(2,\R))$,   such that 
    \begin{equation}\label{conjugacy}
    B_s^{-1}(\theta+\alpha)A(\theta)B_s(\theta)=A_se^{f_s(\theta)}=M^{-1}exp \left(
    \begin{array}{ccc}
     i t^s &  \nu^s\\
    \bar{\nu}^s &  -i t^s
     \end{array}\right)Me^{f_s(\theta)}
    \end{equation}
    with $\ell_i=\deg B_s$ provided $\ell_i$ is large enough (the choice of $s$ depends on $\ell_i$). 
    \end{lemma}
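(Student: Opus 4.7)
\medskip\noindent\textbf{Proof proposal.}
The plan is to identify the $\epsilon_0$-resonances $(\ell_i)$ of $2\rho(\alpha,A)$ with the KAM resonances $(\mathfrak{m}_s)$ produced by Proposition~\ref{reducibility-stru-1}, and then to adjust the conjugacies supplied by that proposition so that their degrees match the $\ell_i$'s exactly.

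First, I would verify that each $\deg B_s$ coming from Proposition~\ref{reducibility-stru-1} is, up to an exponentially small error in $|\mathfrak{m}_s|$, an $\epsilon_0$-resonance of $2\rho(\alpha,A)$. Applying the degree-shift formula \eqref{degree} to the conjugation \eqref{conjugacy} gives
\[
2\rho(\alpha,A) - \langle \deg B_s, \alpha\rangle = 2\rho(\alpha, A_s e^{f_s}) \mod 2\Z.
\]
Via the Cayley isomorphism \eqref{isom}, the matrix $A_s$ is of the normal form of Lemma~\ref{normalellip}, so in the elliptic regime its rotation number equals $\sqrt{(t^s)^2 - |\nu^s|^2}$; combining the H\"older bound \eqref{rotationnumber1} with the estimates \eqref{es2} and \eqref{es0} yields
\[
\|2\rho(\alpha,A) - \langle \deg B_s, \alpha\rangle\|_\T \leq C e^{-c|\mathfrak{m}_s|\tilde{h}},
\]
for a constant $c$ depending only on $h-\tilde{h}$. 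Hence $\deg B_s$ is an $\epsilon_0$-resonance of $2\rho(\alpha,A)$ for every $\epsilon_0 < c\tilde{h}$ and $s$ sufficiently large.

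Conversely, every large $\epsilon_0$-resonance $\ell_i$ should arise as $\deg B_s$ for some $s$: the KAM scheme underlying Proposition~\ref{reducibility-stru-1} introduces a new $\mathfrak{m}_{s+1}$ precisely when the rotation number at the current step becomes $\epsilon_0$-close to a lattice vector, and combining the minimality \eqref{upper++}, the exponential repulsion \eqref{exp-repulusion} afforded by $\alpha\in\mathrm{DC}_d$, and the approximate equality \eqref{es5} between $\deg B_s$ and $\mathfrak{m}_s$, one obtains a bijection (modulo a finite initial segment) between the sequences $(\deg B_s)$ and $(\ell_i)$. To promote $\deg B_s$ to \emph{exact} equality with $\ell_i$, I would then replace $B_s$ by $B_s\, R_{\langle (\ell_i - \deg B_s)/2,\,\theta\rangle}$; the shift $\ell_i - \deg B_s$ is a bounded integer by the previous step, and the correction is a $\theta$-linear rotation whose effect on $A_s e^{f_s}$ is an inner conjugation by a rotation, producing a modified constant matrix $A_s'$ of the same normal form (absorbing the extra rotation into $t^s$) and a perturbation $f_s'$ which remains exponentially small because the commutator between $R_{\langle m/2,\,\theta\rangle}$ and $A_s$ is controlled by $|\nu^s|$.

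The main obstacle is this converse matching between the adaptive KAM resonances $\mathfrak{m}_s$ and the a priori defined $\epsilon_0$-resonances $\ell_i$: while each $\mathfrak{m}_s$ manifestly produces a near-resonance of $2\rho(\alpha,A)$, ruling out that the KAM scheme either skips a genuine $\ell_i$ or manufactures a spurious $\mathfrak{m}_s$ demands a careful bookkeeping of which candidate resonances are processed at each KAM step, together with the structural decomposition $B_s(\theta) = \tilde B_s(\theta) R_{\langle \mathfrak{m}_s,\,\theta\rangle/2} e^{Y_s(\theta)}$ to pin down $\deg B_s$ modulo the small discrepancy in \eqref{es5}.
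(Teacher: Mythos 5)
This lemma is imported from \cite{Universal}; the paper itself gives no proof, so your proposal must stand on its own, and as it stands it has two genuine gaps. First, the forward step is not justified by the estimates you cite. From \eqref{degree} you get $2\rho(\alpha,A)-\langle\deg B_s,\alpha\rangle=2\rho(\alpha,A_se^{f_s})\bmod\Z$, but Proposition \ref{reducibility-stru-1} gives no bound whatsoever on $t^s$ — only on $\nu^s$ and $f_s$ via \eqref{es2} and \eqref{es0} — and $\rho(\alpha,A_se^{f_s})$ is comparable to $\sqrt{(t^s)^2-|\nu^s|^2}$ up to $C\|f_s\|_0^{1/2}$. Hence your claimed bound $\|2\rho(\alpha,A)-\langle\deg B_s,\alpha\rangle\|_\T\leq Ce^{-c|\mathfrak{m}_s|\tilde h}$ does not follow from the quoted data; the smallness of the residual rotation number after a resonant step is an additional output of the KAM scheme of \cite{Universal}. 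Note that the present paper only obtains exponential decay of $\|2\rho_n\|_\T$ through \eqref{Delta-eta} \emph{after} it already knows that the $n_s$ are resonances of $2\rho$, i.e.\ after invoking this very lemma, so your route is close to circular.

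Second, and more importantly, the converse inclusion — that the adaptively generated KAM resonances $\mathfrak{m}_s$ do not skip any sufficiently large $\epsilon_0$-resonance $\ell_i$ — is the entire content of the lemma, and you explicitly leave it as ``careful bookkeeping''. For small $\epsilon_0$ this is genuinely delicate: an arithmetic resonance with $\|2\rho-\langle\ell,\alpha\rangle\|_\T\leq e^{-\epsilon_0|\ell|}$ can be far weaker than the threshold (a power of the current error $\epsilon_j$) that triggers a resonant step at the stage when $|\ell|$ first enters the truncation window, so one must propagate lower bounds on $\|2\rho_j-\langle k,\alpha\rangle\|_\T$ at non-resonant sites through the induction to exclude skipping; that argument is the substance of the proof in \cite{Universal} and is absent from your proposal. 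Finally, the closing twist of $B_s$ by a rotation of degree $\ell_i-\deg B_s$ is either unnecessary — once $\deg B_s$ is shown to be an $\epsilon_0$-resonance, the exponential repulsion \eqref{exp-repulusion} forces $\deg B_s=\ell_i$ exactly for large indices — or, if actually performed with a nonzero degree, it destroys the normal form: conjugating by a nonconstant rotation makes the off-diagonal entry $\theta$-dependent, so the cocycle is no longer $A_se^{f_s}$ with the separate estimates on $\nu^s$ and $f_s$ that the downstream applications (conditions $(\mathbf{H2})$, $(\mathbf{H3})$) rely on.
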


\section*{Acknowledgements} 
 This work was partially supported by National Key R\&D Program of China (2020 YFA0713\\300) and Nankai Zhide Foundation. J. You was also partially supported by NSFC grant (11871286). X. Li was supported by NSFC grant (123B2005). Q. Zhou was supported by NSFC grant (12071232).

\bibliographystyle{plain}
\bibliography{ExactLocalRef}
\end{document}